\newcommand{\BV}{\mathrm{BV}}
\newcommand{\osc}{\mathrm{osc}}
\colorlet{tableHeader}{cyan!5}
\newtheorem{assumption}[theorem]{Assumption}
\newcounter{broj}
\DeclareMathOperator{\PDF}{PDF}
\DeclareMathOperator{\Hess}{Hess}
\title[Sampling from multimodal distributions on~$\mathbb{R}^d$]{Convergence of a Sequential Monte Carlo algorithm towards multimodal distributions on~$\mathbb{R}^d$}
\begin{document}

\begin{abstract}
In an earlier joint work, we studied a sequential Monte Carlo algorithm to sample from the
Gibbs measure supported on torus with a non-convex energy function at a low temperature, where we proved that the time complexity of the algorithm is polynomial in the inverse temperature. However, the analysis in that torus setting relied crucially on compactness and does not directly extend to unbounded domains. This work introduces a new approach that resolves this issue and establishes a similar result for sampling from Gibbs measures supported on~$\R^d$.
  In particular, our main result shows that for double-well energy with equal well depths, the time complexity scales  as seventh power of the inverse temperature, and quadratically in both the inverse allowed absolute error and probability error.
\end{abstract}

\author[Han]{Ruiyu Han}
\address{%
  Department of Mathematical Sciences, Carnegie Mellon University, Pittsburgh, PA 15213.
}
\email{ruiyuh@andrew.cmu.edu}

\subjclass{%
  Primary:
    60J22,  
  Secondary:
    65C05, 
    65C40, 
    60J05, 
    60K35. 
  }
\keywords{
  Markov Chain Monte Carlo,
  sequential Monte Carlo,
  annealing,
  multimodal distributions,
  high dimensional sampling.
}

\maketitle

\section{Introduction}

We show that under general non-degeneracy conditions, the Annealed Sequential Monte Carlo algorithm (Algorithm \ref{a:ASMC}) can efficiently sample from multimodal distributions on~$\R^d$, with time complexity that is polynomial in the inverse temperature, with a precise, dimension-independent degree. In previous work with Iyer and Slep\v{c}ev \cite{han2025polynomialcomplexitysamplingmultimodal}, we proved such a result for distributions supported on~$\mathbb{T}^d$, using arguments specific to the torus setting.
 The main purpose of this paper is to obtain a similar result in the non-compact setting of~$\R^d$.
 Our main result shows that the time complexity of ASMC in this settings is also polynomial in the inverse temperature, albeit with a slightly larger, but precise and dimension-independent degree.
 There are several key steps in the proof in~\cite{han2025polynomialcomplexitysamplingmultimodal} which can not be used in the non-compact setting.  We overcome these challenges here by using a different method based on a coupling argument in the work of Marion, Mathews and Schmidler~\cite{Marion23SMC}.
 
We begin (Section~\ref{s:informal}) with an informal description of the algorithm and our results.
In Section~\ref{s:litReview}, we briefly review the related work.
Our main theorem and precise assumptions are stated in Section~\ref{sec:main}.  The remainder of this paper is devoted to the proofs.

\subsection{Informal statement of the main results}\label{s:informal}
Let~$U \colon \mathcal{X}\to \R$ be a non-convex energy function
defined on a space~$\mathcal{X}$. In this paper the space X will 
be the $d$-dimensional Euclidean space $\R^d$.
Consider the Gibbs distribution~$\pi_\epsilon$ whose density is given by
\begin{equation}\label{e:piNu}
  \pi_{\epsilon}(x) = \frac{1}{Z_\epsilon} \tilde \pi_\epsilon(x)
  ,
  \quad\text{where}\quad
  \tilde \pi_\epsilon(x) \defeq e^{-U(x) / \epsilon}
  \text{ and }
  Z_\epsilon \defeq \int_{\mathcal X} \tilde \pi_\epsilon(y) \, d y
  .
\end{equation}
where~$dy$ denotes some fixed measure on the configuration space~$\mathcal{X}$. The parameter~$\epsilon > 0$ denotes the temperature. Sampling from~$\pi_{\epsilon}$ in the regime of low temperature~$\epsilon$ is an important problem in multiple disciplines,
such as Bayesian statistical inference, machine learning and statistical physics~\cite{BirderHeermann,DelDJ,Faulkner_2024}. Sampling from multimodal distributions is challenging.  \emph{Sequential Monte Carlo (SMC)} algorithm (see for instance~\cite[Chapters 3.3]{ChopinPapaspiliopoulos20}, or~\cite[Chapter 3.4]{Liu08}) and its variants~\cite{Neal01,popMC,popanneal}, along with related tempering methods~\cite{Swendsen1986tempering,geyer1991markov,MarinariParisi92,Geyer1995}, are designed to solve those problems by combining Markov
chain Monte Carlo (MCMC) methods and resampling strategies to sequentially sample from
a sequence of probability distributions.


The \emph{Annealed Sequential Monte Carlo (ASMC)} algorithm is a SMC algorithm, where particles are moved through a series of interpolating measures obtained by gradually reducing the temperature according to a specified annealing schedule. As in \cite{han2025polynomialcomplexitysamplingmultimodal}, we use the annealing schedule where the inverse temperatures are linearly spaced~\cite{SyedBouchardCoteEA24}.
We now precisely state the ASMC algorithm used in~\cite{han2025polynomialcomplexitysamplingmultimodal} to sample from target distribution~$\pi_{\eta}$.

\begin{algorithm}[htb]
  \caption{Annealed Sequential Monte Carlo (ASMC) to sample from~$\pi_\eta$ \cite{han2025polynomialcomplexitysamplingmultimodal}.}
  \label{a:ASMC}
  \begin{algorithmic}[1]\reqnomode
    \Require Temperature~$\eta$, energy function~$U$, and Markov processes~$\set{Y_{\epsilon, \cdot}}_{\epsilon \geq \eta}$ so that the stationary distribution of~$Y_{\epsilon, \cdot}$ is~$\pi_\epsilon$.
    \item[\textbf{Tunable parameters:}]
    \Statex
      \begin{enumerate}[(1)]
        \item Number of levels~$M \in \N$, and annealing schedule~$\eta_1 > \cdots > \eta_M = \eta$.
	\item Sample size~$N \in \N$, and initial points~$y^1_1$, \dots,~$y^N_1 \in \mathcal X$.
	\item Level running time~$T > 0$.
      \end{enumerate}

    \For{$k \in \set{1, \dots, M-1}$}

      \State\label{i:exploreValley}
	For each~$i \in \set{1, \dots N}$, simulate~$Y_{\eta_k, \cdot}$ for time~$T$ starting at~$y^i_k$ to obtain~$x^i_k$.

      \State\label{i:resampling}
	Choose~$(y^1_{k+1}, \dots, y^N_{k+1})$ by resampling from~$\set{x^1_k, \dots, x^N_k}$ using
	the multinomial distribution with probabilities 
\begin{equation}\label{e:rkTilde}
  \P( y^i_{k+1} = x^j_k ) = \frac{\tilde r_k(x^j_k)}{\sum_{n=1}^N \tilde r_k(x^n_k)}
  ,\quad \tilde r_k
    \defeq \frac{\tilde \pi_{\eta_{k+1}}}{\tilde \pi_{\eta_k}}.
\end{equation}

    \EndFor

    \State
      For each~$i \in \set{1, \dots N}$, simulate~$Y_{\eta_M, \cdot}$ for time~$T$ starting at~$y^i_M$ to obtain~$x^i$.
    \State\label{i:RTlast}%
      \Return~$(x^1, \dots, x^N)$.
  \end{algorithmic}
\end{algorithm}

 A detailed discussion of the intuition and motivation behind this algorithm can be found in our previous work \cite{han2025polynomialcomplexitysamplingmultimodal}. The results in \cite{han2025polynomialcomplexitysamplingmultimodal} bound the~$L^2$ error of the Monte Carlo when the distribution is supported on~$\T^d$. Here our main theorem controls the Monte Carlo error with high probability in the non-compact setting~$\mathbb{R}^d$. As we will see shortly, the difference in the main results is caused by the unboundedness of the domain.
 An informal version of our main result is provided below and the precise theorem is stated as Theorem~\ref{thm: main} in Section~\ref{sec:main}.  
\begin{theorem}\label{t:mainIntro}
  Suppose~$U \colon \R^d \to \R$ is a non-degenerate double-well function with wells of equal depth (but not necessarily the same shape).
  For~$\epsilon > 0$ let~$Y_{\epsilon, \cdot}$ be a solution to the overdamped Langevin equation
  \begin{equation}\label{e:Langevin}
    dY_{\epsilon,t} = - \grad U(Y_{\epsilon,t}) \, dt + \sqrt{2 \epsilon} \, dW_t
    ,
  \end{equation}
  where~$W$ is a standard~$d$-dimensional Brownian motion.
  There exist constants~$C_N$ and~$C_T$, depending on~$U$ and~$d$, such that  the following holds.
  For any~$\delta > 0$,~$\eta > 0$ and~$\theta\in (0,1)$, choose~$M, N, T$ according to
  \begin{align}
    M &\geq \ceil*{ \frac{1}{\eta} },
    \quad 
    N \geq \frac{C_{N} M^2}{\delta^2}\log\paren[\Big]{\frac{M}{\theta}},\\
    T&\geq C_T\paren[\Big]{ \paren[\Big]{\frac{MN}{\theta}}^{\frac{4}{3}}\paren[\Big]{\log N+\log\paren[\Big]{\frac{M}{\theta}} }+\log\paren[\Big]{\frac{1}{\delta}}}
  \end{align}
  and a suitable geometric annealing schedule~$\set{ 1/\eta_k}_{k = 1, \dots, M}$ so that~$\eta_1$ is sufficiently large, and~$\eta_M = \eta$.
  Then the points~$x^1$, \dots,~$x^N$ obtained from Algorithm~\ref{a:ASMC} are such that for any bounded test function~$h$ we have
  \begin{equation}
      \P\paren[\bigg]{\abs[\Big]{
	\frac{1}{N}\sum_{i=1}^{N}h(x^i)-\int h(x)\pi_{\eta}(x)\,d x}
	  < \|h\|_{\osc} \delta}\geq 1-\theta
	.
  \end{equation}
\end{theorem}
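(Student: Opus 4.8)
The plan is to run the standard two-part analysis of an SMC error, but to replace the compactness-based estimates of \cite{han2025polynomialcomplexitysamplingmultimodal} by a coupling argument following Marion, Mathews and Schmidler \cite{Marion23SMC}. Since the estimate is invariant under affine changes of $h$, one may assume $0 \le h \le 1$, so that $\|h\|_{\osc} \le 1$ and it suffices to show $\P(|\tfrac1N\sum_{i=1}^N h(x^i) - \pi_\eta(h)| < \delta) \ge 1-\theta$. Write $\mu^N_k := \tfrac1N\sum_i \delta_{x^i_k}$ for the empirical measure of the particles at level $k$ after the Langevin step, and $\Phi_k$ for the idealized one-step operator: run the level-$k$ Langevin semigroup for time $T$, then reweight by $\tilde r_k$ and renormalize. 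The first step is to telescope $\mu^N_M(h) - \pi_\eta(h)$ into a sum over the levels $k$ of local discrepancies $\mu^N_k(h_k) - \Phi_{k-1}(\mu^N_{k-1})(h_k)$, where $h_k$ is $h$ transported back to level $k$ by the subsequent idealized operators. One then has to establish two things: (a) \emph{stability}, that each transported test function $h_k$ stays uniformly bounded --- a perturbation introduced at level $k$ is not amplified too much before reaching level $M$ --- and (b) \emph{fluctuations}, that each local discrepancy is a martingale increment of size $O(N^{-1/2})$ whose sum over the $M$ levels stays below $\delta$ with probability at least $1-\theta$.

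For (a) I would first record that the target sequence is a fixed point, $\pi_{\eta_{k+1}} = \Phi_k(\pi_{\eta_k})$, because $\pi_{\eta_k}$ is stationary for the level-$k$ Langevin dynamics and $\tilde r_k = \tilde\pi_{\eta_{k+1}}/\tilde\pi_{\eta_k}$; stability is then a statement about the linearization of $\Phi_k$ near $\pi_{\eta_k}$, and here the double-well structure must be used. On the timescale $T = \mathrm{poly}(1/\eta)$ the Langevin dynamics does not mix globally --- the Kramers time for crossing the barrier is $e^{\Omega(1/\eta)}$ --- but it does equilibrate within each basin of attraction, so the plan is to split a perturbation into its two basin-conditional parts and the scalar inter-basin mass split $p \mapsto \tfrac{p\,\bar r^{(1)}_k}{p\,\bar r^{(1)}_k + (1-p)\,\bar r^{(2)}_k}$. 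The basin-conditional parts should be contracted by the Langevin step at a rate coming from a \emph{restricted} log-Sobolev or Poincar\'e inequality for the within-basin dynamics, which for $T$ large enough gives a genuine contraction; the inter-basin map has logarithmic derivative $O(1/M)$ at the relevant point precisely because the wells have \emph{equal depth} (so $\bar r^{(1)}_k/\bar r^{(2)}_k = 1 + O(1/M)$), hence its composition over all $M$ levels neither collapses nor explodes, and equal depth also keeps $\pi_{\eta_k}(\text{basin }1)$ bounded away from $0$ and $1$ uniformly in $k$. Together with $M \ge \lceil 1/\eta\rceil$ (so consecutive $\pi_{\eta_k}$ are close) and $\eta_1$ large (so level $1$ is an easy, essentially unimodal problem whose initialization error decays within time $T$), this should give the stability constant in (a), depending only on $U$ and $d$.

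For (b), each local discrepancy at level $k$ is a martingale increment for the filtration generated by the algorithm, since conditionally on the particles at level $k-1$ the $N$ Langevin moves are independent and the multinomial resampling is unbiased; I would bound it using the coupling of \cite{Marion23SMC}: couple each particle, after its Langevin move, with an i.i.d.\ draw from the relevant idealized law via a coupling of the Langevin transition that succeeds with probability $1 - O(e^{-cT\eta})$ for starts in a bounded region, and couple the resampling with resampling from the idealized weights. The point of working with couplings instead of the $L^2$ bounds of \cite{han2025polynomialcomplexitysamplingmultimodal} --- and the reason the argument survives on $\R^d$ --- is that the error is then controlled through typical behaviour and the integrability of $\tilde r_k$ against $\pi_{\eta_k}$ rather than through $\|\tilde r_k\|_\infty$, which is infinite here; it also produces the high-probability bound directly once a Bernstein or Azuma estimate is applied to the martingale. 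The choices of $N$ and $T$ should then be calibrated as in the statement: $N \ge C_N M^2\delta^{-2}\log(M/\theta)$ to control the $O(N^{-1/2})$ fluctuations summed over the $M$ levels (including the union bound and the amplification from (a)), and $T \ge C_T((MN/\theta)^{4/3}(\log N + \log(M/\theta)) + \log(1/\delta))$ so that, after a union bound over the $NM$ trajectories, both the per-step Langevin coupling failure and the probability that some trajectory leaves a confining ball $B_R$ become negligible; for the latter one uses that $U$ is confining, so $-\grad U$ points inward for large $|x|$, which with Gaussian tail bounds keeps all trajectories inside a $B_R$ whose radius grows only polynomially in $T$ and $\log(NM/\theta)$.

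I expect the main obstacle to be the confluence of non-compactness in (a) and (b): proving a restricted functional inequality for the within-basin Langevin dynamics, and a Langevin coupling rate, that are uniform over the temperature range and over starting points in $B_R$ --- the basin has a soft separatrix boundary, which I would handle by comparing with the dynamics killed at the separatrix on sub-Kramers timescales --- and only polynomially bad in $1/\eta$, without the compact state space and bounded incremental weights available in \cite{han2025polynomialcomplexitysamplingmultimodal}. Quantitatively the delicate point is the interplay between the slowly growing confinement radius $R$ and the running time $T$ needed for the couplings to succeed uniformly on $B_R$; balancing these is what produces the $4/3$ exponent in the requirement on $T$ and, ultimately, the seventh-power dependence on $1/\eta$ in the time complexity.
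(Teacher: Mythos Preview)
Your proposal has the right high-level ingredients but misses the paper's central structural idea and mislocates several technical obstacles. The Marion--Mathews--Schmidler coupling couples each particle to an i.i.d.\ draw from $\pi_{\eta_k}$, so it can only succeed when the level-$k$ Langevin dynamics mixes \emph{globally} within time $T$; at low temperatures this is hopeless, since the spectral gap is $\lambda_{2,\eta_k}\asymp e^{-\hat\gamma/\eta_k}$, and your claimed coupling success probability $1-O(e^{-cT\eta})$ is simply wrong --- no polynomial $T$ suffices. The paper therefore does \emph{not} run the coupling at every level. It introduces a critical temperature $\eta_{\mathrm{cr}}\approx \hat\gamma/\log(MN/\theta)$ and splits the analysis: for $\eta_k\ge\eta_{\mathrm{cr}}$ it uses the coupling argument essentially as you describe (global mixing is affordable there because $\exp(\hat U/\eta_{\mathrm{cr}})$ is only polynomial in $MN/\theta$), while for $\eta_k<\eta_{\mathrm{cr}}$ it abandons coupling entirely and instead tracks the single scalar $\bigl|\tfrac1N\sum_i \psi_{2,k}\mathbf 1_K(X^i_{k,0})\bigr|$ through an explicit recursion (Lemma~\ref{l:iteration}), using that $\psi_{2,\epsilon}$ is uniformly bounded on a fixed set $K$ (Lemma~\ref{lem:boundPsi}) and that the product $\prod_k\beta_k$ of recursion coefficients stays bounded. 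There is no telescoping over all $M$ levels of transported test functions; the final error is controlled by a single application of the Langevin error lemma (Lemma~\ref{l:langevinError}) at level $M$, which decomposes into this eigenfunction term plus a higher-order remainder bounded via a Nash-type pointwise kernel estimate (Lemma~\ref{cor:poverpi}).

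Two further corrections. First, $\|\tilde r_k\|_\infty$ is \emph{not} infinite on $\R^d$: with $U$ normalized so that $\min U=0$ one has $\tilde r_k=e^{-U(1/\eta_{k+1}-1/\eta_k)}\le 1$, and the normalized ratio satisfies $\|r_k\|_\infty\le C_r$. The genuine non-compactness obstacle is that the eigenfunction $\psi_{2,\epsilon}$ is unbounded, which is why the paper restricts to $K$ and must separately bound the probability that particles leave $K$ (via $\pi_\epsilon(K^c)$ being exponentially small, not via trajectory tail bounds on a growing ball). Second, the $4/3$ exponent does not come from balancing a confinement radius against running time. In the general Theorem~\ref{thm: main} the exponent is $\hat\gamma_r(1+\alpha)$; for equal-depth wells $\hat\gamma_r=\hat U/\hat\gamma=1$, and Theorem~\ref{t:mainIntro} is the specialization $\alpha=1/3$. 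The exponent arises because the global mixing time at the handoff temperature is $\exp\bigl((1+\alpha)^{1/2}\hat U/\eta_{\mathrm{cr}}\bigr)\approx (MN/\theta)^{\hat\gamma_r(1+\alpha)}$, where $\eta_{\mathrm{cr}}$ itself was fixed by the requirement that all $N$ particles remain in $K$ with high probability at every low-temperature level.
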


Note that the drift in~\eqref{e:Langevin} is independent of temperature $\epsilon$, and so time complexity of the algorithm is proportional to~$MNT$. Thus Theorem~\ref{t:mainIntro} shows that the time complexity of obtaining good samples from~$\pi_{\eta}$ using ASMC is polynomial in~$1/\eta$, with a degree independent of dimension.
In contrast, the time complexity of obtaining good samples by directly simulating the process~$Y_{\eta, \cdot}$ is~$e^{O(1/\eta)}$, and the time complexity of importance sampling or rejection sampling (from $\eta_1$ directly to $\eta$) is $(1/\eta)^d$.

We now briefly explain the main difference between Theorem~\ref{t:mainIntro} and the results in our previous work~\cite{han2025polynomialcomplexitysamplingmultimodal}. As we mentioned earlier, the result in~\cite{han2025polynomialcomplexitysamplingmultimodal} controls the $L^2$ error while here we have high probability results. The reason for this is that our analysis relies on a eigenfunction bound that blows up if the particles drift arbitrarily far from modes of the distribution. In particular, a uniform bound on the~$L^{\infty}$ norm of the normalized eigenfunctions is crucial in the argument in~\cite{han2025polynomialcomplexitysamplingmultimodal}. However, in the~$\mathbb{R}^d$ case, eigenfunctions (other than the first eigenfunctions) are actually unbounded \cite{BSimon3}. 
To address this issue, we restrict our attention to a bounded region which contains most of the mass at low temperatures.
Restricting to this region requires an additional logarithmic factor in the sample size, and a polynomially longer running time when compared to the results in~\cite{han2025polynomialcomplexitysamplingmultimodal}.

Another essential estimate in \cite{han2025polynomialcomplexitysamplingmultimodal}, specifically \cite[Lemma 4.8]{han2025polynomialcomplexitysamplingmultimodal}, also fails in this $\R^d$ setting. In  \cite[Lemma 4.8]{han2025polynomialcomplexitysamplingmultimodal}, the ratio of one single particle's probability density over the stationary distribution is bounded in $L^{\infty}$ norm, regardless of the particle's starting position. However, this bound only works in the compact setting and blows up to infinity if the domain is unbounded. Thus, the proof of Theorem~\ref{t:mainIntro} relies on a pointwise mixing estimate that allows us to quantify the mixing of particles based on their initial positions. In particular, we obtain a pointwise bound on the Langevin transition kernel, derived by modifying  Nash's estimate on the parabolic kernel functions \cite{nashPara}. We combine this bound on the transition kernel with a coupling argument inspired by \cite{Marion23SMC} to estimate the error in the high temperature regime, which is different from the spectrum decomposition argument in \cite{han2025polynomialcomplexitysamplingmultimodal}.

Finally, we make a few remarks on the assumptions and constants in Theorem~\ref{t:mainIntro}. The assumption that~$U$ has a double-well structure is mainly to simplify the technical presentation. The proof can be extended, with minor but tedious modifications, to the situation where~$U$ has more than two wells. Thus we only present the detailed proof of Theorem~\ref{t:mainIntro} in the double-well setting and provide a sketch of the proof in the multi-well case in Section~\ref{sec:multi}. The above assumption that the wells have equal depth above is also only for simplicity.
Our main result (Theorem~\ref{thm: main})  applies to potentials where each well contains a non-negligible portion of the total probability mass. In particular, it is shown in \cite[Lemma 4.4]{han2025polynomialcomplexitysamplingmultimodal} that if the wells have \emph{nearly equal depth}, then Theorem~\ref{thm: main} applies. We also remark that our result requires no prior knowledge of the location or the depth of the wells and only requires access to the energy and its gradient. The constants~$C_N$,~$C_T$ in Theorem~\ref{t:mainIntro} are independent of the temperature, but their dependence in dimension are not explicit since the proof involves bounding
the inner-product between the normalized eigenfunctions at successive temperature
levels, similar as in \cite{han2025polynomialcomplexitysamplingmultimodal}. 

\subsection*{Plan of the paper}

In Section~\ref{s:litReview}, we give a brief review of the literature. In Section~\ref{sec:main} we precisely state our algorithm, and state results guaranteeing convergence both for ASMC for a double-well energy function (Theorem~\ref{thm: main}, which generalizes Theorem~\ref{t:mainIntro}). We prove Theorem~\ref{thm: main} in Section~\ref{s:mainProof}. The remainder of this paper is devoted to the prove the required lemmas in the proof of Theorem~\ref{thm: main}.

\subsection*{Acknowledgements}
The authors would like to thank Gautam Iyer and Dejan Slep\v{c}ev, for helpful comments and discussions.

\subsection{Related work}\label{s:litReview}
The literature on sampling techniques is vast. Rather than attempting a comprehensive review, we refer the reader to~\cite{Chewi23, sanz2024book} for general overviews. We also refer the reader to the literature review session in our previous work \cite{han2025polynomialcomplexitysamplingmultimodal} for a more thorough overview on the related works. Here, we focus on several representative studies concerning sampling from multimodal distributions.

Sampling from multimodal distributions presents a well-known challenge. There is a broad spectrum of works studying algorithms that are suitable for sampling multimodal distributions. Among these, \emph{Sequential Monte Carlo (SMC)} and its variants, such as annealed importance sampling~\cite{Neal01}, population Monte Carlo~\cite{popMC}, and population annealing~\cite{popanneal}, have shown effective for multimodal targets. We refer to~\cite{DoucetFreitasEA01,ChopinPapaspiliopoulos20,SyedBouchardCoteEA24} for introduction to SMC methods. 

Several works rigorously analyze the non-asymptotic convergence  of SMC towards multimodal distributions,  under assumptions on the strong stability of the Markov kernels~\cite{schweizer2012SMC, Paulin19SMC}, or known partition of modes~\cite{Mathews24SMC,lee2024SMC}, or data-based initializations~\cite{koehler2024efficientlylearningsamplingmultimodal}. We refer to  the report~\cite{lee14provable} for a discussion on the provable guarantees under different assumptions of given information. Our previous work~\cite{han2025polynomialcomplexitysamplingmultimodal} relies on neither structural assumptions nor good initializations; however, it is restricted to compact domains.

Related MCMC-based approaches include parallel tempering~\cite{Swendsen1986tempering,geyer1991markov}  and simulated tempering~\cite{MarinariParisi92,Geyer1995}. Several notable works have rigorously analyzed the convergence of parallel and simulated tempering~\cite{WoodardSchmidlerEA09, WoodardSchmidlerEA09a,GeLeeEA18, GeLeeEA20,garg2025restrictedspectralgapdecomposition}. Further tempering and annealing methods include tempered transitions introduced by Neal in~\cite{Neal96a}, tempered Hamiltonian Monte Carlo~\cite{Nea11} and
the annealed Langevin Monte Carlo~\cite{guo2025provable,VacherChehabStromme25}. 

Recent work explores new strategies for multimodal sampling, including diffusion-model–inspired methods~\cite{VacherChehabStromme25}, adaptive MCMC~\cite{Pompe_20} and
ensemble-based methods~\cite{lu2019acceleratinglangevinsamplingbirthdeath,Lindsey22teleporting,Lu_2023}. Some approaches modify the Langevin dynamics to accelerate transitions between modes, either through altered diffusion~\cite{engquist2024samplingadaptivevariancemultimodal} or added drift terms~\cite{ReyBelletSpiliopoulos15,DamakFrankeEA20,ChristieFengEA23}.

\section{Main Result}\label{sec:main}
We begin by precisely stating the assumptions in the main theorem. The assumptions are very similar to that in~\cite{han2025polynomialcomplexitysamplingmultimodal} except that here the domain is $\R^d$. The first assumption requires~$U$ to be a regular,
double-well function with nondegenerate critical points. 

\begin{assumption}\label{a:criticalpts}
 The function~$U\in C^6(\mathbb R^d,\mathbb R)$, has a nondegenerate Hessian at all critical points, and has exactly two local minima located at~$x_{\min, 1}$ and~$x_{\min, 2}$. Moreover, there exist constants~$c, C_{U}>0$ such that 
    \begin{equation}\label{eq:Blap}
        |\Delta U|\leq c,\quad \text{and}\quad\liminf_{|x|\to\infty}\abs{\nabla U}\geq C_{U}.
    \end{equation}
  
\end{assumption}
We remark that important examples satisfying Assumption~\ref{a:criticalpts} include the Gaussian mixture in~$\R^d$ and the mixture of sub-exponential distribution with heavier tails than Gaussian. We normalize~$U$ so that
\begin{equation}\label{e:Upositive}
  0 = U(x_{\min, 1} )  \leq U( x_{\min, 2} )
  .
\end{equation}

Our next assumption concerns the saddle between the local minima~$x_{\min, 1}$ and~$x_{\min, 2}$.
Define the saddle height between~$x_{\min,1}$ and~$x_{\min,2}$ to be the minimum amount of energy needed to go from the global minimum~$x_{\min, 1}$ to~$x_{\min, 2}$, and is defined by
\begin{equation}\label{e:UHatDef}
    \hat{U} = \hat{U}(x_{\min,1},x_{\min,2}) \defeq \inf_\omega \sup\limits_{t\in [0,1]}U(\omega(t))
    .
\end{equation}
Here~$\omega \in C( [0, 1]; \mathbb R^d)$ are the paths such that~$\omega(0) =x_{\min,1}$,~$\omega(1) = x_{\min,2}$.

We also need to assume a nondegeneracy condition on the saddle.
\begin{assumption}\label{assumption: nondegeneracy}
   The saddle height between~$x_{\min,1}$ and~$x_{\min,2}$ is attained at a unique critical point~$s_{1,2}$ of index one.
   That is, 
   the first eigenvalue of~$\Hess U(s_{1,2})$ is negative and the others are positive.
\end{assumption}

We now introduce the notion of energy barrier and saddle height.
The \emph{energy barrier}, denoted by~$\hat{\gamma}$, is defined to be the minimum amount of energy needed to go from the minimum~$x_{\min, 2}$ to the global minimum~$x_{\min, 1}$.
In terms of~$s_{1, 2}$, the energy barrier~$\hat \gamma$ and the saddle height are given by
\begin{equation}\label{e:gammaHatDef}
    \hat{\gamma}\defeq U(s_{1,2})-U(x_{\min,2}),
    \quad\text{and}\quad
    \hat U = U(s_{1,2})
    .
\end{equation}

Then, we require the distribution~$\pi_\eta$ to be truly multimodal in the temperature range of interest.
That is, we require  sufficient mass for both \emph{basins of attraction} around~$x_{\min, 1}$ and~$x_{\min, 2}$.
The basin of attraction around~$x_{\min, i}$, denoted by~$\Omega_i$, is the set of all initial points for which the gradient flow of~$U$ eventually reaches~$x_{\min, i}$.
Precisely,~$\Omega_{i}$ is defined by
\begin{equation}
  \Omega_i
    \defeq
    \set[\Big]{
      y\in\mathbb R^d \st
      \lim_{t\to\infty}y_t=x_{\min,i}, \text{ where }\dot{y}_t=-\nabla U(y_t)
	\text{ with } y_0=y
      }
    ,
\end{equation}
and our multimodality condition is as follows.

\begin{assumption}\label{a:massRatioBound}
  There exists~$0 \leq \eta_{\min} < \eta_{\max} \leq \infty$, a constant~$C_m$ such that
   \begin{equation}\label{e:massRatioBound}
     \inf_{\substack{
       \epsilon\in [\eta_{\min},\eta_{\max}]\\
       0 < \epsilon < \infty
     }}
      \pi_{\epsilon}(\Omega_i)
	\geq \frac{1}{C_m^2}
	.
   \end{equation}
\end{assumption}

Finally, we require our initial points~$y_1^{1},\dots, y_1^{N}$ to start in some fixed region when~$N$ increases. For example, we can always start with points randomly distributed within a cube~$[-1,1]^d$, or start from delta mass on the origin. Precisely, we have the following assumption.
\begin{assumption}\label{assum:startgood}
    The constant
\begin{equation}
    C_{\mathrm{ini}}\equiv C_{\mathrm{ini}}(U)\defeq \max_{i=1,\dots,N}U(y_1^{i})
\end{equation}
is independent of~$N$.
\end{assumption}

We are now in a position to state our main result precisely.
\begin{theorem}\label{thm: main}
  Suppose for some~$0 \leq \eta_{\min} < \eta_{\max} \leq \infty$, the function~$U$ is a double-well function that satisfies 
  Assumptions~\ref{a:criticalpts},~\ref{assumption: nondegeneracy} and~\ref{a:massRatioBound} in Section~\ref{s:mainProof} below. Let~$\hat{\gamma}_r$ be the ratio of the saddle height~$\hat U$ to the energy barrier~$\hat \gamma$ given by $\hat \gamma_r\defeq \hat{U}/\hat{\gamma}$
and let
the process~$Y_{\epsilon , \cdot}$ in Algorithm~\ref{a:ASMC} given by~\eqref{e:Langevin}.
  Given~$\eta_1 \in (\eta_{\min}, \eta_{\max}]$ finite, for every~$\alpha, \delta, \nu > 0$ and~$\theta\in (0,1)$, there exist dimensional constants~$C_T = C_T(\alpha, \nu, C_{\mathrm{ini}}, U/\eta_1)$ and~$C_{N}(\alpha,\nu, U/\eta_1)$ such that, 
if choose~$M, N \in \N$, and~$T \in \R$ as \begin{align}\label{e:MTN}
    M &\geq \ceil*{ \frac{1}{\nu \eta} },
    \quad \text{and}\quad
    N \geq \frac{C_{N} M^2}{\delta^2}\log\paren[\Big]{\frac{M}{\theta}},\\
    T&\geq C_T\paren[\Big]{ \paren[\Big]{\frac{MN}{\theta}}^{\hat{\gamma}_r(1+\alpha)}\paren[\Big]{\log N+\log\paren[\Big]{\frac{M}{\theta}} }+\log\paren[\Big]{\frac{1}{\delta}}+\frac{1}{\eta} },
  \end{align}
   then for every bounded test function~$h$ and arbitrary initial points~$\set{y^i_1}$ satisfying Assumption~\ref{assum:startgood}, the points~$(x^1, \dots, x^N)$ returned by Algorithm~\ref{a:ASMC}  satisfy that \begin{equation}\label{e:MCerror}
     \P\paren[\Big]{ \abs[\Big]{
	\frac{1}{N}\sum_{i=1}^{N}h(x^i)-\int h(x)\pi_{\eta}(x)\,d x}
	  < \|h\|_{\osc} \delta }\geq 1-\theta
	.
  \end{equation}
\end{theorem}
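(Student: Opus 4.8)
The plan is to decompose the Monte Carlo error in \eqref{e:MCerror} into a bias term and a fluctuation term, and to control each across the $M$ annealing levels. Writing $\pi_{\eta_k}$ for the interpolating targets, the law of the particle system at level $k$ after the MCMC step is some measure $\mu_k$ on $(\mathbb R^d)^N$; the key quantity is how far the empirical measure $\hat\mu_k = \frac1N\sum_i \delta_{x_k^i}$ is from $\pi_{\eta_k}$. First I would set up the coupling argument following \cite{Marion23SMC}: at each level, couple the ASMC particle system with an idealized system whose particles are exactly distributed according to $\pi_{\eta_k}$ and resampled with the true weights $\tilde r_k$. The discrepancy introduced at level $k$ has two sources — imperfect mixing of the Langevin dynamics run only for time $T$, and the randomness of multinomial resampling — and these must be shown to contract (or at least not blow up) as we iterate over the $M=\lceil 1/(\nu\eta)\rceil$ levels. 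Because the bound on the eigenfunctions degenerates on unbounded domains, I would first restrict attention to a large ball $B_R$ (with $R$ chosen so that $\pi_{\epsilon}(B_R^c)$ is negligible for all $\epsilon$ in the range, using Assumption~\ref{a:criticalpts}, specifically $\liminf_{|x|\to\infty}|\nabla U|\ge C_U$), paying the advertised extra logarithmic factor in $N$ and polynomial factor in $T$ for this confinement.

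The core estimates I would assemble are: (i) a pointwise lower bound on the Langevin transition kernel $p_T(x,y)$ — obtained by modifying Nash's parabolic kernel estimate \cite{nashPara} — which quantifies how quickly a particle starting at $x$ mixes toward $\pi_{\eta_k}$ as a function of $|x|$ and $T$; (ii) an a priori bound, via a Lyapunov/drift argument using \eqref{eq:Blap}, showing that with probability at least $1-\theta/(\text{poly})$ every particle stays in $B_R$ throughout all $M$ levels, so that the kernel lower bound in (i) is uniformly usable; (iii) a per-level error recursion: conditional on the confinement event, the total-variation (or a suitable $\chi^2$/weighted-$L^2$) distance between $\hat\mu_{k+1}$-relevant quantities and $\pi_{\eta_{k+1}}$ is bounded by a contraction factor times the level-$k$ distance plus an $O(1/\sqrt N)$ fluctuation term from resampling and an $O(e^{-\lambda T})$ term from incomplete mixing, where $\lambda$ relates to the spectral gap of $Y_{\eta_k,\cdot}$ on $B_R$; and (iv) control of the weight ratios $\tilde r_k = \tilde\pi_{\eta_{k+1}}/\tilde\pi_{\eta_k}$ and the normalizing constants — here the equal-well-depth (more generally, mass-ratio) Assumption~\ref{a:massRatioBound} enters to keep the effective sample size from collapsing, giving the $C_m$-dependence. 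The exponent $\hat\gamma_r(1+\alpha)$ in the requirement on $T$ should emerge from balancing the mixing time of the Langevin dynamics between wells (which is $\sim e^{\hat\gamma/\eta_k}$ at level $k$) against the geometric spacing of inverse temperatures and the confinement radius $R\sim \text{poly}(MN/\theta)$; the slack $\alpha$ absorbs lower-order polynomial and logarithmic corrections.

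Having the per-level recursion, I would iterate it over $k=1,\dots,M$: starting from the initial points, which are well-placed by Assumption~\ref{assum:startgood} so the level-$1$ error is controlled via $C_{\mathrm{ini}}$, the contraction ensures the accumulated bias stays $O(\delta)$ provided $T$ is large enough (this is where the $\log(1/\delta)$ and $1/\eta$ terms in the bound on $T$ come from) and $N\gtrsim M^2/\delta^2 \cdot \log(M/\theta)$ controls the accumulated Monte Carlo fluctuation with a union bound over levels (this gives the $M^2/\delta^2$ and $\log(M/\theta)$ in the bound on $N$). Finally, a concentration inequality — Hoeffding or McDiarmid applied to the resampling randomness at the last level, combined with the boundedness of $h$ (so $|h(x^i)-\int h\,d\pi_\eta|\le \|h\|_{\osc}$) — upgrades the control of $\mathbb E[\hat\mu_M(h)] - \pi_\eta(h)$ to the high-probability statement \eqref{e:MCerror}, on the intersection of the confinement event and the fluctuation-control event, each of probability $\ge 1-\theta/2$.

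The main obstacle I anticipate is step (iii) together with (i): establishing a per-level error recursion that genuinely contracts on the unbounded domain. In the torus case one uses a uniform $L^\infty$ bound on normalized eigenfunctions and a clean spectral decomposition, but here eigenfunctions beyond the first are unbounded \cite{BSimon3}, so the contraction must be extracted from the pointwise Nash-type kernel bound and the coupling rather than from spectral gaps alone — and one must carefully track how the confinement radius $R$, which grows polynomially in $MN/\theta$, feeds back into the kernel lower bound and hence into the required $T$. Making these dependencies consistent (so that the final $T$ is only polynomial in $1/\eta$, with the stated dimension-independent degree $\hat\gamma_r(1+\alpha)$) is the delicate quantitative heart of the argument; everything else is either a standard Lyapunov estimate, a standard multinomial-resampling variance bound, or a union-bound bookkeeping exercise.
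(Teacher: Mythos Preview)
Your high-level ingredients are right (confinement, Nash-type kernel bounds, the Marion--Mathews--Schmidler coupling, a per-level recursion), but the proposal is missing the structural idea that actually makes the argument close, and one of your assumptions is false.

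The paper does \emph{not} run a single recursion over all $M$ levels. It splits the schedule at a critical temperature $\eta_{\mathrm{cr}}\sim C_K/\log(MN/\theta)$ (with $C_K=\hat\gamma/(1+\alpha)^{1/2}$) into a high-temperature regime $k<k_{\mathrm{cr}}$ and a low-temperature regime $k\ge k_{\mathrm{cr}}$, and the two regimes are handled by completely different mechanisms. At high temperatures the global spectral gap $\lambda_{2,\eta_k}\gtrsim e^{-\hat U/\eta_k}$ is still large enough that the Marion-style coupling yields global mixing in time polynomial in $MN/\theta$; evaluating $e^{\hat U/\eta_{\mathrm{cr}}}$ is exactly what produces the exponent $\hat\gamma_r(1+\alpha)$, not any ``mixing between wells $\sim e^{\hat\gamma/\eta_k}$'' as you wrote. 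At low temperatures global mixing is hopeless, so the paper instead tracks one scalar quantity through the levels: the empirical average of the \emph{second eigenfunction} restricted to a fixed set $K$, namely $\frac1N\sum_i\psi_{2,k}\one_K(X^i_{k,0})$. Lemma~\ref{l:langevinError} shows that the final Monte Carlo error is controlled by this quantity plus a fast-decaying tail (rate $\Lambda=\inf_\epsilon\lambda_{3,\epsilon}>0$, independent of temperature). Your proposal never isolates $\psi_{2,k}$, and without that you cannot separate the slow inter-well mode from the fast intra-well modes.

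Two further concrete problems. First, your confinement set is wrong: a ball $B_R$ with $R$ growing in $MN/\theta$ will not give a uniform-in-$\epsilon$ bound on $\psi_{2,\epsilon}$. The paper's set $K$ is the union of sublevel sets $B_i=\{x\in\Omega_i: U(x)-U(x_{\min,i})\le \hat\gamma/(1+\alpha)^{1/4}\}$ inside each basin; the bound $\|\psi_{2,\epsilon}\one_K\|_\infty\le C_\psi$ (Lemma~\ref{lem:boundPsi}) comes from Kolokoltsov's result that $\psi_{2,\epsilon}$ is $L^2$-close to a linear combination of $\one_{\Omega_1},\one_{\Omega_2}$, upgraded to $L^\infty$ on $K$ via a local maximum principle---this needs $K$ to sit well inside the basins, not to be a large ball. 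Second, your ``contraction factor'' in (iii) does not exist: the per-level multipliers $\beta_k$ in the recursion (Lemma~\ref{l:iteration}) are \emph{not} uniformly $<1$. What is true, and what the paper proves, is that the product $\prod_{j=k}^{M-1}\beta_j\le C_\beta$ stays bounded; this requires estimating inner products of eigenfunctions at successive temperatures in terms of $|\pi_{k+1}(\Omega_1)-\pi_k(\Omega_1)|$ and then invoking a BV bound $\int_\eta^1|\partial_\epsilon\pi_\epsilon(\Omega_i)|\,d\epsilon\le C_{\mathrm{BV}}$ so the increments telescope. A proof that assumes contraction at every level will fail.
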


We remark that Assumptions~\ref{a:criticalpts}--\ref{a:massRatioBound} are nondegeneracy assumptions, and do not require symmetry, or similarity of the shape of the wells. The proof of Theorem~\ref{thm: main} involves several technical lemmas controlling the pointwise bound of the Langevin transition kernel. It also uses the estimates in \cite{han2025polynomialcomplexitysamplingmultimodal} on the pointwise bound and shape of the eigenfunctions, which introduces dimensional pre-factors that are not explicit.
This takes up the rest of this paper and begins in Section~\ref{s:mainProof}, below.

\section{Proof of the Main Result}\label{s:mainProof}

\subsection{Notation and convention}\label{sec:notation}
Before showing the proof of Theorem~\ref{thm: main},  we
briefly list notational conventions that will be used throughout this paper. The notational conventions is the same as that of~\cite{han2025polynomialcomplexitysamplingmultimodal}, for completeness we include them here.
\begin{enumerate}[(i)]
  \item
    We will always assume~$C >0$ is a finite constant that can increase from line to line, provided it does not depend on the temperature~$\eta$.

  \item
    We use the convention that the expectation operator~$\E$ has lower precedence than multiplication.
    That is~$\E XY$ denotes the expectation of the product~$\E [XY]$, and~$\E X^2$, denotes the expectation of the square~$\E [X^2]$

  \item 
    When taking expectations and probabilities, a subscript will denote the conditional expectation / conditional probability.
    That is~$\E_X Y = \E(Y \given X)$ denotes the conditional expectation of~$Y$ given the~$\sigma$-algebra generated by~$X$.
  \item 
    When averaging functions of Markov processes, a superscript will denote the initial distribution.
    That is~$\E^\mu f(Y_t)$ denotes~$\E f(Y_t)$ given~$Y_0 \sim \mu$.
    When~$\mu = \delta_y$ is the Dirac~$\delta$-measure supported at~$y$, we will use~$\E^y$ to denote~$\E^{\delta_y}$.

  \item
    We interchangeably use~$\pi_\epsilon$ to denote the measure and the density.
    That is for~$x \in \R^d$,~$\pi_\epsilon(x)$ is given by~\eqref{e:piNu}, however for Borel sets~$A$,~$\pi_\epsilon(A)$ denotes~$\int_A \pi_\epsilon(x) \, dx$.
\end{enumerate}

\subsection{Proof of Theorem~\ref{thm: main}}

In this section, we prove Theorem~\ref{thm: main}. For simplicity and without loss of generality we assume~$\eta_1=1$.  Let~$\alpha, \theta,  \eta,\nu>0$ be fixed.

We begin by rewriting our algorithm in a manner that that is convenient for the proof.
Fix~$T > 0$ and~$N \in \N$ that will be chosen later.

\restartsteps
\step
  We start with~$N$ arbitrary points~$y^1_1$, \dots,~$y^N_1$ satisfying Assumption~\ref{assum:startgood}.

\step[Langevin step]
  For each~$k \in \set{1, \dots, M}$, and~$i \in \set{1, \dots N}$, let~$X^i_{k, \cdot}$ be the solution to the overdamped Langevin equation~\eqref{e:Langevin} with initial data~$X^i_{k, 0} = y^i_k$, driven by independent Brownian motions.

\step[Resampling step]
  Given the processes~$\set{X^i_{k, \cdot} \st i \leq N, k \leq M-1}$, we choose~$\set{y^1_{k+1}, \dots, y^N_{k+1}}$ independently from~$\set{X^1_{k, T}, \dots, X^N_{k, T}}$ so that
  \begin{equation}
    \P( y^i_{k+1} = X^j_{k, T} ) = \frac{\tilde r_k(X^j_{k, T})}{\sum_{i=1}^N \tilde r_k(X^i_{k,T})}
    .
  \end{equation}
  Here~$\tilde r_k$ is the ratio defined by~\eqref{e:rkTilde}.

\smallskip
We now briefly recall a few standard facts about the overdamped Langevin
dynamics~\eqref{e:Langevin} that will be used in the proof. For a more detailed introduction, we refer to~\cite[Section 4]{han2025polynomialcomplexitysamplingmultimodal}. Let~$L_{\epsilon}$ be the generator of~\eqref{e:Langevin}, whose action on smooth test functions is defined by
\begin{equation}\label{e:Ldef}
  L_{\epsilon} f \defeq-\epsilon\Delta f +\grad U\cdot\grad f
  .
\end{equation}
  Let~$L_\epsilon^*$ be the dual operator defined by
  \begin{equation}\label{e:LStarDef}
    L_\epsilon^* f = -\dv( \grad U f ) - \epsilon \lap f
    \,.
  \end{equation}
  It is well known~\cite[Chapter 8]{Oksendal03} that if~$Y_{\epsilon, \cdot}$ solves~\eqref{e:Langevin} starting from point~$x$, then for any~$t>0$, its transition density
  \begin{equation}\label{eq:defp}
      p_{\epsilon, t}(x,\cdot) \defeq \PDF(Y_{\epsilon, t})
  \end{equation}
  satisfies the \emph{Fokker-Planck equation}, a.k.a. the \emph{Kolmogorov forward equation}
  \begin{equation}\label{e:KForward}
    \partial_t p_{\epsilon} + L_\epsilon^* p_{\epsilon} = 0
    .
  \end{equation}
  One can readily check that the Gibbs distribution~$\pi_\epsilon$ is a stationary solution of~\eqref{e:KForward}, and hence must be the stationary distribution of~\eqref{e:Langevin}.
A direct calculation shows that
\begin{equation}\label{e:fOverPiNu}
  \partial_t \paren[\Big]{\frac{p_{\epsilon}}{\pi_\epsilon}}
    + L_\epsilon \paren[\Big]{\frac{p_{\epsilon}}{\pi_\epsilon}}
    = 0
    .
\end{equation}

The mixing properties of Langevin dynamics can be deduced directly from the spectral properties of the operator~$L_\epsilon$.
It is well known (see for instance~\cite[Chapter 8]{Kolokoltsov00}) that on the weighted space~$L^2(\pi_{\epsilon})$ the operator~$L_\epsilon$ is self-adjoint and has a discrete spectrum with eigenvalues
\begin{equation}
  0 = \lambda_{1, \epsilon}
    < \lambda_{2, \epsilon}
    \leq \lambda_{3, \epsilon}
    \cdots
\end{equation}
with corresponding~$L^2(\pi_\epsilon)$ normalized eigenfunctions~$\psi_{1, \epsilon}$,~$\psi_{2, \epsilon}$, etc. The first eigenvalue~$\lambda_{1, \epsilon} = 0$ corresponds to the constant eigenfunction~$\psi_{1, \epsilon} \equiv 1$. In our situation, because~$U$ has two wells, according to~\cite[Propositions~2.1, 2.2, Chapter~8]{Kolokoltsov00}, for every~$\gamma < \hat \gamma$ there exists constants~$\tilde{C}_\gamma$ and~$\Lambda$
(independent of~$\epsilon$)
such that
\begin{equation}\label{eq: egvalLEpsilon}
  \lambda_{2,\epsilon}\leq \tilde{C}_\gamma \exp\paren[\Big]{-\frac{\gamma}{\epsilon}}
  \quad\text{and}\quad
  \lambda_{i,\epsilon}\geq \Lambda,\quad \forall i\geq 3
  .
\end{equation}

The above facts, especially \eqref{e:fOverPiNu} and \eqref{eq: egvalLEpsilon}, will be used to show the fact that, if the particles are initialized within some bounded set~$K$, they are expected to mix locally after a relatively short time. Consequently, the Monte Carlo error in the Langevin step is primarily determined by the initialization error at each level, as stated in the following Lemma~\ref{l:langevinError}. Before precisely stating the lemma, we define the specific bounded set $K$ that will be used frequently in the subsequent context.

Define the subset~$K$ as
\begin{equation}\label{eq:defK}
    K\defeq B_1\cup B_2,
\end{equation}
where the sets~$B_i$,~$i\in\set{1,2}$ are defined by
\begin{equation}\label{eq:defB}
    B_i\defeq \set[\Big]{x\in\Omega_i  \st U(x)-U(x_{\min,i})\leq \frac{\hat \gamma}{(1+\alpha)^{\frac14}}}.
\end{equation}

It is straightforward to see that the subset~$K$ is bounded. We now state Lemma~\ref{l:langevinError}.
\begin{lemma}\label{l:langevinError}
  Assume that for each~$i \in \set{1, \dots, N}$,~$Y_{\epsilon,0}^{i}\in K$. Then  for any bounded test function~$h$, with probability larger than or equal to  \begin{equation}\label{e:LProb}
    1-2\exp\paren[\Big]{-\frac{2Na^2}{\norm{h}^2_{\osc}}}
\end{equation}
  we have
  \begin{align}
    \abs[\Big]{ \frac{1}{N}\sum_{i=1}^{N}h(Y^{i}_{\epsilon, T})-\int h\pi_{\epsilon}\,d x}
    &\leq e^{-\lambda_{2, \epsilon} T} \abs[\Big]{\int h\psi_{2,\epsilon}\pi_{\epsilon} \, dx} \cdot\abs[\Big]{\frac{1}{N}\sum_{i=1}^{N}\psi_{2,\epsilon}\one_{K}(Y^{i}_{\epsilon,0})}
    \\
    &
    + \mathcal E_{\epsilon,T}(h)  +a\label{e:langevinError}
\end{align}
where
\begin{equation}\label{eq: def of tilde epsilon}
  \mathcal E_{\epsilon,T}(h)\defeq \norm{h}_{L^2(\pi_{\epsilon})} e^{-\Lambda T/2}\max_{x\in K}\sqrt{\frac{p_{\epsilon,T}(x,x)}{\pi_{\epsilon}(x)}-1}.
\end{equation}
\end{lemma}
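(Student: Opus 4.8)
The plan is to combine a spectral decomposition of the Langevin semigroup on each level with a Hoeffding-type concentration inequality for the empirical average. First I would write, for the process $Y_{\epsilon,\cdot}$ started at a deterministic point $y \in K$, the decomposition of $\E^y h(Y_{\epsilon,T})$ via the transition density. Since $L_\epsilon$ is self-adjoint on $L^2(\pi_\epsilon)$ with spectral data as in \eqref{eq: egvalLEpsilon}, we have
\begin{equation*}
  \E^y h(Y_{\epsilon,T}) = \int h \, \pi_\epsilon \, dx + e^{-\lambda_{2,\epsilon} T}\paren[\Big]{\int h \psi_{2,\epsilon}\pi_\epsilon\,dx}\psi_{2,\epsilon}(y) + R_\epsilon(y),
\end{equation*}
where $R_\epsilon(y) = \sum_{i\geq 3} e^{-\lambda_{i,\epsilon}T}\paren{\int h\psi_{i,\epsilon}\pi_\epsilon\,dx}\psi_{i,\epsilon}(y)$ collects the higher modes. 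The term $\int h\pi_\epsilon$ is the target; the second term is exactly the leading term retained in \eqref{e:langevinError} (after averaging over $i$ and recognizing $\one_K(y)=1$); the remainder $R_\epsilon(y)$ is what must be absorbed into $\mathcal E_{\epsilon,T}(h)$.

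The key step is to bound $|R_\epsilon(y)|$ uniformly over $y \in K$ by $\mathcal E_{\epsilon,T}(h)$. Here I would use the reproducing/semigroup identity: writing $q_t(y) \defeq p_{\epsilon,t}(y,y)/\pi_\epsilon(y)$, the Chapman–Kolmogorov equation gives $q_{2t}(y) = \int (p_{\epsilon,t}(y,x)/\pi_\epsilon(x))^2 \pi_\epsilon(x)\,dx = \norm{p_{\epsilon,t}(y,\cdot)/\pi_\epsilon - 1}_{L^2(\pi_\epsilon)}^2 + 1$, so $q_{2t}(y)-1$ controls the $L^2(\pi_\epsilon)$-distance of the density from stationarity at time $t$. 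Combining this with the spectral gap $\lambda_{i,\epsilon}\geq \Lambda$ for $i\geq 3$ and Cauchy–Schwarz, the part of $\E^y h(Y_{\epsilon,T})$ coming from modes $i\geq 3$ is bounded by $\norm{h}_{L^2(\pi_\epsilon)} e^{-\Lambda T/2}\sqrt{q_{\epsilon,T}(y)-1}$ after writing $e^{-\lambda_{i,\epsilon}T} = e^{-\lambda_{i,\epsilon}T/2}\cdot e^{-\lambda_{i,\epsilon}T/2}$ and applying the decay to one factor and the $L^2$ bound (via $q$ at time $T/2$, then monotonicity or a direct estimate) to the other. Taking the max over $y\in K$ produces precisely $\mathcal E_{\epsilon,T}(h)$ as defined in \eqref{eq: def of tilde epsilon}. (One subtlety: the definition uses $p_{\epsilon,T}(x,x)$ rather than $p_{\epsilon,T/2}$; I would reconcile this by using $q_{2t}$ with $t=T/2$, or absorb the factor-of-two into the exponent, checking the bookkeeping so the stated form holds.)

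Finally I would pass from the pointwise-in-$y$ estimate to the empirical average. Conditioning on the initial points $Y^i_{\epsilon,0}\in K$, the variables $h(Y^i_{\epsilon,T})$ are independent (driven by independent Brownian motions) and take values in an interval of length $\norm{h}_{\osc}$; their conditional means are $\E^{Y^i_{\epsilon,0}}h(Y_{\epsilon,T})$. Hoeffding's inequality then gives that $\abs{\frac1N\sum_i h(Y^i_{\epsilon,T}) - \frac1N\sum_i \E^{Y^i_{\epsilon,0}}h(Y_{\epsilon,T})} \leq a$ with probability at least $1 - 2\exp(-2Na^2/\norm{h}_{\osc}^2)$, which is \eqref{e:LProb}. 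On that event, substituting the spectral decomposition for each $\E^{Y^i_{\epsilon,0}}h(Y_{\epsilon,T})$, averaging, using the triangle inequality, and bounding each $|R_\epsilon(Y^i_{\epsilon,0})|$ by $\mathcal E_{\epsilon,T}(h)$ yields \eqref{e:langevinError}. The main obstacle I anticipate is the second step — getting the higher-mode remainder bounded by exactly $\mathcal E_{\epsilon,T}(h)$ in the stated form, since this requires the heat-kernel/$L^2$ reproducing identity together with careful splitting of the exponential decay and attention to whether one works at time $T$ or $T/2$; the concentration step and the spectral expansion itself are routine.
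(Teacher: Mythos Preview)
Your proposal is correct and follows essentially the same route as the paper: split into the Hoeffding fluctuation term $+a$ and the bias term, then decompose the bias via the spectral expansion of the semigroup, isolating the $\psi_{2,\epsilon}$ contribution and bounding the $i\geq 3$ remainder by $\mathcal E_{\epsilon,T}(h)$ using the Chapman--Kolmogorov/reproducing identity $\norm{p_{\epsilon,s}(x,\cdot)/\pi_\epsilon - 1}_{L^2(\pi_\epsilon)}^2 = p_{\epsilon,2s}(x,x)/\pi_\epsilon(x)-1$. The paper packages that last step as a standalone density estimate (their Lemma~\ref{lem:TDquick}) and then pairs it with $h$ via Cauchy--Schwarz, whereas you work directly with the eigenfunction sum; the bookkeeping you flag resolves cleanly by applying $\lambda_i\geq\Lambda$ to one factor of $e^{-\lambda_i T}$ and absorbing the remaining $\sum_{i\geq 2} e^{-\lambda_i T}\psi_i(y)^2 \leq p_{\epsilon,T}(y,y)/\pi_\epsilon(y)-1$, which yields exactly $\mathcal E_{\epsilon,T}(h)$ with $p_{\epsilon,T}$ (not $T/2$) as stated.
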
  Lemma~\ref{l:langevinError} may look similar to \cite[Lemma 4.6]{han2025polynomialcomplexitysamplingmultimodal}, they actually differ in several aspects. As we mentioned earlier, the eigenfunctions~$\psi_{2,\epsilon}$ are no longer bounded and thus cannot serve as test functions. To mitigate this, we restrict~$\psi_{2,\epsilon}$ on bounded set~$K$ and later compensate this truncation in some bias terms.  Moreover, we no longer have control on the error of marginal distribution which appears in the higher order term in \cite[Lemma 4.6]{han2025polynomialcomplexitysamplingmultimodal}. Instead, we derive a pointwise bound on the transition density to estimate the higher order term \eqref{eq: def of tilde epsilon}, which is precisely stated in the following lemma. 
\begin{lemma}\label{cor:poverpi}
    For~$x\in\mathbb{R}^d$,~$\epsilon<1$, there exists constant~$C_p\equiv C_p(d,U)$ such that
    \begin{equation}\label{eq:poverpi}
       \frac{p_{\epsilon,t}(x,x)}{\pi_{\epsilon}(x)}\leq C_{p}\exp\paren[\Big]{\frac{U(x)}{\epsilon}}\paren[\Big]{1-\exp\paren[\Big]{-\frac{c}{d}t}}^{-\frac{d}{2}}.
    \end{equation}
    Consequently, 
    \begin{equation}\label{eq:poverpiK}
       \max_{x\in K}\frac{p_{\epsilon,t}(x,x)}{\pi_{\epsilon}(x)}\leq C_p\exp\paren[\Big]{\frac{\hat{U}}{\epsilon}}\paren[\Big]{1-\exp\paren[\Big]{-\frac{c}{d}t}}^{-\frac{d}{2}}.
    \end{equation}
\end{lemma}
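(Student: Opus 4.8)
\textbf{Proof proposal for Lemma~\ref{cor:poverpi}.}

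The plan is to bound the on-diagonal transition density $p_{\epsilon,t}(x,x)$ by adapting Nash's parabolic-kernel argument to the $\epsilon$-weighted Langevin generator, and then divide by $\pi_\epsilon(x)$. First I would work with the density $q_{\epsilon,t}(x,y) \defeq p_{\epsilon,t}(x,y)/\pi_\epsilon(y)$, which by~\eqref{e:fOverPiNu} solves the equation $\partial_t q + L_\epsilon q = 0$ in the $y$ variable and is symmetric in $(x,y)$ with respect to the reversible measure $\pi_\epsilon$, i.e.\ it is the kernel of the self-adjoint semigroup $e^{-tL_\epsilon}$ on $L^2(\pi_\epsilon)$. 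The on-diagonal value $q_{\epsilon,t}(x,x) = p_{\epsilon,t}(x,x)/\pi_\epsilon(x)$ is then exactly what we want to bound, and the semigroup/heat-kernel machinery applies: by the reproducing property $q_{\epsilon,t}(x,x) = \int q_{\epsilon,t/2}(x,y)^2 \, \pi_\epsilon(dy) = \norm{q_{\epsilon,t/2}(x,\cdot)}_{L^2(\pi_\epsilon)}^2$, so it suffices to get an $L^2(\pi_\epsilon)\to L^\infty$ (equivalently $L^1(\pi_\epsilon)\to L^2(\pi_\epsilon)$) ultracontractivity estimate for $e^{-tL_\epsilon/2}$.

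The key step is a Nash-type inequality. The generator $L_\epsilon = -\epsilon\Delta + \grad U\cdot\grad$ has Dirichlet form $\mathcal E_\epsilon(f,f) = \epsilon\int |\grad f|^2 \, d\pi_\epsilon$. Writing $g = f\sqrt{\pi_\epsilon}$ converts this (modulo the zeroth-order Schr\"odinger potential $W_\epsilon \defeq \tfrac{|\grad U|^2}{4\epsilon^2} - \tfrac{\Delta U}{2\epsilon}$, see the transformation used around~\eqref{e:Ldef}) into $\epsilon\int|\grad g|^2 + \int W_\epsilon g^2$ against Lebesgue measure. Here is where the hypothesis $|\Delta U|\le c$ from Assumption~\ref{a:criticalpts} is used: it lets me absorb the dangerous part of the potential, giving $\mathcal E_\epsilon(f,f) \ge \epsilon\int|\grad g|^2 - \tfrac{c}{2\epsilon}\int g^2$. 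Combining with the classical Euclidean Nash inequality $\norm{g}_2^{2+4/d} \le C_d \norm{\grad g}_2^2 \norm{g}_1^{4/d}$ and integrating the resulting differential inequality for $u(t)\defeq \norm{e^{-tL_\epsilon/2}h}_{L^2(\pi_\epsilon)}^2$ with $\norm{h}_{L^1(\pi_\epsilon)}=1$ yields a bound of the form $u(t) \le C_d\, \epsilon^{-d/2}\, e^{ct/(2\epsilon)}\, t^{-d/2}$. To get the stated form with the factor $(1-e^{-ct/d})^{-d/2}$ rather than $t^{-d/2}e^{ct/\epsilon}$, I would instead run the differential inequality keeping the exponential drift term, i.e.\ for $v(t)\defeq e^{-ct/(2\epsilon)}u(t)$ one gets $v' \le -c_d \epsilon\, v^{1+2/d} \norm{h}_1^{-4/d}$, which integrates to $v(t)^{-2/d} \ge v(0)^{-2/d} + c_d'\tfrac{\epsilon}{d} t$; a small variant of this computation, tracking the $\epsilon$-dependence carefully and using $\pi_\epsilon(x)^{-1} = Z_\epsilon^{-1}e^{U(x)/\epsilon}$ together with a lower bound $Z_\epsilon \gtrsim \epsilon^{d/2}$ coming from the nondegenerate minimum at $x_{\min,1}$ (normalized so $U(x_{\min,1})=0$ by~\eqref{e:Upositive}), produces~\eqref{eq:poverpi} with the claimed structure $C_p e^{U(x)/\epsilon}(1-e^{-ct/d})^{-d/2}$.

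The consequence~\eqref{eq:poverpiK} is then immediate: on $K = B_1\cup B_2$, by the definition~\eqref{eq:defB} of $B_i$ we have $U(x) \le U(x_{\min,i}) + \hat\gamma/(1+\alpha)^{1/4} \le U(x_{\min,2}) + \hat\gamma = U(s_{1,2}) = \hat U$ using~\eqref{e:gammaHatDef} and~\eqref{e:Upositive} (and $U(x_{\min,1})=0 \le \hat U$ trivially on $B_1$), so $\max_{x\in K} e^{U(x)/\epsilon} \le e^{\hat U/\epsilon}$. The main obstacle I anticipate is bookkeeping the $\epsilon$-dependence through the Nash argument so that it cleanly collapses into the single prefactor $C_p(d,U)$ — in particular making sure the $\epsilon^{-d/2}$ from ultracontractivity is exactly cancelled by the $\epsilon^{d/2}$ from the Laplace-asymptotics lower bound on $Z_\epsilon$, so that no residual negative power of $\epsilon$ survives (this is what forces the restriction $\epsilon < 1$ to also be harmless). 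A secondary technical point is justifying that $q_{\epsilon,t}$ is genuinely the symmetric kernel of a nice self-adjoint semigroup on the unbounded domain $\R^d$ — this needs the confinement from $\liminf_{|x|\to\infty}|\grad U| \ge C_U$ in Assumption~\ref{a:criticalpts} to ensure $\pi_\epsilon$ is a probability measure and $L_\epsilon$ is essentially self-adjoint with the spectral gap structure already quoted in~\eqref{eq: egvalLEpsilon}.
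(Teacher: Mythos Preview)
Your overall strategy—Nash-type on-diagonal decay followed by division by $\pi_\epsilon$—is the right one and matches the paper, and your derivation of the consequence~\eqref{eq:poverpiK} on $K$ is exactly what the paper does. But the implementation through the weighted space $L^2(\pi_\epsilon)$ and the ground-state transform has a genuine gap.

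The problem is the $L^1$ norm in Nash's inequality. After substituting $g=f\sqrt{\pi_\epsilon}$, the Euclidean Nash inequality you invoke needs $\|g\|_{L^1(dx)}=\int|f|\sqrt{\pi_\epsilon}\,dx$, whereas the quantity conserved by the semigroup $e^{-tL_\epsilon}$ is $\|f\|_{L^1(\pi_\epsilon)}=\int|f|\,\pi_\epsilon\,dx$. On $\R^d$ these two norms are not comparable in the direction you need (since $\pi_\epsilon$ is not bounded below), so the differential inequality you sketch does not close. Two smaller slips compound this: the Witten potential is $W_\epsilon=\tfrac{|\nabla U|^2}{4\epsilon}-\tfrac{\Delta U}{2}$ (one fewer power of $\epsilon^{-1}$ than you wrote), and since $\pi_\epsilon(x)^{-1}=Z_\epsilon\,e^{U(x)/\epsilon}$, it is an \emph{upper} bound $Z_\epsilon\le C\epsilon^{d/2}$ that is needed at the end, not a lower bound.

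The paper avoids the weighted-norm issue entirely by running Nash directly on the unweighted $L^2(dx)$ norm of the transition density $T(\cdot)=p_{\epsilon,t}(x,\cdot)$: this solves the Fokker--Planck equation and has the crucial identity $\|T\|_{L^1(dx)}\equiv 1$. A short calculation gives $\partial_tE=-\int(\Delta U)T^2\,dy-2\epsilon\|\nabla T\|_2^2$ with $E=\|T\|_2^2$; then $|\Delta U|\le c$ together with the Euclidean Nash inequality yields the scalar ODE $\partial_tE\le cE-2\epsilon C_dE^{1+2/d}$, which integrates (via $F=E^{-2/d}$) to the $(1-e^{-2ct/d})^{-d/2}$ form. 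The analogous bound on $\int p_{t/2}(z,x)^2\,dz$ (backward equation, using the other half of $|\Delta U|\le c$), combined with Chapman--Kolmogorov and Cauchy--Schwarz, gives a pointwise bound on $p_{\epsilon,t}(x,x)$ that is independent of $x$. Only then does the paper divide by $\pi_\epsilon(x)$ and use the Laplace asymptotic $Z_\epsilon\le C(2\pi\epsilon)^{d/2}$ from~\cite{MenzSchlichting14}, with the $\epsilon^{-d/2}$ and $\epsilon^{d/2}$ cancelling as you correctly anticipated.
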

The proof of Lemma \ref{cor:poverpi} is based on a modification of Nash's estimate on the pointwise bound of the kernel function of parabolic equations \cite{nashPara}, where the original result concerns general divergence-form operators and the dependence on the coefficients is implicit. Additional work is required to derive the explicit~$\epsilon$-dependence.

Observe that the term~$\psi_{2,\epsilon}\one_K$ on the right hand side of \eqref{e:langevinError} indicates the major reason for choosing~$K$ as in \eqref{eq:defK}: we can actually obtain a pointwise bound on~$\psi_{2,\epsilon}\one_K$ that is uniform in~$\epsilon$ for~$\epsilon\leq 1$. Precisely,
\begin{lemma}\label{lem:boundPsi}
     There exists constant~$C_{\psi} = C_{\psi}(U,d,C_m,\alpha)>0$ 
    such that
    \begin{equation}\label{eq:defCpsi}
         \sup\limits_{0< \epsilon\leq 1}\|\psi_{2,\epsilon}\one_{K}\|_{L^{\infty}}\leq C_{\psi}.
    \end{equation}
\end{lemma}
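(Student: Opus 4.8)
The goal is to bound $\|\psi_{2,\epsilon}\one_K\|_{L^\infty}$ uniformly for $0 < \epsilon \le 1$, where $K = B_1 \cup B_2$ consists of the sub-level sets of $U$ in each basin at height $\hat\gamma/(1+\alpha)^{1/4}$ above the respective well bottoms. The plan is to combine an $L^2$ bound on $\psi_{2,\epsilon}$ (which is automatic, since it is $L^2(\pi_\epsilon)$-normalized) with elliptic regularity / a Harnack-type argument to upgrade it to an $L^\infty$ bound on $K$.

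First I would recall the eigenvalue equation $L_\epsilon \psi_{2,\epsilon} = \lambda_{2,\epsilon}\psi_{2,\epsilon}$, i.e., $-\epsilon\Delta\psi_{2,\epsilon} + \grad U\cdot\grad\psi_{2,\epsilon} = \lambda_{2,\epsilon}\psi_{2,\epsilon}$, and note from \eqref{eq: egvalLEpsilon} that $\lambda_{2,\epsilon} \le \tilde C_\gamma e^{-\gamma/\epsilon}$ is uniformly bounded (in fact exponentially small) for $\epsilon \le 1$. The function $\phi_{2,\epsilon} \defeq \psi_{2,\epsilon}\, e^{-U/(2\epsilon)}$ (the ground-state transformation) solves a Schrödinger eigenvalue problem $-\epsilon^2\Delta\phi + V_\epsilon \phi = \epsilon\lambda_{2,\epsilon}\phi$ with $V_\epsilon = \tfrac14|\grad U|^2 - \tfrac{\epsilon}{2}\Delta U$. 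Since $\|\psi_{2,\epsilon}\|_{L^2(\pi_\epsilon)}=1$ means $\int \phi_{2,\epsilon}^2\, dx = Z_\epsilon^{-1}$ (a computable normalization), one has uniform $L^2(dx)$ control after tracking $Z_\epsilon$.

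The key step is the pointwise bound on $K$. Here I would invoke exactly the pointwise eigenfunction estimates from \cite{han2025polynomialcomplexitysamplingmultimodal} — the excerpt explicitly says Theorem~\ref{thm: main} "uses the estimates in \cite{han2025polynomialcomplexitysamplingmultimodal} on the pointwise bound and shape of the eigenfunctions." The structure there is: on each basin $\Omega_i$, at low temperature $\psi_{2,\epsilon}$ is approximately constant (equal to $\pm c_i/\sqrt{\pi_\epsilon(\Omega_i)}$ for suitable $c_i$), with the constants controlled by the mass ratios $\pi_\epsilon(\Omega_i) \ge 1/C_m^2$ from Assumption~\ref{a:massRatioBound}. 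More precisely, writing $\psi_{2,\epsilon} = a_\epsilon \one_{\Omega_1} + b_\epsilon \one_{\Omega_2} + (\text{small correction})$, the normalization $\int \psi_{2,\epsilon}^2\pi_\epsilon = 1$ and orthogonality $\int\psi_{2,\epsilon}\pi_\epsilon = 0$ force $a_\epsilon^2\pi_\epsilon(\Omega_1) + b_\epsilon^2\pi_\epsilon(\Omega_2) \approx 1$ and $a_\epsilon\pi_\epsilon(\Omega_1) + b_\epsilon\pi_\epsilon(\Omega_2)\approx 0$, whence $|a_\epsilon|, |b_\epsilon| \le C(C_m)$. The correction term is controlled by a maximum-principle / Agmon-type argument showing the eigenfunction decays away from the wells but, crucially, can only be large (unbounded) deep in the tails; on the sub-level set $K$ — chosen precisely at height strictly below $\hat\gamma$ — the correction is $O(\text{poly}(\epsilon))$ or at worst bounded. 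I would therefore: (i) cite or reprove that $\psi_{2,\epsilon}$ restricted to $B_i$ is within a bounded multiplicative/additive factor of the plateau value $\pm|a_\epsilon|$ resp. $\pm|b_\epsilon|$, using that $B_i$ is a fixed compact set on which $U - U(x_{\min,i}) < \hat\gamma$ strictly; (ii) combine with the $C(C_m)$ bound on the plateau values. The constant $C_\psi$ then depends on $U, d, C_m, \alpha$ as claimed ($\alpha$ enters through the height $\hat\gamma/(1+\alpha)^{1/4}$ defining $K$, and the $d$-dependence is non-explicit because it comes through the eigenfunction-shape estimates of \cite{han2025polynomialcomplexitysamplingmultimodal}).

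The main obstacle is making the "correction is bounded on $K$" step rigorous with an $\epsilon$-uniform constant: one must show that although $\psi_{2,\epsilon}$ is genuinely unbounded on all of $\R^d$ (as the excerpt notes, citing \cite{BSimon3}), its growth is confined to the region $\{U - U(x_{\min,i}) \gtrsim \hat\gamma\}$, so that cutting off at height $\hat\gamma/(1+\alpha)^{1/4} < \hat\gamma$ stays safely inside the "flat" regime. This is the Agmon/WKB heuristic that the eigenfunction behaves like $e^{-d_{\mathrm{Agmon}}(x)/\epsilon}$ times the plateau value (with $d_{\mathrm{Agmon}}$ the Agmon distance), which grows only once one leaves the classically allowed region past the saddle; to get the clean statement I would lean directly on the quantitative eigenfunction estimates already established in \cite{han2025polynomialcomplexitysamplingmultimodal}, adapting them to $\R^d$ using Assumption~\ref{a:criticalpts}'s condition $\liminf_{|x|\to\infty}|\grad U| \ge C_U > 0$, which guarantees the confining behavior of $V_\epsilon$ at infinity and hence the needed exponential decay.
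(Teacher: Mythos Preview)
Your strategic outline is on target and matches the paper's approach at the coarse level: decompose $\psi_{2,\epsilon}$ into a piecewise-constant part $\sum_i a_{i,\epsilon}\one_{\Omega_i}$ with $|a_{i,\epsilon}|\le C_m$ (from the mass-ratio assumption), plus a correction that is small on $K$ because $K$ sits strictly below the saddle height. The $\alpha$-dependence through the height $\hat\gamma/(1+\alpha)^{1/4}$ is exactly the mechanism the paper exploits.

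Two points of divergence are worth noting. First, the paper does not use the Schr\"odinger/Agmon machinery you propose. Instead it works directly with the equation $(L_\epsilon-\lambda_{2,\epsilon})(\psi_{2,\epsilon}-a_{i,\epsilon})=\lambda_{2,\epsilon}a_{i,\epsilon}$ and applies the Gilbarg--Trudinger local maximum principle (their Corollary~9.21) on balls $B(y,2\epsilon)$ contained in a slightly enlarged set $\tilde B_i\supset B_i$; the resulting $L^2$-to-$L^\infty$ estimate is then fed with the bound $\|\psi_{2,\epsilon}-\sum_i a_{i,\epsilon}\one_{\Omega_i}\|_{L^2(\pi_\epsilon)}\le d(E_\epsilon,F_\epsilon)\le C_\gamma e^{-\gamma/\epsilon}$ (Kolokoltsov's subspace-distance estimate). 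The enlarged set $\tilde B_i$ is at height $\hat\gamma/(1+\alpha)^{1/8}$, and the gap between the two heights is what absorbs the factor $e^{U/\epsilon}$ that appears when converting from $L^2(\pi_\epsilon)$ to $L^2(dx)$ on the ball. Your Agmon route would also work, but this is more elementary.

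Second, and more substantively, your outline only treats the small-$\epsilon$ regime. The paper explicitly splits off a second case $\epsilon\in(\epsilon_1,1]$ for a fixed threshold $\epsilon_1>0$: here the subspace-distance bound is useless (it is $O(1)$), but because $\epsilon$ is bounded below one can apply the same local maximum principle directly to $\psi_{2,\epsilon}$ on balls of radius $\epsilon\ge\epsilon_1$ inside a fixed compact neighbourhood $K_2\supset K$, getting a bound of the form $C\epsilon_1^{-d/2}\exp(\|U\|_{L^\infty(K_2)}/2\epsilon_1)$. Without this case split your uniform bound over all of $(0,1]$ is not complete.
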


\smallskip

The above lemmas give the estimates for the error in the Langevin step, we now turn to the resampling step. Here we  have the following Lemma~\ref{l:rebalancing} estimating the Monte Carlo error, which is very similar to \cite[Lemma 3.3]{han2025polynomialcomplexitysamplingmultimodal}. The difference is that \cite[Lemma 3.3]{han2025polynomialcomplexitysamplingmultimodal} estimates the variance of the Monte Carlo error and here it controls the error with high probability. Due to this difference, we state the lemma below and present the proof in Section \ref{s:resample}.

\begin{lemma}\label{l:rebalancing}
 Suppose~$x^1$, \dots,~$x^N$ are~$N$ (not necessarily i.i.d.) random points in~$\mathcal X$.
  Let~$\tilde p$,~$\tilde q \colon \mathcal X \to [0, \infty)$ be two unnormalized probability density functions, and choose~$y^1,\dots,y^N$ independently from~$\set{x^1, \dots, x^N}$ according to
  \begin{equation}\label{e:PyiEqXj}
  \P( y^i = x^j ) = \frac{\tilde r(x^j)}{\sum_{i=1}^N \tilde r(x^i)},
  \quad\text{where}\quad
  \tilde r \defeq \frac{\tilde q}{\tilde p}
  .
\end{equation}
Let~$p, q$ are the normalized probability distributions corresponding to~$p, q$ respectively.  Then for any bounded function~$h$,  with probability larger than or equal to 
  \begin{equation}\label{e:probResam}
     1- 2\exp\paren[\Big]{-\frac{2Na^2}{\norm{h}^2_{\osc}}},
  \end{equation}
  we have
  \begin{align}
    \abs[\Big]{\frac{1}{N} \sum_1^N h(y^{i}) - \int_{\mathcal{X}} hq\, dx} &\leq  \norm[\Big]{h-\int_{\mathcal X} h q\, dx}_{L^\infty} \abs[\Big]{1- \frac{1}{N}\sum_{i=1}^N r(x^i)}\\
     &+\abs[\Big]{\frac{1}{N}\sum_{i=1}^N r(x^i)\Big(h(x^i)  -\int_{\mathcal X} h q \, dx \Big)}+a. \label{e:resampling}
  \end{align}
  Here~$r$ is the ratio
  \begin{equation}\label{e:rdef}
    r \defeq \frac{q}{p}
    ,
    \quad\text{where}\quad
    p = \frac{\tilde p}{\int_{\mathcal X} \tilde p \, dx}
    \quad\text{and}\quad
    q = \frac{\tilde q}{\int_{\mathcal X} \tilde q \, dx}
    .
  \end{equation}
  \end{lemma}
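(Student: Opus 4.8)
\textbf{Proof proposal for Lemma~\ref{l:rebalancing}.}

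The plan is to reduce the bound to an application of a concentration inequality (McDiarmid/Hoeffding for bounded differences) conditionally on the source points~$x^1,\dots,x^N$, combined with a deterministic algebraic decomposition of the conditional mean. First I would condition on the $\sigma$-algebra~$\mathcal F$ generated by~$x^1,\dots,x^N$. Given~$\mathcal F$, the selected points~$y^1,\dots,y^N$ are i.i.d.\ draws from the finite set~$\{x^1,\dots,x^N\}$ with the multinomial weights in~\eqref{e:PyiEqXj}. Writing~$w_j \defeq \tilde r(x^j)/\sum_{n=1}^N \tilde r(x^n)$, the conditional expectation of the empirical average of~$h$ is
\begin{equation}\label{e:condmean}
  \E\paren[\Big]{ \frac1N \sum_{i=1}^N h(y^i) \given \mathcal F }
  = \sum_{j=1}^N w_j\, h(x^j)
  = \frac{\sum_{j=1}^N \tilde r(x^j) h(x^j)}{\sum_{n=1}^N \tilde r(x^n)}.
\end{equation}

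The key algebraic step is to rewrite~\eqref{e:condmean} in terms of the \emph{normalized} ratio~$r = q/p$. Since~$\tilde q = (\int \tilde q)\, q$ and~$\tilde p = (\int \tilde p)\, p$, we have~$\tilde r = (\int\tilde q / \int\tilde p)\, r$, so the normalizing constants cancel in~\eqref{e:condmean}, giving~$\sum_j w_j h(x^j) = \big(\sum_j r(x^j) h(x^j)\big)/\big(\sum_n r(x^n)\big)$. Subtracting the target~$\int hq\,dx$ and multiplying through by~$\frac1N\sum_n r(x^n)$ (which equals the denominator up to the factor~$N$), I would use the identity
\begin{equation}\label{e:algdecomp}
  \paren[\Big]{\frac1N\sum_n r(x^n)}\paren[\Big]{ \sum_j w_j h(x^j) - \int hq\,dx }
  = \frac1N\sum_{i=1}^N r(x^i)\paren[\Big]{ h(x^i) - \int hq\,dx }.
\end{equation}
Dividing both sides by~$\frac1N\sum_n r(x^n)$ is not what I want (that reintroduces a denominator); instead I write~$\sum_j w_j h(x^j) - \int hq\,dx$ as
\begin{equation}\label{e:split}
  \sum_j w_j h(x^j) - \int hq\,dx
  = \paren[\Big]{ \sum_j w_j h(x^j) - \int hq\,dx }
\end{equation}
and split using~\eqref{e:algdecomp}: adding and subtracting~$\big(\int hq\,dx\big)\cdot\frac1N\sum_n r(x^n)$ inside, one gets
\begin{equation}\label{e:twoterms}
  \sum_j w_j h(x^j) - \int hq\,dx
  = \frac{\frac1N\sum_i r(x^i)\big(h(x^i)-\int hq\,dx\big) + \big(\int hq\,dx\big)\big(\frac1N\sum_i r(x^i) - 1\big)}{\frac1N\sum_n r(x^n)}.
\end{equation}
Here I realize the cleanest route is to bound~$\big|\sum_j w_j h(x^j) - \int hq\,dx\big|$ directly by noting $\sum_j w_j (\int hq\,dx) = \int hq\,dx$ is false in general since $\sum_j w_j = 1$ — so actually $\sum_j w_j h(x^j) - \int hq\,dx = \sum_j w_j(h(x^j) - \int hq\,dx)$, and then expand~$w_j = \frac{r(x^j)}{\sum_n r(x^n)}$, so the conditional bias equals~$\frac{1}{\sum_n r(x^n)}\sum_i r(x^i)(h(x^i)-\int hq\,dx)$; I then replace~$\sum_n r(x^n)$ in the denominator by~$N$ at the cost of the correction term~$|1 - \frac1N\sum_i r(x^i)|$ times~$\|h - \int hq\|_{L^\infty}$, which is exactly the first term on the right of~\eqref{e:resampling}. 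This is a short deterministic manipulation.

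For the stochastic part, observe that the function~$(y^1,\dots,y^N)\mapsto \frac1N\sum_i h(y^i)$ changes by at most~$\frac1N\|h\|_{\osc}$ when any single coordinate~$y^i$ is changed (oscillation = $\sup h - \inf h$). McDiarmid's inequality applied conditionally on~$\mathcal F$ then gives
\begin{equation}\label{e:mcdiarmid}
  \P\paren[\Big]{ \abs[\Big]{ \frac1N\sum_i h(y^i) - \sum_j w_j h(x^j) } \geq a \given \mathcal F }
  \leq 2\exp\paren[\Big]{ -\frac{2Na^2}{\|h\|_{\osc}^2} },
\end{equation}
and taking expectations over~$\mathcal F$ removes the conditioning and yields the unconditional probability bound~\eqref{e:probResam}. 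Combining~\eqref{e:mcdiarmid} with the triangle inequality and the deterministic decomposition of the conditional bias into the two terms displayed in~\eqref{e:resampling} completes the proof.

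I expect the main obstacle to be purely bookkeeping: making sure the normalizing constants for~$\tilde p,\tilde q$ versus~$p,q$ cancel correctly so that the unnormalized weights~$\tilde r$ in the resampling probabilities can be replaced by the normalized ratio~$r$, and then correctly isolating the term~$\|h - \int hq\,dx\|_{L^\infty}\,|1-\frac1N\sum_i r(x^i)|$ when one swaps the random denominator~$\frac1N\sum_n r(x^n)$ for~$1$. There is no analytic difficulty — McDiarmid is textbook and the algebra is elementary — but one must be careful that~$h$ enters the concentration bound only through its oscillation (not its~$L^\infty$ norm), which is why the bounded-differences constant is~$\|h\|_{\osc}/N$ rather than~$2\|h\|_{L^\infty}/N$.
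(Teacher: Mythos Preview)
Your proposal is correct and follows essentially the same route as the paper's proof: condition on the source points, apply Hoeffding/McDiarmid to the conditionally i.i.d.\ draws $y^i$ to control the fluctuation $\frac1N\sum_i h(y^i)-\sum_j w_j h(x^j)$, and decompose the conditional bias $\sum_j w_j \tilde h(x^j)=\frac{\sum_i r(x^i)\tilde h(x^i)}{\sum_n r(x^n)}$ (with $\tilde h=h-\int hq$) via the identity $A = A\,(1-\tfrac1N\sum r(x^i)) + \tfrac1N\sum r(x^i)\tilde h(x^i)$, bounding the first summand by $\|\tilde h\|_{L^\infty}\,|1-\tfrac1N\sum r(x^i)|$. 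Your presentation meanders a bit (the displays around \eqref{e:algdecomp}--\eqref{e:twoterms} are detours you then abandon), but the final argument you settle on matches the paper's $J_1,J_2,J_3,J_4$ split exactly.
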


\smallskip

We now explain the main idea behind the proof of Theorem~\ref{thm: main}. We split our analysis into the high and low temperature regime, respectively. When temperature is relatively high, we let the Langevin to mix globally and when the temperature is low, we only require local mixing. It is thus essential to choose a proper threshold for the division of low and high temperature. This critical temperature~$\eta_{\mathrm{cr}}$ depends on number of levels~$M$ and sample size~$N$, which will be determined later. We remark that the choice of the critical temperature is used solely in the proof and does not appear in the implementation of the algorithm.

Given the critical temperature~$\eta_{\mathrm{cr}}$, we define the critical level as
\begin{equation}\label{eq:defkcr}
    k_{\mathrm{cr}}\defeq\min\Bigl\{2\leq k\leq M \: | \: \eta_k\leq \eta_{\mathrm{cr}}\Bigr\}.
\end{equation}
We split our analysis into the high temperature regime~$k< k_{\mathrm{cr}}$ and the low temperature regime~$k\geq k_{\mathrm{cr}}$. We first consider the  low temperature regime and then the high temperature regime.

\smallskip
\noindent\emph{Low temperature regime.} 
In the low temperature regime, Lemma~\ref{l:langevinError} and Lemma~\ref{l:rebalancing} allows us to derive Monte Carlo error estimates between levels~$k$ and~$k+1$ in Algorithm~\ref{a:ASMC}.
Recall in Algorithm~\ref{a:ASMC},~$M$ is chosen according to~\eqref{e:MTN},~$\eta_1 = 1$,~$\eta_M = \eta$, and~$1/\eta_1$, \dots,~$1/\eta_M$ are linearly spaced.
That is,~$\eta_k$ is chosen according to
\begin{equation}\label{e:chooseEtaK}
  \eta_k
    \defeq \frac{(M - 1) \eta}{(M-1) \eta + (k-1)(1 - \eta) }
    .
\end{equation}

For~$k \in \set{1, \dots, M}$, by a slight abuse of notation we define
\begin{equation}\label{e:piKDef}
  \pi_k \defeq \pi_{\eta_k} 
  ,
  \quad
  \tilde \pi_k \defeq \tilde \pi_{\eta_k}
  \quad\text{and}\quad
  Z_k \defeq Z_{\eta_k}
\end{equation}
where~$\pi_{\eta_k}$,~$\tilde \pi_{\eta_k}$ and~$Z_{\eta_k}$ are defined by~\eqref{e:piNu} with~$\epsilon = \eta_k$.
Next we define
\begin{equation}\label{e:defrk}
  r_k \defeq \frac{\pi_{k+1}}{\pi_k}
\end{equation}
to be the ratio of \emph{normalized} densities at levels~$k+1$ and~$k$.
In practice, we do not have access to~$r_k$ as we do not have access to the normalization constants~$Z_k$.
This is why Algorithm~\ref{a:ASMC} is formulated using the ratio of unnormalized densities~$\tilde r_k$ defined in~\eqref{e:rkTilde}.
It is shown in \cite[Lemma 9.2]{han2025polynomialcomplexitysamplingmultimodal} that there exists constant~$C_r\equiv C_r(U,\nu)$ such that 
\begin{equation}\label{eq:defCr}
    C_r \defeq \max_{1\leq k\leq M-1}\|r_k\|_{L^\infty(\R^d)}
    .
\end{equation}
Moreover,~$C_r>1$ and~$C_r\to 1$ as~$\nu\to 0$.

For simplicity of notation, we use a subscript of~$k$ on the error, eigenvalue and eigenfunction to denote the corresponding quantities at~$\epsilon = \eta_k$.
Explicitly, we write
\begin{equation}
    \lambda_{2,k}\defeq \lambda_{2,\eta_k}
    , \quad
    \psi_{2,k}\defeq \psi_{2,\eta_k}.
\end{equation}

We now state the key estimate used to carry out the analysis in the low-temperature regime, whose proof is given in Section~\ref{sec:iteration}.
\begin{lemma}\label{l:iteration}
  Choose~$M$ as in~\eqref{e:MTN} and~$\eta_k$ as in~\eqref{e:chooseEtaK}. Fix~$\theta\in (0,1)$.
  Define constant~$C_K\equiv C_K(\alpha,\hat{\gamma})$ as
\begin{equation}\label{eq:defCK}
    C_K\defeq \frac{\hat{\gamma}}{(1+\alpha)^{\frac12}}.
\end{equation}
  There exist dimensional constants~$C_{\alpha} = C_{\alpha}(\alpha,U)>0$ and~$C_{\mathrm{tem}}= C_{\mathrm{tem}}(\alpha,U)>0$ such that for any~$\delta>0$, if
  \begin{align}
 \label{eq:Nchoice} N&\geq 128C^2_r(C_{\psi}+1)^2 \frac{M^2}{\delta^2}\log\paren[\Big]{\frac{8M}{\theta}},\\
 \label{e:TNlow}   T&\geq \max\Bigl\{C_{\alpha}\Big(\log\paren[\Big]{\frac{M}{\theta}}+\log N+\log\paren[\Big]{\frac{1}{\delta}}+\frac{1}{\eta}\Big), \frac{c}{d}\log (4d)\Bigr\},
  \end{align}
  and let
  \begin{equation}\label{e:CriEtaLT}
      \eta_{\mathrm{cr}}=\frac{C_K}{\log\paren[\Big]{\frac{C_{\mathrm{tem}}MN}{\theta}}},
  \end{equation}
  then for each~$ k_{\mathrm{cr}}\leq k\leq M-1$, 
  with probability 
  \begin{equation}\label{eq:thetakLB}
      \theta_k\geq 1-\frac{\theta}{M},
  \end{equation}
  we have
\begin{equation}\label{eq:Xk+1inK}
        X_{k+1,0}^{i}\in K,\quad\forall i
    \end{equation}
    and
     \begin{equation}\label{e:iteration}
     \abs[\Big]{\frac{1}{N}\sum_{i=1}^{N}\psi_{2,k+1}\one_{K}(X_{k+1,0}^i)}\leq \beta_k \abs[\Big]{\frac{1}{N}\sum_{i=1}^{N}\psi_{2,k}\one_{K}(X_{k,0}^i)}+c_k. 
  \end{equation}
  Here the constants~$\beta_k$,~$c_k$ are such that for every~$k$ such that~$\eta_k$ satisfies \eqref{e:CriEtaLT} we have
  \begin{equation}\label{e:CBeta}
    \prod_{j=k}^{M-1} \beta_j \leq C_{\beta}
    \quad\text{and}\quad
    c_k \leq \frac{\delta}{M}
    ,
  \end{equation}
  for some dimensional constant~$C_{\beta}>1$ (independent of~$\alpha$,~$\delta$).
\end{lemma}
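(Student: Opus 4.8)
\emph{Plan of proof.} I will establish the one-step estimate by running one Langevin half-step (Lemma~\ref{l:langevinError}) followed by one resampling half-step (Lemma~\ref{l:rebalancing}), tracking the scalar
\[
  S_k \defeq \frac1N\sum_{i=1}^N \psi_{2,k}\one_K\bigl(X_{k,0}^i\bigr).
\]
Throughout I condition on the event (carried from level $k-1$, and for $k=k_{\mathrm{cr}}$ supplied by the high-temperature analysis) that $X_{k,0}^i\in K$ for every $i$, so that Lemma~\ref{l:langevinError} may be applied. For the Langevin half-step I apply Lemma~\ref{l:langevinError} at temperature $\eta_k$ to the three bounded test functions $\psi_{2,k+1}\one_K$ (bounded by $C_\psi$ via Lemma~\ref{lem:boundPsi}), $r_k=\pi_{k+1}/\pi_k$ (bounded by $C_r$ via~\eqref{eq:defCr}), and $g_k\defeq r_k\bigl(\psi_{2,k+1}\one_K-\bar h_{k+1}\bigr)$, where $\bar h_{k+1}\defeq\int\psi_{2,k+1}\one_K\,\pi_{k+1}\,dx$; note $\int r_k\,\pi_k\,dx=1$ and $\int g_k\,\pi_k\,dx=0$. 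The higher-order term $\mathcal E_{\eta_k,T}(\cdot)$ is estimated by Lemma~\ref{cor:poverpi}: the second entry of~\eqref{e:TNlow} renders the factor $(1-e^{-cT/d})^{-d/2}$ a dimensional constant, so $\mathcal E_{\eta_k,T}(h)\le C\|h\|_{L^\infty}e^{-\Lambda T/2}e^{\hat U/(2\eta_k)}$, and the first entry of~\eqref{e:TNlow} (together with $\eta_k\ge\eta$ and $C_\alpha$ chosen so $\Lambda C_\alpha$ dominates $\hat U$ and the remaining prefactors) makes this $\le\delta/(CM)$.

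Next I establish~\eqref{eq:Xk+1inK}. Since every $X_{k+1,0}^i$ equals some $X_{k,T}^j$, it suffices to bound $\P^x(Y_{\eta_k,T}\notin K)$ for $x\in K$. Expanding $p_{\eta_k,T}(x,\cdot)/\pi_{\eta_k}$ in the $L^2(\pi_{\eta_k})$ eigenbasis, applying Cauchy--Schwarz in the index on the tail $\{n\ge3\}$, and inserting Lemma~\ref{cor:poverpi} (for $p_{\eta_k,T}(x,x)/\pi_{\eta_k}(x)$) together with Lemma~\ref{lem:boundPsi} (for $|\psi_{2,\eta_k}|\le C_\psi$ on $K$) gives
\[
  \P^x\bigl(Y_{\eta_k,T}\notin K\bigr)\;\le\;\pi_{\eta_k}(K^c)+\bigl(C_\psi+Ce^{-\Lambda T/2}e^{\hat U/(2\eta_k)}\bigr)\sqrt{\pi_{\eta_k}(K^c)}.
\]
A Laplace-type estimate — using that on $K^c$ the energy exceeds $U(x_{\min,i})+\hat\gamma(1+\alpha)^{-1/4}$ inside each basin (and grows at infinity) and that $Z_{\eta_k}\ge c\,\eta_k^{d/2}$ — yields $\pi_{\eta_k}(K^c)\le C\eta_k^{-d}e^{-\hat\gamma(1+\alpha)^{-1/4}/\eta_k}$. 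Since $\eta_k\le\eta_{\mathrm{cr}}$ and $1/\eta_{\mathrm{cr}}=C_K^{-1}\log(C_{\mathrm{tem}}MN/\theta)$ with $C_K=\hat\gamma(1+\alpha)^{-1/2}$ by~\eqref{eq:defCK},~\eqref{e:CriEtaLT}, this is of order $(C_{\mathrm{tem}}MN/\theta)^{-(1+\alpha)^{1/4}}$ up to temperature-polynomial prefactors; the exponent exceeds $1$, so a large enough $C_{\mathrm{tem}}$ makes $\P^x(Y_{\eta_k,T}\notin K)\le\theta/(CMN)$, and a union bound over $i\le N$ proves~\eqref{eq:Xk+1inK} with probability at least $1-\theta/(CM)$. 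The same estimate at $\epsilon=\eta_{k+1}$ gives the truncation bias $|\bar h_{k+1}|\le\sqrt{\pi_{k+1}(K^c)}\le\delta/(CM)$.

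For the resampling half-step I apply Lemma~\ref{l:rebalancing} with $\tilde p=\tilde\pi_k$, $\tilde q=\tilde\pi_{k+1}$, $x^i=X_{k,T}^i$, $h=\psi_{2,k+1}\one_K$, so that $q=\pi_{k+1}$, $r=r_k$, $\int hq\,dx=\bar h_{k+1}$, $y^i=X_{k+1,0}^i$; its left side is exactly $\bigl|\frac1N\sum_i\psi_{2,k+1}\one_K(X_{k+1,0}^i)-\bar h_{k+1}\bigr|$, and $\frac1N\sum_i r_k(X_{k,T}^i)\bigl(h(X_{k,T}^i)-\bar h_{k+1}\bigr)=\frac1N\sum_i g_k(X_{k,T}^i)$. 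Using $\|h-\bar h_{k+1}\|_{L^\infty}\le C_\psi+1$ and feeding in the Langevin-step bounds for $r_k$ and $g_k$ yields~\eqref{e:iteration} with
\[
  \beta_k=e^{-\lambda_{2,k}T}\Bigl((C_\psi+1)|\rho_k|+\Bigl|\int g_k\,\psi_{2,k}\,\pi_k\,dx\Bigr|\Bigr),\qquad \rho_k=\int\psi_{2,k}\,\pi_{k+1}\,dx,
\]
and $c_k$ the sum of $|\bar h_{k+1}|$, the three $\mathcal E$-terms and the three concentration slacks. Taking all slacks equal to $\delta/(CM)$ gives $c_k\le\delta/M$ and, through the failure probabilities $2\exp(-2Na^2/\|h\|_{\osc}^2)$ each being $\le\theta/(CM)$, forces $N\ge C\,C_r^2(C_\psi+1)^2(M^2/\delta^2)\log(M/\theta)$, i.e.~\eqref{eq:Nchoice}; the $O(1)$ events per level union over $M$ levels to total failure $\le\theta$, giving~\eqref{eq:thetakLB}.

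The main obstacle is the product bound in~\eqref{e:CBeta}. Here $e^{-\lambda_{2,k}T}\le1$ only helps, and $\int g_k\,\psi_{2,k}\,\pi_k\,dx=\int\psi_{2,k+1}\psi_{2,k}\,\pi_{k+1}\,dx-\bar h_{k+1}\rho_k$, so $\beta_k=\bigl|\int\psi_{2,k+1}\psi_{2,k}\,\pi_{k+1}\,dx\bigr|+O(|\rho_k|)+O(\sqrt{\pi_{k+1}(K^c)})$. I will bound the product of the leading terms using the eigenfunction shape and inner-product estimates of~\cite{han2025polynomialcomplexitysamplingmultimodal}: since consecutive inverse temperatures differ by only $(1-\eta)/(\eta(M-1))$, each $\bigl|\int\psi_{2,k+1}\psi_{2,k}\,\pi_{k+1}\,dx\bigr|$ is $1+O(\xi_k)$ with $\sum_k\xi_k$ bounded independently of $\eta$ (the deficits, coming from the slow drift of the well mass ratios and of the eigenfunction profile near the saddle, are summable). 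The correction $\sum_k|\rho_k|$ is likewise summable: writing $\psi_{2,k}$ in its near two-valued form with well weights gives $|\rho_k|\le C\,|\pi_{k+1}(\Omega_1)-\pi_k(\Omega_1)|/\sqrt{\pi_k(\Omega_1)\pi_k(\Omega_2)}$, which telescopes against the convergent integral $\int_0^1\epsilon^{-2}\sqrt{\pi_\epsilon(\Omega_1)\pi_\epsilon(\Omega_2)}\,|m_1(\epsilon)-m_2(\epsilon)|\,d\epsilon$, with $m_i(\epsilon)$ the $\pi_\epsilon$-conditional mean of $U$ on $\Omega_i$ — the integrand being $O(1)$ near $\epsilon=0$ when the wells have equal depth and exponentially small otherwise. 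I expect this spectral-perturbation input, imported from~\cite{han2025polynomialcomplexitysamplingmultimodal}, together with the ensuing non-explicit dimensional dependence of $C_\alpha$, $C_{\mathrm{tem}}$ and $C_\beta$, to be the crux of the argument.
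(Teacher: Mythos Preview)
Your proposal is correct and follows essentially the same route as the paper: one Langevin half-step (Lemma~\ref{l:langevinError}) applied to $r_k$ and to the centered product $r_k(\psi_{2,k+1}\one_K-\bar h_{k+1})$, followed by one resampling half-step (Lemma~\ref{l:rebalancing}) with $h=\psi_{2,k+1}\one_K$, yielding the recurrence with exactly the $\beta_k$ and $c_k$ you wrote; the product bound is obtained via the well-mass BV estimate and the eigenfunction inner-product lemmas imported from~\cite{han2025polynomialcomplexitysamplingmultimodal}. Two minor remarks: (i) the test function $\psi_{2,k+1}\one_K$ enters only through the resampling lemma, not through Lemma~\ref{l:langevinError}, so only two Langevin applications are needed (the third Hoeffding slack is the resampling one); (ii) the paper's bound on $\pi_{\eta_k}(K^c)$ actually uses the exponent $C_K=\hat\gamma(1+\alpha)^{-1/2}$ rather than your $\hat\gamma(1+\alpha)^{-1/4}$, via an auxiliary inner set $\tilde K\subset K$---your direct Laplace bound would give the sharper exponent, but since the lemma statement fixes $C_K$ as in~\eqref{eq:defCK} this is immaterial.
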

The intuition behind \eqref{eq:thetakLB} and \eqref{eq:Xk+1inK} is that, when the temperature is low, it is expected that the~$\pi_k$ has tiny mass outside~$K$; thus, it is also a rare event that the particle falls outside~$K$. The proof inequality \eqref{e:iteration} and \eqref{e:CBeta} are similar to that of \cite[Lemma 4.7]{han2025polynomialcomplexitysamplingmultimodal}, which are based on a careful estimates on the inner product of eigenfunctions on subsequent levels. Moreover, Lemma~\ref{l:iteration} foreshadows that the critical temperature mentioned earlier will be chosen as in \eqref{e:CriEtaLT}.

\smallskip
\noindent\emph{High temperature regime.}
In the high-temperature regime, the global mixing time is reasonably short. We have the below Lemma \ref{lem:globalmixing}, whose proof is in Section~\ref{sec:global}  following the idea of \cite{Marion23SMC}. In~\cite{Marion23SMC}, a similar result as Lemma \ref{lem:globalmixing} is established assuming that at the first level the samples are i.i.d following~$\pi_1$, which is not our case here. Thus we need to further use a maximal coupling argument for the estimate of mixing at the first level.
\begin{lemma}[global mixing] \label{lem:globalmixing}
 Fix~$p\in (0,\frac{1}{4})$,~$\mathfrak{e}>0$ and let~$1\leq k\leq k_{\mathrm{cr}}$. There exists constant~$\tilde{C}_{\alpha}\equiv \tilde{C}_{\alpha}(\alpha, U, C_{\mathrm{ini}})$ such that if 
\begin{align}
   N&\geq \frac{9C_r^2}{\mathfrak{e}^2}\log\paren[\Big]{\frac{16M}{p^2}},\\
   T&\geq \max\Bigl\{\tilde{C}_{\alpha}\exp\paren[\Big]{\frac{(1+\alpha)^{\frac12}\hat{U}}{\eta_{\mathrm{cr}}}}\paren[\Big]{\log(N)+\log\paren[\Big]{\frac{M}{p}} },
\frac{c}{d}\log (4d)\Bigl\}, \label{eq:NTglobal}
    \end{align}
   then for any continuous function~$f$ with~$|f|\leq 1$,
    \begin{equation}
        \P\paren[\Big]{\abs[\Big]{\frac{1}{N}\sum_{i=1}^{N}f(X_{k,T}^{i})-\int f\pi_{k}\, d x}<\mathfrak{e}}\geq 1-p.
    \end{equation}
\end{lemma}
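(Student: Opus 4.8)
\medskip

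\noindent\emph{Sketch of the proof.}
The plan is to follow the coupling strategy of~\cite{Marion23SMC}: at each level we couple the ASMC particles with genuinely i.i.d.\ draws from the corresponding target and then invoke Hoeffding's inequality. The one new ingredient, forced on us because the initial particles $\set{y^i_1}$ satisfy merely Assumption~\ref{assum:startgood} rather than being $\pi_1$-distributed, is a maximal coupling at the first level. Fix $1\le k\le k_{\mathrm{cr}}$. For each $j\le k$ we will build, conditionally on the positions $\set{y^i_j}$ entering level $j$ and independently over $i$, a maximal coupling of the law $p_{\eta_j,T}(y^i_j,\cdot)$ of $X^i_{j,T}$ with $\pi_j$; this produces auxiliary variables $Z^i_j$, marginally i.i.d.\ $\pi_j$, with $\P(X^i_{j,T}\ne Z^i_j\mid\set{y^\ell_j})=\norm{p_{\eta_j,T}(y^i_j,\cdot)-\pi_j}_{\TV}$. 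Writing $\mathcal S_j\defeq\set{X^i_{j,T}=Z^i_j\text{ for all }i}$, on $\mathcal S_k$ we have $\frac1N\sum_i f(X^i_{k,T})=\frac1N\sum_i f(Z^i_k)$, so since the $Z^i_k$ are i.i.d.\ $\pi_k$ and $\abs{f}\le1$, Hoeffding's inequality bounds $\P\!\bigl(\abs{\frac1N\sum_i f(Z^i_k)-\int f\pi_k}\ge\mathfrak{e}\bigr)$ by $2e^{-N\mathfrak{e}^2/2}\le p/2$ under the stated lower bound on $N$ (the factors $C_r^2$ and $\log M$ there leave room for this and for the union bound below). It remains to show that $\mathcal S_1\cap\cdots\cap\mathcal S_k$ occurs with probability at least $1-p/2$.

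For this we bound the total-variation distances entering $\P(\mathcal S_j^c)\le\sum_i\E\norm{p_{\eta_j,T}(y^i_j,\cdot)-\pi_j}_{\TV}$. Splitting $T=t_0+(T-t_0)$ with $t_0$ a fixed time of order $(d/c)\log(4d)$, the spectral decomposition of the $L^2(\pi_j)$-self-adjoint generator $L_{\eta_j}$ together with~\eqref{e:fOverPiNu} gives
\[
  \norm{p_{\eta_j,T}(y,\cdot)-\pi_j}_{\TV}^2
  \;\le\;\tfrac14\,e^{-2\lambda_{2,\eta_j}(T-t_0)}\Bigl(\tfrac{p_{\eta_j,2t_0}(y,y)}{\pi_j(y)}-1\Bigr),
\]
and, by Lemma~\ref{cor:poverpi} and the choice of $t_0$, the bracket is at most $2C_p\,e^{U(y)/\eta_j}$. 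For $j=1$ we use $\eta_1=1$, $U(y^i_1)\le C_{\mathrm{ini}}$, and that $\lambda_{2,1}$ is a fixed positive constant, so this drops below $p/(4MN)$ once $T$ exceeds a constant (depending on $C_{\mathrm{ini}}$ and $U$) times $1+\log(MN/p)$; this is the maximal-coupling step absent from~\cite{Marion23SMC}, and it is the reason $\tilde C_\alpha$ depends on $C_{\mathrm{ini}}$. For $2\le j\le k$, on the event $\mathcal S_{j-1}$ the particles $y^i_j$ lie among the i.i.d.\ $\pi_{j-1}$-samples $Z^1_{j-1},\dots,Z^N_{j-1}$; since the growth hypothesis in Assumption~\ref{a:criticalpts} forces $\pi_{j-1}$ to decay exponentially away from the wells, off an event of probability at most $p/(8MN)$ every $U(y^i_j)$ is at most $R_j\defeq\eta_{j-1}\log(c'MN/p)$, and because the inverse temperatures are $\nu$-spaced the ratio $\eta_{j-1}/\eta_j$ is bounded, so $e^{U(y^i_j)/\eta_j}\le(c'MN/p)^{\eta_{j-1}/\eta_j}$ is merely polynomial in $MN/p$. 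Combining this with a lower bound $\lambda_{2,\eta_j}\ge c_\gamma e^{-\gamma/\eta_j}$ on the spectral gap for some $\gamma$ close to $\hat{\gamma}$ (the counterpart of the upper bound in~\eqref{eq: egvalLEpsilon}), with $\eta_j\ge\eta_{k_{\mathrm{cr}}}$, and with the choice of $\eta_{\mathrm{cr}}$ in~\eqref{e:CriEtaLT}, the requirement $\norm{p_{\eta_j,T}(y^i_j,\cdot)-\pi_j}_{\TV}\le p/(4MN)$ reduces to the lower bound on $T$ in~\eqref{eq:NTglobal}.

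The point that will need care is that one cannot literally condition on $\mathcal S_{j-1}$ to run the level-$j$ estimate, since those total-variation bounds are exactly what make $\mathcal S_{j-1}$ likely. We get around this with a first-failure argument: let $\tau\defeq\min\set{j\le k:\mathcal S_j^c\text{ holds, or }U(y^i_j)>R_j\text{ for some }i}$, a stopping time for the natural filtration, and show $\P(\tau\le k)\le p/2$ by summing over $j$ the per-level estimates of the previous paragraph, each valid on $\set{\tau>j-1}$. This circularity, together with the bookkeeping of the powers of $(1+\alpha)$ that enter through the definitions of $K$ in~\eqref{eq:defK}--\eqref{eq:defB} and of $\eta_{\mathrm{cr}}$ in~\eqref{e:CriEtaLT}, is the main obstacle; the probabilistic content (the first-level maximal coupling and the terminal Hoeffding bound) is otherwise routine. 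Putting the pieces together, off an event of probability at most $p/2+p/2=p$ we obtain $\frac1N\sum_i f(X^i_{k,T})=\frac1N\sum_i f(Z^i_k)$ with $\abs{\frac1N\sum_i f(Z^i_k)-\int f\pi_k}<\mathfrak{e}$, which is the assertion.
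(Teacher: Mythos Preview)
Your coupling strategy is sound and arrives at the right conclusion, but it is \emph{not} the route the paper takes, and there is one substantive slip worth flagging.

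\textbf{What the paper does.} The paper does not track $U(y^i_j)$ at all. It imports the machinery of~\cite{Marion23SMC} wholesale: in addition to the coupling event $\mathbf{A}_k=\mathcal S_k$ it introduces the weight-concentration event $\mathbf{B}_k=\{|\frac1N\sum_i \tilde r_k(X^i_{k,T})-\int \tilde r_k\,\pi_k|\le \frac13\int \tilde r_k\,\pi_k\}$, sets $\mathbf{C}_k=\mathbf{A}_k\cap\mathbf{B}_k$, and then quotes two black-box results (their Lemmas~\ref{lem:MMS44} and~\ref{lem:MMScor41}) giving $\P(\mathbf{C}_k)\ge(1-\theta)\P(\mathbf{C}_{k-1})-\theta'$ once $T$ exceeds the relaxation time $\tau_k(\theta/N,\omega)\le\lambda_{2,k}^{-1}(\log(2N/\theta)+\log(\omega-1))$. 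The point is that on $\mathbf{B}_{k-1}\cap\mathbf{A}_{k-1}$ the resampled starting law has \emph{bounded Radon--Nikodym derivative} with respect to $\pi_k$, so the $L^2(\pi_k)$ spectral gap alone controls the TV distance, with no need for the pointwise heat-kernel bound of Lemma~\ref{cor:poverpi}. The only place the paper uses Lemma~\ref{cor:poverpi} in this argument is in the base case (their Lemma~\ref{lem:firstlevel}), where the arbitrary initialisation forces a pointwise estimate; this is exactly your first-level step. After that the iteration is pure spectral-gap.

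\textbf{What you do differently.} You bypass $\mathbf{B}_k$ entirely: instead of controlling the density ratio of the resampled law, you control the \emph{positions} $U(y^i_j)$ via the tail of $\pi_{j-1}$ (since on $\mathcal S_{j-1}$ the $y^i_j$ sit among i.i.d.\ $\pi_{j-1}$ samples), and then feed that into the pointwise bound of Lemma~\ref{cor:poverpi} at every level. This is more hands-on but self-contained: you never need the Marion--Mathews--Schmidler lemmas, only Hoeffding, a tail estimate $\pi_{j-1}(U>R)\lesssim e^{-cR/\eta_{j-1}}$ (which does follow from Assumption~\ref{a:criticalpts} but is not stated in the paper), and the observation that $\eta_{j-1}/\eta_j$ stays bounded under the linear-in-$1/\eta$ schedule. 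The first-failure stopping time device you describe is exactly what replaces the paper's clean recursion on $\P(\mathbf{C}_k)$.

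\textbf{One correction.} Your lower spectral-gap bound ``$\lambda_{2,\eta_j}\ge c_\gamma e^{-\gamma/\eta_j}$ for some $\gamma$ close to $\hat\gamma$'' is mis-stated: the correct exponent is the saddle height $\hat U$, not the energy barrier $\hat\gamma$ (see the paper's Lemma~\ref{l: lower_bound_next_eigenvalue}, which gives $\lambda_{2,\epsilon}\ge A e^{-H/\epsilon}$ for any $H>\hat U$). When the wells have unequal depth one has $\hat U>\hat\gamma$, and it is $\hat U$ that appears in the statement~\eqref{eq:NTglobal}. With $\hat\gamma$ you would undershoot the required $T$. Apart from this slip, your argument goes through.
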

Notice that if~$\eta_{\mathrm{cr}}$ is chosen as in \eqref{e:CriEtaLT} and~$N$ is chosen as in \eqref{eq:Nchoice}, then the time~$T$ in \eqref{eq:NTglobal} is actually polynomial in~$M$. In particular, when~$k=k_{\mathrm{cr}}$, Lemma~\ref{lem:globalmixing} implies the following result.
\begin{lemma} \label{lem:baseCasePsi} Fix~$\delta>0, \theta\in (0,\frac13)$. Choose~$N$ as in \eqref{eq:Nchoice} and let~$\eta_{\mathrm{cr}}$ be defined as in \eqref{e:CriEtaLT}. Let~$T$ satisfy \eqref{eq:NTglobal}. Then
\begin{equation}
    \P\paren[\Big]{\abs[\Big]{\frac{1}{N}\sum_{i=1}^{N}\psi_{2,k_{\mathrm{cr}}}\one_{K}(X^{i}_{k_{\mathrm{cr}},0})}< (C_r+1)\delta}\geq 1-3\theta.
\end{equation}
   
\end{lemma}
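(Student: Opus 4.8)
\textbf{Proof proposal for Lemma~\ref{lem:baseCasePsi}.}

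The plan is to apply Lemma~\ref{lem:globalmixing} with the level $k = k_{\mathrm{cr}}$ and an appropriate choice of test function, then combine it with a pointwise bound on $\psi_{2,\epsilon}\one_K$ from Lemma~\ref{lem:boundPsi} to convert control of a $\pi_{k_{\mathrm{cr}}}$-integral into control of the empirical average. The delicate point is that $\psi_{2,k_{\mathrm{cr}}}\one_K$ is not a continuous function (because of the indicator), so it cannot be plugged directly into Lemma~\ref{lem:globalmixing}; I would work instead with a normalized, clipped version of $\psi_{2,k_{\mathrm{cr}}}$, specifically $f \defeq \psi_{2,k_{\mathrm{cr}}}\one_K / C_\psi$, where $C_\psi$ is the constant from Lemma~\ref{lem:boundPsi}, so that $\abs{f}\le 1$ on all of $\R^d$, and handle the discontinuity either by a routine density/approximation argument (the boundary of $K$ has $\pi_{k_{\mathrm{cr}}}$-measure zero, so the SMC error bound passes to the limit) or by invoking a version of Lemma~\ref{lem:globalmixing} that already allows bounded measurable $f$. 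Applying Lemma~\ref{lem:globalmixing} to this $f$ with $\mathfrak{e} = \delta$ and $p = \theta$ gives, with probability at least $1-\theta$,
\begin{equation*}
  \abs[\Big]{\frac{1}{N}\sum_{i=1}^N \psi_{2,k_{\mathrm{cr}}}\one_K(X^i_{k_{\mathrm{cr}},0}) - \int \psi_{2,k_{\mathrm{cr}}}\one_K\, \pi_{k_{\mathrm{cr}}}\, dx} < C_\psi \delta.
\end{equation*}

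Next I would estimate the deterministic term $\int \psi_{2,k_{\mathrm{cr}}}\one_K\,\pi_{k_{\mathrm{cr}}}\,dx$. Since $\psi_{2,\epsilon}$ is $L^2(\pi_\epsilon)$-normalized and orthogonal to the constant function $\psi_{1,\epsilon}\equiv 1$, we have $\int \psi_{2,k_{\mathrm{cr}}}\,\pi_{k_{\mathrm{cr}}}\,dx = 0$, so
\begin{equation*}
  \abs[\Big]{\int \psi_{2,k_{\mathrm{cr}}}\one_K\,\pi_{k_{\mathrm{cr}}}\,dx}
  = \abs[\Big]{\int \psi_{2,k_{\mathrm{cr}}}\one_{K^c}\,\pi_{k_{\mathrm{cr}}}\,dx}
  \le \norm{\psi_{2,k_{\mathrm{cr}}}}_{L^2(\pi_{k_{\mathrm{cr}}})}\, \pi_{k_{\mathrm{cr}}}(K^c)^{1/2}
  = \pi_{k_{\mathrm{cr}}}(K^c)^{1/2}.
\end{equation*}
So the task reduces to showing $\pi_{k_{\mathrm{cr}}}(K^c)$ is small — of order $\delta^2$, up to constants — given the specific choice of $\eta_{\mathrm{cr}}$ in~\eqref{e:CriEtaLT} and $N$ in~\eqref{eq:Nchoice}. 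This is a Laplace/large-deviations estimate: outside $K$ the energy exceeds $\min_i U(x_{\min,i}) + \hat\gamma/(1+\alpha)^{1/4}$, while near the minima the mass concentrates, so $\pi_\epsilon(K^c) \le C\exp(-c'/\epsilon)$ for an explicit $c' > 0$ tied to the gap $\hat\gamma/(1+\alpha)^{1/4}$; this is exactly the kind of estimate already established in~\cite{han2025polynomialcomplexitysamplingmultimodal} and follows again from Assumption~\ref{a:criticalpts} (the coercivity $\liminf_{|x|\to\infty}\abs{\nabla U}\ge C_U$ controls the tails). Plugging $\epsilon = \eta_{\mathrm{cr}} = C_K/\log(C_{\mathrm{tem}}MN/\theta)$ into this bound turns the exponential into a negative power of $MN/\theta$, and since $N \gtrsim M^2/\delta^2 \cdot \log(M/\theta)$, this power is at least $1$ (after possibly enlarging $C_{\mathrm{tem}}$), giving $\pi_{k_{\mathrm{cr}}}(K^c)^{1/2} \le \delta$, say.

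Combining the two displays via the triangle inequality yields, with probability at least $1-\theta$,
\begin{equation*}
  \abs[\Big]{\frac{1}{N}\sum_{i=1}^N \psi_{2,k_{\mathrm{cr}}}\one_K(X^i_{k_{\mathrm{cr}},0})} < C_\psi\delta + \delta \le (C_r + 1)\delta
\end{equation*}
after relabeling $\delta \mapsto \delta/(C_\psi+1)$ at the outset (which only changes $C_N$ in~\eqref{eq:Nchoice}, absorbing the factor $(C_\psi+1)^2$ already present there), and noting $C_r > 1$. The factor $3\theta$ in the statement (rather than $\theta$) leaves slack for the approximation argument handling the discontinuity of $\one_K$ and for a possible failure event in the tail estimate $\pi_{k_{\mathrm{cr}}}(K^c)$; one applies Lemma~\ref{lem:globalmixing} with $p = \theta$ and accumulates at most a constant multiple of $\theta$ in the union bound. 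The main obstacle, I expect, is not any single step but bookkeeping: making the constants $C_{\mathrm{tem}}$, $C_N$, $C_\psi$, $C_r$ fit together so that the power of $MN/\theta$ coming out of the Laplace estimate genuinely dominates, while keeping $\eta_{\mathrm{cr}}$ consistent with the value~\eqref{e:CriEtaLT} that the low-temperature Lemma~\ref{l:iteration} also demands — the two regimes must be glued at exactly the same critical temperature, so there is no freedom to tune $\eta_{\mathrm{cr}}$ separately here.
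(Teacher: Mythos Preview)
There is a genuine gap: you misapply Lemma~\ref{lem:globalmixing}. That lemma controls the empirical average of $f$ evaluated at the particles $X^i_{k,T}$ \emph{after the Langevin step} at level $k$, but the quantity in Lemma~\ref{lem:baseCasePsi} involves $X^i_{k_{\mathrm{cr}},0}$ --- the particles \emph{right after resampling} from level $k_{\mathrm{cr}}-1$, before any Langevin move at level $k_{\mathrm{cr}}$. Applying Lemma~\ref{lem:globalmixing} at $k=k_{\mathrm{cr}}$ yields information about $X^i_{k_{\mathrm{cr}},T}$, not $X^i_{k_{\mathrm{cr}},0}$; applying it at $k=k_{\mathrm{cr}}-1$ yields information about $X^i_{k_{\mathrm{cr}}-1,T}$ against $\pi_{k_{\mathrm{cr}}-1}$, not $\pi_{k_{\mathrm{cr}}}$. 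Either way you are one step off, and the display you wrote down simply does not follow from Lemma~\ref{lem:globalmixing}.

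The missing ingredient is the resampling error bound, Lemma~\ref{l:rebalancing}. The paper first applies Lemma~\ref{l:rebalancing} with $h=\psi_{2,k_{\mathrm{cr}}}\one_K$, $x^i=X^i_{k_{\mathrm{cr}}-1,T}$, $y^i=X^i_{k_{\mathrm{cr}},0}$, which bounds the target quantity by two empirical averages over $X^i_{k_{\mathrm{cr}}-1,T}$ (one of $r_{k_{\mathrm{cr}}-1}$, one of the product $r_{k_{\mathrm{cr}}-1}(\psi_{2,k_{\mathrm{cr}}}\one_K-\text{const})$), plus the deterministic bias $J_3=\bigl|\int_K\psi_{2,k_{\mathrm{cr}}}\pi_{k_{\mathrm{cr}}}\bigr|$ you identified. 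Only then is Lemma~\ref{lem:globalmixing} invoked, at level $k_{\mathrm{cr}}-1$, to control those two averages. This is exactly why the constant in the statement is $C_r+1$ rather than the $C_\psi+1$ your argument produces: the resampling step brings in the ratio $r_{k_{\mathrm{cr}}-1}$ with $\|r_{k_{\mathrm{cr}}-1}\|_\infty\le C_r$, while the $(C_\psi+1)$ factor is absorbed by taking $\mathfrak{e}=\delta/(2(C_\psi+1))$ in Lemma~\ref{lem:globalmixing}, which the $N$ condition~\eqref{eq:Nchoice} is already sized to accommodate. Your treatment of the bias term via $\int\psi_{2,k_{\mathrm{cr}}}\pi_{k_{\mathrm{cr}}}=0$ and the mass bound on $K^c$ is correct and matches the paper's (Lemma~\ref{lem:piKbound}); the $3\theta$ in the conclusion comes from one Hoeffding event in Lemma~\ref{l:rebalancing} plus two separate invocations of Lemma~\ref{lem:globalmixing}, not from any approximation argument for the indicator.
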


\smallskip
Now we are well-equipped to prove Theorem~\ref{thm: main}.

\begin{proof}[Proof of Theorem~\ref{thm: main}]
Fix~$\alpha>0, \theta\in (0,1), \delta>0$, and define
  \begin{equation}\label{e:TildeDelta}
      \tilde{\delta}=
    \frac{\delta}{C_{\beta}(C_r+2)},\quad \tilde{\theta}=\frac{\theta}{5}
  \end{equation}
where~$C_{\beta}$ is the constant in \eqref{eq: defCbeta} which depends on the given~$\alpha>0$. Choose~$M$ as in \eqref{e:MTN}.  
Then we define 
  \begin{align}
 \label{eq: parametersN}
  N&\geq 128C^2_r(C_{\psi}+1)^2 \frac{M^2}{\tilde\delta^2}\log\paren[\Big]{\frac{8M}{\tilde\theta}}\\
    T&\geq \max\Bigl\{C_{\alpha}\Big(\log\paren[\Big]{\frac{M}{\tilde{\theta}}}+\log N+\log\paren[\Big]{\frac{1}{\tilde\delta}}+\frac{1}{\eta}\Big),\\
    &\qquad\tilde{C}_{\alpha}\paren[\Big]{\frac{C_{\mathrm{tem}}MN}{\tilde{\theta}}}^{\hat{\gamma}(1+\alpha)}\paren[\Big]{\log N+\log\paren[\Big]{\frac{M}{\tilde{\theta}}} },\\
    &\qquad\frac{2}{\Lambda}\Big(\log\paren[\Big]{\frac{1}{\tilde{\delta}}}+ \frac{\hat{U}}{2\eta}+\frac{1}{8}+\log(C_p^{\frac12})\Big),
\frac{c}{d}\log (4d)\Bigl\}.   \label{eq: parametersT}
    \end{align}
Here~$C_{\alpha}$,~$\tilde{C}_{\alpha}$,~$C_{\mathrm{tem}}$ and~$C_p$ are the constants defined in \eqref{e:TNlow}, \eqref{eq:NTglobal}, \eqref{e:CriEtaLT} and \eqref{eq:poverpiK}, respectively.  Notice that if~$N, T$ are chosen according to~\eqref{eq: parametersN} and~\eqref{eq: parametersT}, then we can find constants~$C_{T} = C_{T}(\alpha, U, C_{\mathrm{ini}})>0$ and~$C_{N} = C_{N}(\alpha, U)>0$ so that this choice is consistent with the choice in~\eqref{e:MTN}.
  We will now show that~\eqref{e:MCerror} holds for any bounded test function~$h \in L^\infty(\R^d)$. Observe that the left hand side of \eqref{e:MCerror} does not change if we add a constant to~$h$. Thus without loss of generality we may replace~$h$ with~$h - \inf h + \frac{1}{2} \norm{h}_\osc$, and assume~
  \begin{equation}\label{e:h}
      \norm{h}_{L^\infty} = \frac{1}{2} \norm{h}_\osc.
  \end{equation}

Define the critical temperature~$\eta_{\mathrm{cr}}$ as in \eqref{e:CriEtaLT}.
 Notice that~$1/\eta_{\mathrm{cr}}$ is~$O(\log(M))$ while~$1/\eta$ is~$O(M)$, therefore when~$\eta$ is sufficiently small, we should have~$\eta_{\mathrm{cr}}<\eta$. We discuss different cases of~$\eta$. 

\restartcases

\case[$\eta\geq \eta_{\mathrm{cr}}$] This implies that~$k_{\mathrm{cr}}=M$. We remark that this might happen when~$M$ is relatively small. 
Observe that~$N$ and~$T$ satisfies \eqref{eq:NTglobal} with
\begin{equation}
 \mathfrak{e}=\delta, \quad p=\theta.
\end{equation}  
The fact that~$T$ satisfies \eqref{eq:NTglobal} is straightforward  from the observation that~$\tilde{\delta}<\delta$ and~$\tilde{\theta}<\theta$. We check for~$N$, notice that
\begin{align*}
    N&\geq 128C^2_r(C_{\psi}+1)^2 \frac{M^2}{\delta^2}\log\paren[\Big]{\frac{8M}{\theta}}=64C^2_r(C_{\psi}+1)^2 \frac{M^2}{\delta^2}\log\paren[\Big]{\frac{64M^2}{\theta^2}}\\
    &\geq \frac{9C_r^2}{\mathfrak{e}^2}\log\paren[\Big]{\frac{16M}{p^2}}.
\end{align*}
Applying Lemma~\ref{lem:globalmixing} to 
\begin{equation}
    f=h/\norm{h}_{L^{\infty}},\quad \mathfrak{e}=\delta, \quad \quad p=\theta
\end{equation}
 yields
 \begin{equation}
        \P\paren[\Big]{\abs[\Big]{\frac{1}{N}\sum_{i=1}^{N}h(X_{M,T}^{i})-\int h\pi_{M}\, d x}<\norm{h}_{L^{\infty}}\mathfrak{e}}\geq 1-\theta.
    \end{equation}
The proof completes by noticing \eqref{e:h}.

\case[$\eta<\eta_{\mathrm{cr}}$] Using Lemma~\ref{l:langevinError}, we obtain
  with probability larger than or equal to  \eqref{e:LProb},
  we have
  \begin{align}
    \MoveEqLeft\abs[\Big]{ \frac{1}{N}\sum_{i=1}^{N}h(X^{i}_{M, T})-\int h\pi_{M}\,d x}\\
    &\leq e^{-\lambda_{2, M} T} \abs[\Big]{\int h\psi_{2,M}\pi_{\epsilon} \, dx} \cdot\abs[\Big]{\frac{1}{N}\sum_{i=1}^{N}\psi_{2,M}\one_{K}(X^{i}_{M,0})}
    + \mathcal E_{M,T}(h) +a\\
    &=E_1+ \mathcal E_{M,T}(h) +a. \label{e:ErrMT1}
\end{align}
where
\begin{equation}\label{eq:defE1}
   E_1\defeq  e^{-\lambda_{2, M} T} \abs[\Big]{\int h\psi_{2,M}\pi_{\epsilon} \, dx} \cdot\abs[\Big]{\frac{1}{N}\sum_{i=1}^{N}\psi_{2,M}\one_{K}(X^{i}_{M,0})}.
\end{equation}
  We will now show that the right hand side of~\eqref{e:ErrMT1} is bounded above by~$\delta \norm{h}_\osc$.

 \restartsteps
\step[Choose~$a$] Observe that if we choose
\begin{equation}\label{e:achoiceh}
    a = \frac12\tilde{\delta} \norm{h}_\osc,
\end{equation}
then the choice of~$N$ \eqref{eq: parametersN} gives that 
\begin{equation}
    \P\paren[\Big]{\abs[\Big]{ \frac{1}{N}\sum_{i=1}^{N}h(X^{i}_{M, T})-\int h\pi_{M}\,d x}\leq E_1+ \mathcal E_{M,T}(h) +\frac12\tilde{\delta} \norm{h}_\osc }\overset{\eqref{e:LProb},\eqref{e:ErrMT1}}{\geq} 1-\frac{\tilde{\theta}}{M}. 
\end{equation}

 \step[Bound~$\mathcal E_{M,T}(h)$] 
 First notice that when
\begin{equation}\label{eq:TCrip}
    T\geq \frac{c}{d}\log (4d),
\end{equation}
we have that
\begin{equation}\label{eq:ubPatTcri}
    \paren[\Big]{1-\exp\paren[\Big]{-\frac{c}{d}T}}^{-\frac{d}{2}}\leq  \paren[\Big]{1+2\exp\paren[\Big]{-\frac{c}{d}T}}^{\frac{d}{2}}\leq \paren[\Big]{1+\frac{1}{2d}}^{\frac{d}{2}}\leq e^{\frac{1}{4}}.
\end{equation}
When~$T$ satisfies \eqref{eq: parametersT}, we have
\begin{equation}
    e^{-\Lambda T/2}\leq \exp\paren[\Big]{\frac{\hat{U}}{2\eta}}  e^{-\frac{1}{8}}\frac{\tilde{\delta}}{C_p^{\frac12}}
\end{equation}
which implies that
\begin{align}
    \mathcal E_{M,T}(h) &\overset{\eqref{eq: def of tilde epsilon}}{\leq} \norm{h}_{L^{\infty}} e^{-\Lambda T/2}\max_{x\in K}\sqrt{\frac{p_{k,T}(x,x)}{\pi(x)}-1}\\
    &\overset{\mathclap{\eqref{eq:poverpiK},\eqref{e:h},\eqref{eq:ubPatTcri}}}{\leq} \qquad\quad\frac{1}{2}\norm{h}_{\osc}\tilde{\delta}. \label{e:calEh}
\end{align}
 \step[Bound~$E_1$]
Notice that
\begin{align}
    E_1&\overset{\mathclap{\eqref{e:defE1}}}{\leq} \:\norm{h}_{L^{\infty}} \abs[\Big]{\frac{1}{N}\sum_{i=1}^{N}\psi_{2,M}\one_{K}(X^{i}_{M,0})}\\
    &\overset{\mathclap{\eqref{e:h}}}{\leq} \:\frac{1}{2}\norm{h}_{\osc}\abs[\Big]{\frac{1}{N}\sum_{i=1}^{N}\psi_{2,M}\one_{K}(X^{i}_{M,0})}, \label{e:defE1}
\end{align}
thus it suffices to bound~$\abs[\Big]{\frac{1}{N}\sum_{i=1}^{N}\psi_{2,M}\one_{K}(X^{i}_{M,0})}$.
 
Observe the choice of~$N$ and~$T$ in \eqref{eq: parametersN} and \eqref{eq: parametersT} satisfies \eqref{e:TNlow} with parameter $\tilde{\delta}$ and $\tilde{\theta}$. Using Lemma~\ref{l:iteration} repeatedly, we obtain that with probability larger than or equal to
\begin{equation}
   \prod_{k=k_{\mathrm{cr}}}^{M-1}\theta_k \geq \paren[\Big]{1 -\frac{\tilde{\theta}}{M}}^{M-k_{\mathrm{cr}}}\geq  \paren[\Big]{1 -\frac{\tilde{\theta}}{M}}^{M}\geq 1-\tilde{\theta},
\end{equation}
we have
\begin{align}\label{eq: est E psiM}
 \MoveEqLeft~\abs[\Big]{\frac{1}{N}\sum_{i=1}^{N}\psi_{2,M}\one_{K}(X^{i}_{M,0})} \\
 &\leq \paren[\Big]{\prod_{j=k_{\mathrm{cr}}}^{M-1}\beta_j}\abs[\Big]{\frac{1}{N}\sum_{i=1}^{N}\psi_{2,k_{\mathrm{cr}}}\one_{K}(X^{i}_{k_{\mathrm{cr}},0})}
       +\sum_{k=2}^{M-2}c_k \Big(\prod_{j=k+1}^{M-1}\beta_j\Big)+c_{M-1}\\
&\overset{\mathclap{\eqref{e:CBeta}}}{\leq} C_{\beta}\abs[\Big]{\frac{1}{N}\sum_{i=1}^{N}\psi_{2,k_{\mathrm{cr}}}\one_{K}(X^{i}_{k_{\mathrm{cr}},0})}
       +C_{\beta}\sum_{k=2}^{M-2}\frac{\tilde{\delta}}{M} +\frac{\tilde{\delta}}{M}\\
&\leq C_{\beta}\abs[\Big]{\frac{1}{N}\sum_{i=1}^{N}\psi_{2,k_{\mathrm{cr}}}\one_{K}(X^{i}_{k_{\mathrm{cr}},0})}+ C_{\beta}\tilde{\delta}.
\end{align}

According to Lemma~\ref{lem:baseCasePsi}, 
\begin{equation}\label{eq:ThmPsicr}
   \P\paren[\Big]{\abs[\Big]{\frac{1}{N}\sum_{i=1}^{N}\psi_{2,k_{\mathrm{cr}}}\one_{K}(X^{i}_{k_{\mathrm{cr}},0})}\leq (C_r+1)\tilde{\delta}}\geq 1-3\tilde{\theta}.
\end{equation}

Therefore, with probability larger than or equal to 
\begin{equation}
    (1-\tilde{\theta})\cdot 1-3\tilde{\theta}\geq 1-4\tilde{\theta},
\end{equation}
we have
\begin{equation}
    \abs[\Big]{\frac{1}{N}\sum_{i=1}^{N}\psi_{2,M}\one_{K}(X^{i}_{M,0})}\overset{\eqref{eq: est E psiM},~\eqref{eq:ThmPsicr}}{\leq} C_{\beta}(C_r+1)\tilde{\delta}+C_{\beta}\tilde{\delta}=C_{\beta}(C_r+2)\tilde{\delta}.
\end{equation}
which in turn gives that
\begin{equation}\label{e:estE1}
    E_1\overset{\eqref{e:defE1}}{\leq} \frac{1}{2}\norm{h}_{\osc}C_{\beta}(C_r+2)\tilde{\delta}.
\end{equation}

Using~\eqref{e:achoiceh},~\eqref{e:calEh} and~\eqref{e:estE1} in~\eqref{e:ErrMT1} implies that with probability larger than or equal to
\begin{equation}
   (1-\frac{\tilde{\theta}}{M}) \cdot(1-4\tilde{\theta})\geq 1-5\tilde{\theta}\overset{\eqref{e:TildeDelta}}{=} 1-\theta
\end{equation}
 we have
\begin{align}
   \MoveEqLeft\abs[\Big]{ \frac{1}{N}\sum_{i=1}^{N}h(X^{i}_{M, T})-\int h\pi_{M}\,d x}~~\overset{\mathclap{\eqref{e:ErrMT1}}}{\leq}~~E_1+ \mathcal E_{M,T}(h) +a\\
   &\overset{\mathclap{\eqref{e:achoiceh},\eqref{e:calEh},\eqref{e:estE1}}}{\leq}\qquad\quad\frac{1}{2}\norm{h}_{\osc}C_{\beta}(C_r+2)\tilde{\delta}+\frac{1}{2}\norm{h}_{\osc}\tilde{\delta}+\frac{1}{2}\norm{h}_{\osc}\tilde{\delta}
   \overset{\eqref{e:TildeDelta}}{\leq} \norm{h}_{\osc}\delta.
\end{align}
This proves~\eqref{e:MCerror}, concluding the proof.
\end{proof}

It remains to prove the Lemmas~\ref{l:langevinError},~\ref{cor:poverpi},~\ref{lem:boundPsi},~\ref{l:rebalancing},~\ref{l:iteration},~\ref{lem:globalmixing} and~\ref{lem:baseCasePsi}, which will be done in subsequent
sections.

\subsection{Sketch of proof in the multi-well case}\label{sec:multi}
In this section, we present the version of Theorem~\ref{thm: main} when the energy function~$U$ has more than two local minima and provide a detailed sketch of the proof.

Assume the potential~$U$ now has $J$ local minima located at~$x_{\min, 1},\dots, x_{\min, J}$. We need a modified nondegeneracy assumption of Assumption~\ref{assumption: nondegeneracy}.  This assumption is analogous to \cite[Assumption 1.7]{MenzSchlichting14}, which guarantees that the estimate \eqref{eq:lambdaLowBound} still holds in the multi-modes case. For completeness we state the assumption below.
\begin{assumption}[Assumption 1.7 in~\cite{MenzSchlichting14}]\label{assumption: nondegeneracyM}
   There exists $\delta>0$ such that:

  (\romannumeral1) The saddle height between two local minima $x_{\min,i}$ and $x_{\min,j}$ is attained at a unique critical point $s_{i,j}$ of index one.
   That is, 
   the first eigenvalue of $\Hess U(s_{1,2})$ is negative and the others are positive. The point $s_{i,j}$ is called communicating saddle between the minima $x_{\min,i}$ and $x_{\min,j}$.

(\romannumeral2) The set of local minima~$\set{x_{\min,1},\dots, x_{\min,J}}$ is ordered such that $x_{\min,1}$
is a global minimum and for all $i\in\set{3,\dots,J}$ yields
\begin{equation*}
    U(s_{1,2})- U(x_{\min,2}) \geq U(s_{1,i})- U(x_{\min,i}) +\delta.
\end{equation*}
\end{assumption}

In the multi-modes case, the \emph{energy barrier}~$\hat{\gamma}$ and the saddle height are given by
\begin{equation}\label{e:gammaHatDefM}
    \hat{\gamma}\defeq \min\Delta_i, \quad \Delta_i =\min_{x\in\partial\Omega_i}U(x)- U(x_{\min,i})
    ,\quad\text{and}\quad
    \hat U = U(s_{1,2}).
\end{equation}
The ratio $\hat{\gamma}_r$ is given by
\begin{equation}\label{e:gammaHatRDefM}
    \hat \gamma_r\defeq \frac{\hat{U}}{\hat{\gamma}}
    .
\end{equation}

Next, we state the multi-modes version of Theorem~\ref{thm: main}.
\begin{theorem}[Multi-modes]\label{thm: mainM}
  Suppose for some~$0 \leq \eta_{\min} < \eta_{\max} \leq \infty$, the function~$U$ is a function that satisfies 
Assumptions~\ref{a:criticalpts},~\ref{a:massRatioBound} and~\ref{assumption: nondegeneracyM}.
  Let~$\hat \gamma_r \geq 1$ be defined as in~\eqref{e:gammaHatRDefM}. In the same setting as in Theorem~\ref{thm: main}, for every bounded test function~$h$ and arbitrary initial points~$\set{y^i_1}$ satisfying Assumption~\ref{assum:startgood}, the points~$(x^1, \dots, x^N)$ returned by Algorithm~\ref{a:ASMC}  satisfy that~\eqref{e:MCerror}.
\end{theorem}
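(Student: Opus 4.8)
The plan is to reduce Theorem~\ref{thm: mainM} to the two-well case by following the proof of Theorem~\ref{thm: main} essentially verbatim, replacing the single saddle $s_{1,2}$ by the ordered collection of communicating saddles and replacing the set $K$ by a union of $J$ basins. First I would define, for each $i \in \{1,\dots,J\}$, the truncated basin $B_i \defeq \{x \in \Omega_i \st U(x) - U(x_{\min,i}) \le \hat\gamma/(1+\alpha)^{1/4}\}$ with $\hat\gamma$ now given by~\eqref{e:gammaHatDefM}, and set $K \defeq \bigcup_{i=1}^J B_i$; this is still bounded by Assumption~\ref{a:criticalpts} (the $\liminf$ condition on $|\nabla U|$ forces each $B_i$ to be bounded). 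The key point enabling the rest is that under Assumption~\ref{assumption: nondegeneracyM}, the Eyring--Kramers/spectral gap estimate~\eqref{eq: egvalLEpsilon} still holds: there is a single small eigenvalue $\lambda_{2,\epsilon} \le \tilde C_\gamma e^{-\gamma/\epsilon}$ for every $\gamma < \hat\gamma$ and a uniform gap $\lambda_{i,\epsilon} \ge \Lambda$ for $i \ge 3$. This is exactly the content of~\cite[Theorem 1.2 or similar]{MenzSchlichting14} under their Assumption~1.7, which is why we imposed Assumption~\ref{assumption: nondegeneracyM}; the ordering condition~(ii) guarantees that the second-smallest well contributes the dominant spectral term and all other wells are separated by a gap of at least $\delta$.

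With~\eqref{eq: egvalLEpsilon} in hand, I would check that each of the seven lemmas goes through. Lemma~\ref{cor:poverpi} (the Nash-type pointwise kernel bound) only uses $|\Delta U| \le c$, so it is unchanged; the bound~\eqref{eq:poverpiK} on $K$ holds with the same $\hat U = U(s_{1,2})$ since by construction $\max_K U \le \hat U$ (here one uses that $\hat\gamma \le U(s_{1,i}) - U(x_{\min,i})$ for all $i$, so each truncation level $U(x_{\min,i}) + \hat\gamma/(1+\alpha)^{1/4}$ stays below $U(s_{1,2}) = \hat U$). Lemma~\ref{l:langevinError} is a consequence of the spectral decomposition plus Hoeffding and carries over. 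Lemma~\ref{lem:boundPsi}, the uniform $L^\infty$ bound on $\psi_{2,\epsilon}\one_K$, is the one place that needs genuine care: one invokes the structure of the low-lying eigenfunction from~\cite{MenzSchlichting14}, which in the multi-well case is approximately a linear combination of localized Gaussians centered at the two lowest wells; the bound on $K$ follows as in~\cite[Lemma 7.x]{han2025polynomialcomplexitysamplingmultimodal} since $K$ contains neighborhoods of all wells and the eigenfunction is controlled there. Lemma~\ref{l:rebalancing} is purely a resampling/Hoeffding statement with no geometry and is identical. Lemmas~\ref{l:iteration},~\ref{lem:globalmixing} and~\ref{lem:baseCasePsi} then follow by the same arguments (iterated inner-product estimates of consecutive eigenfunctions for the low-temperature regime; the Marion--Schmidler coupling plus maximal coupling at the first level for the high-temperature regime), with $\hat\gamma$, $\hat U$, $\hat\gamma_r$ now the multi-well quantities.

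The main obstacle I expect is verifying that the inner-product estimate between $\psi_{2,\eta_k}$ and $\psi_{2,\eta_{k+1}}$ used in Lemma~\ref{l:iteration} still has the right form when there are more than two wells. In the two-well case $\psi_{2,\epsilon}$ is, up to exponentially small corrections, a fixed sign-changing function (positive on one well, negative on the other) with a sharp transition near the saddle, so consecutive eigenfunctions are nearly parallel and $|\ip{\psi_{2,k},\psi_{2,k+1}}_{L^2(\pi_k)}|$ is close to $1$; the contraction factors $\beta_k$ and the telescoping bound $\prod \beta_j \le C_\beta$ come from quantifying this. With $J$ wells the relevant eigenfunction still lives essentially in the two-dimensional span of indicators of the two deepest wells (by~(ii) the third well is separated by $\delta$, hence enters only at order $e^{-\delta/\epsilon}$), so the same near-parallelism holds, but making this quantitative requires re-running the perturbation argument of~\cite[Section 4]{han2025polynomialcomplexitysamplingmultimodal} with the multi-well Eyring--Kramers asymptotics of~\cite{MenzSchlichting14} in place of the explicit two-well computation. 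This is "minor but tedious" in the sense flagged in the introduction: no new idea is needed, but one must track how the constants $C_\beta$, $C_{\mathrm{tem}}$, $C_\alpha$ now additionally depend on $\delta$ and $J$, and confirm that the mass condition Assumption~\ref{a:massRatioBound} (which only needs to hold for the two wells relevant to the saddle, or more precisely for a $C_m$-fraction in each $\Omega_i$) still yields the required lower bound $\theta_k \ge 1 - \theta/M$ in the low-temperature iteration. Once these constants are in place, the proof of Theorem~\ref{thm: mainM} is word-for-word identical to that of Theorem~\ref{thm: main}, with the critical temperature $\eta_{\mathrm{cr}} = C_K/\log(C_{\mathrm{tem}}MN/\theta)$, $C_K = \hat\gamma/(1+\alpha)^{1/2}$, and the final error split $E_1 + \mathcal E_{M,T}(h) + a$ bounded exactly as before.
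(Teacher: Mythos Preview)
Your proposal contains a genuine error in the spectral picture. You claim that under Assumption~\ref{assumption: nondegeneracyM} there is a \emph{single} small eigenvalue $\lambda_{2,\epsilon}$ with a uniform gap $\lambda_{i,\epsilon}\ge\Lambda$ for $i\ge 3$. This is false: with $J$ wells there are $J-1$ exponentially small eigenvalues $\lambda_{2,\epsilon}\le\cdots\le\lambda_{J,\epsilon}$, and the uniform gap only begins at $\lambda_{J+1,\epsilon}$. The ordering condition~(ii) in Assumption~\ref{assumption: nondegeneracyM} separates the \emph{rates} of the small eigenvalues from one another (so they are simple and have clean Eyring--Kramers asymptotics), but it does not collapse them into a single small eigenvalue. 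Your reading that ``the third well enters only at order $e^{-\delta/\epsilon}$'' confuses the separation between consecutive small eigenvalues with a separation from the bulk.

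Because of this, the Langevin-step lemma cannot be taken over verbatim: the spectral decomposition of $p_{\epsilon,t}/\pi_\epsilon-1$ now has $J-1$ slowly decaying modes, so the analogue of Lemma~\ref{l:langevinError} must carry a sum $\sum_{j=2}^{J}$ of terms of the form $e^{-\lambda_{j,\epsilon}T}|\int h\psi_{j,\epsilon}\pi_\epsilon|\cdot|\frac1N\sum_i\psi_{j,\epsilon}\one_K(Y^i_{\epsilon,0})|$. Correspondingly, the iteration (Lemma~\ref{l:iteration}) must propagate control of \emph{all} the quantities $|\frac1N\sum_i\psi_{j,k}\one_K(X^i_{k,0})|$ for $j=2,\dots,J$ (the paper does this by taking a $\max_{j}$ on both sides), and the base case at $k_{\mathrm{cr}}$ must be established for each such $j$. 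Your inner-product argument for $\beta_k$ then needs to handle the $(J-1)\times(J-1)$ block of cross-terms $\int\psi_{j,k+1}\psi_{j',k}\pi_{k+1}$, which is where the tedious bookkeeping lives and where the exponential-in-$J$ dependence of $C_\beta$ (hence of $C_N$) actually comes from. The remaining pieces you list (the Nash bound, the resampling lemma, the high-temperature coupling, the definition of $K$) are indeed unchanged, but the core low-temperature iteration does not reduce to tracking a single eigenfunction.
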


We now give a detailed sketch of the proof of Theorem~\ref{thm: mainM}. We begin by briefly introducing the mixing properties of Langevin dynamics in this multi-modes setting.
In the multi-modes case, on the weighted space~$L^2(\pi_{\epsilon})$ the operator~$L_\epsilon$ is self-adjoint and has a discrete spectrum with eigenvalues
\begin{equation}
  0 = \lambda_{1, \epsilon}
    < \lambda_{2, \epsilon}\leq\dots\leq \lambda_{J,\epsilon}
    \leq \lambda_{J+1, \epsilon}
    \cdots
\end{equation}
with corresponding~$L^2(\pi_\epsilon)$ normalized eigenfunctions~$\psi_{1, \epsilon}$, $\psi_{2, \epsilon}$, etc.
In the situation where~$U$ has~$J$ wells, for every~$\gamma < \hat \gamma$ there exists constants~$C_\gamma$ and~$\Lambda$
(independent of~$\epsilon$)
such that
\begin{equation}
  \lambda_{j,\epsilon}\leq C_\gamma \exp\paren[\Big]{-\frac{\gamma}{\epsilon}}\quad \forall 2\leq j\leq J
  \quad\text{and}\quad
  \lambda_{i,\epsilon}\geq \Lambda,\quad \forall i\geq J+1
  .
\end{equation}
The above spectral decomposition implies that in the low temperature regime, instead of estimating the error of the second eigenfunctions as in the double-well setting, here we need to control the Monte Carlo error of all the eigenfunctions corresponding to the low-lying eigenvalues. Precisely, we have the following multi-modes version of Lemma~\ref{l:langevinError}, whose proof is analogous to that of Lemma~\ref{l:langevinError}.

\begin{lemma}\label{l:langevinErrorM}
  Assume that for each~$i \in \set{1, \dots, N}$,~$Y_{\epsilon,0}^{i}\in K$, where the subset $K$ defined as in~\eqref{eq:defK} with $\hat{\gamma}$ defined by~\eqref{e:gammaHatDef}. Then  for any bounded test function~$h$, with probability larger than or equal to  \begin{equation}
    1-2\exp\paren[\Big]{-\frac{2Na^2}{\norm{h}^2_{\osc}}}
\end{equation}
  we have
  \begin{align}
    \abs[\Big]{ \frac{1}{N}\sum_{i=1}^{N}h(Y^{i}_{\epsilon, T})-\int h\pi_{\epsilon}\,d x}
    &\leq  \sum_{j=2}^{J}e^{-\lambda_{j, \epsilon} T} \abs[\Big]{\int h\psi_{j,\epsilon}\pi_{\epsilon} \, dx} \cdot\abs[\Big]{\frac{1}{N}\sum_{i=1}^{N}\psi_{j,\epsilon}\one_{K}(Y^{i}_{\epsilon,0})}
    \\
    &
    + \mathcal E_{\epsilon,T}(h)  +a
\end{align}
where $\mathcal E_{\epsilon,T}(h)$ is defined as in~\eqref{eq: def of tilde epsilon}.
\end{lemma}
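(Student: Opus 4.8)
The plan is to follow the proof of Lemma~\ref{l:langevinError} essentially verbatim, with the single spectral mode $\psi_{2,\epsilon}$ replaced by the finite family $\psi_{2,\epsilon},\dots,\psi_{J,\epsilon}$ attached to the small eigenvalues. The only new structural input is the multi-well spectral picture recalled just after \eqref{e:gammaHatRDefM}: under Assumption~\ref{assumption: nondegeneracyM}, the self-adjoint operator $L_\epsilon$ on $L^2(\pi_\epsilon)$ has $\lambda_{1,\epsilon}=0<\lambda_{2,\epsilon}\le\dots\le\lambda_{J,\epsilon}\le C_\gamma e^{-\gamma/\epsilon}$ for every $\gamma<\hat\gamma$, while $\lambda_{i,\epsilon}\ge\Lambda$ for all $i\ge J+1$. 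This is the multi-well analogue of \eqref{eq: egvalLEpsilon}.

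First I would expand the conditional one-particle expectation through the spectral resolution of the semigroup $e^{-L_\epsilon T}$. Writing $c_j\defeq\int h\psi_{j,\epsilon}\pi_\epsilon\,dx$ and using $\psi_{1,\epsilon}\equiv1$, $\lambda_{1,\epsilon}=0$, one gets for $y\in\R^d$
\[
  \E^y h(Y_{\epsilon,T})-\int h\pi_\epsilon\,dx
  =\sum_{j=2}^{J}c_j e^{-\lambda_{j,\epsilon}T}\psi_{j,\epsilon}(y)
    +\sum_{i\ge J+1}c_i e^{-\lambda_{i,\epsilon}T}\psi_{i,\epsilon}(y).
\]
The finite sum over $j\le J$ is the main term and will produce the first line of the asserted bound (after restricting $y$ to $K$, where $\psi_{j,\epsilon}(y)=\psi_{j,\epsilon}\one_K(y)$). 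For the tail, apply Cauchy--Schwarz in $\ell^2$ with factors $c_i$ and $e^{-\lambda_{i,\epsilon}T}\psi_{i,\epsilon}(y)$, bound $\sum_{i\ge J+1}c_i^2\le\|h\|_{L^2(\pi_\epsilon)}^2$, and use $\lambda_{i,\epsilon}\ge\Lambda$ for $i\ge J+1$ together with the Mercer identity $\sum_{i\ge2}e^{-\lambda_{i,\epsilon}T}\psi_{i,\epsilon}(y)^2=\frac{p_{\epsilon,T}(y,y)}{\pi_\epsilon(y)}-1$ to conclude
\[
  \Bigl|\sum_{i\ge J+1}c_i e^{-\lambda_{i,\epsilon}T}\psi_{i,\epsilon}(y)\Bigr|
  \le\|h\|_{L^2(\pi_\epsilon)}\,e^{-\Lambda T/2}\sqrt{\tfrac{p_{\epsilon,T}(y,y)}{\pi_\epsilon(y)}-1}.
\]
Maximizing the right-hand side over $y\in K$ identifies it with $\mathcal E_{\epsilon,T}(h)$ as in \eqref{eq: def of tilde epsilon}.

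Next I would pass from this single-particle estimate to the empirical average. Conditioning on the initial positions $Y^i_{\epsilon,0}=y^i\in K$, the random variables $h(Y^i_{\epsilon,T})$ are independent and take values in an interval of length $\|h\|_\osc$, so Hoeffding's inequality gives that, with probability at least $1-2\exp(-2Na^2/\|h\|_\osc^2)$,
\[
  \Bigl|\tfrac1N\sum_{i=1}^N h(Y^i_{\epsilon,T})-\tfrac1N\sum_{i=1}^N\E^{y^i}h(Y_{\epsilon,T})\Bigr|\le a.
\]
Averaging the single-particle decomposition over $i$, using $\psi_{j,\epsilon}(y^i)=\psi_{j,\epsilon}\one_K(y^i)$ and the pointwise tail bound $\le\mathcal E_{\epsilon,T}(h)$ established above, and combining with the Hoeffding estimate through the triangle inequality yields the claimed inequality. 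Since the resulting bound is uniform over the (arbitrary) admissible configuration $\{y^i\}\subset K$, the conditioning can be removed without loss of probability.

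The one point that needs care — and the step I expect to be the main technical obstacle — is the rigorous justification of the pointwise spectral expansion of $\E^y h(Y_{\epsilon,T})$ and of the Mercer identity for $p_{\epsilon,T}(y,y)/\pi_\epsilon(y)$ in the non-compact setting, where the eigenfunctions $\psi_{j,\epsilon}$ with $j\ge2$ are unbounded and the series must be shown to converge locally uniformly. This is handled exactly as in the proof of Lemma~\ref{l:langevinError}: parabolic regularity of the transition density together with discreteness of the spectrum of $L_\epsilon$ on $L^2(\pi_\epsilon)$ (a consequence of Assumption~\ref{a:criticalpts}) make the expansion valid pointwise for every $T>0$, after which only the finitely many modes $j=2,\dots,J$ are kept as main terms and the remainder is absorbed into $\mathcal E_{\epsilon,T}(h)$. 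Apart from this, the argument is a verbatim repetition of the double-well case with the index $2$ replaced by the range $2,\dots,J$.
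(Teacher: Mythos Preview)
Your proposal is correct and follows the same strategy as the paper, which simply states that the proof of Lemma~\ref{l:langevinErrorM} is analogous to that of Lemma~\ref{l:langevinError}: split into the Hoeffding term $I_1$ and the bias term $I_2$, and control the latter via the spectral expansion of $p_{\epsilon,T}(x,\cdot)/\pi_\epsilon$ together with the identity $\|p_s(x,\cdot)/\pi-1\|_{L^2(\pi)}^2=p_{2s}(x,x)/\pi(x)-1$. The only cosmetic difference is packaging: the paper isolates the multi-well analogue of Lemma~\ref{lem:TDquick} (subtracting all modes $j=2,\dots,J$ in \eqref{eq:decom2}) and then pairs it with $h$ via Cauchy--Schwarz in $L^2(\pi_\epsilon)$, whereas you work directly with the $\ell^2$ expansion of $\E^y h(Y_{\epsilon,T})$ and invoke Mercer; these are the same computation.
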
  

Next, we introduce the multi-modes versions of the remaining preparing lemmas. Among those, Lemma~\ref{cor:poverpi},~\ref{lem:boundPsi},~\ref{l:rebalancing} and~\ref{lem:globalmixing} remain unchanged. Their proofs and the proof of the prerequisite lemmas are almost identical to those in the double-well setting. It remains to present the multi-modes versions of Lemma~\ref{lem:baseCasePsi} and Lemma~\ref{l:iteration}. In particular, Lemma~\ref{lem:baseCasePsi} becomes the following. It is straightforward to check that its proof is almost identical to that of Lemma~\ref{lem:baseCasePsi}.
\begin{lemma} \label{lem:baseCasePsiM} Fix~$\delta>0, \theta\in (0,\frac13)$. Choose~$N$ as in \eqref{eq:Nchoice} and let~$\eta_{\mathrm{cr}}$ be defined as in \eqref{e:CriEtaLT}. Let~$T$ satisfy \eqref{eq:NTglobal}. Then for all $j=2,\dots,J$,
\begin{equation}
    \P\paren[\Big]{\abs[\Big]{\frac{1}{N}\sum_{i=1}^{N}\psi_{j,k_{\mathrm{cr}}}\one_{K}(X^{i}_{k_{\mathrm{cr}},0})}< (C_r+1)\delta}\geq 1-3\theta.
\end{equation}
   
\end{lemma}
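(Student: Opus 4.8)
The plan is to deduce Lemma~\ref{lem:baseCasePsiM} from the multi-modes global mixing lemma (Lemma~\ref{lem:globalmixing}) in exactly the way Lemma~\ref{lem:baseCasePsi} follows from Lemma~\ref{lem:globalmixing} in the double-well case, replacing the single eigenfunction $\psi_{2,\epsilon}$ by each of $\psi_{j,\epsilon}$, $j=2,\dots,J$, and paying the usual price of truncating onto $K$. First I would recall that by Lemma~\ref{lem:globalmixing} (which is stated to hold unchanged in the multi-modes setting), with the stated choice of $N$ as in~\eqref{eq:Nchoice} and $\eta_{\mathrm{cr}}$ as in~\eqref{e:CriEtaLT}, for any continuous $f$ with $|f|\le 1$ we have a high-probability bound $\abs{\frac1N\sum_i f(X^i_{k_{\mathrm{cr}},T}) - \int f\pi_{k_{\mathrm{cr}}}}<\mathfrak e$; here I take $\mathfrak e = \delta$ and $p=\theta$ (absorbing the factor difference in the choice of $N$ exactly as in Case~1 of the proof of Theorem~\ref{thm: main}, where one checks $128 C_r^2(C_\psi+1)^2 \ge 9C_r^2$ and that the logarithmic factors match up to constants).

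The one genuine wrinkle — the same one handled in the double-well case — is that $\psi_{j,\epsilon}\one_K$ is bounded (by Lemma~\ref{lem:boundPsi}, whose multi-modes version is asserted to be unchanged, giving $\sup_{0<\epsilon\le1}\norm{\psi_{j,\epsilon}\one_K}_{L^\infty}\le C_\psi$), but it is not continuous, so it cannot be fed directly into Lemma~\ref{lem:globalmixing}. The standard fix is to apply Lemma~\ref{lem:globalmixing} to $f = \psi_{j,k_{\mathrm{cr}}}\one_{K}/C_\psi$ after mollifying, or more cleanly, to observe that $X^i_{k_{\mathrm{cr}},0}\in K$ for all $i$ with high probability (this is precisely the event in~\eqref{eq:Xk+1inK}, or can be obtained directly since $\pi_{k_{\mathrm{cr}}}(K^c)$ is exponentially small at the critical temperature). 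On that event the indicator plays no role and one may replace $\psi_{j}\one_K$ with a bounded continuous extension of $\psi_{j}|_K$ to $\R^d$; Lemma~\ref{lem:globalmixing} then yields that $\abs{\frac1N\sum_i \psi_{j,k_{\mathrm{cr}}}\one_K(X^i_{k_{\mathrm{cr}},0}) - \int \psi_{j,k_{\mathrm{cr}}}\one_K\,\pi_{k_{\mathrm{cr}}}}$ is at most $C_\psi\delta$, say, with probability $\ge 1-2\theta$ (one factor $\theta$ for the mixing event, one for the $\set{X^i_{k_{\mathrm{cr}},0}\in K}$ event). Finally I bound the deterministic term $\abs{\int \psi_{j,k_{\mathrm{cr}}}\one_K\,\pi_{k_{\mathrm{cr}}}} = \abs{\int \psi_{j,k_{\mathrm{cr}}}(\one_K-1)\,\pi_{k_{\mathrm{cr}}}}$, using orthogonality $\int \psi_{j,\epsilon}\pi_\epsilon = 0$ for $j\ge2$ together with Cauchy--Schwarz: it is $\le \norm{\psi_{j,k_{\mathrm{cr}}}}_{L^2(\pi_{k_{\mathrm{cr}}})}\,\pi_{k_{\mathrm{cr}}}(K^c)^{1/2} = \pi_{k_{\mathrm{cr}}}(K^c)^{1/2}$, which is $\le C_r\delta$ for the given choice of $\eta_{\mathrm{cr}}$ (this is the content of the barrier-crossing / mass-concentration estimate that also underlies~\eqref{e:CBeta} and Lemma~\ref{lem:baseCasePsi}). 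Combining, $\abs{\frac1N\sum_i \psi_{j,k_{\mathrm{cr}}}\one_K(X^i_{k_{\mathrm{cr}},0})} < (C_r+1)\delta$ with probability $\ge 1-3\theta$, after rescaling $\delta$ by the harmless absolute constants.

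Since the argument is identical for each $j\in\set{2,\dots,J}$ — the constant $C_\psi$ from Lemma~\ref{lem:boundPsi} is uniform in $j$, and the eigenvalue gap structure only enters through $\hat\gamma$, which is defined in~\eqref{e:gammaHatDefM} so that the same $\eta_{\mathrm{cr}}$ works — the claim for all $j$ follows, with the probabilities handled either by a union bound over the finitely many $j$ (at the cost of replacing $\theta$ by $\theta/J$, absorbed into constants) or simply by noting $J$ is a fixed dimensional/structural quantity. The main obstacle, such as it is, is purely bookkeeping: tracking that the single choice of $N$ and $T$ in~\eqref{eq:Nchoice}–\eqref{eq:NTglobal} simultaneously meets the hypotheses of Lemma~\ref{lem:globalmixing} with $\mathfrak e\sim\delta$, $p\sim\theta$ and makes $\pi_{k_{\mathrm{cr}}}(K^c)^{1/2}\lesssim\delta$; all of these are exactly the estimates already carried out in the double-well proof of Lemma~\ref{lem:baseCasePsi}, so no new analytic input is needed.
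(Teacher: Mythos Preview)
Your outline has a genuine gap: you are conflating $X^i_{k_{\mathrm{cr}},0}$ with $X^i_{k_{\mathrm{cr}},T}$. Lemma~\ref{lem:globalmixing} controls empirical averages at the \emph{post-Langevin} points $X^i_{k,T}$ for $k\le k_{\mathrm{cr}}$, but the quantity in Lemma~\ref{lem:baseCasePsiM} is evaluated at the \emph{pre-Langevin} points $X^i_{k_{\mathrm{cr}},0}$, which are the result of \emph{resampling} from $\{X^i_{k_{\mathrm{cr}}-1,T}\}$. These two ensembles are not the same, and no amount of ``all points lie in $K$ with high probability'' closes that gap --- resampling changes the empirical measure by duplicating and discarding particles, and the $\psi_{j}$-average can shift by $O(1)$ across a resampling step unless you control it.

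The paper's proof (stated to be identical to that of Lemma~\ref{lem:baseCasePsi}) inserts exactly the missing step: first apply the resampling error Lemma~\ref{l:rebalancing} with $h=\psi_{j,k_{\mathrm{cr}}}\one_K$, $x^i=X^i_{k_{\mathrm{cr}}-1,T}$, $y^i=X^i_{k_{\mathrm{cr}},0}$, to bound
\[
\abs[\Big]{\tfrac1N\sum_i \psi_{j,k_{\mathrm{cr}}}\one_K(X^i_{k_{\mathrm{cr}},0}) - \int_K \psi_{j,k_{\mathrm{cr}}}\pi_{k_{\mathrm{cr}}}}
\]
by $(C_\psi+1)J_1+J_2+a$, where $J_1=\abs{1-\tfrac1N\sum_i r_{k_{\mathrm{cr}}-1}(X^i_{k_{\mathrm{cr}}-1,T})}$ and $J_2$ is the weighted average of $\tilde h\, r_{k_{\mathrm{cr}}-1}$ at the $X^i_{k_{\mathrm{cr}}-1,T}$. \emph{Only then} does one invoke Lemma~\ref{lem:globalmixing} --- at level $k_{\mathrm{cr}}-1$, not $k_{\mathrm{cr}}$ --- to make $J_1$ and $J_2$ small with high probability. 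Your treatment of the bias term $J_3=\abs{\int_K\psi_{j,k_{\mathrm{cr}}}\pi_{k_{\mathrm{cr}}}}$ via orthogonality and $\pi_{k_{\mathrm{cr}}}(K^c)$ is fine and matches the paper. So the fix is local: replace your direct appeal to global mixing at $X^i_{k_{\mathrm{cr}},0}$ by the two-step ``resampling lemma $+$ global mixing at level $k_{\mathrm{cr}}-1$'' argument.
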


The multi-modes version of Lemma~\ref{l:iteration} is stated as the following. The proof is almost identical to that of Lemma~\ref{l:iteration}, aside from a much more tedious computation. 
\begin{lemma}\label{l:iterationM}
  Choose~$M$ as in~\eqref{e:MTN} and~$\eta_k$ as in~\eqref{e:chooseEtaK}. Fix~$\theta\in (0,1)$.
  Define constant~$C_K\equiv C_K(\alpha,\hat{\gamma})$ as~\eqref{eq:defCK}. There exist dimensional constants~$C_{\alpha} = C_{\alpha}(\alpha,U)>0$ and~$C_{\mathrm{tem}}= C_{\mathrm{tem}}(\alpha,U)>0$ such that for any~$\delta>0$, if~$N,T$ satisfies \begin{align}
 N&\geq 128C^2_r(C_{\psi}+1)^2 \log(J)\frac{M^2}{\delta^2}\log\paren[\Big]{\frac{8M}{\theta}},\\
 T&\geq \max\Bigl\{C_{\alpha}\Big(\log\paren[\Big]{\frac{M}{\theta}}+\log N+\log\paren[\Big]{\frac{1}{\delta}}+\frac{1}{\eta}+\log(J)\Big), \frac{c}{d}\log (4d)\Bigr\},
  \end{align}
  and let~$\eta_{\mathrm{cr}}$ be defined as in~\eqref{e:CriEtaLT}
, then for each~$ k_{\mathrm{cr}}\leq k\leq M-1$, 
  with probability~\eqref{eq:thetakLB}, we have~\eqref{eq:Xk+1inK}
    and
     \begin{equation}\label{e:iterationM}
     \max_{j=2,\dots,J}\abs[\Big]{\frac{1}{N}\sum_{i=1}^{N}\psi_{j,k+1}\one_{K}(X_{k+1,0}^i)}\leq \beta_k \max_{j=2,\dots,J}\abs[\Big]{\frac{1}{N}\sum_{i=1}^{N}\psi_{j,k}\one_{K}(X_{k,0}^i)}+c_k. 
  \end{equation}
  Here the constants~$\beta_k$,~$c_k$ are such that for every~$k$ such that~$\eta_k$ satisfies \eqref{e:CriEtaLT} we have~\eqref{e:CBeta} holds
  for some dimensional constant~$C_{\beta}>1$ (independent of~$\alpha$,~$\delta$).
\end{lemma}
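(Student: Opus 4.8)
The plan is to mirror the double-well proof of Lemma~\ref{l:iteration} essentially verbatim, replacing the single quantity $\frac1N\sum_i\psi_{2,k}\one_K(X^i_{k,0})$ by the maximum over $j=2,\dots,J$ of the analogous quantities, and tracking how the extra $\log J$ factors propagate. The two ingredients are (a) the rare-event bound showing $X^i_{k+1,0}\in K$ for all $i$ with probability at least $1-\theta/M$ in the low-temperature regime, and (b) the one-step contraction estimate \eqref{e:iterationM}. For (a), the argument is identical to the double-well case: at temperature $\eta_k\le\eta_{\mathrm{cr}}$ the mass $\pi_{k+1}(K^c)$ is exponentially small (using Assumption~\ref{a:criticalpts}, which controls the tails uniformly, and the definition of $K$ with $\hat\gamma$ now given by \eqref{e:gammaHatDefM}), so a union bound over the $N$ resampled particles — whose empirical distribution after the Langevin and resampling steps is, up to the $C_r$ factor, close to $\pi_{k+1}$ — gives the claim once $N$ is polynomial in $M/\theta$. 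The only change is that the definition of $B_i$ in \eqref{eq:defB} now refers to all $J$ basins $\Omega_i$, so $K=\cup_{i=1}^J B_i$; boundedness of $K$ still follows from $\liminf_{|x|\to\infty}|\nabla U|\ge C_U$.

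For (b), the one-step bound, I would apply the multi-modes Langevin error estimate Lemma~\ref{l:langevinErrorM} at level $k$ and the resampling estimate Lemma~\ref{l:rebalancing} at level $k$, exactly as in Lemma~\ref{l:iteration}, but now carrying the sum $\sum_{j=2}^J e^{-\lambda_{j,\eta_k}T}|\int h\psi_{j,\eta_k}\pi_k|\cdot|\frac1N\sum_i\psi_{j,\eta_k}\one_K(X^i_{k,0})|$ with $h$ replaced successively by each $\psi_{j',k+1}$. The key algebraic step is the inner-product estimate between eigenfunctions at successive temperature levels, i.e. bounding $|\int \psi_{j',\eta_{k+1}}\psi_{j,\eta_k}\pi_k\,dx|$: in the double-well case this was a $2\times2$ near-identity analysis, and here it becomes an analysis of the $(J-1)\times(J-1)$ Gram-type matrix of low-lying eigenfunctions. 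The spectral gap \eqref{eq:lambdaLowBound} (the multi-modes analogue, guaranteed by Assumption~\ref{assumption: nondegeneracyM}) separating $\lambda_{J,\epsilon}$ from $\lambda_{J+1,\epsilon}\ge\Lambda$ lets the $\mathcal E_{\epsilon,T}(h)$ term be absorbed into $c_k$ just as before, using Lemma~\ref{cor:poverpi} and Lemma~\ref{lem:boundPsi} (both unchanged). Taking the maximum over $j'$ on the left and bounding each term on the right by the maximum over $j$ on the right yields \eqref{e:iterationM}, with $\beta_k$ coming from the near-identity part of the change-of-basis matrix and $c_k$ collecting the exponentially small off-diagonal contributions, the $\mathcal E_{\epsilon,T}$ term, and the resampling fluctuation $a$; choosing $a=\delta/(CM)$ and $N$ as stated (now with the extra $\log J$) makes $c_k\le\delta/M$.

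The reason for the extra $\log J$ factors is a union bound: to control all $J-1$ eigenfunctions simultaneously via Lemma~\ref{l:langevinErrorM} and Lemma~\ref{l:rebalancing}, each of which holds with failure probability $2\exp(-2Na^2/\|h\|_\osc^2)$, one needs the per-eigenfunction failure probability to be $\theta/(CMJ)$, costing a $\log J$ in the exponent and hence in $N$ and $T$. The product bound $\prod_{j=k}^{M-1}\beta_j\le C_\beta$ is then obtained exactly as in \cite[Lemma~4.7]{han2025polynomialcomplexitysamplingmultimodal}: each $\beta_j=1+O(\eta_j-\eta_{j+1})$ up to exponentially small corrections, and $\sum_j(\eta_j-\eta_{j+1})\le\eta_1=1$ telescopes, so the product is bounded by a dimensional constant depending only on $\alpha$ and $U$.

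I expect the main obstacle to be the eigenfunction inner-product estimate at successive levels in the multi-modes setting — i.e. showing that the $(J-1)\times(J-1)$ matrix $\bigl(\int\psi_{j',\eta_{k+1}}\psi_{j,\eta_k}\pi_k\bigr)_{j,j'}$ is close to orthogonal with quantitative, dimension-tracked-but-not-explicit error — since in the double-well case this was the single most delicate computation and here it is genuinely a matrix perturbation argument rather than a scalar one; this is precisely the "much more tedious computation" alluded to in the statement, and it relies essentially on Assumption~\ref{assumption: nondegeneracyM}(ii) to keep the low-lying eigenvalues well-separated from $\Lambda$ and to pin down the near-degenerate structure of the corresponding eigenspace (roughly, $\psi_{j,\epsilon}$ being close to a normalized indicator-type function on a union of wells).
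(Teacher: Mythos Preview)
Your proposal is correct and matches the paper's own approach: the paper gives no detailed proof of this lemma, only the remark that ``The proof is almost identical to that of Lemma~\ref{l:iteration}, aside from a much more tedious computation,'' and your outline (use Lemma~\ref{l:langevinErrorM} and Lemma~\ref{l:rebalancing} for each $j$, take the max, union-bound over $J-1$ eigenfunctions for the $\log J$ factors, and repeat the $\beta_k$ and $c_k$ estimates) is exactly that. One small imprecision: the telescoping that controls $\prod_j\beta_j$ in the paper's Lemma~\ref{lem: betaj} runs over the mass differences $|\pi_{k+1}(\Omega_i)-\pi_k(\Omega_i)|$ via the BV bound (Lemma~\ref{lem: mass in well as BV}) and AM--GM, not directly over the temperature increments $\eta_j-\eta_{j+1}$ as you wrote; since you cite the correct lemma and the structure is the same, this is cosmetic.
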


The proof of Theorem~\ref{thm: mainM} then follows similarly as the above-shown proof of Theorem~\ref{thm: main}, where the use of Lemma~\ref{l:langevinError},~\ref{l:iteration} and ~\ref{lem:baseCasePsi} are replaced by Lemma~\ref{l:langevinErrorM},~\ref{l:iterationM} and~\ref{lem:baseCasePsiM}, respectively. After a straightforward but a tedious calculation, one can show that an upper bound of~$C_{\beta}$ actually depends exponentially on~$J$.   As a result, the constant~$C_N$ depends exponentially on~$J$ and~$C_T$ is linear in~$J$. The proof of all the preparing lemmas are analogous to their double-modes version and are thus omitted.

\section{Error estimates for the resampling and the Langevin Dynamics (Lemma~\ref{l:langevinError},~\ref{cor:poverpi} and~\ref{l:rebalancing})} \label{sec:langevin} 
In this section we prove Lemma~\ref{l:langevinError},~\ref{cor:poverpi} and~\ref{l:rebalancing}. 
The proof of Lemma~\ref{l:langevinError} is based on a spectral decomposition, and is presented in Section~\ref{s:langevinError}, below.
The proof of Lemma~\ref{cor:poverpi} is based on a pointwise estimate of the transition density and is presented in Section~\ref{s:pBound}, below. At last, we prove Lemma~\ref{l:rebalancing} in Section~\ref{s:resample}.

\subsection{The Monte Carlo error in the Langevin Step (Lemma~\ref{l:langevinError})} \label{s:langevinError}

Lemma~\ref{l:langevinError} resembles \cite[Lemma 4.6]{han2025polynomialcomplexitysamplingmultimodal} in that both quantify the Monte Carlo error, and the proof here also relies on spectral decomposition in a similar spirit. But~\cite[Lemma 4.6]{han2025polynomialcomplexitysamplingmultimodal} does not apply to our setting. The main reason is that the estimate in \cite[Lemma 4.6]{han2025polynomialcomplexitysamplingmultimodal} requires a uniform bound on the marginal error, independent of the particles’ initial positions. In our unbounded domain, such a bound becomes infinite. Therefore, we instead derive a mixing estimate that depends explicitly on the Langevin starting point.

To this end, we first establish the following preliminary result on the transition density, showing that the transition density~$p_{\epsilon,t}(x,\cdot)$ of~$Y_{\epsilon,t}$, starting from~$x$, becomes very close to a measure whose shape resembles that of the second eigenfunction. This closeness can be quantified in terms of the time~$t$ and the starting point~$x$, implying that local well-mixing can be achieved after a relatively short time if start in a good region. Precisely, we have the following lemma.
\begin{lemma}\label{lem:TDquick}
    Let~$p_{\epsilon,t}(x,\cdot)$ be the transition density of~$Y_{\epsilon,t}$ starting at~$x$, which is defined as in \eqref{eq:defp}, then
    \begin{equation}\label{eq:TDquick}
    \norm[\Big]{\frac{p_{\epsilon,t}(x,\cdot)}{\pi_{\epsilon}(\cdot)}-1 -e^{-\lambda_{2,\epsilon}t}\psi_{2,\epsilon}(x)\psi_{2,\epsilon}(\cdot)}_{L^2(\pi)}\leq e^{-\Lambda t/2}\sqrt{\frac{p_{\epsilon,t}(x,x)}{\pi_{\epsilon}(x)}-1}\,.
\end{equation}
\end{lemma}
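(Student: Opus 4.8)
The plan is to use the spectral decomposition of the operator $L_\epsilon$ on $L^2(\pi_\epsilon)$ and the self-adjointness of the semigroup. Write $f_{x,t} \defeq p_{\epsilon,t}(x,\cdot)/\pi_\epsilon(\cdot)$. By \eqref{e:fOverPiNu}, $f_{x,t}$ solves $\partial_t f + L_\epsilon f = 0$, so $f_{x,t} = e^{-t L_\epsilon} f_{x,0}$ where $f_{x,0}$ is (formally) the density of $\delta_x$ with respect to $\pi_\epsilon$. Expanding in the orthonormal eigenbasis $\{\psi_{j,\epsilon}\}$ of $L^2(\pi_\epsilon)$, the coefficient of $\psi_{j,\epsilon}$ in $f_{x,t}$ is $e^{-\lambda_{j,\epsilon} t}\ip{f_{x,0},\psi_{j,\epsilon}}_{L^2(\pi_\epsilon)} = e^{-\lambda_{j,\epsilon}t}\int \psi_{j,\epsilon}\,d\delta_x = e^{-\lambda_{j,\epsilon}t}\psi_{j,\epsilon}(x)$. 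The $j=1$ term is $\psi_{1,\epsilon}\equiv 1$ (since $\lambda_{1,\epsilon}=0$ and $\psi_{1,\epsilon}(x)=1$), and the $j=2$ term is exactly $e^{-\lambda_{2,\epsilon}t}\psi_{2,\epsilon}(x)\psi_{2,\epsilon}(\cdot)$. Therefore the quantity inside the norm on the left of \eqref{eq:TDquick} is precisely $\sum_{j\geq 3} e^{-\lambda_{j,\epsilon}t}\psi_{j,\epsilon}(x)\psi_{j,\epsilon}(\cdot)$, whose squared $L^2(\pi_\epsilon)$ norm is $\sum_{j\geq 3} e^{-2\lambda_{j,\epsilon}t}\psi_{j,\epsilon}(x)^2$.

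Next I would bound this tail. Since $\lambda_{j,\epsilon}\geq\Lambda$ for all $j\geq 3$ by \eqref{eq: egvalLEpsilon}, split one factor of the exponential: $e^{-2\lambda_{j,\epsilon}t} = e^{-\lambda_{j,\epsilon}t}\,e^{-\lambda_{j,\epsilon}t}\leq e^{-\Lambda t}\,e^{-\lambda_{j,\epsilon}t}$. Hence
\begin{equation*}
  \sum_{j\geq 3} e^{-2\lambda_{j,\epsilon}t}\psi_{j,\epsilon}(x)^2 \leq e^{-\Lambda t}\sum_{j\geq 3} e^{-\lambda_{j,\epsilon}t}\psi_{j,\epsilon}(x)^2 \leq e^{-\Lambda t}\sum_{j\geq 1} e^{-\lambda_{j,\epsilon}t}\psi_{j,\epsilon}(x)^2.
\end{equation*}
It remains to identify the last sum with $p_{\epsilon,t}(x,x)/\pi_\epsilon(x)$ and notice that the $j=1$ term contributes exactly $1$, so that $\sum_{j\geq 2} e^{-\lambda_{j,\epsilon}t}\psi_{j,\epsilon}(x)^2 = p_{\epsilon,t}(x,x)/\pi_\epsilon(x) - 1$; combined with the previous display (using that I may also drop back to the $j\geq 1$ sum at the cost of adding the nonnegative $j=1$ term $e^{-\Lambda t}$, or more sharply keep only $j\geq 2$) this yields the claimed bound after taking square roots. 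The identification of the diagonal comes from the symmetric (Mehler-type) spectral representation of the kernel: by reversibility, $p_{\epsilon,t}(x,y) = \pi_\epsilon(y)\sum_{j} e^{-\lambda_{j,\epsilon}t}\psi_{j,\epsilon}(x)\psi_{j,\epsilon}(y)$, so setting $y=x$ and dividing by $\pi_\epsilon(x)$ gives $p_{\epsilon,t}(x,x)/\pi_\epsilon(x) = \sum_j e^{-\lambda_{j,\epsilon}t}\psi_{j,\epsilon}(x)^2$.

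The main obstacle is justifying these manipulations rigorously, since $\delta_x$ is not in $L^2(\pi_\epsilon)$ and the eigenfunctions are unbounded. The clean way around this is to use the semigroup property: $p_{\epsilon,t}(x,y) = \int p_{\epsilon,t/2}(x,z)\,p_{\epsilon,t/2}(z,y)\,dz$, which after dividing by $\pi_\epsilon$ shows that $f_{x,t} = e^{-(t/2)L_\epsilon} g$ with $g \defeq f_{x,t/2}\in L^2(\pi_\epsilon)$ (its squared norm is exactly $p_{\epsilon,t}(x,x)/\pi_\epsilon(x)$ by reversibility, which is finite for $t>0$). Then all expansions are honest $L^2(\pi_\epsilon)$ expansions: $g = \sum_j a_j\psi_{j,\epsilon}$ with $a_j = e^{-\lambda_{j,\epsilon}t/2}\psi_{j,\epsilon}(x)$ (using $\ip{g,\psi_{j,\epsilon}} = (e^{-(t/2)L_\epsilon}\psi_{j,\epsilon})(x)\cdot$... more carefully, $\ip{f_{x,t/2},\psi_{j,\epsilon}}_{L^2(\pi_\epsilon)} = \int p_{\epsilon,t/2}(x,y)\psi_{j,\epsilon}(y)\,dy = \E^x\psi_{j,\epsilon}(Y_{\epsilon,t/2}) = e^{-\lambda_{j,\epsilon}t/2}\psi_{j,\epsilon}(x)$), and $\sum_j a_j^2 = \|g\|_{L^2(\pi_\epsilon)}^2 = p_{\epsilon,t}(x,x)/\pi_\epsilon(x) < \infty$. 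Applying $e^{-(t/2)L_\epsilon}$ multiplies $a_j$ by $e^{-\lambda_{j,\epsilon}t/2}$, recovering the coefficients $e^{-\lambda_{j,\epsilon}t}\psi_{j,\epsilon}(x)$ of $f_{x,t}$, and the tail estimate above goes through verbatim with $\sum_j a_j^2$ in place of the heuristic sum. I should be slightly careful that $\psi_{2,\epsilon}(x)$ and $\psi_{j,\epsilon}(x)$ are finite pointwise — this is fine since the eigenfunctions are continuous (indeed smooth, as $U\in C^6$) by elliptic regularity.
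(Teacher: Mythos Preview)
Your proposal is correct and follows essentially the same approach as the paper: both arguments rest on the spectral expansion of $p_{\epsilon,t}(x,\cdot)/\pi_\epsilon(\cdot)$, the inequality $e^{-2\lambda_{j,\epsilon}t}\le e^{-\Lambda t}e^{-\lambda_{j,\epsilon}t}$ for $j\ge3$, and the identification of the diagonal $\sum_{j\ge2}e^{-\lambda_{j,\epsilon}t}\psi_{j,\epsilon}(x)^2=p_{\epsilon,t}(x,x)/\pi_\epsilon(x)-1$ via Chapman--Kolmogorov and detailed balance at the half-time $s=t/2$. The only cosmetic difference is that the paper writes the Chapman--Kolmogorov step first (equation \eqref{eq:trans1}) and expands the second factor $p_{t-s}(y,\cdot)/\pi$ spectrally, whereas you expand $g=f_{x,t/2}\in L^2(\pi_\epsilon)$ directly and then apply $e^{-(t/2)L_\epsilon}$; these are two presentations of the same computation.
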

\begin{proof}[Proof of Lemma~\ref{lem:TDquick}]
Since in the proof~$\epsilon$ is fixed, for simplicity of presentation, we slightly abuse notation in this proof and omit the subscript~$\epsilon$.

From Chapman-Kolmogorov equation, we have that for~$x,y\in\R^d$,~$t>s>0$,
\begin{equation}\label{eq:CK}
    p_t(x,y)=\int p_s(x,z)p_{t-s}(z,y)\, d z.
\end{equation}
Recall that the stationary distribution~$\pi$ satisfies the detailed balance equations due to the reversibility. That is, for~$y,z\in\mathbb{R}^d$,~$t>s> 0$, we have that
\begin{equation}\label{eq:DB}
    \pi(z)p_{t-s}(z,y)=\pi(y)p_{t-s}(y,z).
\end{equation}
Since~$\pi>0$, we obtain that
\begin{equation}
    \frac{p_t(x,y)}{\pi(y)}\overset{\eqref{eq:CK}}{=}\int  \frac{p_s(x,z)}{\pi(y)} p_{t-s}(z,y)\, d z \overset{\eqref{eq:DB}}{=}\int \frac{p_s(x,z)}{\pi(z)}p_{t-s}(y,z)\, d z,
\end{equation}
which immediately gives that
\begin{equation}\label{eq:trans1}
     \frac{p_t(x,y)}{\pi(y)}-1 =\int \paren[\Big]{\frac{p_s(x,z)}{\pi(z)}-1}p_{t-s}(y,z)\, d z.
\end{equation}
On the other hand, we know that for any~$t>0$, the equation \eqref{e:fOverPiNu} implies that we have the following spectral decomposition of~$\frac{p_t(y,z)}{\pi(z)}$,
\begin{equation}\label{eq:decomP}
   \frac{p_t(y,z)}{\pi(z)}=1+\sum_{k\geq 2}e^{-\lambda_k t}\psi_{k}(y)\psi_k(z).
\end{equation}
Therefore,
\begin{align}
    \frac{p_t(x,y)}{\pi(y)}-1 \:&\overset{\mathclap{\eqref{eq:trans1}}}{=}\int \paren[\Big]{\frac{p_s(x,z)}{\pi(z)}-1}\frac{p_{t-s}(y,z)}{\pi(z)}\pi(z)\, d z\\
    &\overset{\mathclap{\eqref{eq:decomP}}}{=}\int \paren[\Big]{\frac{p_s(x,z)}{\pi(z)}-1}\pi(z)\, d z\\
    &+e^{-\lambda_2 (t-s)}\psi_2(y)\int \paren[\Big]{\frac{p_s(x,z)}{\pi(z)}-1}\psi_2(z)\pi(z)\, d z\\
    &+\sum_{k\geq 3}e^{-\lambda_k (t-s)}\psi_{k}(y) \int\paren[\Big]{\frac{p_s(x,z)}{\pi(z)}-1}\psi_k(z)\pi(z)\, d z.  \label{eq:PoverPi}
\end{align}
Notice that 
\begin{equation}\label{eq:term2}
   \int \paren[\Big]{\frac{p_s(x,z)}{\pi(z)}-1}\psi_2(z)\pi(z)\, d z =\int p_s(x,z)\psi_2(z)\, d z = e^{-\lambda_2 s}\psi_2(x). 
\end{equation}
Plugging \eqref{eq:term2} into \eqref{eq:PoverPi} yields,
\begin{equation}\label{eq:decom2}
    \norm[\Big]{\frac{p_t(x,\cdot)}{\pi(\cdot)}-1 -e^{-\lambda_2t}\psi_2(x)\psi_2(\cdot)}_{L^2(\pi)}\leq e^{-\Lambda (t-s)}\norm[\Big]{\frac{p_s(x,\cdot)}{\pi(\cdot)}-1}_{L^2(\pi)}.
\end{equation}
where~$\Lambda$ is defined in \eqref{eq: egvalLEpsilon}. 

We proceed to estimate~$\norm{p_s(x,\cdot)/\pi(\cdot)-1}_{L^2(\pi)}$. Notice that
\begin{align}
    \norm[\Big]{\frac{p_s(x,\cdot)}{\pi(\cdot)}-1}_{L^2(\pi)}\,&=\int \paren[\Big]{\frac{p_s(x,z)}{\pi(z)}-1}^2\pi(z)\, d z\\
    &=\int \paren[\Big]{\frac{p_s(x,z)}{\pi(z)}}^2\pi(z)\, d z-1\\
    &\overset{\mathclap{\eqref{eq:DB}}}{=}\int \frac{p_s(x,z)}{\pi(z)}\frac{p_s(z,x)}{\pi(x)}\pi(z)\, d z-1\\
    &\overset{\mathclap{\eqref{eq:CK}}}{=}\quad\frac{p_{2s}(x,x)}{\pi(x)}-1. \label{eq:psL2}
\end{align}
Choosing~$s=t/2$ and plugging \eqref{eq:psL2} into \eqref{eq:decom2} finishes the proof.
\end{proof}

Notice that on the right hand side of \eqref{eq:TDquick}, the convergence rate in time~$t$ does not decrease when~$\epsilon$ becomes small. Lemma~\ref{lem:TDquick} describes the quick convergence of~$p_{\epsilon,t}$ towards the local modes (metastable state). We can then prove Lemma~\ref{l:langevinError}.
\begin{proof}[Proof of Lemma~\ref{l:langevinError}]
Let~$\E_0$ denote the conditional expectation given the~$\sigma$-algebra generated by~$\{Y_{\epsilon,0}^i, i=1,\dots,N\}$. Observe that
\begin{equation}\label{e:errhstep1}
    \frac{1}{N}\sum_{i=1}^{N}h(Y^{i}_{\epsilon, T})-\int h\pi_{\epsilon}\,d x=I_1+I_2,
\end{equation}
    where
    \begin{align}
            I_1&\defeq \frac{1}{N}\sum_{i=1}^{N}h(Y^{i}_{\epsilon,T})-\frac{1}{N}\sum_{i=1}^{N}\E_0h(Y^{i}_{\epsilon,T}),\\
            I_2&\defeq \frac{1}{N}\sum_{i=1}^{N}\E_0h(Y^{i}_{\epsilon,T})-\int h\pi_{\epsilon}\,d x.
    \end{align}
\restartsteps
  \step[Bound~$I_1$] Notice that after conditioning on~$\{Y_{\epsilon,0}^1,...,Y_{\epsilon,0}^{N}\}$, the random variables~$Y^{i}_{\epsilon,T}$ are independent. Hence by Hoeffding's inequality,
  \begin{equation}\label{eq:I1}
      \P_0(|I_1|\geq a)\leq 2\exp\paren[\Big]{-\frac{2Na^2}{\norm{h}^2_{\osc}}}.
  \end{equation}

\step[Estimate~$I_2$]
For each fixed~$x$, it can be directly checked that
\begin{align}
    \int p_{\epsilon,T}(x,y)h(y)\, d y&=\int \frac{p_{\epsilon,T}(x,y)}{\pi_{\epsilon}(y)}h(y)\pi_{\epsilon}(y)\, d y\\
    &=\int h\pi_{\epsilon}\, d y+ e^{-\lambda_{2,\epsilon}t}\paren[\Big]{\int h\psi_{2,\epsilon}\pi_{\epsilon}\,dy}\psi_{2,\epsilon}(x)\\
    &+\int \paren[\Big]{\frac{p_{\epsilon,T}(x,y)}{\pi_{\epsilon}(y)}-1-e^{-\lambda_{2,\epsilon}T}\psi_{2,\epsilon}(x)\psi_{2,\epsilon}(y)}h(y)\pi_{\epsilon}(y)\, d y\label{eq:f0perpProj}.
\end{align}
where the third term
\begin{align}
   \MoveEqLeft \abs[\bigg]{\int \paren[\Big]{\frac{p_{\epsilon,T}(x,y)}{\pi_{\epsilon}(y)}-1-e^{-\lambda_{2,\epsilon}T}\psi_{2,\epsilon}(x)\psi_{2,\epsilon}(y)}h(y)\pi_{\epsilon}(y)\, d y}\\
    &\overset{\mathclap{\eqref{eq:TDquick}}}{\leq} \norm{h}_{L^2(\pi_{\epsilon})}e^{-\Lambda T/2}\sqrt{\frac{p_{\epsilon,T}(x,x)}{\pi_{\epsilon}(x)}-1}. \label{eq:I2s}
\end{align}

Then
\begin{align}
|I_2|\:&\overset{\mathclap{\eqref{eq:f0perpProj},\eqref{eq:I2s}}}{\leq} \:  e^{-\lambda_{2, \epsilon} T} \abs[\Big]{\int_{\mathbb T^d}h\psi_{2,\epsilon}\pi_{\epsilon} \, dx} \cdot\abs[\Big]{\frac{1}{N}\sum_{i=1}^{N}\psi_{2,\epsilon}(Y^{i}_{\epsilon,0})}\\
    &+\frac{1}{N} \norm{h}_{L^2(\pi_{\epsilon})} e^{-\Lambda t/2}\sum_{i=1}^{N} \sqrt{\frac{p_{\epsilon,T}(Y_{\epsilon,0}^i,Y_{\epsilon,0}^i)}{\pi(Y_{\epsilon,0}^i)}-1}\\
    &\leq e^{-\lambda_{2, \epsilon} T} \abs[\Big]{\int h\psi_{2,\epsilon}\pi_{\epsilon} \, dx} \cdot\abs[\Big]{\frac{1}{N}\sum_{i=1}^{N}\psi_{2,\epsilon}(Y^{i}_{\epsilon,0})}\\
    &+\norm{h}_{L^2(\pi_{\epsilon})} e^{-\Lambda t/2}\max_{x\in K_{\epsilon}}\sqrt{\frac{p_{\epsilon,T}(x,x)}{\pi(x)}-1}. \label{eq:I2}
\end{align}

\step
We conclude from the above steps that with probability larger than or equal to \eqref{e:LProb}, we have
\begin{multline}
  \abs[\Big]{ \frac{1}{N}\sum_{i=1}^{N}h(Y^{i}_{\epsilon, T})-\int h\pi_{\epsilon}\,d x} 
        \overset{\eqref{e:errhstep1}}{\leq}  \abs{I_1}+\abs{I_2}\\
\overset{\eqref{eq:I1},\eqref{eq:I2}}{\leq}e^{-\lambda_{2, \epsilon} T} \abs[\Big]{\int_{\mathbb T^d}h\psi_{2,\epsilon}\pi_{\epsilon} \, dx} \cdot\abs[\Big]{\frac{1}{N}\sum_{i=1}^{N}\psi_{2,\epsilon}(Y^{i}_{\epsilon,0})}+a+ \mathcal E_{\epsilon,T}(h)\\
=e^{-\lambda_{2, \epsilon} T} \abs[\Big]{\int_{\mathbb T^d}h\psi_{2,\epsilon}\pi_{\epsilon} \, dx} \cdot\abs[\Big]{\frac{1}{N}\sum_{i=1}^{N}\psi_{2,\epsilon}\one_{K}(Y^{i}_{\epsilon,0})}+a+ \mathcal E_{\epsilon,T}(h)
\end{multline}
where the last equality follows from the assumption that~$Y_{\epsilon,0}^{k}\in K$ for each~$i\in\set{1,\dots,N}$.
\end{proof}

\subsection{Estimate of transition kernel (Lemma~\ref{cor:poverpi})}\label{s:pBound}

In this section, we prove Lemma~\ref{cor:poverpi}.  
Before the proof, we need  apriori estimate on pointwise bound of~$p_{\epsilon,t}$, which is stated in the following lemma. The proof employs Nash’s argument~\cite{nashPara}, based on the Chapman–Kolmogorov equation and Fourier transform and combined with an absorbing argument. The original result in~\cite{nashPara}  considers general operators in the divergence form and the dependence on the coefficients is not explicit, so we need to do extra work to obtain the estimate for~$p_{\epsilon}$ with a precise~$\epsilon$ dependence. 
\begin{lemma}\label{lem:EstGreenUB}
Assume that~$U$ satisfies Assumption \ref{a:criticalpts}. Let~$p_t(x,y)$ be the transition kernel, then for every~$x\in\mathbb{R}^d$ and~$t>0$,
\begin{equation}\label{eq:ptxxEst}
     p_{\epsilon,t}(x,x) \leq
    \paren[\bigg]{\frac{2\epsilon C_d}{c}\paren[\Big]{1-\exp\paren[\Big]{-\frac{ct}{d}}}}^{-\frac{d}{2}}.
\end{equation}
Here $C_d$ is a dimensional constant.
\end{lemma}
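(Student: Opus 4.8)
The plan is to follow Nash's argument for on-diagonal upper bounds of parabolic kernels, but to track the $\epsilon$-dependence explicitly and to exploit the hypothesis $|\Delta U|\le c$ rather than the full structure of the divergence-form operator. The key observation is that the transition density solves the Fokker--Planck equation $\partial_t p_\epsilon + L_\epsilon^* p_\epsilon = 0$ with $L_\epsilon^* f = -\dv(\grad U\, f) - \epsilon\lap f$, and that by the Chapman--Kolmogorov and detailed balance identities (as used in the proof of Lemma~\ref{lem:TDquick}) one has $p_{\epsilon,2s}(x,x) = \int p_{\epsilon,s}(x,z)^2/\pi_\epsilon(z)\, \pi_\epsilon(z)^{-1}\cdots$ — more precisely $p_{\epsilon,2s}(x,x)\pi_\epsilon(x)^{-1} = \|p_{\epsilon,s}(x,\cdot)/\pi_\epsilon(\cdot)\|_{L^2(\pi_\epsilon)}^2$. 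So it suffices to run a Nash-type $L^1\to L^2$ decay estimate on $u(t,\cdot)\defeq p_{\epsilon,t}(x,\cdot)$, which is a nonnegative solution of the Fokker--Planck equation with unit mass.

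First I would set $\phi(t)\defeq \int u(t,y)^2\, dy$ and compute $\phi'(t) = 2\int u\,\partial_t u = -2\int u\, L_\epsilon^* u = -2\epsilon\int |\grad u|^2 + 2\int u\, \dv(\grad U\, u)$. The second term equals $-2\int \grad U\cdot\grad u \, u = -\int \grad U\cdot \grad(u^2) = \int (\Delta U)\, u^2$, so using $|\Delta U|\le c$ we get the differential inequality $\phi'(t) \le -2\epsilon\int|\grad u|^2 + c\,\phi(t)$. Next I would apply the Nash inequality $\|u\|_{L^2}^{2+4/d} \le C_d \|\grad u\|_{L^2}^2 \|u\|_{L^1}^{4/d}$ together with mass conservation $\|u(t,\cdot)\|_{L^1} = 1$ to obtain $\int|\grad u|^2 \ge C_d^{-1}\phi(t)^{1+2/d}$, yielding $\phi' \le -\frac{2\epsilon}{C_d}\phi^{1+2/d} + c\,\phi$.

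To absorb the lower-order $+c\phi$ term — this is the ``absorbing argument'' — I would substitute $\psi(t)\defeq e^{-ct}\phi(t)$, so that $\psi' = e^{-ct}(\phi' - c\phi) \le -\frac{2\epsilon}{C_d} e^{-ct}\phi^{1+2/d} = -\frac{2\epsilon}{C_d} e^{ct\cdot 2/d}\,\psi^{1+2/d} \le -\frac{2\epsilon}{C_d}\,\psi^{1+2/d}$ for $t\ge 0$. Integrating this Bernoulli-type ODE inequality from a regularization time $\tau\downarrow 0$ (where $\psi\to+\infty$, since $u$ starts from a Dirac mass) gives $\psi(t)^{-2/d} \ge \frac{4\epsilon}{d\,C_d}\, t$, i.e. $\phi(t) = e^{ct}\psi(t) \le e^{ct}\big(\frac{4\epsilon t}{d C_d}\big)^{-d/2}$. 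Rewriting $e^{ct} = (1 - (1-e^{-ct}))^{-1}$ and using the elementary bound $1 - e^{-ct}\le ct$ hence $t \ge \frac{1}{c}(1-e^{-ct})$, one converts $t$ into the factor $1-\exp(-ct/d)$ claimed in~\eqref{eq:ptxxEst}; alternatively one runs the substitution $\psi(t)=e^{-(c/d)t}\phi(t)$ from the start so that the exponent matches. Finally, since $\phi(t) = \int p_{\epsilon,t}(x,y)^2\,dy = \int p_{\epsilon,t/2}(x,z)p_{\epsilon,t/2}(z,x)\,dz = p_{\epsilon,t}(x,x)$ by Chapman--Kolmogorov and the symmetry $p_{\epsilon,t/2}(x,z)=p_{\epsilon,t/2}(z,x)$ (detailed balance with respect to Lebesgue-equivalent $\pi_\epsilon$, via~\eqref{eq:DB}), we conclude the on-diagonal bound, with $C_d$ the dimensional constant from the Nash inequality.

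The main obstacle is the rigorous treatment near $t=0$: the solution starts from a Dirac mass, so $\phi(t)\to\infty$ as $t\to 0$ and one cannot literally integrate the ODE from $0$. The standard fix is to start at a small time $\tau>0$ where $u(\tau,\cdot)\in L^2$, integrate the Bernoulli inequality on $[\tau,t]$, and then let $\tau\to 0$ — the limit is clean precisely because the surviving inequality $\psi(t)^{-2/d}\ge c_d\,\epsilon\,(t-\tau)$ becomes $\psi(t)^{-2/d}\ge c_d\,\epsilon\,t$ independently of the (divergent) value $\psi(\tau)$. A secondary technical point is justifying the integration by parts and differentiation under the integral sign, which is routine given the Gaussian-type decay of $p_{\epsilon,t}(x,\cdot)$ and the growth bound $\liminf_{|x|\to\infty}|\grad U|\ge C_U$ from Assumption~\ref{a:criticalpts}; I would either cite parabolic regularity or work with the mollified/truncated problem and pass to the limit.
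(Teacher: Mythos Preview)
Your overall strategy --- control $\phi(t)=\int p_{\epsilon,t}(x,y)^2\,dy$ via Nash's inequality and a Bernoulli-type differential inequality --- is exactly the paper's. But there are two genuine gaps.

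First, your final identity $\int p_{\epsilon,t}(x,y)^2\,dy = p_{\epsilon,t}(x,x)$ rests on the claim $p_{\epsilon,s}(x,z)=p_{\epsilon,s}(z,x)$, which is false: detailed balance~\eqref{eq:DB} says $\pi_\epsilon(x)p_{\epsilon,s}(x,z)=\pi_\epsilon(z)p_{\epsilon,s}(z,x)$, not Lebesgue symmetry. The paper handles this via Cauchy--Schwarz,
\begin{equation*}
\bigl(p_t(x,x)\bigr)^2\le\int p_{t/2}(x,z)^2\,dz\cdot\int p_{t/2}(z,x)^2\,dz,
\end{equation*}
and then runs the \emph{same} $L^2$-decay argument a second time on the backward density $z\mapsto p_{t/2}(z,x)$, which solves the Kolmogorov backward equation; that computation uses the other half of $|\Delta U|\le c$ (namely $\Delta U\le c$) and yields the identical bound.

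Second, your absorbing substitution $\psi=e^{-ct}\phi$ gives $\psi'\le -\tfrac{2\epsilon}{C_d}e^{2ct/d}\psi^{1+2/d}$, and you then \emph{discard} the factor $e^{2ct/d}\ge1$ before integrating. This yields $\phi(t)\le e^{ct}\bigl(\tfrac{4\epsilon t}{dC_d}\bigr)^{-d/2}$, which diverges as $t\to\infty$; the ``conversion'' you sketch cannot remove the stray $e^{ct}$, so~\eqref{eq:ptxxEst} does not follow. The paper instead sets $F=\phi^{-2/d}$ directly, turning the Bernoulli inequality into the \emph{linear} one $F'+\tfrac{2c}{d}F\ge\tfrac{4\epsilon C_d}{d}$ (their Nash constant convention is $\int|\nabla u|^2\ge C_d\,\phi^{1+2/d}$), whose explicit solution $F(t)\ge\tfrac{2\epsilon C_d}{c}(1-e^{-2ct/d})$ gives the claimed form with no leftover exponential. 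Your route is repaired by retaining the factor $e^{2ct/d}$ when you integrate the Bernoulli inequality.
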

\begin{proof}[Proof of Lemma~\ref{lem:EstGreenUB}]
By Chapman-Kolmogorov equation
\begin{equation}
   p_t(x,x) = \int p_{\frac{t}{2}}(x,z)  p_{\frac{t}{2}}(z,x)\,  d z
\end{equation}
and thus by Cauchy-Schwarz inequality,
\begin{align}
    (p_t(x,x))^2 &\leq \int \paren[\Big]{p_{\frac{t}{2}}(x,z)}^2\, d z \cdot \int \paren[\Big]{p_{\frac{t}{2}}(z,x)}^2\, d z. 
\end{align}
The proof of Lemma consists of two steps, estimating the~$L^2$ norm of the forward density and the backward density, respectively.
\restartsteps
\step[Forward density]
    By Kolmogorov forward equation, for fixed~$x$,  the forward density~$p(\cdot)=p_t(x,\cdot)$ satisfies the following equation.
    \begin{equation}\label{eq:FP}
        \partial_t p=-\mathrm{div}\paren{(\nabla U) p}+\epsilon \Delta p.
    \end{equation}

For a fixed~$x$, let~$T=T(y,t)=p_t(x,y)$ and 
\begin{equation}\label{eq:defE}
    E\defeq \int \paren[\big]{T(y,t)}^2\, d y.
\end{equation}
By a direct computation,
\begin{align}
    \partial_t E \quad&\overset{\mathclap{\eqref{eq:defE}}}{=}\quad 2\int T\paren{\partial_t T}\, d y
\overset{\eqref{eq:FP}}{=} 2\int T\paren{\mathrm{div}\paren{(\nabla U) T}+\epsilon \Delta T}\, d y\\
&=2\int \paren{\nabla T\cdot\nabla U} T\,dy-2\epsilon\int \abs{\nabla T}^2\, d y\\
&=\int \nabla (T^2)\cdot \nabla U\, d y -2\epsilon\int \abs{\nabla T}^2\, d y\\
&=-\int (\Delta U)T^2\, d y -2\epsilon\int \abs{\nabla T}^2\, d y. \label{eq:Et}
\end{align}
For simplicity of notation, we further let
\begin{equation}
    u(y)\defeq T(y,t).
\end{equation}
The Fourier transform of~$u(y)$ is 
\begin{equation}
    v(z)=(2\pi)^{-\frac{d}{2}}\int e^{i y\cdot z}u(y)\, dy.
\end{equation}
This has the familiar property
\begin{equation}\label{eq:F1}
    \int |v|^2 \, d z =\int |u|^2\, d y.
\end{equation}
The Fourier transform of~$\partial u/\partial y_k$ is~$iz_ku$, hence
\begin{equation}
\int \abs{\partial u/\partial y_k}^2\, dy=\int |z_k|^2 |v|^2\, d z
\end{equation}
and
\begin{equation}\label{eq:F2}
    \int |\nabla u|^2\, d y=\int |z|^2|v|^2\, d z.
\end{equation}
Finally,
\begin{equation}\label{eq:F3}
    |v(z)|\leq (2\pi)^{-\frac{d}{2}}\int |e^{iy\cdot z}|\cdot |u|\, d y =(2\pi)^{-\frac{d}{2}}\int |u|\, d y.
\end{equation}
For any~$\rho>0$, using the formula for the volume of a~$d$-sphere,
\begin{equation}
    \int_{|z|\leq \rho}|v|^2\, d z\overset{\eqref{eq:F3}}{\leq}\paren[\Big]{\frac{\pi^{\frac{d}{2}}\rho^d}{(d/2)!}}\paren[\big]{(2\pi)^{-\frac{d}{2}}\int |u|\, d y}^2= a \rho^d
\end{equation}
where
\begin{equation}\label{eq:defa}
    a\defeq \frac{\pi^{\frac{d}{2}}}{(d/2)!}\paren[\big]{(2\pi)^{-\frac{d}{2}}\int |u|\, d y}^2.
\end{equation}
On the other hand,
\begin{equation}
    \int_{|z|>\rho}|v|^2\, d z\leq \int_{|z|< \rho}\abs[\Big]{\frac{z}{\rho}}^2|v|^2\, d z \overset{\eqref{eq:F2}}{\leq}\frac{1}{\rho^2}\int |\nabla u|^2\, d y= b \rho^{-2}
\end{equation}
where
\begin{equation}\label{eq:defb}
    b\defeq \int |\nabla u|^2\, d y.
\end{equation}
Choosing 
\begin{equation}\label{eq:ChooseRho}
    \rho = \paren[\Big]{\frac{2b}{ad}}^{\frac{1}{d+2}},
\end{equation}
we obtain that
\begin{align}
   \MoveEqLeft \int |u|^2\, d y \overset{\mathclap{\eqref{eq:F1}}}{=} \int |v|^2\, d z = \int_{|z|\leq \rho}|v|^2\, d z+ \int_{|z|> \rho}|v|^2\, d z\\
    &\overset{\mathclap{\eqref{eq:ChooseRho},\eqref{eq:defa},\eqref{eq:defb}}}{\leq}\qquad \quad\paren[\Big]{\frac{\pi^{\frac{d}{2}}}{(d/2)!}}^{\frac{2}{d+2}}\paren[\Big]{(2\pi)^{-\frac{d}{2}}\int |u|\, d y}^{\frac{4}{d+2}} \paren[\Big]{\int |\nabla u|^2\, d y}^{\frac{d}{d+2}}\paren{1+\frac{d}{2}}\paren[\Big]{\frac{2}{d}}^{\frac{d}{d+2}}.
\end{align}
Hence, using the identity that~$\int |u|\, dy\equiv 1$, we obtain that
\begin{align}
    \int |\nabla u|^2\, d y&\geq \paren[\Big]{\frac{4\pi d}{d+2}} \paren[\Big]{\frac{(d/2)!}{1+\frac{d}{2}}}^{\frac{2}{d}}\paren[\Big]{\int |u|\, dy}^{-\frac{4}{d}}\paren[\Big]{\int |u|^2\, dy}^{1+\frac{2}{d}}\\
    &=C_d\paren[\Big]{\int |u|^2\, dy}^{1+\frac{2}{d}} \label{eq:nablausq}
\end{align}
where
\begin{equation}
    C_d\defeq \paren[\Big]{\frac{4\pi d}{d+2}} \paren[\Big]{\frac{(d/2)!}{1+\frac{d}{2}}}^{\frac{2}{d}}.
\end{equation}

According to the assumption \eqref{eq:Blap} that~$\Delta U\geq -c$, we have that
\begin{equation}
    \partial_t E \overset{\eqref{eq:Et},\eqref{eq:nablausq}}{\leq} c E -2\epsilon C_dE^{1+\frac{2}{d}}.
\end{equation}
Multiply both sides by~$E^{-1-\frac{2}{d}}$, we obtain
\begin{equation}
    E^{-1-\frac{2}{d}}\partial_t E \leq c E^{-\frac{2}{d}} -2\epsilon C_d.
\end{equation}
Let~$F=E^{-\frac{2}{d}}$, a straightforward calculation yields that
\begin{equation}
    \partial_t F +\frac{2c}{d}F\geq \frac{4\epsilon C_d}{d}.
\end{equation}
Solving the above gives that
\begin{equation}
    F\geq \frac{2\epsilon C_d}{c}\paren[\Big]{1-\exp(-\frac{2c}{d}t)}
\end{equation}
which implies that
\begin{equation}\label{eq:E1}
    E\leq \paren[\bigg]{\frac{2\epsilon C_d}{c}\paren[\Big]{1-\exp(-\frac{2c}{d}t)}}^{-\frac{d}{2}}.
\end{equation}

\step[Backward density]
By Kolmogorov backward equation, for fixed~$y$, the density~$p(\cdot)=p(\cdot,0;y,t)$,~$t>0$ satisfies the following equation
\begin{equation}
    \partial_t p =-\nabla U\cdot \nabla p+\epsilon\Delta p.
\end{equation}

For a fixed~$y$, let~$T=T(x,t)=p(x,0; y, t)$ and 
\begin{equation}\label{eq:defE'}
    E\defeq \int \paren[\big]{T(x,t)}^2\, d x.
\end{equation}
By a direct computation,
\begin{align}
    \partial_t E \quad&\overset{\mathclap{\eqref{eq:defE}}}{=}\quad 2\int T\paren{\partial_t T}\, d y
\overset{\eqref{eq:FP}}{=} 2\int T\paren{-\nabla U \cdot\nabla T+\epsilon \Delta T}\, d y\\
&=-\int \nabla (T^2)\cdot \nabla U\, d y -2\epsilon\int \abs{\nabla T}^2\, d y\\
&=\int (\Delta U)T^2\, d y -2\epsilon\int \abs{\nabla T}^2\, d y. \label{eq:Et'}
\end{align}
For simplicity of notation, we further let
\begin{equation}
    u(x)\defeq T(x,t).
\end{equation}
Following a similar argument to the above step, according to the assumption \eqref{eq:Blap} that~$\Delta U\leq c$, we have that
\begin{equation}
    \partial_t E \overset{\eqref{eq:Et},\eqref{eq:nablausq}}{\leq} c E -2\epsilon C_dE^{1+\frac{2}{d}}.
\end{equation}
which by a similar argument as above gives that
\begin{equation}\label{eq:E2}
    E\leq \paren[\bigg]{\frac{2\epsilon C_d}{c}\paren[\Big]{1-\exp(-\frac{2c}{d}t)}}^{-\frac{d}{2}}.
\end{equation}

\step 
By Chapman-Kolmogorov equation
\begin{equation}
   p_t(x,x) = \int p_{\frac{t}{2}}(x,z)  p_{\frac{t}{2}}(z,x)\,  d z
\end{equation}
and thus by Cauchy-Schwarz inequality,
\begin{align}
    (p_t(x,x))^2 &\leq \int \paren[\Big]{p_{\frac{t}{2}}(x,z)}^2\, d z \cdot \int \paren[\Big]{p_{\frac{t}{2}}(z,x)}^2\, d z \\
    &\overset{\mathclap{\eqref{eq:E1},\eqref{eq:E2}}}{\leq}~\quad\paren[\bigg]{\frac{2\epsilon C_d}{c}\paren[\Big]{1-\exp(-\frac{c}{d}t)}}^{-d}.
\end{align}
Taking square root on both sides finishes the proof of \eqref{eq:ptxxEst}.
\end{proof}

With Lemma~\ref{lem:EstGreenUB} in hand, the proof of Lemma~\ref{cor:poverpi} becomes straightforward, as shown below.
\begin{proof}[Proof of Lemma~\ref{cor:poverpi}]
We know that
\begin{equation}
    \pi_{\epsilon}(x)=\frac{1}{Z_{\epsilon}}\exp\paren[\Big]{-\frac{U(x)}{\epsilon}},
\end{equation}
which implies that
\begin{equation}\label{eq:ppi}
        \frac{p_{\epsilon,t}(x,x)}{\pi_{\epsilon}(x)}\leq Z_{\epsilon}\exp\paren[\Big]{\frac{U(x)}{\epsilon}}\paren[\bigg]{\frac{2\epsilon C_d}{c}\paren[\Big]{1-\exp(-\frac{c}{d}t)}}^{-\frac{d}{2}}.
    \end{equation}
    Now we upper bound~$Z_{\epsilon}$. Notice that by Assumption~\ref{a:massRatioBound}
    \begin{equation}
        Z_{\epsilon}=\paren{\pi_{\epsilon}(\Omega_1)+\pi_{\epsilon}(\Omega_2)}Z_{\epsilon}\leq (1+C_m^2)\pi_{\epsilon}(\Omega_1)Z_{\epsilon}.
    \end{equation}
By \cite[Equation (2.16)]{MenzSchlichting14},
\begin{equation}\label{eq:Zepsilon}
    \pi_{\epsilon}(\Omega_i)Z_{\epsilon}\leq C \frac{(2\pi \epsilon)^{\frac{d}{2}}}{\sqrt{\det\nabla^2 U(x_{\min,i})}}\exp\paren[\Big]{-\frac{U(x_{\min,i})}{\epsilon}}.
\end{equation}
Here~$C$ is a dimensional constant,~$C\to 1$ as~$\epsilon\to 0$. When~$\epsilon <1$, we can take~$C$ as a dimensional constant. Taking~$i=1$ in \eqref{eq:Zepsilon} gives that
\begin{equation}\label{eq:Zepsilon1}
    \pi_{\epsilon}(\Omega_1)Z_{\epsilon}\leq C \frac{(2\pi \epsilon)^{\frac{d}{2}}}{\sqrt{\det\nabla^2 U(x_{\min,1})}}
\end{equation}
where we use the Assumption~\ref{a:criticalpts}.
Plugging \eqref{eq:Zepsilon1} into \eqref{eq:ppi} yields that for all~$\epsilon\in (0,1)$, there exists a dimensional constant~$C_p\equiv C_p(d,U)$ such that
\eqref{eq:poverpi} holds. Then \eqref{eq:poverpiK} follows by noticing that for $x\in K$,
\begin{equation}
U(x)\overset{\eqref{eq:defB},\eqref{eq:defK}}{\leq} U(x_{\min,2})+\hat{\gamma}= U(x_{\min,2})+U(s_{1,2})-U(x_{\min,2})=\hat{U}.\qedhere
\end{equation}
\end{proof}

\subsection{The Rebalancing Error (Lemma~\ref{l:rebalancing})} \label{s:resample}

In this section, we prove Lemma Lemma~\ref{l:rebalancing}, which is very similar to \cite[Lemma 3.3]{han2025polynomialcomplexitysamplingmultimodal} and the proof technique is standard and widely used in the context such as sequential Monte Carlo and importance sampling, see e.g., \cite[Chapter 11]{Chopin04} and \cite[Chapter 3]{sanz2024book}. 

Notice that the points~$y^{1}$, \dots,~$y_N$ chosen according to~\eqref{e:PyiEqXj} are identically distributed, but need not be independent.
However, given the points~$x^{1}$, \dots,~$x_N$, the points~$y^{1}$, \dots,~$y_N$ are (conditionally) independent.
The main idea behind the proof of Lemma~\ref{l:rebalancing} is to use conditional independence of~$y^{1}$, \dots,~$y_N$ and Hoeffding's inequality.

\begin{proof}[Proof of Lemma~\ref{l:rebalancing}]
    For simplicity of notation, let
  \begin{equation}\label{eq: defTildeh}
     \mathbf{x}\defeq \{x^1,...,x^{N}\},\quad\text{and}\quad \tilde{h}\defeq h-\int_{\mathcal{X}} hq\, dx
\end{equation}
  and let~$\E_{\mathbf x}$ denote the conditional expectation given the~$\sigma$-algebra generated by~$\mathbf x$.
    By the tower property,
    \begin{align}
  \frac{1}{N} \sum_1^N h(y^{i}) - \int_{\mathcal{X}} hq\, dx 
    =J_1+J_2
    \end{align}
where
\begin{equation}\label{e:J1J2}
        J_1\defeq \frac{1}{N}\sum_{\ell=1}^{N}\E_{\mathbf{x}}\tilde{h}(y^{i})\\
	\quad\text{and}\quad
        J_2\defeq \frac{1}{N}\sum_{\ell=1}^N
      \tilde{h}(y^{i})-\frac{1}{N}\sum_{\ell=1}^{N}\E_{\mathbf{x}}\tilde{h}(y^{i})
      .
\end{equation}
Thus
\begin{equation}\label{e:Ehy1}
     \abs[\Big]{\frac{1}{N} \sum_1^N h(y^{i}) - \int_{\mathcal{X}} hq\, dx}\leq |J_1|+|J_2|.
\end{equation}
Notice that the points~$y^1$, \dots,~$y^N$ are not independent; however, when conditioned on~$\mathbf x$, the points~$y^i$ are independent and identically distributed. Thus by Hoeffding's inequality
  \begin{equation}\label{e:J2}
 \P_{\mathbf{x}}(|J_2|>a)\leq 2\exp\paren{-\frac{2Na^2}{\norm{h}^2_{\osc}}}  .
\end{equation}

To bound~$J_1$, we note
  \begin{align}
    \frac{1}{N}\sum_{\ell=1}^{N}\E_{\mathbf{x}}\tilde{h}(y^{i}) &=\E_{\mathbf{x}}\tilde{h}(y_{1})
	  =\frac{ \sum_{i=1}^N
	 \tilde{h}(x^{i}) \tilde r(x^{i})}{\sum_{i=1}^N \tilde r(x^{i}) }
	 \overset{\eqref{e:rdef}}{=}
	 \frac{ \sum_{i=1}^N
	 \tilde{h}(x^{i}) {r}(x^{i})}{\sum_{i=1}^N {r}(x^{i}) }
    \\
      \label{e:J1sqrt}
      &= J_3 + J_4,
\end{align}
where
\begin{equation}
        J_3=\frac{ \sum_{i=1}^N
	 \tilde{h}(x^{i}) r(x^{i})}{\sum_{i=1}^N r(x^{i}) }\paren[\Big]{1-\frac{1}{N}\sum_{i=1}^N r(x^{i}) },
	 \quad\text{and}\quad
   J_4= \frac{1}{N}\sum_{i=1}^N
	  \tilde{h}(x^{i}) r(x^{i}).
\end{equation}
Clearly 
\begin{equation}\label{e:j3}
  \abs{J_3} \leq \norm{\tilde h}_{L^\infty} \paren[\Big]{1-\frac{1}{N}\sum_{i=1}^N r(x^{i}) }
  .
\end{equation}

Thus with probability larger than or equal to \eqref{e:probResam}
\begin{equation}
  \abs[\Big]{\frac{1}{N} \sum_1^N h(y^{i}) - \int_{\mathcal{X}} hq\, dx}
    \overset{\eqref{e:Ehy1}}\leq 
      |J_1|+|J_2|
  \overset{\eqref{e:J1sqrt},~\eqref{e:J2}}{\leq}
     |J_3|+|J_4|+a.
\end{equation}
Using the definition of~$\tilde h$ in~\eqref{eq: defTildeh} we obtain~\eqref{e:resampling} as desired.
\end{proof}

\section{Apriori estimates}
In this section, we prove some preliminary estimates used in the proof of Lemma~\ref{l:iteration}. In particular, we estimate the  probability that particles fall in set~$K$ in Section~\ref{sec:K} and prove Lemma~\ref{lem:boundPsi} in Section~\ref{sec:boundPsi}.  

\subsection{Probability that particles fall in set \texorpdfstring{$K$}{K}} \label{sec:K}
In this section, we show that the probability that particles escape from the set~$K$ is exponentially small in~$\epsilon$. Thus when the temperature is low, we should expect fewer particles escape from~$K$, which implies that the condition of Lemma~\ref{l:langevinError} holds with high probability.

We begin with a straightforward calculation to estimate the lower bound of~$\pi_{\epsilon}(K)$.

\begin{lemma}\label{lem:piKbound} Fix~$\alpha>0$ and let~$K$ be defined as in \eqref{eq:defK}. 
There exists constant~$C_P\equiv C_P(U,\alpha)$ such that
    \begin{equation}\label{eq:piKbound}
        \pi_{\epsilon}(K)\geq 1-C_P\exp\paren[\Big]{-\frac{C_{K}}{\epsilon}},
    \end{equation}
    where~$C_K$ is the constant defined in \eqref{eq:defCK}
\end{lemma}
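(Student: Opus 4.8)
The plan is to estimate $\pi_\epsilon(K) = \int_K \tilde\pi_\epsilon \, dx / Z_\epsilon$ by bounding the numerator from below and the denominator from above in a way that the exponential rates match up to give the claimed gap. Recall $K = B_1 \cup B_2$ where $B_i = \{x \in \Omega_i : U(x) - U(x_{\min,i}) \leq \hat\gamma/(1+\alpha)^{1/4}\}$. Since the $\Omega_i$ partition $\mathbb{R}^d$ up to a null set, we have $\pi_\epsilon(K^c) = \sum_i \pi_\epsilon(\Omega_i \setminus B_i)$, so it suffices to show each $\pi_\epsilon(\Omega_i \setminus B_i) \lesssim \exp(-C_K/\epsilon)$ where $C_K = \hat\gamma/(1+\alpha)^{1/2}$.

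First I would write $\pi_\epsilon(\Omega_i \setminus B_i) = \frac{1}{Z_\epsilon}\int_{\Omega_i \setminus B_i} e^{-U(x)/\epsilon}\, dx$. On $\Omega_i \setminus B_i$ we have $U(x) \geq U(x_{\min,i}) + \hat\gamma/(1+\alpha)^{1/4}$, and moreover the growth condition $\liminf_{|x|\to\infty}|\nabla U| \geq C_U$ from Assumption~\ref{a:criticalpts} guarantees $U$ grows at least linearly at infinity, so the integral $\int_{\mathbb{R}^d} e^{-U(x)/\epsilon}\,dx$ converges and, splitting off a shell where $U$ is moderate versus the far tail, one gets a bound of the form $\int_{\Omega_i \setminus B_i} e^{-U/\epsilon}\,dx \leq C \exp\big(-(U(x_{\min,i}) + \hat\gamma/(1+\alpha)^{1/4})/\epsilon\big)$ for $\epsilon < 1$, where $C$ depends on $U$, $d$, $\alpha$ but not $\epsilon$ (one absorbs polynomial-in-$\epsilon$ factors from the Laplace-type tail estimate into the constant using that $(1+\alpha)^{1/4} < (1+\alpha)^{1/2}$ so there is room to spare in the exponent). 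For the denominator, I would use exactly the estimate already invoked in the proof of Lemma~\ref{cor:poverpi}: by Assumption~\ref{a:massRatioBound} and \cite[Equation (2.16)]{MenzSchlichting14}, $Z_\epsilon \geq \pi_\epsilon(\Omega_1) Z_\epsilon \gtrsim \epsilon^{d/2}$ (more precisely $Z_\epsilon \geq c' (2\pi\epsilon)^{d/2}/\sqrt{\det \nabla^2 U(x_{\min,1})}$ up to a dimensional constant, using $U(x_{\min,1}) = 0$), so $1/Z_\epsilon \lesssim \epsilon^{-d/2}$.

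Combining, $\pi_\epsilon(\Omega_i\setminus B_i) \lesssim \epsilon^{-d/2}\exp(-(U(x_{\min,i}) + \hat\gamma/(1+\alpha)^{1/4})/\epsilon)$. Since $U(x_{\min,1}) = 0$ and $U(x_{\min,2}) \geq 0$, in both cases the exponent is at most $-\hat\gamma/((1+\alpha)^{1/4}\epsilon)$; because $(1+\alpha)^{1/4} < (1+\alpha)^{1/2}$ strictly, we have $\hat\gamma/(1+\alpha)^{1/4} > \hat\gamma/(1+\alpha)^{1/2} = C_K$, and the difference in rates gives us a factor $\exp(-(\hat\gamma/(1+\alpha)^{1/4} - C_K)/\epsilon)$ that dominates the polynomial prefactor $\epsilon^{-d/2}$ uniformly for $\epsilon < 1$ (and for $\epsilon$ bounded away from $0$ the whole quantity is trivially bounded). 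Hence $\pi_\epsilon(\Omega_i \setminus B_i) \leq (C_P/2)\exp(-C_K/\epsilon)$ for a suitable constant $C_P = C_P(U,\alpha)$, and summing over $i \in \{1,2\}$ yields $\pi_\epsilon(K^c) \leq C_P \exp(-C_K/\epsilon)$, which is \eqref{eq:piKbound}.

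The main obstacle I anticipate is making the tail estimate $\int_{\Omega_i \setminus B_i} e^{-U/\epsilon}\,dx \leq C e^{-(U(x_{\min,i}) + \hat\gamma/(1+\alpha)^{1/4})/\epsilon}$ genuinely uniform in $\epsilon$: one has to handle the region near the boundary $\partial B_i$ (where $U$ is exactly at the threshold) together with the far-field region $|x| \to \infty$ (where only $\liminf |\nabla U| \geq C_U$ is available, not a clean quadratic or explicit form), and check that the polynomial-in-$\epsilon$ volume factors coming from a coarea / Laplace argument do not spoil the exponential rate. The deliberate slack between the exponent $1/4$ in the definition of $B_i$ and the exponent $1/2$ in $C_K$ is exactly what provides the buffer to absorb these factors, so the argument should go through, but the bookkeeping of the far-field contribution is where care is needed.
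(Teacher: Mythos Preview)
Your approach is correct but proceeds differently from the paper. The paper avoids Laplace asymptotics entirely: it introduces an auxiliary set $\tilde K \subset K$ defined by $\tilde K = \bigcup_i\{x \in B_i : U(x) \leq \inf_{\partial B_i} U - C_K\}$, then bounds the ratio $\pi_\epsilon(K^c)/\pi_\epsilon(K) \leq \pi_\epsilon(K^c)/\pi_\epsilon(\tilde K)$ via the factorization $e^{-U/\epsilon} = e^{(1-1/\epsilon)U}\, e^{-U}$ applied simultaneously to numerator and denominator. This yields $\pi_\epsilon(K^c)/\pi_\epsilon(K) \leq e^{(1-1/\epsilon)C_K}\,\pi_1(K^c)/\pi_1(\tilde K)$, from which the claim follows (via $\pi_\epsilon(K) \geq 1 - \pi_\epsilon(K^c)/\pi_\epsilon(K)$) with the explicit constant $C_P = e^{C_K}\pi_1(K^c)/\pi_1(\tilde K)$.

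You instead estimate the numerator $\int_{\Omega_i\setminus B_i} e^{-U/\epsilon}\,dx$ and the denominator $Z_\epsilon$ separately, invoking a Laplace lower bound $Z_\epsilon \gtrsim \epsilon^{d/2}$ for the latter. This works, and in fact the obstacle you flag (uniform-in-$\epsilon$ control of the tail integral) dissolves with the same factorization trick: on $\Omega_i\setminus B_i$ one has $U \geq m_i := U(x_{\min,i}) + \hat\gamma/(1+\alpha)^{1/4}$, so for $\epsilon \leq 1$, $e^{-U/\epsilon} \leq e^{-m_i/\epsilon}\, e^{-(U-m_i)}$ and hence $\int_{\Omega_i\setminus B_i} e^{-U/\epsilon}\,dx \leq e^{-m_i/\epsilon}\cdot e^{m_i}\int_{\Omega_i\setminus B_i} e^{-U}\,dx$, with no polynomial prefactor needed there. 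The only $\epsilon^{-d/2}$ comes from your Laplace lower bound on $Z_\epsilon$, and the slack $m_i - C_K > 0$ you correctly identified absorbs it. One minor point: the Menz--Schlichting estimate cited in the paper is for the \emph{upper} bound on $\pi_\epsilon(\Omega_i)Z_\epsilon$; you need the (equally standard) matching Laplace lower bound.

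The paper's route is slightly more elementary (no Laplace asymptotics, and an explicit $C_P$); yours uses one extra standard ingredient but is more direct and robust to unequal well depths.
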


\begin{proof}[Proof of Lemma~\ref{lem:piKbound}]
Observe that according to \eqref{eq:defB},
\begin{equation}
    C_K=\frac{1}{(1+\alpha)^{\frac14}}\min_{i=1,2}\inf_{y\in\partial B_i}\paren[\Big]{ U(y)-U(x_{\min,i})}.
\end{equation}
Define the subset~$\tilde{K}\subseteq K$ as 
\begin{equation}\label{eq:defTildeK}
    \tilde{K}\defeq \bigcup_{i=1,2}\set[\Big]{x\in B_i \st U(x)\leq \inf_{y\in\partial B_i}U(y)-C_K}.
\end{equation}
 Assumption~\ref{a:criticalpts} guarantees that~$\tilde{K}$ is a nonempty compact set.
For every~$\epsilon<1$,
\begin{align}
    \frac{\pi_{\epsilon}(K^c)}{\pi_{\epsilon}(K)}&=\frac{\int_{K^c}\exp(-U/\epsilon)\, d x}{\int_{K}\exp(-U/\epsilon)\, d x}\leq \frac{\int_{K^c}\exp(-U/\epsilon)\, d x}{\int_{\tilde{K}}\exp(-U/\epsilon)\, d x}\\
    &\leq \frac{\exp\paren[\Big]{(1-\frac{1}{\epsilon})\inf_{y\in K^c}U(y)}\int_{K^c}\exp(-U)\, d x}{\exp\paren[\Big]{(1-\frac{1}{\epsilon})\sup_{y\in \tilde{K}}U(y)}\int_{\tilde{K}}\exp(-U)\, d x}\\
    &\overset{\mathclap{\eqref{eq:defTildeK}}}{\leq}~ \exp\paren[\Big]{\paren[\big]{1-\frac{1}{\epsilon}}C_K}\frac{\pi_1(K^c)}{\pi_1(\tilde{K})}.
    \label{eq:ratioPiK}
\end{align}

    Notice that
    \begin{equation}
        \pi_{\epsilon}(K)=\frac{\pi_{\epsilon}(K)}{\pi_{\epsilon}(K)+\pi_{\epsilon}(K^c)}=\frac{1}{1+\frac{\pi_{\epsilon}(K^c)}{\pi_{\epsilon}(K)}}\geq 1-\frac{\pi_{\epsilon}(K^c)}{\pi_{\epsilon}(K)}.
    \end{equation}
    Plugging in \eqref{eq:ratioPiK} finishes the proof with
    \begin{equation}
        C_P\defeq \exp(C_K)\frac{\pi_1(K^c)}{\pi_1(\tilde{K})}. \qedhere
    \end{equation}
\end{proof}

Then we estimate the probability that ~$Y_{\epsilon,T}$  fall in set~$K$ when start from~$x$, which we denote as~$\P(Y_{\epsilon,T}\in K| Y_{\epsilon,0}=x)$. Intuitively, if the mixing is sufficient, then we should expect this probability to be large. One may worry that when temperature is low, the global mixing is extremely slow. If the running time is relatively short,  the probability~$\P(Y_{\epsilon,T}\in K| Y_{\epsilon,0}=x)$ might differ from~$\pi_{\epsilon}(K)$ a lot. However, when~$\epsilon$ is small, particles travel into~$K$ quickly and are likely to stay inside, as the following lemma shows. 

\begin{lemma}\label{lem:lbxinKsingle}
For every~$\epsilon\in (0,1)$ and~$x\in K$, we have that
    \begin{equation}
        \P(Y_{\epsilon,T}\in K| Y_{\epsilon,0}=x)\geq 1-C_P(C_{\psi}+1)\exp\paren[\Big]{-\frac{C_{K}}{\epsilon}}-e^{-\Lambda T/2}\sqrt{\frac{p_{\epsilon,T}(x,x)}{\pi_{\epsilon}(x)}-1}.
    \end{equation}
\end{lemma}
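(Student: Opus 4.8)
The plan is to express the probability $\P(Y_{\epsilon,T}\in K \mid Y_{\epsilon,0}=x) = \int_K p_{\epsilon,T}(x,y)\,dy$ and compare it with its stationary value $\pi_\epsilon(K)$, using the $L^2$ closeness of the transition density to its spectral approximation provided by Lemma~\ref{lem:TDquick}. First I would write
\begin{align}
  \P(Y_{\epsilon,T}\in K \mid Y_{\epsilon,0}=x) - \pi_\epsilon(K)
  &= \int_K \paren[\Big]{\frac{p_{\epsilon,T}(x,y)}{\pi_\epsilon(y)} - 1}\pi_\epsilon(y)\,dy\\
  &= e^{-\lambda_{2,\epsilon}T}\psi_{2,\epsilon}(x)\int_K \psi_{2,\epsilon}(y)\pi_\epsilon(y)\,dy + R,
\end{align}
where $R$ is the remainder term, which by Cauchy--Schwarz in $L^2(\pi_\epsilon)$ and Lemma~\ref{lem:TDquick} satisfies
\begin{equation}
  \abs{R} \leq \norm{\one_K}_{L^2(\pi_\epsilon)}\cdot\norm[\Big]{\frac{p_{\epsilon,T}(x,\cdot)}{\pi_\epsilon(\cdot)} - 1 - e^{-\lambda_{2,\epsilon}T}\psi_{2,\epsilon}(x)\psi_{2,\epsilon}(\cdot)}_{L^2(\pi)}
  \leq e^{-\Lambda T/2}\sqrt{\frac{p_{\epsilon,T}(x,x)}{\pi_\epsilon(x)} - 1},
\end{equation}
since $\norm{\one_K}_{L^2(\pi_\epsilon)} = \sqrt{\pi_\epsilon(K)} \leq 1$. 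This already produces the last term in the claimed bound.

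Next I would control the spectral term $e^{-\lambda_{2,\epsilon}T}\psi_{2,\epsilon}(x)\int_K \psi_{2,\epsilon}\pi_\epsilon\,dy$. Since $x\in K$, Lemma~\ref{lem:boundPsi} gives $\abs{\psi_{2,\epsilon}(x)} = \abs{\psi_{2,\epsilon}\one_K(x)} \leq C_\psi$; likewise $\abs{\int_K \psi_{2,\epsilon}\pi_\epsilon\,dy} \leq \int_K \abs{\psi_{2,\epsilon}\one_K}\pi_\epsilon\,dy \leq C_\psi$. However, a naive bound of $C_\psi^2 e^{-\lambda_{2,\epsilon}T}$ is not small enough — it does not decay in $\epsilon$, whereas the statement demands a term proportional to $C_P(C_\psi+1)e^{-C_K/\epsilon}$. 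The key observation is that $\int_K\psi_{2,\epsilon}\pi_\epsilon\,dy$ is actually tiny: since $\psi_{2,\epsilon}$ is orthogonal to the constant $\psi_{1,\epsilon}\equiv 1$ in $L^2(\pi_\epsilon)$, we have $\int_{\R^d}\psi_{2,\epsilon}\pi_\epsilon\,dy = 0$, hence
\begin{equation}
  \abs[\Big]{\int_K \psi_{2,\epsilon}\pi_\epsilon\,dy} = \abs[\Big]{\int_{K^c}\psi_{2,\epsilon}\pi_\epsilon\,dy} \leq \norm{\psi_{2,\epsilon}}_{L^2(\pi_\epsilon)}\sqrt{\pi_\epsilon(K^c)} = \sqrt{\pi_\epsilon(K^c)},
\end{equation}
and by Lemma~\ref{lem:piKbound}, $\pi_\epsilon(K^c) \leq C_P e^{-C_K/\epsilon}$. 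Combining with $\abs{\psi_{2,\epsilon}(x)}\le C_\psi$ and $e^{-\lambda_{2,\epsilon}T}\le 1$ bounds the spectral term by $C_\psi\sqrt{C_P}\,e^{-C_K/(2\epsilon)}$.

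Finally I would assemble the pieces: $\P(Y_{\epsilon,T}\in K\mid Y_{\epsilon,0}=x) \geq \pi_\epsilon(K) - C_\psi\sqrt{C_P}e^{-C_K/(2\epsilon)} - e^{-\Lambda T/2}\sqrt{p_{\epsilon,T}(x,x)/\pi_\epsilon(x) - 1}$, then use Lemma~\ref{lem:piKbound} once more to replace $\pi_\epsilon(K)$ by $1 - C_P e^{-C_K/\epsilon}$, and absorb constants (using $e^{-C_K/(2\epsilon)}\ge e^{-C_K/\epsilon}$ with a suitable crude redefinition, or simply noting $\sqrt{C_P}e^{-C_K/(2\epsilon)} \le C_P e^{-C_K/(2\epsilon)}$ for $C_P\ge 1$ and adjusting the exponent constant to $C_K$ at the cost of enlarging $C_P$) to reach the form $1 - C_P(C_\psi+1)e^{-C_K/\epsilon} - e^{-\Lambda T/2}\sqrt{p_{\epsilon,T}(x,x)/\pi_\epsilon(x)-1}$. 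I expect the main obstacle to be precisely this bookkeeping with the exponent: the orthogonality argument naturally yields $e^{-C_K/(2\epsilon)}$ rather than $e^{-C_K/\epsilon}$, so one must either track a halved constant throughout the downstream lemmas or verify that the statement tolerates the weaker rate (which it should, since $C_K$ is only used to define $\eta_{\mathrm{cr}}$ up to constants); everything else is a routine application of spectral orthogonality, Cauchy--Schwarz, and the two cited lemmas.
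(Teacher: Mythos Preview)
Your approach is essentially the paper's: write $\P(Y_{\epsilon,T}\in K\mid Y_{\epsilon,0}=x)=\int \one_K(y)\, p_{\epsilon,T}(x,y)\,dy$, apply the spectral decomposition behind Lemma~\ref{lem:TDquick} with $h=\one_K$, bound the remainder by Cauchy--Schwarz, and control the $\psi_{2,\epsilon}$ contribution via the orthogonality $\int_K\psi_{2,\epsilon}\pi_\epsilon=-\int_{K^c}\psi_{2,\epsilon}\pi_\epsilon$ together with Lemmas~\ref{lem:piKbound} and~\ref{lem:boundPsi}.

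The only substantive difference is how $\bigl|\int_{K^c}\psi_{2,\epsilon}\pi_\epsilon\bigr|$ is estimated. The paper writes this as $\le \pi_\epsilon(K^c)\le C_P e^{-C_K/\epsilon}$, which delivers the exponent in the stated lemma directly; you use Cauchy--Schwarz to get $\le\sqrt{\pi_\epsilon(K^c)}\le\sqrt{C_P}\,e^{-C_K/(2\epsilon)}$. Your version is actually the more careful one --- the paper's inequality would require $|\psi_{2,\epsilon}|\le 1$ on $K^c$, which is nowhere established (the $L^\infty$ control of Lemma~\ref{lem:boundPsi} is only on $K$). Your diagnosis of the resulting ``halved exponent'' issue is correct, and so is your proposed resolution: since $C_K$ enters the downstream argument only through the definition of $\eta_{\mathrm{cr}}$ in~\eqref{e:CriEtaLT}, replacing $C_K$ by $C_K/2$ there merely shifts constants and leaves the polynomial complexity conclusion untouched.
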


\begin{proof}[Proof of Lemma~\ref{lem:lbxinKsingle}]
We know that
\begin{equation}
    \P(X_{\epsilon,T}\in K| X_{\epsilon,0}=x)=\E_{x}\one_{K}(X_{\epsilon,T})=\int \one_{K}(y)p_{\epsilon,T}(x,y)\, d y.
\end{equation}
   Plugging~$h=\one_{K}$ into \eqref{eq:f0perpProj} and \eqref{eq:I2s} gives that 
\begin{align*}
    \MoveEqLeft\abs[\Big]{\int \one_{K}(y)p_{\epsilon,T}(x,y)\, d y -\pi_{\epsilon}(K)-e^{-\lambda_{2,\epsilon}T}\int_{K}\psi_{2,\epsilon}\pi_{\epsilon}\, d y\psi_{2,\epsilon}(x)}\\
    &\leq \norm{\one_{K}}_{L^2(\pi_{\epsilon})}e^{-\Lambda T/2}\sqrt{\frac{p_{\epsilon,T}(x,x)}{\pi_{\epsilon}(x)}-1},
\end{align*}
which implies that
\begin{align}
    \MoveEqLeft\P(X_{\epsilon,T}\in K| X_{\epsilon,0}=x)=\int \one_{K}(y)p_{\epsilon,T}(x,y)\, d y \\
    &\geq \pi_{\epsilon}(K)-e^{-\lambda_{2,\epsilon}T}\abs[\Big]{\int_{K}\psi_{2,\epsilon}\pi_{\epsilon}\, d y}\abs{\psi_{2,\epsilon}(x)}\norm{\one_{K}}_{L^2(\pi_{\epsilon})}\\
    &-e^{-\Lambda T/2}\sqrt{\frac{p_{\epsilon,T}(x,x)}{\pi_{\epsilon}(x)}-1}. \label{eq:pKxy}
\end{align}
Now we bound the three terms on the right hand side of \eqref{eq:pKxy}. The first term is lower bounded by \eqref{eq:piKbound}. For the second term, notice that 
\begin{equation}\label{e:intKpsipi}
    \abs[\Big]{\int_{K}\psi_{2,\epsilon}\pi_{\epsilon}\, d y}=\abs[\Big]{\int_{K^c}\psi_{2,\epsilon}\pi_{\epsilon}\, d y}\leq \pi_{\epsilon}(K^c)\overset{\eqref{eq:piKbound}}{\leq}C_P\exp\paren[\Big]{-\frac{C_{K}}{\epsilon}}.
\end{equation}
where the first equality we use the identity~$\int\psi_{2,\epsilon}\pi_{\epsilon}\, d y\equiv 0$. Thus, the second term
\begin{align}
   \MoveEqLeft e^{-\lambda_{2,\epsilon}T}\abs[\Big]{\int_{K}\psi_{2,\epsilon}\pi_{\epsilon}\, d y}\abs{\psi_{2,\epsilon}(x)}\norm{\one_{K}}_{L^2(\pi_{\epsilon})}\leq \abs[\Big]{\int_{K}\psi_{2,\epsilon}\pi_{\epsilon}\, d y}\abs{\psi_{2,\epsilon}(x)} \\
   &\overset{\mathclap{\eqref{eq:defCpsi},\eqref{eq:piKbound}}}{\leq} C_PC_{\psi}\exp\paren[\Big]{-\frac{C_{K}}{\epsilon}}. \label{e:pKT2}
\end{align}
Plugging \eqref{eq:piKbound} and \eqref{e:pKT2} into \eqref{eq:pKxy} completes the proof.
\end{proof}

By the union bound, Lemma~\ref{lem:lbxinKsingle} immediately gives that
\begin{corollary}\label{cor:lbxinK}
We have that
\begin{align}
    \MoveEqLeft\P(Y_{\epsilon,T}^{i}\in K,\forall i \:|\: Y_{0,T}^{i}\in K,\forall i)\\
    &\geq 1-N\paren[\bigg]{C_P(C_{\psi}+1)\exp\paren[\Big]{-\frac{C_{K}}{\epsilon}}+e^{-\Lambda T/2}\max_{x\in K}\sqrt{\frac{p_{\epsilon,T}(x,x)}{\pi_{\epsilon}(x)}-1}}. \label{eq:lbxinK}
\end{align}
\end{corollary}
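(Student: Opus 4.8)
The plan is a one-line union bound built directly on Lemma~\ref{lem:lbxinKsingle}. First I would condition on the entire initial configuration $Y^1_{\epsilon,0}=y_1,\dots,Y^N_{\epsilon,0}=y_N$ with each $y_i\in K$. Because the trajectories $Y^1_{\epsilon,\cdot},\dots,Y^N_{\epsilon,\cdot}$ are driven by \emph{independent} Brownian motions, the event $\{Y^i_{\epsilon,T}\notin K\}$ is measurable with respect to the $i$-th driving noise alone, so conditioning on the full configuration is the same as conditioning on $Y^i_{\epsilon,0}=y_i$. Lemma~\ref{lem:lbxinKsingle} then bounds each single-particle escape probability, and since $y_i\in K$ we may further replace $\sqrt{p_{\epsilon,T}(y_i,y_i)/\pi_{\epsilon}(y_i)-1}$ by $\max_{x\in K}\sqrt{p_{\epsilon,T}(x,x)/\pi_{\epsilon}(x)-1}$, producing a per-particle escape bound that is uniform in $i$ (and in exactly the form that later combines cleanly with Lemma~\ref{cor:poverpi} and Lemma~\ref{lem:boundPsi}).

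Next I would sum the per-particle escape probabilities over $i=1,\dots,N$. The union bound gives
\[
  \P\paren[\Big]{\exists\,i:\ Y^i_{\epsilon,T}\notin K \mid Y^i_{\epsilon,0}\in K\ \forall i}
  \leq N\paren[\bigg]{C_P(C_{\psi}+1)\exp\paren[\Big]{-\frac{C_K}{\epsilon}}
    + e^{-\Lambda T/2}\max_{x\in K}\sqrt{\frac{p_{\epsilon,T}(x,x)}{\pi_{\epsilon}(x)}-1}},
\]
and taking complements yields \eqref{eq:lbxinK}.

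There is no real obstacle here: all the analytic content already sits in Lemma~\ref{lem:lbxinKsingle} (and, through it, in Lemma~\ref{cor:poverpi} and Lemma~\ref{lem:boundPsi}), and this corollary is purely a bookkeeping step. The only point deserving a word of care is the conditional-independence reduction in the first paragraph, which is what allows the single-particle estimate to be applied verbatim after conditioning on the whole configuration; this is immediate from the independence of the driving Brownian motions across particles, so I do not expect it to cause any difficulty.
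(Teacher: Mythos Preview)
Your proposal is correct and matches the paper's approach exactly: the paper simply states that the corollary follows from Lemma~\ref{lem:lbxinKsingle} by the union bound, which is precisely the argument you spell out. Your remark about conditional independence of the particles (via independent driving Brownian motions) is the implicit justification the paper leaves unstated.
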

Corollary~\ref{cor:lbxinK} suggested that the threshold for low temperature should scale like~$1/\log(N)$, which coincides with our choice of~$\eta_{\mathrm{cr}}$ in \eqref{e:CriEtaLT}.

\subsection{Uniform boundedness of eigenfunctions on subset~$K$ (Lemma~\ref{lem:boundPsi})} \label{sec:boundPsi}

In this section we prove Lemma~\ref{lem:boundPsi}. The proof is very similar to that of~\cite[Lemma 8.2]{han2025polynomialcomplexitysamplingmultimodal} and the main tool is also local maximum principle. The difference is that here the constant~$C_{\psi}$ is~$\alpha$-dependent whereas in~\cite[Lemma 8.2]{han2025polynomialcomplexitysamplingmultimodal} it is not. In the proof, the constant~$C= C(U,d,C_m,\alpha)$ may change from line to line. 

We begin by stating the fact that when~$\epsilon$ is small, the second eigenfunction~$\psi_{2,\epsilon}$ is very close to a linear combination of~$\one_{\Omega_1}$ and~$\one_{\Omega_2}$.
To state the fact precisely, we consider  the subspaces~$E_\epsilon, F_\epsilon \subseteq L^2(\pi_\epsilon)$ defined by
\begin{equation}\label{eq:defEepsFeps}
   F_{\epsilon}\defeq\operatorname{span}\{1,\psi_{2,\epsilon}\},\quad E_{\epsilon}\defeq\operatorname{span}\{\one_{\Omega_1},\one_{\Omega_2}\}.
\end{equation}
We measure closeness of~$\psi_{2, \epsilon}$ to a linear combination of~$\one_{\Omega_1}$ and~$\one_{\Omega_2}$, by measuring the ``distance'' between the subspaces~$E_\epsilon$ and~$F_\epsilon$ defined by
\begin{equation*}
   d(E_{\epsilon}, F_{\epsilon})\defeq \norm{P_{E_{\epsilon}}-P_{E_{\epsilon}}P_{F_{\epsilon}}}=\norm{P_{E_{\epsilon}}-P_{F_{\epsilon}}P_{E_{\epsilon}}}.
\end{equation*}
Here~$P_{E_{\epsilon}}$,~$P_{E_{\epsilon}}$ are the~$L^2(\pi_\epsilon)$ orthonormal projectors onto~$E_{\epsilon}$ and~$F_{\epsilon}$ respectively.
The next result gives an estimate on~$d(E_\epsilon, F_\epsilon)$.

\begin{proposition}[Chapter 8,  Proposition 2.2 of \cite{Kolokoltsov00}]\label{prop_kolo_chap8_prop2.2}
  Let~$\hat \gamma$ be the energy barrier defined in~\eqref{e:gammaHatDef}.
  For any~$\gamma<\hat{\gamma}$,
  there exists a constant~$C_{\gamma}>0$ such that for all~$\epsilon\leq 1$, we have
\begin{equation}\label{e:dEF}
    d(E_{\epsilon}, F_{\epsilon})\leq C_{\gamma}\exp
\Big(\frac{-\gamma}{\epsilon}\Big).
\end{equation}
\end{proposition}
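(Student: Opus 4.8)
The plan is to follow the classical metastability argument (this is Proposition~2.2 of Chapter~8 of~\cite{Kolokoltsov00}), which I describe in the double-well setting of this section. The first step is purely algebraic. Since $\Omega_1\cup\Omega_2$ exhausts $\R^d$ up to a Lebesgue-null set, $\one_{\Omega_1}+\one_{\Omega_2}=1$, so the constant function lies in $E_\epsilon\cap F_\epsilon$; setting $w_\epsilon\defeq\one_{\Omega_1}-\pi_\epsilon(\Omega_1)$ we get $E_\epsilon=\R 1\oplus\R w_\epsilon$ and $F_\epsilon=\R 1\oplus\R\psi_{2,\epsilon}$ with $w_\epsilon\perp 1$ and $\psi_{2,\epsilon}\perp 1$ in $L^2(\pi_\epsilon)$. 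Hence both orthogonal projectors equal the projector onto $\R 1$ plus a rank-one projector (onto $\R w_\epsilon$, resp. $\R\psi_{2,\epsilon}$), so $P_{E_\epsilon}-P_{F_\epsilon}$ is a difference of rank-one orthogonal projectors, and, using $P_{E_\epsilon}-P_{E_\epsilon}P_{F_\epsilon}=(P_{E_\epsilon}-P_{F_\epsilon})(I-P_{F_\epsilon})$,
\begin{equation*}
  d(E_\epsilon,F_\epsilon)\le\norm{P_{E_\epsilon}-P_{F_\epsilon}}=\sin\angle_{L^2(\pi_\epsilon)}(w_\epsilon,\psi_{2,\epsilon})=\dist_{L^2(\pi_\epsilon)}(\psi_{2,\epsilon},E_\epsilon).
\end{equation*}
So it suffices to show that $\psi_{2,\epsilon}$ is exponentially close, in $L^2(\pi_\epsilon)$, to a function that is constant on each basin $\Omega_i$.

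For this I would use that $\psi_{2,\epsilon}$ has exponentially small Dirichlet energy. Writing $\bar\psi_{2,\epsilon}\defeq P_{E_\epsilon}\psi_{2,\epsilon}$ for the $\pi_\epsilon$-conditional expectation of $\psi_{2,\epsilon}$ given the partition $\set{\Omega_1,\Omega_2}$, one has $\dist_{L^2(\pi_\epsilon)}(\psi_{2,\epsilon},E_\epsilon)^2=\sum_{i=1,2}\pi_\epsilon(\Omega_i)\operatorname{Var}_{\pi_\epsilon|_{\Omega_i}}(\psi_{2,\epsilon})$. On each basin the reflected Gibbs measure satisfies a Poincar\'e inequality with a constant that does not degenerate as $\epsilon\to0$: by Assumption~\ref{a:criticalpts} each $\Omega_i$ contains exactly one, nondegenerate, minimum of $U$, and by Assumption~\ref{assumption: nondegeneracy} the only non-minimal critical point on $\partial\Omega_i$ is the nondegenerate index-one saddle $s_{1,2}$ — the single-well regime in which $L_\epsilon$ has an $\epsilon$-uniform Neumann spectral gap (the same phenomenon producing the bound $\lambda_{i,\epsilon}\ge\Lambda$, $i\ge3$, in~\eqref{eq: egvalLEpsilon}). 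Hence, for some $c_0>0$ independent of $\epsilon\le1$,
\begin{equation*}
  \operatorname{Var}_{\pi_\epsilon|_{\Omega_i}}(\psi_{2,\epsilon})\le\frac{\epsilon}{c_0\,\pi_\epsilon(\Omega_i)}\int_{\Omega_i}\abs{\grad\psi_{2,\epsilon}}^2\,d\pi_\epsilon\le\frac{1}{c_0\,\pi_\epsilon(\Omega_i)}\,\mathcal E_\epsilon(\psi_{2,\epsilon},\psi_{2,\epsilon})=\frac{\lambda_{2,\epsilon}}{c_0\,\pi_\epsilon(\Omega_i)},
\end{equation*}
since $\mathcal E_\epsilon(\psi_{2,\epsilon},\psi_{2,\epsilon})=\epsilon\int\abs{\grad\psi_{2,\epsilon}}^2\,d\pi_\epsilon=\lambda_{2,\epsilon}\norm{\psi_{2,\epsilon}}_{L^2(\pi_\epsilon)}^2=\lambda_{2,\epsilon}$. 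Summing over $i$ the weights $\pi_\epsilon(\Omega_i)$ cancel, giving $\dist_{L^2(\pi_\epsilon)}(\psi_{2,\epsilon},E_\epsilon)^2\le 2\lambda_{2,\epsilon}/c_0$; then the a priori eigenvalue bound $\lambda_{2,\epsilon}\le\tilde C_\gamma e^{-\gamma/\epsilon}$ (the first half of~\eqref{eq: egvalLEpsilon}, also from~\cite[Ch.~8]{Kolokoltsov00}) yields $d(E_\epsilon,F_\epsilon)\le C_\gamma e^{-\gamma/(2\epsilon)}$ for every $\gamma<\hat\gamma$.

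The hard part is upgrading the exponent from $\gamma/2$ to the stated $\gamma$ (for all $\gamma<\hat\gamma$): the soft estimate above controls $\dist(\psi_{2,\epsilon},E_\epsilon)^2$ only by $\lambda_{2,\epsilon}$ and so loses a factor $\tfrac12$ in the rate upon taking the square root. Closing this gap needs the finer semiclassical description of $\psi_{2,\epsilon}$ near the bottleneck: one compares $\psi_{2,\epsilon}$ with the $L_\epsilon$-equilibrium potential $h_\epsilon$ of the condenser $(B(x_{\min,1},\varrho),B(x_{\min,2},\varrho))$, which — by a Laplace/WKB computation exploiting the nondegeneracy of $s_{1,2}$ — agrees with $\one_{\Omega_1}$ outside an $O(\sqrt\epsilon)$-wide boundary layer around $s_{1,2}$, and then estimates that layer's $L^2(\pi_\epsilon)$-contribution precisely; this is exactly the analysis in~\cite[Chapter~8]{Kolokoltsov00}, which we invoke. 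We note that for every use of Proposition~\ref{prop_kolo_chap8_prop2.2} in the present paper only the exponential decay of $d(E_\epsilon,F_\epsilon)$ is needed (it feeds non-explicit dimensional constants such as $C_\beta$ in Lemma~\ref{l:iteration}, not the polynomial exponents), so even the soft bound $d(E_\epsilon,F_\epsilon)\le C_\gamma e^{-\gamma/(2\epsilon)}$ would be sufficient.
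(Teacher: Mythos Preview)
The paper does not prove this proposition; it is quoted as Proposition~2.2 of Chapter~8 of~\cite{Kolokoltsov00} and used as a black box throughout. There is therefore nothing in the paper to compare your sketch against. Your outline is nonetheless a correct account of the standard argument: the algebraic reduction of $d(E_\epsilon,F_\epsilon)$ to $\dist_{L^2(\pi_\epsilon)}(\psi_{2,\epsilon},E_\epsilon)$ is right (using that $1\in E_\epsilon\cap F_\epsilon$ and $\psi_{2,\epsilon}\perp 1$), and combining the single-well Poincar\'e inequality on each basin with the eigenvalue bound $\lambda_{2,\epsilon}\le\tilde C_\gamma e^{-\gamma/\epsilon}$ from~\eqref{eq: egvalLEpsilon} does give the soft estimate $d(E_\epsilon,F_\epsilon)\le C_\gamma e^{-\gamma/(2\epsilon)}$. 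You are also right that recovering the full exponent requires the finer semiclassical input you defer to~\cite{Kolokoltsov00}.

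One caveat on your closing remark. The weak bound would carry the paper's arguments, but not entirely for free: in~\eqref{eq:dleq1overM} the proposition is invoked with $\gamma=C_K=\hat\gamma/(1+\alpha)^{1/2}$, which for small $\alpha$ lies above $\hat\gamma/2$ and is therefore out of reach of the soft bound. One would have to halve $C_K$ (or equivalently redefine the sets $B_i$, $\tilde B_i$ in~\eqref{eq:defB},~\eqref{eq: deftildeB} with smaller level sets), and this propagates into $\eta_{\mathrm{cr}}$, $C_\psi$, $C_\alpha$, and $C_\beta$. The qualitative conclusion of Theorem~\ref{thm: main} survives, but ``sufficient'' should be read as ``sufficient after adjusting several intermediate non-explicit constants''.
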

 Proposition~\ref{prop_kolo_chap8_prop2.2} shows the fact that when~$\epsilon$ is small, the second eigenfunction~$\psi_{2,\epsilon}$ is very close to a linear combination of~$\one_{\Omega_1}$ and~$\one_{\Omega_2}$. We then estimate the coefficients of the linear combination. The following discussion is the same as that in~\cite[Proof of Lemma 7.5]{han2025polynomialcomplexitysamplingmultimodal}. For completeness, we repeat them here.

We decompose~$\psi_{2,\epsilon}$ into the sum of the projection into~$E_{\epsilon}$ and~$E^{\perp}_{\epsilon}$, where~$E_{\epsilon}$ is defined in \eqref{eq:defEepsFeps}. Explicitly,
\begin{equation}
    \psi_{2,\epsilon}(x)=\upsilon_{\epsilon}(x)+\sum_{j=1,2}a_{j,\epsilon}\one_{\Omega_j},
\end{equation}
where~$\upsilon_{\epsilon}\in E^{\perp}_{\epsilon}$. Since~$\int \psi_{2,\epsilon}(x)\pi_{\epsilon}(x)\, d x=0$, we have $\sum_{j=1,2}a_{j,\epsilon}\one_{\Omega_j}=0$. Moreover, 
since~$\upsilon_{\epsilon}$ is orthogonal to~$\one_{\Omega_1}$ and~$\one_{\Omega_1}$ in~$L^2(\pi_{\epsilon})$, we define
\begin{equation}\label{eq: ak eq1}
    b_{\epsilon}\defeq \sqrt{\sum_{j=1,2}a_{j,\epsilon}^2\pi_{\epsilon}(\Omega_j)}=\sqrt{1-\|\upsilon_{\epsilon}\|^2_{L^2(\pi_{\epsilon})}}.
\end{equation}
Then we have~$b_{\epsilon}\leq 1$ and the solution of~$a_{j,\epsilon}, j=1,2$ satisfying 
\begin{equation}\label{eq: ak sol}
\abs{a_{j,\epsilon}}\leq \frac{1}{\sqrt{\pi_{\epsilon}(\Omega_j)}}\overset{\eqref{e:massRatioBound}}{\leq}C_m.
\end{equation}
Finally we observe that 
\begin{align}
\|\upsilon_{\epsilon}\|_{L^2(\pi_{\epsilon})}&=\|P_{E_{\epsilon}}^{\perp}(\psi_{2,\epsilon})\|_{L^2(\pi_{\epsilon})}=\|(I-P_{E_{\epsilon}})(\psi_{2,\epsilon})\|_{L^2(\pi_{\epsilon})}\\
       &=\|(P_{F_{\epsilon}}-P_{E_{\epsilon}}P_{F_{\epsilon}})(\psi_{2,\epsilon})\|_{L^2(\pi_{\epsilon})}\leq \|P_{F_{\epsilon}}-P_{E_{\epsilon}}P_{F_{\epsilon}}\|
       = d(E_{\epsilon},F_{\epsilon}). \label{eqVkLessThanDEkFk} 
\end{align}

Now we show that when~$\epsilon$ is small enough,~$\psi_{2,\epsilon}$ is actually close to~$\sum_{i=1,2}a_{i,\epsilon}\one_{\Omega_i}$ in the pointwise sense within the compact set~$K$. Before the proof, we first introduce some auxiliary sets. Define the sets~$\widetilde{B}_i$,~$i\in\set{1,2}$ as
\begin{equation}\label{eq: deftildeB}
  \widetilde{B}_i\defeq \set[\Big]{x\in\Omega_i  \st U(x)-U(x_{\min,i})\leq \frac{\hat \gamma}{(1+\alpha)^{\frac18}}}. 
\end{equation}

According the definition of~$B_i$ and~$\tilde{B}_i$, it is straightforward to see that~$B_i$ is a strict subset of~$\tilde{B}_i$. For~$i=1,2$, define
\begin{equation}
    R_i\defeq \min\{d(\partial \tilde{B}_i, \partial B_i), d(x_{\min,i},\partial B_i)\}.
\end{equation}
where the distance between two subsets~$V_1, V_2$ are defined as in the common sense
\begin{equation}
    d(V_1, V_2)\defeq \inf\set{\abs{x-y}, x\in V_1, y\in V_2}.
\end{equation}
The sets~$\tilde{B}_i$,~$ \partial B_i$ and~$\set{x_{\min,i}}$ are compact and they are disjoint, thus we have~$R_i>0$ well-defined.

We now bound~$\psi_{2, \epsilon}$ in the regions~$B_1$ and~$B_2$, respectively.

\begin{lemma}[Lemma 8.2 in \cite{han2025polynomialcomplexitysamplingmultimodal}]\label{lem: boundedness of eigenfunction around local minima}
  There exists  a constant~$C_a= C_a(d,U,C_m,\alpha)$ and~$\tilde{\epsilon}=\tilde{\epsilon}(d,\hat{\gamma},\alpha)$ such that for every
  \begin{equation}\label{eq:epsilonCri}
      0 < \epsilon\leq \min\set[\Big]{R_1,R_2,\tilde{\epsilon}}\defeq \epsilon_1
  \end{equation}
  we have
\begin{equation}\label{eq: eigenvector in Bi}
    \forall x\in B_i, \quad  |\psi_{2,\epsilon}(x)-a_{i,\epsilon}|\leq C_a
      \exp\paren[\Big]{-\frac{\hat{\gamma}}{4(1+\alpha)^{\frac{1}{8}} \epsilon}}, \quad i=1,2.
\end{equation}
 Here  ~$a_{1,\epsilon}$ and~$a_{2,\epsilon}$ are defined as in \eqref{eq: ak sol}. 
\end{lemma}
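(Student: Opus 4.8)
\smallskip

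The plan is to deduce the bound from an interior $L^{\infty}$ estimate for the function $w\defeq\psi_{2,\epsilon}-a_{i,\epsilon}$. By the decomposition $\psi_{2,\epsilon}=\upsilon_{\epsilon}+\sum_{j=1,2}a_{j,\epsilon}\one_{\Omega_j}$ we have $w=\upsilon_{\epsilon}$ on $\Omega_i\supseteq\widetilde{B}_i\supseteq B_i$, and since $L_{\epsilon}$ annihilates constants while $L_{\epsilon}\psi_{2,\epsilon}=\lambda_{2,\epsilon}\psi_{2,\epsilon}$, the function $w$ solves
\[
  -\epsilon\Delta w+\nabla U\cdot\nabla w=\lambda_{2,\epsilon}\,(w+a_{i,\epsilon})\qquad\text{in }\widetilde{B}_i,
\]
where $|a_{i,\epsilon}|\le C_m$ by~\eqref{eq: ak sol} and $\lambda_{2,\epsilon}\le\tilde{C}_{\gamma}e^{-\gamma/\epsilon}$ for every $\gamma<\hat\gamma$ by~\eqref{eq: egvalLEpsilon}. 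Thus $w$ is at once a sub- and a supersolution of a uniformly elliptic equation whose zeroth-order coefficient and right-hand side are exponentially small in $1/\epsilon$, and it remains to bound $\|w\|_{L^{\infty}(B_i)}$.

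First I would turn the global $L^{2}(\pi_{\epsilon})$ control on $\upsilon_{\epsilon}$ into an $L^{2}(dx)$ bound on $\widetilde{B}_i$. By Proposition~\ref{prop_kolo_chap8_prop2.2} and~\eqref{eqVkLessThanDEkFk}, $\|\upsilon_{\epsilon}\|_{L^{2}(\pi_{\epsilon})}\le d(E_{\epsilon},F_{\epsilon})\le C_{\gamma}e^{-\gamma/\epsilon}$, so that $\int_{\widetilde{B}_i}w^{2}\,dx\le\bigl(\sup_{\widetilde{B}_i}\pi_{\epsilon}^{-1}\bigr)\|\upsilon_{\epsilon}\|_{L^{2}(\pi_{\epsilon})}^{2}$. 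By the definition~\eqref{eq: deftildeB} of $\widetilde{B}_i$ one has $\sup_{\widetilde{B}_i}U\le U(x_{\min,i})+\hat\gamma(1+\alpha)^{-1/8}$, hence $\sup_{\widetilde{B}_i}\pi_{\epsilon}^{-1}\le Z_{\epsilon}\,e^{(U(x_{\min,i})+\hat\gamma(1+\alpha)^{-1/8})/\epsilon}$, while~\eqref{eq:Zepsilon} together with Assumption~\ref{a:massRatioBound} gives $Z_{\epsilon}\le C\epsilon^{d/2}e^{-U(x_{\min,i})/\epsilon}$ --- the well depth cancels against the weight. Combining, $\|w\|_{L^{2}(\widetilde{B}_i,\,dx)}\le C\epsilon^{d/4}\exp\!\bigl(\tfrac{1}{2\epsilon}\bigl(\hat\gamma(1+\alpha)^{-1/8}-2\gamma\bigr)\bigr)$.

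Next I would upgrade this to the claimed pointwise bound on the smaller set $B_i$ through the De Giorgi--Nash--Moser local maximum principle applied to the sub/supersolution $w$. The one real difficulty is that $-\epsilon\Delta+\nabla U\cdot\nabla$ degenerates as $\epsilon\to0$: written in normalized form, $-\Delta w+\epsilon^{-1}\nabla U\cdot\nabla w=\epsilon^{-1}\lambda_{2,\epsilon}(w+a_{i,\epsilon})$, its drift coefficient is of size $1/\epsilon$, and a naive application of interior estimates would contribute constants growing exponentially in $1/\epsilon$. To prevent this I would localize at the natural scale $\epsilon$: for each $x\in B_i$ rescale by $y\mapsto x+\epsilon y$ on a ball around $x$ of radius proportional to $\epsilon$ --- which lies inside $\widetilde{B}_i$ precisely because of the hypotheses $\epsilon\le R_i$ and $\epsilon\le\tilde{\epsilon}$ --- so that the rescaled drift is $O(1)$ and the rescaled right-hand side has size $\epsilon\lambda_{2,\epsilon}$; the local boundedness estimate then yields $|w(x)|\le C(d)\bigl(\epsilon^{-d/2}\|w\|_{L^{2}(\widetilde{B}_i,\,dx)}+\epsilon\lambda_{2,\epsilon}(\|w\|_{L^{\infty}(B_i)}+C_m)\bigr)$, and for $\epsilon$ small the $\|w\|_{L^{\infty}(B_i)}$ term on the right is absorbed. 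Inserting the bound from the previous step and then taking $\gamma<\hat\gamma$ close to $\hat\gamma$ makes the exponential decay rate strictly exceed $\hat\gamma/(4(1+\alpha)^{1/8})$, so that the remaining polynomial $\epsilon$-powers and dimensional constants are absorbed once $\epsilon\le\tilde{\epsilon}(d,\hat\gamma,\alpha)$, giving~\eqref{eq: eigenvector in Bi}.

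The hard part is exactly this $\epsilon$-degeneration of the elliptic operator: treated carelessly it destroys the estimate, and handling it via the $\epsilon$-scale localization above is the crux of the proof. It is also the reason the set-up introduces the intermediate set $\widetilde{B}_i$, the margin $R_i$, and the threshold $\tilde{\epsilon}$, and why the exponent in~\eqref{eq: eigenvector in Bi} carries the factor $\tfrac14$ rather than $1$. The remaining ingredients --- the Laplace-type upper bound on $Z_{\epsilon}$ and the bookkeeping of constants in the Moser estimate --- are routine and parallel the proof of~\cite[Lemma~8.2]{han2025polynomialcomplexitysamplingmultimodal}.
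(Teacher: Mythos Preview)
Your proposal is correct and follows essentially the same approach as the paper: both convert the weighted $L^2(\pi_\epsilon)$ bound on $\upsilon_\epsilon$ (via Proposition~\ref{prop_kolo_chap8_prop2.2} and the Laplace-type estimate~\eqref{eq:Zepsilon} on $Z_\epsilon$) into an unweighted $L^2(dx)$ bound on $\widetilde{B}_i$, and then apply the local maximum principle (\cite[Corollary~9.21]{GT}) on balls of radius~$\epsilon$ to obtain the pointwise estimate. Your explicit discussion of the $\epsilon$-rescaling to tame the drift is precisely what makes the paper's choice of balls $B(y,\epsilon)\subset B(y,2\epsilon)$ work with a dimensional constant; the paper leaves this implicit, and it also places the $\lambda_{2,\epsilon}$ term in the zeroth-order coefficient of $L_\epsilon-\lambda_{2,\epsilon}$ rather than on the right-hand side, thereby avoiding your absorption step --- but these are cosmetic differences only.
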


\begin{proof}[Proof of Lemma~\ref{lem: boundedness of eigenfunction around local minima}]

Fix~$i=1$ or~$2$,
for each~$x\in B_i$, there exists~$y\in B_i$ such that~$x\in B(y,\epsilon)$.
  By the triangle inequality, it follows that~$B(y,2\epsilon)\subseteq \widetilde{B}_i$. Thus,
\begin{equation}\label{eq:concirc}
    B(y,2\epsilon)\subseteq \widetilde{B}_i\subset\Omega_i.
\end{equation}

First notice that the function~$\psi_{2,\epsilon}-a_{i,\epsilon}$ satisfies
\[
(L_{\epsilon}-\lambda_{2,\epsilon})(\psi_{2,\epsilon}-a_{i,\epsilon})=\lambda_{2,\epsilon}a_{i,\epsilon}.
\]
Thus using~\cite[Corollary 9.21]{GT}, there exists dimensional constant~$C$ such that for every~$y\in B_i$ for which~$B(y,2\epsilon)\subseteq \widetilde{B}_i$, we have
\begin{align}
    \sup_{x\in B(y,\epsilon)} \abs{\psi_{2,\epsilon}(x)-a_{i,\epsilon}}
	\leq C\biggl(&\paren[\Big]{\frac{1}{|B(y,2\epsilon)|}\int_{B(y,2\epsilon)}|\psi_{2,\epsilon}(x)-a_{i,\epsilon}|^2 \, d x}^{\frac12}
    \\\label{eq:GTlocal}
	& +|\lambda_{2,\epsilon}a_{i,\epsilon}|\biggr).
\end{align}

We first bound the term~$|\lambda_{2,\epsilon}a_{i,\epsilon}|$. Observe that according to \eqref{eq: egvalLEpsilon}, there exists constant~$\tilde{C}\equiv \tilde{C}(U)$ such that for every~$\epsilon<1$,
\begin{equation}
    \lambda_{2,\epsilon}\leq \tilde{C}\exp\paren[\Big]{\frac{\hat{\gamma}}{4\epsilon}}.
\end{equation}
Meanwhile, according to \eqref{eq: ak sol} and Assumption~\ref{a:massRatioBound}, we have that~$|a_{i,\epsilon}|\leq C_m$. Therefore, there exists constant~$C\equiv C(C_m,U)$ such that
\begin{equation}\label{eq:psiterm2}
    |\lambda_{2,\epsilon}a_{i,\epsilon}|\leq C\exp\paren[\Big]{\frac{\hat{\gamma}}{4\epsilon}}.
\end{equation}

Now we bound~$\int_{B(y,2\epsilon)}|\psi_{2,\epsilon}(x)-a_{i,\epsilon}|^2 \, d x$ that appears on the right hand side of~\eqref{eq:GTlocal}. Using the fact that when \eqref{eq:epsilonCri} holds, for~$i=1,2$,
\begin{align}
    Z_{\epsilon}&=\Big(\int_{\Omega_i}e^{-\frac{U}{\epsilon}}\, d x\Big)\cdot\Big( 1+\frac{\int_{\mathbb{R}^d\setminus\Omega_i}e^{-\frac{U}{\epsilon}}\, d x}{\int_{\Omega_i}e^{-\frac{U}{\epsilon}}\, d x}\Big)\\
     &=\Big(\int_{\Omega_i}e^{-\frac{U}{\epsilon}}\, d x\Big)\cdot\Big( 1+\frac{1-\pi_{\epsilon}(\Omega_i)}{\pi_{\epsilon}(\Omega_i)}\Big)\overset{\eqref{e:massRatioBound}}{\leq} \Big(\int_{\Omega_i}e^{-\frac{U}{\epsilon}}\, d x\Big)\cdot( 1+C_m^2)\\
    \label{eq:intoi} &\overset{\mathclap{\eqref{eq:Zepsilon}}}{\leq} C(2\pi\epsilon)^{\frac{d}{2}} e^{-\frac{U(x_{\min,i})}{\epsilon}},
\end{align}
we have that 
\begin{align}
   \MoveEqLeft \int_{B(y,2\epsilon)}|\psi_{2,\epsilon}-a_{i,\epsilon}|^2 \, d x= \int_{B(y,2\epsilon)}|\psi_{2,\epsilon}-a_{i,\epsilon}\one_{\Omega_i}|^2 \, d x \\
        &\leq  \Big(\sup_{z\in B(y,2\epsilon)}e^{\frac{U(z)}{\epsilon}}\Big)\int_{B(y,2\epsilon)}|\psi_{2,\epsilon}-a_{i,\epsilon}\one_{\Omega_i}|^2 e^{-\frac{U}{\epsilon}}\, d x\\
        &\overset{\mathclap{\eqref{eq:concirc}}}{\leq}Z_{\epsilon}\Big(\sup_{z\in \widetilde{B}_i}e^{\frac{U(z)}{\epsilon}}\Big)\int_{\Omega_i}|\psi_{2,\epsilon}(x)-a_{i,\epsilon}\one_{\Omega_i}|^2 \, d \pi_{\epsilon}(x)\\
       &\overset{\mathclap{\eqref{eq:intoi}}}{\leq} C(2\pi\epsilon)^{\frac{d}{2}}\Big(\sup_{z\in \widetilde{B}_i}e^{\frac{U(z)-U(x_{\min,i})}{\epsilon}}\Big)\norm[\big]{\psi_{2,\epsilon}-a_{1,\epsilon}\one_{\Omega_1}-a_{2,\epsilon}\one_{\Omega_2}}_{L^2(\pi_{\epsilon})}^2\\
       &\overset{\mathclap{\eqref{eqVkLessThanDEkFk}}}{\leq}C(2\pi\epsilon)^{\frac{d}{2}}\Big(\sup_{z\in \widetilde{B}_i}e^{\frac{U(z)-U(x_{\min,i})}{\epsilon}}\Big)d(E_{\epsilon},F_{\epsilon})^2\\
       &\overset{\mathclap{\eqref{e:dEF}}}{\leq}C(2\pi\epsilon)^{\frac{d}{2}}\Big(\sup_{z\in \widetilde{B}_i}e^{\frac{U(z)-U(x_{\min,i})}{\epsilon}}\Big)\exp\paren[\Big]{-\frac{2\hat{\gamma}}{(1+\alpha)^{\frac18}\epsilon}}\\
       &\overset{\mathclap{\eqref{eq: deftildeB}}}{\leq}C(2\pi\epsilon)^{\frac{d}{2}}\exp\paren[\Big]{-\frac{\hat{\gamma}}{ (1+\alpha)^{\frac18}\epsilon}},  \label{eq:13over8hatgamma}
\end{align}
where the second last inequality we use \eqref{e:dEF} with~$\gamma =\frac{\hat{\gamma}}{(1+\alpha)^{\frac18}}$.

Notice that there exists constant~$\tilde{\epsilon}=\tilde{\epsilon}(d,\hat\gamma,\alpha)$ that whenever~$\epsilon<\tilde{\epsilon}$, 
\begin{equation}\label{eq:tildeEps}
    \exp\paren[\Big]{-\frac{\hat{\gamma}}{2(1+\alpha)^{\frac{1}{8}} \epsilon}}< (2\pi\epsilon)^{\frac{d}{2}}.
\end{equation}
Thus, for~$\epsilon<\tilde{\epsilon}$,
\begin{equation} \label{eq:nearBi}
    \int_{B(y,2\epsilon)}|\psi_{2,\epsilon}-a_{i,\epsilon}|^2 \, d x  \overset{\eqref{eq:13over8hatgamma},\eqref{eq:tildeEps}}{\leq} C(2\pi\epsilon)^{d}\exp\paren[\Big]{-\frac{\hat{\gamma}}{2(1+\alpha)^{\frac{1}{8}} \epsilon}}. 
\end{equation}

Therefore, plugging \eqref{eq:nearBi} into \eqref{eq:GTlocal} gives
\begin{align*}
     \MoveEqLeft \sup_{x\in B(y,\epsilon)} |\psi_{2,\epsilon}(x)-a_{1,\epsilon}|
	~\overset{\mathclap{\eqref{eq:nearBi}}}{\leq}~
	    \paren[\Big]{
	      \frac{C(2\pi\epsilon)^{d}}{|B(y,2\epsilon)|}\exp\paren[\Big]{-\frac{\hat{\gamma}}{2(1+\alpha)^{\frac{1}{8}} \epsilon}}
	    }^{\frac12}
	  +C|\lambda_{2,\epsilon}a_{1,\epsilon}|
      \\
       &\overset{\mathclap{\eqref{eq:psiterm2}}}{\leq}\quad
	  \paren[\Big]{\frac{C(2\pi\epsilon)^{d}}{( 2\epsilon)^d}\exp\paren[\Big]{-\frac{\hat{\gamma}}{2(1+\alpha)^{\frac{1}{8}} \epsilon}}}^{\frac12}
        +C\exp\paren[\Big]{-\frac{\hat{\gamma}}{4\epsilon}}
	\\
        &\leq  C_a\exp\paren[\Big]{-\frac{\hat{\gamma}}{4(1+\alpha)^{\frac{1}{8}} \epsilon}},
\end{align*}
which implies \eqref{eq: eigenvector in Bi}.
\end{proof}

We now use the above lemma and Assumption~\ref{a:massRatioBound} to prove Lemma~\ref{lem:boundPsi} .

\begin{proof}[Proof of Lemma~\ref{lem:boundPsi}]

We discuss two cases,~$\epsilon\leq \epsilon_1$ and~$\epsilon> \epsilon_1$, where~$\epsilon_1$ is defined in \eqref{eq:epsilonCri}. 

\restartcases

\case[$\epsilon\leq \epsilon_1$] We obtain immediately from \eqref{eq: eigenvector in Bi} that
\begin{equation}
    \sup\limits_{0< \epsilon\leq \epsilon_1}\|\psi_{2,\epsilon}\|_{L^{\infty}(K)}\leq C_m+C_a.
\end{equation}

\case[$1\geq \epsilon> \epsilon_1$] Define the set~$K_2$ as the~$2$-neighborhood of~$K$. Namely,
\begin{equation}
    K_2 \defeq \overline{\bigcup_{y\in K} B(y,2)}.
\end{equation}
It is straightforward to see that~$K_2\subset\mathbb{R}^d$ is bounded and closed.
According to \cite[Corollary 9.21]{GT}, for~$y\in K$,
\begin{align*}
    \sup_{x\in B(y,\epsilon)} |\psi_{2,\epsilon}(x)|
        &\leq \paren[\Big]{\frac{C}{|B(y,2\epsilon)|}\int_{B(y,2\epsilon)}|\psi_{2,\epsilon}(x)|^2 \, d x}^{\frac12}
       \\
        &\leq \paren[\Big]{\frac{C}{|B(y,2\epsilon)|}\Big(\sup_{z\in K_2}e^{\frac{U(z)-U_{\min}}{\epsilon}}\Big)\int_{\mathbb{R}^d}|\psi_{2,\epsilon}(x)|^2 \, d \pi_{\epsilon}(x)}^{\frac12}\\
        &=C(\epsilon_1)^{-\frac{d}{2}}\exp\paren[\Big]{\frac{\norm{U}_{L^{\infty}(K_2)}}{2\epsilon_1}}= C(U,d, C_m,\alpha).
\end{align*}

We conclude from the above two cases that~\eqref{eq:defCpsi} holds.
\end{proof}

\section{Iterating error estimates (Lemma~\ref{l:iteration})} \label{sec:iteration}

Lemma~\ref{l:iteration} consists of three main parts: the derivation of recurrence relation~\ref{e:iteration}, control the probability that \eqref{eq:Xk+1inK} and \eqref{e:iteration} hold, and obtaining the estimate \eqref{e:CBeta} for~$\beta_k$ and~$c_k$. We do each of these steps in Sections~\ref{sec: recurrence},~\ref{sec:estCj} and~\ref{sec: est coeff}.
We combine these and prove Lemma~\ref{l:iteration} in Section~\ref{s:proofIteration}.

\subsection{Recurrence relation}\label{sec: recurrence}

We will now prove \eqref{e:iteration} by combining the estimate
for the Monte Carlo error (Lemma~\ref{l:langevinError}) and the resampling error (Lemma~\ref{l:rebalancing}). For
clarity, we state this as a new lemma and give explicit formulae for the constants~$\beta_k$
and~$c_k$ appearing in \eqref{e:iteration}.

\begin{lemma}\label{lem: iterative scheme between levels}
  Let~$2\leq k\leq M-1$. Assume that~$X_{k,0}^{i}\in K$ for all~$i\in\set{1,\dots, N}$. Then with probability larger than or equal to
   \begin{align}
  \theta_k\defeq 1&-N\paren[\bigg]{C_P(C_{\psi}+1)\exp\paren[\Big]{-\frac{C_{K}}{\eta_k}}-e^{-\Lambda T/2}\max_{x\in K}\sqrt{\frac{p_{k,T}(x,x)}{\pi_{k}(x)}-1}}\\
   &-2\exp\paren[\Big]{-\frac{2Na_1^2}{C_r^2}}-2\exp\paren[\Big]{-\frac{Na_2^2}{2C^2_{r}C_{\psi}^2}}-2\exp\paren[\Big]{-\frac{Na_3^2}{2C_{\psi}^2}}
   \label{e:probIteration}
\end{align}
   we have that \eqref{eq:Xk+1inK} and \eqref{e:iteration} hold with
  \begin{align}
    \beta_k &\defeq e^{-\lambda_{2, k} T}\Big(\abs[\Big]{\int \psi_{2,k}\pi_{k+1} \, dx}\cdot\norm[\Big]{\psi_{2,k+1}\one_{K}-\int_{K}\psi_{2,k+1}\pi_{k+1}\, d x}_{L^\infty} 
    \\
      &\qquad\qquad\qquad \mathbin{+}\abs[\Big]{\int \paren[\big]{\psi_{2,k+1}\one_{K}-\int_{K}\psi_{2,k+1}\pi_{k+1}\, d x}\psi_{2,k}\pi_{k+1}\, dx}\Big) \label{eq: defBetak}
    \\
    c_k &\defeq \norm[\Big]{\psi_{2,k+1}\one_{K}-\int_{K}\psi_{2,k+1}\pi_{k+1}\, d x}_{L^\infty} \Big(\mathcal E_{k,T}(r_k)  +a_1\Big)\\
    &\qquad +\mathcal E_{k,T}(\psi_{2,k+1}r_k\one_{K}) +a_2+a_3+ \abs[\Big]{\int_{K}\psi_{2,k+1}\pi_{k+1}\, d x}
    .  \label{eq: defck}
\end{align}
\end{lemma}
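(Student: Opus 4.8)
The plan is to obtain \eqref{e:iteration} by \emph{composing} the two master error estimates already in place---Lemma~\ref{l:langevinError} for the Langevin step at level~$k$ and Lemma~\ref{l:rebalancing} for the resampling step that immediately follows---while \eqref{eq:Xk+1inK} will come directly from Corollary~\ref{cor:lbxinK}. Beyond these three results the only inputs I need are: $\psi_{2,k+1}\one_K$ is an admissible bounded test function with $\norm{\psi_{2,k+1}\one_K}_{L^\infty}\le C_\psi$ (Lemma~\ref{lem:boundPsi}); $\norm{r_k}_{L^\infty}\le C_r$ (see \eqref{eq:defCr}); and the elementary identities $\int r_k\pi_k=1$, $\int r_k\psi_{2,k}\pi_k=\int\psi_{2,k}\pi_{k+1}$ and $\int\psi_{2,k+1}\pi_{k+1}=0$.

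First I would dispose of \eqref{eq:Xk+1inK}. Under the hypothesis $X^i_{k,0}\in K$, Corollary~\ref{cor:lbxinK} at $\epsilon=\eta_k$ gives that, off an event of probability at most $N\bigl(C_P(C_\psi+1)e^{-C_K/\eta_k}+e^{-\Lambda T/2}\max_{x\in K}\sqrt{p_{k,T}(x,x)/\pi_k(x)-1}\bigr)$, every $X^i_{k,T}$ lies in~$K$; since each $X^i_{k+1,0}$ equals some $X^j_{k,T}$, this is \eqref{eq:Xk+1inK} and accounts for the first line of \eqref{e:probIteration}.

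Next I would apply Lemma~\ref{l:rebalancing} with $x^i=X^i_{k,T}$, $\tilde p=\tilde\pi_k$, $\tilde q=\tilde\pi_{k+1}$ (so $r=r_k$, $q=\pi_{k+1}$) and $h=\psi_{2,k+1}\one_K$. Writing $m\defeq\int_K\psi_{2,k+1}\pi_{k+1}=\int\psi_{2,k+1}\one_K\pi_{k+1}$, this bounds $\abs{\frac1N\sum_i\psi_{2,k+1}\one_K(X^i_{k+1,0})-m}$---and hence, on adding the constant $\abs m$ (the last term of \eqref{eq: defck}), the uncentered average in \eqref{e:iteration}---by the sum of the two ``Langevin-averaged'' quantities
\[
  \norm{\psi_{2,k+1}\one_K-m}_{L^\infty}\,\abs[\Big]{1-\frac1N\sum_i r_k(X^i_{k,T})}
  \quad\text{and}\quad
  \abs[\Big]{\frac1N\sum_i r_k(X^i_{k,T})\bigl(\psi_{2,k+1}\one_K(X^i_{k,T})-m\bigr)},
\]
plus a Hoeffding slack $a_3$ valid with probability $\ge 1-2\exp(-2Na_3^2/\norm{\psi_{2,k+1}\one_K}^2_\osc)$, which yields the $a_3$-line of \eqref{e:probIteration} once $\norm{\psi_{2,k+1}\one_K}_\osc\le 2C_\psi$ is used. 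I would then estimate each Langevin-averaged quantity with Lemma~\ref{l:langevinError}---legitimate because $X^i_{k,0}\in K$. For the first, taking $h=r_k$ and using $\int r_k\pi_k=1$, $\int r_k\psi_{2,k}\pi_k=\int\psi_{2,k}\pi_{k+1}$ turns its eigenfunction term into $e^{-\lambda_{2,k}T}\abs{\int\psi_{2,k}\pi_{k+1}}\cdot\abs{\frac1N\sum_i\psi_{2,k}\one_K(X^i_{k,0})}$ with remainder $\mathcal E_{k,T}(r_k)+a_1$, so multiplying by $\norm{\psi_{2,k+1}\one_K-m}_{L^\infty}$ reproduces the first summand of $\beta_k$ and the first line of $c_k$. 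For the second, I would set $g\defeq r_k(\psi_{2,k+1}\one_K-m)$, which has $\int g\,\pi_k=m-m=0$, and apply Lemma~\ref{l:langevinError} with $h=g$: the eigenfunction term is $e^{-\lambda_{2,k}T}\abs{\int g\,\psi_{2,k}\pi_k}\cdot\abs{\frac1N\sum_i\psi_{2,k}\one_K(X^i_{k,0})}$ with $\int g\,\psi_{2,k}\pi_k=\int(\psi_{2,k+1}\one_K-m)\psi_{2,k}\pi_{k+1}$---exactly the second summand of $\beta_k$---and the remainder $\mathcal E_{k,T}(g)+a_2$ is absorbed into the remaining terms of $c_k$ using subadditivity of $h\mapsto\mathcal E_{k,T}(h)$ in $\norm h_{L^2(\pi_k)}$, the pointwise bound $\abs{\psi_{2,k+1}\one_K-m}\le\norm{\psi_{2,k+1}\one_K-m}_{L^\infty}$, and $\norm{r_k}_{L^2(\pi_k)}\le C_r$. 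Collecting the eigenfunction contributions into $\beta_k$ and the rest into $c_k$, and taking a union bound over the four failure events---combined through the tower property, since the Langevin-exit bound conditions on $\{X^i_{k,0}\}$ whereas the resampling Hoeffding bound conditions on $\{X^i_{k,T}\}$---gives the probability $\theta_k$.

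The logical skeleton is short, so the main obstacle I expect is the \emph{bookkeeping}: verifying each inner-product identity, tracking whether a given average is against $\pi_k$ or $\pi_{k+1}$, and---the one genuinely delicate move---centering $g$ so that $\int g\,\pi_k=0$, which is exactly what makes the second Langevin-averaged quantity contribute $\abs{\int(\psi_{2,k+1}\one_K-m)\psi_{2,k}\pi_{k+1}}$ (the second summand of $\beta_k$) rather than an uncontrolled extra bias. A lesser chore is matching the norm constants $\norm h_\osc$ and $\norm h_{L^2(\pi_k)}$ of the various test functions ($r_k$, $g$, $\psi_{2,k+1}\one_K$) to the precise exponents in \eqref{e:probIteration} and the precise form of $c_k$.
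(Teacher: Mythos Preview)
Your proposal is correct and follows essentially the same two-step scheme as the paper: apply Lemma~\ref{l:rebalancing} with $h=\psi_{2,k+1}\one_K$ to decompose the resampling error, then feed each of the two resulting Langevin averages into Lemma~\ref{l:langevinError} (once with $h=r_k$, once with the centered product), while Corollary~\ref{cor:lbxinK} handles \eqref{eq:Xk+1inK}. The only cosmetic difference is your second test function $g=r_k(\psi_{2,k+1}\one_K-m)$ versus the paper's $h=(\psi_{2,k+1}\one_K)r_k-m$; your choice is in fact the one that literally matches the summand coming out of \eqref{e:resampling}, and both yield the same eigenfunction integral $\int(\psi_{2,k+1}\one_K-m)\psi_{2,k}\pi_{k+1}$ in $\beta_k$, so this is just a bookkeeping variation rather than a different idea.
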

\begin{proof}[Proof of Lemma~\ref{lem: iterative scheme between levels}]

The proof consists of two steps, estimating the error in the  Langevin and the resampling step, respectively.
\restartsteps

\step[Langevin step]
We apply Lemma~\ref{l:langevinError} with
\[
\epsilon=\eta_k, \quad h=r_k,\quad q_{\epsilon,0}=q_{k,0}
,
\]
to obtain that with probability larger than or equal to 
\begin{equation}\label{eq:LprobR}
    1-2\exp\paren[\Big]{-\frac{2Na_1^2}{C_r^2}}
\end{equation}
  we have
  \begin{align}
    \abs[\Big]{ \frac{1}{N}\sum_{i=1}^{N}r_k(X^{i}_{k, T})-1}
    &\leq e^{-\lambda_{2, k} T} \abs[\Big]{\int \psi_{2,k}\pi_{k+1} \, dx} \cdot\abs[\Big]{\frac{1}{N}\sum_{i=1}^{N}\psi_{2,k}\one_{K}(X^{i}_{k,0})}
    \\
    &
    + \mathcal E_{k,T}(r_k)  +a_1. \label{eq: ErrR}
\end{align}

Similarly, we apply Lemma~\ref{l:langevinError} with
\[
\epsilon=\eta_k, \quad h=\paren{\psi_{2,k+1}\one_{K}}r_k-\int_{K}\psi_{2,k+1}\pi_{k+1}\, d x,\quad q_{\epsilon,0}=q_{k,0}
\] 
to obtain that with probability greater than or equal to 
\begin{equation}\label{eq:LprobpsiR}
    1-2\exp\paren[\Big]{-\frac{Na_2^2}{2C_r^2C_{\psi}^2}}
\end{equation}
we have
\begin{multline}
 \abs[\Big]{\frac{1}{N} \sum_1^N \paren[\Big]{\psi_{2,k+1}\one_{K}(X_{k,T}^{i})-\int_{K}\psi_{2,k+1}\pi_{k+1}\, d x}r^k(X_{k,T}^{i})}\\
\overset{\eqref{e:langevinError}}{\leq}e^{-\lambda_{2, k} T}\abs[\Big]{\int \paren[\big]{\psi_{2,k+1}\one_{K}-\int_{K}\psi_{2,k+1}\pi_{k+1}\, d x}\psi_{2,k}\pi_{k+1}\, dx}\cdot\abs[\Big]{\frac{1}{N}\sum_{i=1}^{N}\psi_{2,k}\one_{K}(X^{i}_{k,0})}\\
	+\mathcal E_{k,T}(\psi_{2,k+1}r_k\one_{K}) +a_2\label{eq: ErrPsiR}
\end{multline}
where we use the fact that
\begin{equation}
    \norm{r_k\psi_{2,k+1}\one_{K}}_{\osc}\leq 2 \norm{r_k\psi_{2,k+1}\one_{K}}_{L^{\infty}}\leq 2\norm{r_k}_{L^{\infty}}\norm{\psi_{2,k+1}\one_{K}}_{L^{\infty}}\leq 2C_rC_{\psi}.
\end{equation}
 Combining \eqref{eq:LprobR}, \eqref{eq:LprobpsiR} and \eqref{eq:lbxinK},  
after the Langevin step, with probability larger than or equal
\begin{align}
   1&-N\paren[\bigg]{C_P(C_{\psi}+1)\exp\paren[\Big]{-\frac{C_{K}}{\eta_k}}+e^{-\Lambda T/2}\max_{x\in K}\sqrt{\frac{p_{k,T}(x,x)}{\pi_{k}(x)}-1}}\\
   &-2\exp\paren[\Big]{-\frac{2Na_1^2}{C_r^2}}-2\exp\paren[\Big]{-\frac{2Na_2^2}{C_{\psi}^2}} \label{eq:probI1}
\end{align}
we have \eqref{eq: ErrR} \eqref{eq: ErrPsiR} hold as well as
\begin{equation}\label{eq:XkinK}
    X_{k,T}^{i}\in K,\quad\forall i.
\end{equation}
\step[Resampling step]
Observe that \eqref{eq:XkinK} guarantees that \eqref{eq:Xk+1inK}.
  Applying Lemma~\ref{l:rebalancing} with
\[
p=\pi_{k},\quad q=\pi_{k+1},\quad h=\psi_{2,k+1}\one_{K},\quad
x^{i}=X_{k,T}^{i},\quad y^{i}=X^{i}_{k+1,0}
\]
 gives that with probability larger than or equal to
 \begin{equation}
     1- 2\exp\paren[\Big]{-\frac{Na_3^2}{2C_{\psi}^2}} \label{eq:probI2}
 \end{equation}
 we have
\begin{align}
  \MoveEqLeft
    \abs[\Big]{\frac{1}{N}\sum_{i=1}^{N}\psi_{2,k+1}\one_{K}(X_{k+1,0}^i)-\int_{K}\psi_{2,k+1}\pi_{k+1}\, d x}
      \\
      &\overset{\eqref{e:resampling}}{\leq}
	 \norm[\Big]{\psi_{2,k+1}\one_{K}-\int_{K}\psi_{2,k+1}\pi_{k+1}\, d x}_{L^\infty}  \abs[\Big]{ \frac{1}{N}\sum_{i=1}^{N}r_k(X^{i}_{k, T})-1}\\
     &+ \abs[\Big]{\frac{1}{N} \sum_1^N \paren[\Big]{\psi_{2,k+1}\one_{K}(X_{k,T}^{i})-\int_{K}\psi_{2,k+1}\pi_{k+1}\, d x}r^k(X_{k,T}^{i})}+a_3
    \label{eq: ErrK+1Decomp}
    \end{align}
where we use the fact that
\begin{equation}
    \norm{\psi_{2,k+1}\one_{K}}_{\osc}\leq 2 \norm{\psi_{2,k+1}\one_{K}}_{L^{\infty}}\leq 2\norm{\psi_{2,k+1}\one_{K}}_{L^{\infty}}\leq 2C_{\psi}.
\end{equation}
The probability \eqref{e:probIteration} is obtained by multiplying \eqref{eq:probI1} and \eqref{eq:probI2} then using the inequality~$(1-a)(1-b)>1-a-b$ for~$a,b>0$. 
 Plugging~\eqref{eq: ErrR} and~\eqref{eq: ErrPsiR} into~\eqref{eq: ErrK+1Decomp} and using~\eqref{eq: defBetak},~\eqref{eq: defck} yields~\eqref{e:iteration}, completing the proof.
\end{proof}

\subsection{Estimate of \texorpdfstring{$c_k$}{ck}} \label{sec:estCj}

The estimate of~$c_k$ is essentially a straightforward calculation. We first choose proper~$a_i$,~$i=1,2,3$ and then show how~$N$ and~$T$ can be chosen so that we obtain the bound for~$c_k$ in~\eqref{e:CBeta}. Precisely, we have the following lemma.

\begin{lemma}[Estimate of~$c_k$]\label{lem: cj}
  Fix~$\delta>0$, let~$C_{\psi}$,~$C_{r}$ be the constants defined in \eqref{eq:defCpsi} and \eqref{eq:defCr} respectively. Let 
\begin{equation}\label{eq:achoose}
    a_1\defeq \frac{\delta}{8M(C_{\psi}+1)}, \quad a_2\defeq \frac{\delta}{8M}, \quad a_3\defeq \frac{\delta}{8M}.
\end{equation}
  If~$T$ is chosen such that
  \begin{equation}
     T\geq \max\Big\{\frac{2}{\Lambda}\Big(\log\paren[\Big]{\frac{M}{\delta}}+ \frac{\hat{U}}{2\eta}+\frac{1}{8}+\log(4C_p^{\frac12}C^{\frac12}_{r}(C_{\psi}+1))\Big), 
\frac{c}{d}\log (4d)\Big\}, \label{eq:Tck}
  \end{equation}
then for every~$\eta_k$ such that
\begin{equation}\label{e:etaCriC}
    \eta_k\leq \frac{C_K}{\log\paren[\Big]{\frac{8MC_P}{\delta}}}
\end{equation}
we have
\begin{equation*}
  c_k\leq  \frac{\delta}{M}.
\end{equation*}
\end{lemma}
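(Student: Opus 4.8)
The plan is a direct bookkeeping argument. I would substitute the choices \eqref{eq:achoose} of $a_1,a_2,a_3$ into the definition \eqref{eq: defck} of $c_k$ and bound each of its six summands by a multiple of $\delta/M$ whose sum is exactly $\delta/M$. Write $B\defeq\norm{\psi_{2,k+1}\one_{K}-\int_{K}\psi_{2,k+1}\pi_{k+1}\,dx}_{L^\infty}$ for the $L^\infty$ prefactor appearing in \eqref{eq: defck}. The argument rests on three ingredients: a crude bound $B\leq C_\psi+1$; exponential-in-$1/\eta_{k+1}$ smallness of the mass-leakage term $\abs{\int_{K}\psi_{2,k+1}\pi_{k+1}\,dx}$; and a quantitative bound on the higher-order error functionals $\mathcal{E}_{k,T}(r_k)$ and $\mathcal{E}_{k,T}(\psi_{2,k+1}r_k\one_K)$ in terms of $e^{-\Lambda T/2}$.

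For the first two: since $\int\psi_{2,\epsilon}\pi_\epsilon\,dx\equiv0$, we have $\int_{K}\psi_{2,k+1}\pi_{k+1}\,dx=-\int_{K^c}\psi_{2,k+1}\pi_{k+1}\,dx$, and exactly as in \eqref{e:intKpsipi} (using Lemma~\ref{lem:piKbound}) its absolute value is at most $C_P\exp(-C_K/\eta_{k+1})$. Because $\eta_{k+1}<\eta_k$ and $\eta_k$ satisfies \eqref{e:etaCriC}, this is $\leq C_P\exp(-C_K/\eta_k)\leq\delta/(8M)\leq1$; combined with $\norm{\psi_{2,k+1}\one_K}_{L^\infty}\leq C_\psi$ from Lemma~\ref{lem:boundPsi}, this gives $B\leq C_\psi+1$. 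Hence the ``easy'' contribution $B a_1+a_2+a_3+\abs{\int_K\psi_{2,k+1}\pi_{k+1}\,dx}$ is at most $(C_\psi+1)\tfrac{\delta}{8M(C_\psi+1)}+\tfrac{\delta}{8M}+\tfrac{\delta}{8M}+\tfrac{\delta}{8M}=\tfrac{\delta}{2M}$.

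For the error functionals, recall from \eqref{eq: def of tilde epsilon} that $\mathcal{E}_{k,T}(h)=\norm{h}_{L^2(\pi_{\eta_k})}\,e^{-\Lambda T/2}\max_{x\in K}\sqrt{p_{\eta_k,T}(x,x)/\pi_{\eta_k}(x)-1}$. By \eqref{eq:poverpiK} the last factor is at most $C_p^{1/2}e^{\hat U/(2\eta_k)}(1-e^{-cT/d})^{-d/4}$, and since $T\geq\tfrac cd\log(4d)$ the bound \eqref{eq:ubPatTcri} gives $(1-e^{-cT/d})^{-d/4}\leq e^{1/8}$. As the schedule \eqref{e:chooseEtaK} is decreasing we have $\eta_k\geq\eta_M=\eta$, so $e^{\hat U/(2\eta_k)}\leq e^{\hat U/(2\eta)}$ and therefore $\mathcal{E}_{k,T}(h)\leq\norm{h}_{L^2(\pi_{\eta_k})}\,C_p^{1/2}e^{\hat U/(2\eta)}e^{1/8}e^{-\Lambda T/2}$. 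For the $L^2$-norms, a direct computation gives $\int r_k^2\,d\pi_{\eta_k}=\int(\pi_{k+1}/\pi_k)\,d\pi_{k+1}=\int r_k\,d\pi_{k+1}\leq\norm{r_k}_{L^\infty}\leq C_r$, so $\norm{r_k}_{L^2(\pi_{\eta_k})}\leq C_r^{1/2}$ and $\norm{\psi_{2,k+1}r_k\one_K}_{L^2(\pi_{\eta_k})}\leq C_\psi C_r^{1/2}$. The first branch of the choice \eqref{eq:Tck} of $T$ is precisely what forces $C_p^{1/2}e^{\hat U/(2\eta)}e^{1/8}e^{-\Lambda T/2}\leq\tfrac{\delta}{4M(C_\psi+1)C_r^{1/2}}$, whence $\mathcal{E}_{k,T}(r_k)\leq\tfrac{\delta}{4M(C_\psi+1)}$ and $\mathcal{E}_{k,T}(\psi_{2,k+1}r_k\one_K)\leq\tfrac{C_\psi\delta}{4M(C_\psi+1)}\leq\tfrac{\delta}{4M}$.

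Assembling, $c_k=B(\mathcal{E}_{k,T}(r_k)+a_1)+\mathcal{E}_{k,T}(\psi_{2,k+1}r_k\one_K)+a_2+a_3+\abs{\int_K\psi_{2,k+1}\pi_{k+1}\,dx}\leq B\mathcal{E}_{k,T}(r_k)+\tfrac{\delta}{2M}+\tfrac{\delta}{4M}\leq\tfrac{\delta}{4M}+\tfrac{\delta}{2M}+\tfrac{\delta}{4M}=\tfrac{\delta}{M}$, using $B\leq C_\psi+1$ once more on the first term. I do not anticipate a genuine obstacle, since the lemma is a calculation; the only steps requiring a little care are the $L^2(\pi_{\eta_k})$-norm identity $\int r_k^2\,d\pi_{\eta_k}=\int r_k\,d\pi_{k+1}$ (which produces the sharper $C_r^{1/2}$ rather than $C_r$) and the monotonicity $\eta_k\geq\eta$, which is exactly what allows the $k$-dependent factor $e^{\hat U/(2\eta_k)}$ to be absorbed into the $\hat U/(2\eta)$ term of \eqref{eq:Tck}. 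The real ``difficulty'' is therefore only keeping the six-way split of the $\delta/M$ budget consistent with the prescribed constants in \eqref{eq:achoose} and \eqref{eq:Tck}.
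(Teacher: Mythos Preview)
Your proposal is correct and follows essentially the same bookkeeping route as the paper: bound $B\leq C_\psi+1$, use \eqref{e:intKpsipi} and the temperature condition for the mass-leakage term, and control the two $\mathcal E_{k,T}$ terms via \eqref{eq:poverpiK}, \eqref{eq:ubPatTcri}, the monotonicity $\eta_k\geq\eta$, and the $L^2(\pi_k)$ norms of $r_k$ and $\psi_{2,k+1}r_k\one_K$. The only cosmetic difference is that the paper bounds $\norm{\psi_{2,k+1}r_k\one_K}_{L^2(\pi_k)}$ by $C_r^{1/2}$ directly (writing $r_k^2\pi_k=r_k\pi_{k+1}$ and using $\norm{\psi_{2,k+1}}_{L^2(\pi_{k+1})}=1$) rather than your $C_\psi C_r^{1/2}$, but either route lands on the same $\delta/(4M)$ bound and the six-way split sums to $\delta/M$.
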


\begin{proof}[Proof of Lemma~\ref{lem: cj}]
The bound of~$c_k$ requires several a priori estimates.
\restartsteps
\step[Bound~$\mathcal E_{k,T}$] When~$T$ satisfies \eqref{eq:Tck}, we have
\begin{equation}
    e^{-\Lambda T/2}\leq \frac{\delta}{M}\exp\paren[\Big]{\frac{\hat{U}}{2\eta_k}}  e^{-\frac{1}{8}}\frac{1}{4C_p^{\frac12}C_r^{\frac12}(C_{\psi}+1)}
\end{equation}
which implies that
\begin{equation}\label{eq:eLambdaTmaxpoverpi}
    e^{-\Lambda T/2}\max_{x\in K}\sqrt{\frac{p_{k,T}(x,x)}{\pi(x)}-1}\overset{\eqref{eq:poverpiK},\eqref{eq:ubPatTcri}}{\leq} \frac{\delta}{4MC_r^{\frac12}(C_{\psi}+1)}. 
\end{equation}
Observe that that 
\begin{equation}
    \norm{\psi_{2,k+1}r_k\one_{K}}_{L^2(\pi_k)}\leq \norm{\psi_{2,k+1}r_k}_{L^2(\pi_k)}\leq \norm{\psi_{2,k+1}}_{L^2(\pi_{k+1})}\norm{r_k}^{\frac12}_{L^{\infty}(\pi_k)}\overset{\eqref{eq:defCr}}{\leq} C_r^{\frac12}
\end{equation}
which implies that
\begin{equation}
    \mathcal E_{k,T}(\psi_{2,k+1}r_k\one_{K})\overset{\eqref{eq: def of tilde epsilon}}{\leq}C_r^{\frac12}e^{-\Lambda T/2}\max_{x\in K}\sqrt{\frac{p_{\epsilon,T}(x,x)}{\pi(x)}-1} \overset{\eqref{eq:eLambdaTmaxpoverpi}}{\leq}\frac{\delta}{4M(C_{\psi}+1)}. \label{eq:ckT3}
\end{equation}
Similarly, the fact that 
\begin{equation}
    \norm{r_k\one_{K}}_{L^2(\pi_k)}\leq \norm{r_k}^{\frac12}_{L^{\infty}(\pi_k)}\overset{\eqref{eq:defCr}}{\leq} C_r^{\frac12}
\end{equation}
implies 
\begin{equation}\label{eq:ckT2}
    \mathcal E_{k,T}(r_k)\leq \frac{\delta}{4M(C_{\psi}+1)}.
\end{equation}
\step
Observe that
\begin{align}
  \norm[\Big]{\psi_{2,k+1}\one_{K}-\int_{K}\psi_{2,k+1}\pi_{k+1}\, d x}_{L^\infty}
   &\leq \norm{\psi_{2,k+1}}_{L^{\infty}(K)}+\abs[\Big]{\int_{K}\psi_{2,k+1}\pi_{k+1}\, d x}\\
   &\overset{\mathclap{\eqref{eq:defCpsi}}}{\leq}C_{\psi}+\norm{\psi_{2,k+1}}_{L^2(\pi_{k+1})}=C_{\psi}+1. \label{eq:ckT1}
\end{align}
And when~$\eta_k$ satisfies \eqref{e:etaCriC},
\begin{equation}\label{eq:ckT5}
   \abs[\Big]{\int_{K}\psi_{2,k+1}\pi_{k+1}\, d x}\overset{\eqref{e:intKpsipi}}{\leq} C_P\exp\paren{-\frac{C_K}{\eta_{k+1}}}\leq \frac{\delta}{8M}.
\end{equation}
Therefore, when \eqref{eq:achoose}, \eqref{eq:Tck} and \eqref{e:etaCriC} holds,
\begin{align}
    c_k\quad&\overset{\mathclap{\eqref{eq: defck},\eqref{eq:ckT1},\eqref{eq:ckT5}}}{\leq} \quad (C_{\psi}+1)\cdot\Big(\mathcal E_{k,T}(r_k)  +a_1\Big)+\mathcal E_{k,T}(\psi_{2,k+1}r_k\one_{K}) +a_2+a_3+ \frac{\delta}{8M}\\
    &\overset{\mathclap{\eqref{eq:achoose},\eqref{eq:ckT3},\eqref{eq:ckT2}}}{\leq} \qquad \quad \frac{3\delta}{8M}+\frac{\delta}{4M}+\frac{\delta}{8M}+\frac{\delta}{8M}+\frac{\delta}{8M}=\frac{\delta}{M}. \qedhere
\end{align}

\end{proof}

\subsection{Estimate of \texorpdfstring{$\beta_k$}{betak}} \label{sec: est coeff}

Recall from~\eqref{e:iteration}, the error grows by a factor of~$\beta_k$ at each level, and so to prove Theorem~\ref{thm: main} we need to ensure~$\prod \beta_k$ remains bounded.
The main result in this section (Lemma~\ref{lem: betaj}, below) obtains this bound and shows that the first inequality in~\eqref{e:CBeta} holds.
For simplicity of notation, let
\begin{multline}\label{eq:defTheta}
    \Theta(k,k+1)\defeq \abs[\Big]{\int \psi_{2,k}\pi_{k+1} \, dx}\cdot\norm[\Big]{\psi_{2,k+1}\one_{K}-\int_{K}\psi_{2,k+1}\pi_{k+1}\, d x}_{L^\infty} 
    \\
 \mathbin{+}\abs[\Big]{\int \paren[\big]{\psi_{2,k+1}\one_{K}-\int_{K}\psi_{2,k+1}\pi_{k+1}\, d x}\psi_{2,k}\pi_{k+1}\, dx} ,
\end{multline}
and note
\[
\beta_k=e^{-\lambda_{2,k}T}\Theta(k,k+1).
\]
When we bound~$\prod_{j=k}^{M-1}\beta_j$ in the low temperature regime, the exponential factor~$e^{-\lambda_{2,k}T}$ is very close to~$1$, and does not help much.
Thus we show that the product~$\prod_{j=k}^{M-1}\Theta(j,j+1)$ stays bounded, by approximating~$\Theta(k,k+1)$ in terms of the mass in each well and estimating the mass distribution using small temperature asymptotics.

We use Proposition~\ref{prop_kolo_chap8_prop2.2} to estimate the two integration terms appearing in~$\Theta(k,k+1)$.
The bounds we need are stated in the next two lemmas, and their proofs are identical to those of \cite[lemma 7.5, Lemma 7.6]{han2025polynomialcomplexitysamplingmultimodal}. We omit their proof here.
\begin{lemma}[Lemma 7.5 in \cite{han2025polynomialcomplexitysamplingmultimodal}]\label{lem: integral of two eigenfunction from different levels}
  
Let~$\epsilon'<\epsilon$ and define~$r_\epsilon$ by
  \begin{equation}\label{e:rdefEps}
    r_{\epsilon}\defeq\frac{\pi_{\epsilon'}}{\pi_{\epsilon}}
    \,.
  \end{equation}
  Then,
\begin{align}
    \norm{\psi_{2,\epsilon}}_{L^2(\pi_{\epsilon'})}
     &\leq \|r_{\epsilon}\|^{\frac12}_{L^{\infty}}d(E_{\epsilon},F_{\epsilon})\\
  \label{eq:intPsiPsiPi} 
    &\qquad +\paren[\bigg]{1+\paren[\Big]{\frac{\pi_{\epsilon}(\Omega_2)}{\pi_{\epsilon}(\Omega_1)}-\frac{\pi_{\epsilon}(\Omega_1)}{\pi_{\epsilon}(\Omega_2)}}(\pi_{\epsilon'}(\Omega_1)-\pi_{\epsilon}(\Omega_1))}^{\frac12}. 
\end{align}
\end{lemma}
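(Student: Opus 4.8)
The plan is to exploit the orthogonal decomposition of $\psi_{2,\epsilon}$ in $L^2(\pi_\epsilon)$ relative to the subspace $E_{\epsilon}$ from~\eqref{eq:defEepsFeps}, namely $\psi_{2,\epsilon} = \upsilon_{\epsilon} + a_{1,\epsilon}\one_{\Omega_1} + a_{2,\epsilon}\one_{\Omega_2}$ with $\upsilon_{\epsilon}\in E_{\epsilon}^{\perp}$, and to estimate the two pieces separately after changing the reference measure from $\pi_\epsilon$ to $\pi_{\epsilon'}$ via the triangle inequality
\[
  \norm{\psi_{2,\epsilon}}_{L^2(\pi_{\epsilon'})} \leq \norm{\upsilon_{\epsilon}}_{L^2(\pi_{\epsilon'})} + \norm{a_{1,\epsilon}\one_{\Omega_1} + a_{2,\epsilon}\one_{\Omega_2}}_{L^2(\pi_{\epsilon'})}.
\]

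For the error part, since $\pi_{\epsilon'} = r_{\epsilon}\pi_{\epsilon}$ by~\eqref{e:rdefEps} I would write $\norm{\upsilon_{\epsilon}}_{L^2(\pi_{\epsilon'})}^2 = \int \upsilon_{\epsilon}^2\, r_{\epsilon}\, d\pi_{\epsilon} \leq \norm{r_{\epsilon}}_{L^{\infty}}\norm{\upsilon_{\epsilon}}_{L^2(\pi_{\epsilon})}^2$, and then invoke~\eqref{eqVkLessThanDEkFk}, which bounds $\norm{\upsilon_{\epsilon}}_{L^2(\pi_{\epsilon})}$ by $d(E_{\epsilon},F_{\epsilon})$. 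This produces the first summand $\norm{r_{\epsilon}}_{L^{\infty}}^{1/2}\,d(E_{\epsilon},F_{\epsilon})$ in the claimed bound.

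The main computation is the second summand. Using that $\Omega_1$ and $\Omega_2$ are disjoint and, up to a Lebesgue‑null set, cover $\R^d$ — so that $\pi_\epsilon(\Omega_1)+\pi_\epsilon(\Omega_2)=\pi_{\epsilon'}(\Omega_1)+\pi_{\epsilon'}(\Omega_2)=1$ — I would set $m_i \defeq \pi_\epsilon(\Omega_i)$. The normalization $\int \psi_{2,\epsilon}\,d\pi_\epsilon = 0$ together with $\upsilon_\epsilon\perp 1$ in $L^2(\pi_\epsilon)$ forces $a_{1,\epsilon}m_1 + a_{2,\epsilon}m_2 = 0$; combining this with $a_{1,\epsilon}^2 m_1 + a_{2,\epsilon}^2 m_2 = b_\epsilon^2$ from~\eqref{eq: ak eq1} and $m_1+m_2=1$ gives the closed forms $a_{1,\epsilon}^2 = b_\epsilon^2\, m_2/m_1$ and $a_{2,\epsilon}^2 = b_\epsilon^2\, m_1/m_2$. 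Writing $\Delta \defeq \pi_{\epsilon'}(\Omega_1) - \pi_\epsilon(\Omega_1)$, so that $\pi_{\epsilon'}(\Omega_2) = m_2 - \Delta$, a direct expansion yields
\[
  \norm{a_{1,\epsilon}\one_{\Omega_1} + a_{2,\epsilon}\one_{\Omega_2}}_{L^2(\pi_{\epsilon'})}^2
    = a_{1,\epsilon}^2\,\pi_{\epsilon'}(\Omega_1) + a_{2,\epsilon}^2\,\pi_{\epsilon'}(\Omega_2)
    = b_\epsilon^2\paren[\Big]{1 + \paren[\Big]{\tfrac{m_2}{m_1} - \tfrac{m_1}{m_2}}\Delta}.
\]
Since the left-hand side is nonnegative the bracketed quantity is $\geq 0$, and since $b_\epsilon\leq 1$ by~\eqref{eq: ak eq1} I may drop the factor $b_\epsilon^2$, arriving at exactly the square of the second summand in the claim. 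Adding the two bounds finishes the proof.

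I do not expect a genuine obstacle: the only two points needing a little care are (i) the identity $\pi_\epsilon(\Omega_1)+\pi_\epsilon(\Omega_2)=1$, which holds because under Assumptions~\ref{a:criticalpts} and~\ref{assumption: nondegeneracy} the gradient flow $\dot y = -\nabla U(y)$ converges to a local minimum for Lebesgue‑a.e.\ initial point; and (ii) the elementary algebra of solving for $a_{i,\epsilon}$ and expanding the quadratic. This is the same argument as in \cite[Lemma~7.5]{han2025polynomialcomplexitysamplingmultimodal}, so I would simply reproduce it.
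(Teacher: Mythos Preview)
Your proposal is correct and is precisely the argument the paper has in mind: the paper omits the proof, referring to \cite[Lemma~7.5]{han2025polynomialcomplexitysamplingmultimodal}, but the decomposition $\psi_{2,\epsilon}=\upsilon_\epsilon+a_{1,\epsilon}\one_{\Omega_1}+a_{2,\epsilon}\one_{\Omega_2}$, the bound $\norm{\upsilon_\epsilon}_{L^2(\pi_\epsilon)}\leq d(E_\epsilon,F_\epsilon)$ from~\eqref{eqVkLessThanDEkFk}, and the identities leading to~\eqref{eq: ak sol} that you use are exactly the tools set up in Section~\ref{sec:boundPsi} for this purpose. Your algebraic computation of $a_{1,\epsilon}^2\pi_{\epsilon'}(\Omega_1)+a_{2,\epsilon}^2\pi_{\epsilon'}(\Omega_2)$ and the observation $b_\epsilon\leq 1$ complete the proof as intended.
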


\begin{lemma}[Lemma 7.6 in \cite{han2025polynomialcomplexitysamplingmultimodal}]\label{lem: integral of eigenfunction on different measure}
    
    Let~$\epsilon'<\epsilon$. Then
    \begin{align}
      \MoveEqLeft
      \Big|\int \psi_{2,\epsilon}\pi_{\epsilon'}\, d x\Big|\\
       &\leq  \bigg( \frac{\sqrt{\pi_{\epsilon}(\Omega_2)}}{\sqrt{\pi_{\epsilon}(\Omega_1)}}+\frac{\sqrt{\pi_{\epsilon}(\Omega_1)}}{\sqrt{\pi_{\epsilon}(\Omega_2)}}\bigg)\cdot |\pi_{\epsilon'}(\Omega_1)-\pi_{\epsilon}(\Omega_1)|+d(E_{\epsilon},F_{\epsilon})\|r_{\epsilon}\|_{L^{\infty}(\pi_{\epsilon})}.  \label{eq:intPsiPi}
    \end{align}
\end{lemma}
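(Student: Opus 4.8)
The plan is to reduce the statement to bookkeeping with the two basin masses, using the orthogonal decomposition of $\psi_{2,\epsilon}$ in $L^2(\pi_\epsilon)$ that is already set up just before Lemma~\ref{lem: boundedness of eigenfunction around local minima}. Write
\[
  \psi_{2,\epsilon} = \upsilon_\epsilon + a_{1,\epsilon}\one_{\Omega_1} + a_{2,\epsilon}\one_{\Omega_2},
\]
with $\upsilon_\epsilon \in E_\epsilon^{\perp}$. I will use three facts, all available from the excerpt: (i) $\norm{\upsilon_\epsilon}_{L^2(\pi_\epsilon)} \le d(E_\epsilon, F_\epsilon)$, by \eqref{eqVkLessThanDEkFk}; (ii) the mean-zero relation $a_{1,\epsilon}\pi_\epsilon(\Omega_1) + a_{2,\epsilon}\pi_\epsilon(\Omega_2) = 0$, which follows from $\int \psi_{2,\epsilon}\pi_\epsilon\, dx = 0$ and $\upsilon_\epsilon \perp \one_{\Omega_j}$; and (iii) the coefficient bounds $|a_{1,\epsilon}| \le \sqrt{\pi_\epsilon(\Omega_2)/\pi_\epsilon(\Omega_1)}$ and $|a_{2,\epsilon}| \le \sqrt{\pi_\epsilon(\Omega_1)/\pi_\epsilon(\Omega_2)}$, obtained by combining the normalization $b_\epsilon \le 1$ recorded after \eqref{eq: ak eq1} with (ii). I also use the identity $\pi_\epsilon(\Omega_1) + \pi_\epsilon(\Omega_2) = 1$, valid for every $\epsilon$ because under Assumption~\ref{assumption: nondegeneracy} the complement $\R^d \setminus (\Omega_1 \cup \Omega_2)$ is the stable manifold of the unique index-one saddle $s_{1,2}$ and is therefore Lebesgue-null.

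Granting this, I would split
\[
  \int \psi_{2,\epsilon}\,\pi_{\epsilon'}\, dx = \int \upsilon_\epsilon\,\pi_{\epsilon'}\, dx + a_{1,\epsilon}\,\pi_{\epsilon'}(\Omega_1) + a_{2,\epsilon}\,\pi_{\epsilon'}(\Omega_2)
\]
and estimate the two groups of terms separately. For the $\upsilon_\epsilon$ term, rewrite $\pi_{\epsilon'} = r_\epsilon\,\pi_\epsilon$ with $r_\epsilon$ as in \eqref{e:rdefEps} and use $\abs{\int \upsilon_\epsilon\, r_\epsilon\, \pi_\epsilon\, dx} \le \norm{r_\epsilon}_{L^\infty(\pi_\epsilon)}\norm{\upsilon_\epsilon}_{L^1(\pi_\epsilon)} \le \norm{r_\epsilon}_{L^\infty(\pi_\epsilon)}\norm{\upsilon_\epsilon}_{L^2(\pi_\epsilon)} \le \norm{r_\epsilon}_{L^\infty(\pi_\epsilon)}\, d(E_\epsilon, F_\epsilon)$, where the middle inequality holds because $\pi_\epsilon$ is a probability measure and the last is (i); this is exactly the last term of \eqref{eq:intPsiPi}. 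For the basin terms, subtract the mean-zero relation (ii) and use $\pi_{\epsilon'}(\Omega_1) + \pi_{\epsilon'}(\Omega_2) = 1 = \pi_\epsilon(\Omega_1) + \pi_\epsilon(\Omega_2)$ to obtain
\[
  a_{1,\epsilon}\pi_{\epsilon'}(\Omega_1) + a_{2,\epsilon}\pi_{\epsilon'}(\Omega_2)
    = a_{1,\epsilon}\bigl(\pi_{\epsilon'}(\Omega_1) - \pi_\epsilon(\Omega_1)\bigr) + a_{2,\epsilon}\bigl(\pi_{\epsilon'}(\Omega_2) - \pi_\epsilon(\Omega_2)\bigr)
    = (a_{1,\epsilon} - a_{2,\epsilon})\bigl(\pi_{\epsilon'}(\Omega_1) - \pi_\epsilon(\Omega_1)\bigr),
\]
and then bound $|a_{1,\epsilon} - a_{2,\epsilon}| \le |a_{1,\epsilon}| + |a_{2,\epsilon}| \le \sqrt{\pi_\epsilon(\Omega_2)/\pi_\epsilon(\Omega_1)} + \sqrt{\pi_\epsilon(\Omega_1)/\pi_\epsilon(\Omega_2)}$ by (iii). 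Adding the two bounds yields \eqref{eq:intPsiPi}.

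I do not expect a genuine obstacle: this is a short Cauchy--Schwarz estimate together with algebra in two scalars, and indeed it is \cite[Lemma 7.6]{han2025polynomialcomplexitysamplingmultimodal} verbatim. The only point deserving a line of justification is the identity $\pi_\epsilon(\Omega_1) + \pi_\epsilon(\Omega_2) = 1$, i.e. that the separatrix carries no mass, which is where Assumption~\ref{assumption: nondegeneracy} enters; one must also keep the inequalities oriented correctly since $\epsilon' < \epsilon$ makes $r_\epsilon = \pi_{\epsilon'}/\pi_\epsilon$ large but still finite in $L^\infty$. The companion estimate Lemma~\ref{lem: integral of two eigenfunction from different levels} is proved in the same way, the difference being that there one retains $\norm{\upsilon_\epsilon}_{L^2(\pi_\epsilon)}^2 = 1 - b_\epsilon^2$ and expands $b_\epsilon^2 = \sum_j a_{j,\epsilon}^2\pi_\epsilon(\Omega_j)$ via \eqref{eq: ak eq1} instead of discarding it.
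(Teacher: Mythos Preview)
Your proposal is correct and matches the intended argument: the paper omits the proof, pointing to \cite[Lemma~7.6]{han2025polynomialcomplexitysamplingmultimodal}, and your decomposition $\psi_{2,\epsilon}=\upsilon_\epsilon+a_{1,\epsilon}\one_{\Omega_1}+a_{2,\epsilon}\one_{\Omega_2}$ followed by the $L^1$--$L^\infty$ bound on the $\upsilon_\epsilon$ term and the two-scalar algebra on the basin masses is precisely the route that setup is designed for. Your derivation of the sharper coefficient bounds $|a_{j,\epsilon}|\le\sqrt{\pi_\epsilon(\Omega_{3-j})/\pi_\epsilon(\Omega_j)}$ from $b_\epsilon\le1$ together with the mean-zero relation is the right refinement over the cruder \eqref{eq: ak sol}, and the remark on $\pi_\epsilon(\Omega_1)+\pi_\epsilon(\Omega_2)=1$ is the one nontrivial point worth flagging.
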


To apply the previous two results we need to ensure the masses in the two wells stay away from~$0$ (Assumption~\ref{a:massRatioBound}), and do not oscillate too much. The required oscillation condition holds provided~$U$ satisfies Assumption~\ref{a:criticalpts} holds, as the following lemma states. 

\begin{lemma}[Lemma 7.7 in \cite{han2025polynomialcomplexitysamplingmultimodal}]\label{lem: mass in well as BV}
  If~$U$ satisfies Assumption~\ref{a:criticalpts}, then there exists a constant~$C_{\BV}$ such that such that for every~$\eta\in (0,1)$, and every~$i\in \set{1,2}$ we have
\begin{equation}\label{eq:massBV}
	 \int_{\eta}^{1}\abs{\partial_{\epsilon}\pi_{\epsilon}(\Omega_i)}\, d \epsilon\leq C_{\BV}.
       \end{equation}
\end{lemma}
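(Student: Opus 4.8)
The plan is to show that the map $\epsilon \mapsto \pi_\epsilon(\Omega_i)$ has bounded variation on $[\eta,1]$ with a bound independent of $\eta$, by differentiating in $\epsilon$ and controlling the sign changes of the derivative. First I would compute $\partial_\epsilon \pi_\epsilon(\Omega_i)$ explicitly. Writing $\pi_\epsilon(\Omega_i) = Z_\epsilon^{-1}\int_{\Omega_i} e^{-U(x)/\epsilon}\,dx$, a direct differentiation gives
\begin{equation}\label{e:dpi}
  \partial_\epsilon \pi_\epsilon(\Omega_i)
  = \frac{1}{\epsilon^2}\Big( \int_{\Omega_i} U(x)\,d\pi_\epsilon(x) - \pi_\epsilon(\Omega_i)\int_{\R^d} U(x)\,d\pi_\epsilon(x)\Big)
  = \frac{1}{\epsilon^2}\,\operatorname{Cov}_{\pi_\epsilon}\!\big(\one_{\Omega_i},\, U\big).
\end{equation}
So $\partial_\epsilon\pi_\epsilon(\Omega_i)$ is, up to the factor $\epsilon^{-2}$, the covariance under $\pi_\epsilon$ between the indicator of the $i$-th basin and the energy $U$. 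The key structural observation is that $\operatorname{Cov}_{\pi_\epsilon}(\one_{\Omega_i},U) = \operatorname{Cov}_{\pi_\epsilon}(\one_{\Omega_i},\, U - t)$ for any constant $t$, and in particular we may center $U$ at a threshold separating the two wells.

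The main idea for bounding the total variation is to argue that the derivative \eqref{e:dpi} changes sign at most a bounded number of times (in fact, in the double-well case, at most once after centering appropriately), plus a boundary contribution. Concretely, since $U$ has exactly two wells with the global minimum normalized to $U(x_{\min,1})=0$ and $U(x_{\min,2})\geq 0$, and since $\Omega_1$ contains the deeper (or equal) well, for small $\epsilon$ the mass concentrates so that $\pi_\epsilon(\Omega_i)$ is monotone near $\epsilon = 0$; I would use the Laplace/Kolokoltsov small-temperature asymptotics (the same estimate \eqref{eq:Zepsilon} already invoked in the proof of Lemma~\ref{cor:poverpi}) to pin down $\pi_\epsilon(\Omega_i)$ up to $O(\epsilon)$ on $(0,\epsilon_0]$ for some fixed $\epsilon_0$, which gives $\int_\eta^{\epsilon_0}|\partial_\epsilon\pi_\epsilon(\Omega_i)|\,d\epsilon \le C$ uniformly in $\eta$ because on that interval the integrand is either of one sign or has a single sign change, so the integral telescopes to a difference of endpoint values, all bounded by $1$. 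On the compact remaining interval $[\epsilon_0,1]$, $\epsilon\mapsto\pi_\epsilon(\Omega_i)$ is a smooth function on a compact set (smoothness follows from dominated convergence using Assumption~\ref{a:criticalpts}, which via $\liminf_{|x|\to\infty}|\nabla U|\ge C_U$ guarantees $U$ grows at least linearly at infinity so all the relevant moments $\int U^k e^{-U/\epsilon}$ are finite and locally uniformly bounded), hence has bounded derivative there, contributing at most $(1-\epsilon_0)\sup_{[\epsilon_0,1]}|\partial_\epsilon\pi_\epsilon(\Omega_i)| =: C$. Summing the two pieces gives \eqref{eq:massBV}.

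The step I expect to be the main obstacle is making rigorous the claim that on $(0,\epsilon_0]$ the derivative has a controlled number of sign changes — i.e. that $\pi_\epsilon(\Omega_i)$ is eventually monotone as $\epsilon\downarrow 0$. The cleanest route is not to count sign changes directly but to use the sharp asymptotic expansion: by the Laplace method under Assumption~\ref{a:criticalpts} and Assumption~\ref{assumption: nondegeneracy}, one has $\pi_\epsilon(\Omega_2) = \frac{c_2}{c_1}e^{-U(x_{\min,2})/\epsilon}\big(1+O(\epsilon)\big)$ (with $c_i$ the Hessian-determinant prefactors) when $U(x_{\min,2})>0$, which is manifestly monotone decreasing for small $\epsilon$ up to a lower-order correction; and when $U(x_{\min,2})=0$ the ratio $\pi_\epsilon(\Omega_2)/\pi_\epsilon(\Omega_1)$ tends to the constant $c_2/c_1$, and one controls the $O(\epsilon)$ correction term's derivative directly. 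In either case $\int_\eta^{\epsilon_0}|\partial_\epsilon\pi_\epsilon(\Omega_i)|\,d\epsilon$ is bounded by a constant times the variation of the explicit leading term plus the $L^1$ norm of the derivative of the $O(\epsilon)$ remainder, both finite and $\eta$-independent. Since this lemma is quoted verbatim from \cite[Lemma 7.7]{han2025polynomialcomplexitysamplingmultimodal} and the only change from that setting is the domain, I would in fact simply invoke that reference, noting that the proof there uses only Assumption~\ref{a:criticalpts} (the Laplacian bound and the gradient lower bound at infinity), which holds identically on $\R^d$, together with the small-temperature asymptotics for $Z_\epsilon$ that are already recalled in \eqref{eq:Zepsilon}.
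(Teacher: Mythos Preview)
The paper does not actually prove this lemma: immediately after stating it, the paper writes ``The proof of Lemma~\ref{lem: mass in well as BV} is identical to that of Lemma~7.7 in \cite{han2025polynomialcomplexitysamplingmultimodal} and is thus omitted here.'' Your proposal ultimately arrives at the same place --- invoking the reference and noting that the proof carries over to $\R^d$ because it only uses Assumption~\ref{a:criticalpts} --- so in that sense your conclusion matches the paper exactly.

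The additional sketch you provide (the covariance identity $\partial_\epsilon\pi_\epsilon(\Omega_i)=\epsilon^{-2}\operatorname{Cov}_{\pi_\epsilon}(\one_{\Omega_i},U)$, the split at a fixed $\epsilon_0$, smoothness on $[\epsilon_0,1]$, and Laplace asymptotics on $(0,\epsilon_0]$) is a plausible outline and may well be close in spirit to what the cited reference does, but since the present paper gives no proof there is nothing here to compare it against. You correctly identify the delicate point --- controlling sign changes of the derivative near $\epsilon=0$ --- and your fallback to the reference is exactly what the paper does.
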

The proof of Lemma~\ref{lem: mass in well as BV} is identical to that of Lemma~7.7 in \cite{han2025polynomialcomplexitysamplingmultimodal} and is thus omitted here. We now bound~$\prod \beta_j$ to obtain the first inequality in~\eqref{e:CBeta}.

\begin{lemma}[Estimate of~$\beta_j$]\label{lem: betaj}

Define~$k_0$ by
\begin{equation}\label{eq:defEtacr}
    k_0\defeq\min\Bigl\{2\leq k\leq M-1 \: | \: \eta_k\leq \frac{C_K}{\log\paren{{C}_{\gamma}M}}\Bigr\},
\end{equation}
where~$C_\gamma$ is the constant defined in \eqref{e:dEF} when~$\gamma\defeq C_K$.
If at each step~$T>0$,
then for~$k_0\leq k\leq M-1$, the inequality \eqref{e:CBeta}
  holds with
\begin{equation}\label{eq: defCbeta}
  C_{\beta}\defeq \exp\paren[\Big]{C_{\BV}\paren[\Big]{2C_m (C_{\psi}+2)+C_m^2}+ (C_{\psi}+2)C_r+C_r^{\frac12}}.
\end{equation}
  Here~$C_{r}$,~$C_{m}$,~$C_{\psi}$ and~$C_{\BV}$ are the constants defined in
~\eqref{eq:defCr},
~\eqref{e:massRatioBound},
~\eqref{eq:defCpsi},
  and~\eqref{eq:massBV},
  respectively.
\end{lemma}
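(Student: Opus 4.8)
The plan is to bound $\prod_{j=k}^{M-1}\beta_j = \prod_{j=k}^{M-1} e^{-\lambda_{2,j}T}\Theta(j,j+1)$ by dropping the harmless exponential prefactor (since $\lambda_{2,j}\ge 0$) and controlling $\prod_{j=k}^{M-1}\Theta(j,j+1)$. First I would expand $\Theta(j,j+1)$ using the definitions: the truncation term $\int_K\psi_{2,j+1}\pi_{j+1}\,dx$ is exponentially small in $1/\eta_{j+1}$ by \eqref{e:intKpsipi}, and $\norm{\psi_{2,j+1}\one_K}_{L^\infty}\le C_\psi$ by Lemma~\ref{lem:boundPsi}, so that $\norm{\psi_{2,j+1}\one_K - \int_K\psi_{2,j+1}\pi_{j+1}}_{L^\infty}\le C_\psi+1$. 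This reduces $\Theta(j,j+1)$ to roughly $(C_\psi+2)\abs{\int\psi_{2,j}\pi_{j+1}\,dx} + \norm{\psi_{2,j}}_{L^2(\pi_{j+1})}$ modulo an exponentially small error (one can absorb the truncation constant $\abs{\int_K\psi_{2,j+1}\pi_{j+1}}$ into the error since it decays like $e^{-C_K/\eta_{j+1}}$ and $\eta_{j+1}$ is below the threshold in \eqref{eq:defEtacr}).

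Next I would invoke Lemma~\ref{lem: integral of two eigenfunction from different levels} and Lemma~\ref{lem: integral of eigenfunction on different measure} with $\epsilon=\eta_j$, $\epsilon'=\eta_{j+1}$, so that $r_\epsilon = r_j$ and $\norm{r_j}_{L^\infty}\le C_r$. These give
\[
\abs[\Big]{\int\psi_{2,j}\pi_{j+1}\,dx}\le \Big(\tfrac{\sqrt{\pi_j(\Omega_2)}}{\sqrt{\pi_j(\Omega_1)}}+\tfrac{\sqrt{\pi_j(\Omega_1)}}{\sqrt{\pi_j(\Omega_2)}}\Big)\abs{\pi_{j+1}(\Omega_1)-\pi_j(\Omega_1)} + C_r^{1/2}d(E_{\eta_j},F_{\eta_j}),
\]
and a similar bound for $\norm{\psi_{2,j}}_{L^2(\pi_{j+1})}$ with a leading $1$. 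By Assumption~\ref{a:massRatioBound} the mass ratios $\sqrt{\pi_j(\Omega_i)}/\sqrt{\pi_j(\Omega_{3-i})}$ are bounded by $C_m$, so each $\Theta(j,j+1)\le 1 + 2C_m(C_\psi+2)\abs{\pi_{j+1}(\Omega_1)-\pi_j(\Omega_1)} + C_m^2\abs{\pi_{j+1}(\Omega_1)-\pi_j(\Omega_1)} + (C_\psi+2)C_r^{1/2}d(E_{\eta_j},F_{\eta_j}) + (\text{more terms of the same type})$, where I collect all the $d(E,F)$-contributions; by Proposition~\ref{prop_kolo_chap8_prop2.2} with $\gamma = C_K$ and the definition of $k_0$ in \eqref{eq:defEtacr}, each $d(E_{\eta_j},F_{\eta_j})\le C_\gamma e^{-C_K/\eta_j}\le 1/M$ for $j\ge k_0$, so these sum to at most a constant.

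Finally I would pass from the product to a sum via $\prod_j(1+x_j)\le \exp(\sum_j x_j)$, valid since every $x_j\ge 0$. The telescoping-type sum $\sum_{j=k}^{M-1}\abs{\pi_{j+1}(\Omega_1)-\pi_j(\Omega_1)}$ is a Riemann-sum approximation of the total variation $\int_\eta^1\abs{\partial_\epsilon\pi_\epsilon(\Omega_1)}\,d\epsilon$, which is bounded by $C_{\BV}$ by Lemma~\ref{lem: mass in well as BV} (one uses here that $\epsilon\mapsto\pi_\epsilon(\Omega_1)$ is monotone on each piece, or more carefully that the linearly-spaced $1/\eta_k$ schedule makes the sum of increments dominated by the BV norm). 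The $d(E,F)$ terms sum to a constant, and the exponentially-small truncation errors sum to a constant as well. Collecting all constants gives exactly the form of $C_\beta$ in \eqref{eq: defCbeta}. The main obstacle I anticipate is the bookkeeping: keeping track of all the $C_\psi$, $C_r$, $C_m$, $C_{\BV}$ factors through the three-way split of $\Theta(j,j+1)$ and making sure the Riemann sum genuinely telescopes against the BV bound rather than merely being comparable to it — this requires care with the monotonicity/partition argument but is not conceptually deep, since it is essentially identical to the torus argument in \cite{han2025polynomialcomplexitysamplingmultimodal} with the extra $\one_K$ truncation handled by the exponentially small corrections above.
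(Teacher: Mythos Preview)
Your proposal is correct and follows essentially the same route as the paper: decompose $\Theta(j,j+1)$ into $(C_\psi+2)\,\bigl|\int\psi_{2,j}\pi_{j+1}\bigr|$ plus $\norm{\psi_{2,j}}_{L^2(\pi_{j+1})}$, apply Lemmas~\ref{lem: integral of two eigenfunction from different levels} and~\ref{lem: integral of eigenfunction on different measure} together with $d(E_{\eta_j},F_{\eta_j})\le 1/M$, and then pass to an exponential bound on the product using the BV control~\eqref{eq:massBV}. The only cosmetic difference is that the paper converts the product to an exponential via AM--GM followed by $(1+y/n)^n\le e^y$, whereas you use $\prod(1+x_j)\le e^{\sum x_j}$ directly; both yield exactly~\eqref{eq: defCbeta}. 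Your hesitation about the BV step is unnecessary: $\sum_j\abs{\pi_{j+1}(\Omega_1)-\pi_j(\Omega_1)}\le\int_\eta^1\abs{\partial_\epsilon\pi_\epsilon(\Omega_1)}\,d\epsilon\le C_{\BV}$ follows immediately from the fundamental theorem of calculus and the triangle inequality on each subinterval $[\eta_{j+1},\eta_j]$, with no monotonicity required.
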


\begin{proof}[Proof of Lemma~\ref{lem: betaj}]

Observe that by Proposition~\ref{prop_kolo_chap8_prop2.2}, we have
\begin{equation}\label{eq:dleq1overM}
d(E_{\eta_k},F_{\eta_k})
   \overset{\eqref{e:dEF}}{\leq} C_{\gamma}\exp\paren[\Big]{-\frac{C_K}{\eta_k}}\overset{\eqref{eq:defEtacr}}{\leq} \frac{1}{M}.
\end{equation}
To bound~$\Theta(k,k+1)$, we write
\begin{equation}\label{eq:thetarewrite}
    \Theta(k,k+1)\overset{\eqref{eq:defTheta}}{=} J_1\norm[\Big]{\psi_{2,k+1}\one_{K}-\int_{K}\psi_{2,k+1}\pi_{k+1}\, d x}_{L^\infty}+J_2,
\end{equation}
where 
\begin{equation}\label{eq:thetaj1j2}
    J_1=\Big|\int \psi_{2,k}\pi_{k+1}\, dx\Big|,\quad
    J_2=\abs[\Big]{\int \paren[\big]{\psi_{2,k+1}\one_{K}-\int_{K}\psi_{2,k+1}\pi_{k+1}\, d x}\psi_{2,k}\pi_{k+1}\, dx}.
\end{equation}

\restartsteps
\step[Estimating~$J_1$ and~$J_2$] We first estimate~$J_1$ and~$J_2$ using Lemma~\ref{lem: integral of two eigenfunction from different levels} and~\ref{lem: integral of eigenfunction on different measure} respectively.
For simplicity, by a slight abuse of notation we write
\begin{equation}
    \pi_k(\Omega_i)\defeq \pi_{\eta_k}(\Omega_i),\quad i=1,2.
\end{equation}

By Lemma~\ref{lem: integral of eigenfunction on different measure},
\begin{align}
   J_1&\overset{\mathclap{\eqref{eq:intPsiPi}}}{\leq} ~d(E_{\eta_k},F_{\eta_k})\norm{r_k}_{L^{\infty}}+
       \Big( \frac{\sqrt{\pi_{k}(\Omega_2)}}{\sqrt{\pi_{k}(\Omega_1)}}+\frac{\sqrt{\pi_{k}(\Omega_1)}}{\sqrt{\pi_{k}(\Omega_2)}}\Big)\cdot |\pi_{k+1}(\Omega_1)-\pi_{k}(\Omega_1)|\\
       &\overset{\mathclap{\eqref{e:massRatioBound}}}{\leq}~ d(E_{\eta_k},F_{\eta_k})\norm{r_k}_{L^{\infty}}
       +2C_m\cdot \big|\pi_{k+1}(\Omega_1)-\pi_{k}(\Omega_1)\big|\\
       \label{eq:thetaj1}&\overset{\mathclap{\eqref{eq:dleq1overM}}}{\leq}~\frac{C_r}{M}+2C_m\cdot \big|\pi_{k+1}(\Omega_1)-\pi_{k}(\Omega_1)\big|.
\end{align}

For term~$J_2$, observe that
\begin{equation}
    J_2\leq J_3+J_4
\end{equation}
where
\begin{align}
    J_3&\defeq \abs[\Big]{\int \paren[\big]{\psi_{2,k+1}\one_{K}}\psi_{2,k}\pi_{k+1}\, dx}\\
    J_4&\defeq \abs[\Big]{ \paren[\Big]{\int_{K}\psi_{2,k+1}\pi_{k+1}\, d x}\paren[\Big]{\int \psi_{2,k}\pi_{k+1}\, dx}}
\end{align}
By Lemma~\ref{lem: integral of two eigenfunction from different levels}, using the fact that~$(1+y)^{\frac12}\leq 1+\frac12 y$ when~$y>0$, we have
\begin{align}
    J_3&\leq \paren[\Big]{\int_{K}\abs{\psi_{2,k+1}}^2\pi_{k+1}\,d x}^{\frac12}\paren[\Big]{\int_{K}\abs{\psi_{2,k}}^2\pi_{k+1}\,d x}^{\frac12}\leq \norm{\psi_{2,k}}_{L^2(\pi_{k+1})}\\
    &\overset{\mathclap{\eqref{eq:intPsiPsiPi}}}{\leq} \|r_k\|^{\frac12}_{L^{\infty}}d(E_{\eta_k},F_{\eta_k})
      +\Big(1+\paren[\Big]{\frac{\pi_{k}(\Omega_2)}{\pi_{k}(\Omega_1)}-\frac{\pi_{k}(\Omega_1)}{\pi_{k}(\Omega_2)}}(\pi_{k+1}(\Omega_1)-\pi_{k}(\Omega_1))\Big)^{\frac12}\\
      &\overset{\mathclap{\eqref{eq:dleq1overM},\eqref{eq:defCr},\eqref{e:massRatioBound}}}{\leq} ~\qquad \frac{C_r^{\frac12}}{M}+\Big(1+2C_m^2\big|\pi_{k+1}(\Omega_1)-\pi_{k}(\Omega_1)\big|\Big)^{\frac12}\\
       \label{eq:thetaj2} &\leq ~\quad\frac{C_r^{\frac12}}{M}+1+C_m^2\big|\pi_{k+1}(\Omega_1)-\pi_{k}(\Omega_1)\big|.
\end{align}
For term~$J_4$,
\begin{equation}\label{eq:thetaj4}
    J_4\leq \abs[\Big]{ \int_{K}\psi_{2,k+1}\pi_{k+1}\, d x}\cdot \abs[\Big]{ \int \psi_{2,k}\pi_{k+1}\, dx}\leq J_1.
\end{equation}
Hence,
\begin{align}
       \MoveEqLeft\Theta(k,k+1)\overset{\mathclap{\eqref{eq:thetarewrite},\eqref{eq:ckT1}}}{\leq} ~J_1(C_{\psi}+1)+J_2\overset{\eqref{eq:thetaj4}}{\leq} J_1(C_{\psi}+2)+J_3\\
       \label{eq:thetakk1} &\overset{\mathclap{\eqref{eq:thetaj1},\eqref{eq:thetaj2}}}{\leq}  \quad 1+(2C_m(C_{\psi}+2)+C_m^2)\cdot \big|\pi_{k+1}(\Omega_1)-\pi_{k}(\Omega_1)\big|+\frac{(C_{\psi}+2)C_r+C_r^{\frac12}}{M}.
\end{align}

\step[Estimating~$\prod_{j=k}^{M-1}\Theta(j,j+1)$]
By direct computation, for~$k\geq k_0$,
\begin{align}
 & \prod_{j=k}^{M-1}\beta_j\leq
  \prod_{j=k}^{M-1}\Theta(j,j+1)\\
&\overset{\mathclap{\eqref{eq:thetakk1}}}{\leq} ~\prod_{j=k}^{M-1}\paren[\Big]{1+(2C_m(C_{\psi}+2)+C_m^2)\cdot \big|\pi_{k+1}(\Omega_1)-\pi_{k}(\Omega_1)\big|
+\frac{(C_{\psi}+2)C_r+C_r^{\frac12}}{M}}\\
       &\overset{\mathclap{\text{AM-GM}}}{\leq}  ~~ \bigg(
	1+\frac{(C_{\psi}+2)C_r+C_r^{\frac12}}{M}
    \\
      &\qquad\qquad
    \mathbin{+} \frac{1}{M-k}\sum\limits_{j=k}^{M-1}\paren[\big]{2C_m(C_{\psi}+2)+C_m^2}\cdot \big|\pi_{j+1}(\Omega_1)-\pi_{j}(\Omega_1)\big|
       \bigg)^{M-k}\\
        &\overset{\mathclap{\eqref{eq:massBV}}}{\leq}\quad \paren*{  1+\frac{C_{\BV}(2C_m(C_{\psi}+2)+C_m^2) }{M-k}+\frac{(C_{\psi}+2)C_r+C_r^{\frac12}}{M}}^{M-k}
        \overset{\eqref{eq: defCbeta}}{\leq} C_{\beta},\label{eq:jgeqk0beta}
\end{align}
where the last inequality uses the fact that~$M-k\leq M$. 
\end{proof}

\subsection{Proof of Lemma~\ref{l:iteration}}
\label{s:proofIteration}

In this section, we show that by choosing proper~$N$ and~$T$, the probability~$\theta_k$ defined in \eqref{e:probIteration} can be lower bounded by~$1-\frac{\delta}{M}$.  The proof is a direct calculation based on several lemmas.

We first choose sample size~$N$ such that when \eqref{eq:achoose} holds, the relevant terms in \eqref{e:probIteration} is small enough. Then for a fixed sample size, we then choose the temperature such that the first term in \eqref{e:probIteration} is small enough. Those choices are stated in the following two lemmas, whose proof is direct calculation and thus omitted.
\begin{lemma}[Choose sample size~$N$] \label{lem:Nchoice}
Given~$\theta\in (0,1)$ and~$\delta>0$. Let ~$a_1, a_2, a_3$ be selected as \eqref{eq:achoose}. if choose~$N$ as \eqref{eq:Nchoice},  then we have
\begin{equation}\label{eq:probNa}
    \exp\paren[\Big]{-\frac{2Na_1^2}{C_r^2}}\leq \frac{\theta}{8M},\quad 
    \exp\paren[\Big]{-\frac{Na_2^2}{2C_{r}C_{\psi}^2}}\leq \frac{\theta}{8M},\quad
    \exp\paren[\Big]{-\frac{Na_3^2}{2C_{\psi}^2}}\leq  \frac{\theta}{8M}.
\end{equation}
\end{lemma}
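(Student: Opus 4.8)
The plan is a one-shot substitution followed by three elementary estimates; there is no real obstacle here, since the constants in \eqref{eq:achoose} and \eqref{eq:Nchoice} were reverse-engineered to make the computation close. First I would insert $a_1 = \delta/(8M(C_\psi+1))$ and $a_2 = a_3 = \delta/(8M)$ from \eqref{eq:achoose} into the three exponents, obtaining
\[
  \frac{2Na_1^2}{C_r^2} = \frac{N\delta^2}{32\,C_r^2 M^2 (C_\psi+1)^2},
  \qquad
  \frac{Na_2^2}{2C_r C_\psi^2} = \frac{N\delta^2}{128\,C_r C_\psi^2 M^2},
  \qquad
  \frac{Na_3^2}{2C_\psi^2} = \frac{N\delta^2}{128\,C_\psi^2 M^2}.
\]

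Next I would substitute the lower bound $N \ge 128\,C_r^2 (C_\psi+1)^2 (M^2/\delta^2)\log(8M/\theta)$ from \eqref{eq:Nchoice} into each of these. For the first exponent this produces the numerical prefactor $128/32 = 4 \ge 1$; for the second, using $C_r > 1$ and $(C_\psi+1)^2 \ge C_\psi^2$, the prefactor is $C_r (C_\psi+1)^2/C_\psi^2 \ge 1$; for the third the prefactor is $C_r^2 (C_\psi+1)^2/C_\psi^2 \ge 1$. In every case the exponent is therefore at least $\log(8M/\theta)$, so each term $\exp(-\,\cdot\,)$ is at most $\exp(-\log(8M/\theta)) = \theta/(8M)$, which is exactly \eqref{eq:probNa}.

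The only thing requiring attention is the bookkeeping of the absolute constant $128$ in \eqref{eq:Nchoice}: it must dominate the $32$ and $128$ appearing in the denominators above after the harmless bounds $C_r, C_\psi+1 \ge 1$ and $(C_\psi+1)^2 \ge C_\psi^2$ are used, and this is precisely why that constant was chosen. Since everything is built into the statement, the argument is purely mechanical. I note in passing that the middle term of the failure probability \eqref{e:probIteration} actually carries $C_r^2$ rather than $C_r$ in the denominator; running the identical computation with $C_r^2$ in place of $C_r$ yields the prefactor $(C_\psi+1)^2/C_\psi^2 \ge 1$, so the same choice of $N$ also controls that term when \eqref{e:probIteration} is invoked in the proof of Lemma~\ref{l:iteration}.
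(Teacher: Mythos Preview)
Your proof is correct and is exactly the direct substitution the paper has in mind; the paper itself omits the proof as ``direct calculation,'' and your computation fills it in cleanly. Your side remark about the $C_r$ versus $C_r^2$ discrepancy between \eqref{eq:probNa} and \eqref{e:probIteration} is accurate and harmless, since, as you note, the same choice of $N$ handles the $C_r^2$ version as well.
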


\begin{lemma}\label{lem:etaCriProb}
    Fix~$\theta\in (0,1)$ and~$\delta>0$ small, choose sample size~$N$ according to \eqref{eq:Nchoice}. Then for every~$\eta_k$ such that
    \begin{equation}\label{eq:etaCriN}
        \eta_k\leq \frac{C_K}{\log\paren[\Big]{\frac{8MNC_P(C_{\psi}+1)}{\theta}}}\, ,
    \end{equation}
    we have
    \begin{equation}\label{e:probsmall}
        \exp\paren[\Big]{-\frac{C_{K}}{\eta_k}}\leq \frac{\theta}{8MNC_P(C_{\psi}+1)}.
    \end{equation}
\end{lemma}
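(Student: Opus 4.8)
The plan is to observe that the asserted bound~\eqref{e:probsmall} is a purely algebraic consequence of the hypothesis~\eqref{eq:etaCriN}, so the proof is just two lines of rearrangement. First I would record that under the stated hypotheses the quantity $8MNC_P(C_\psi+1)/\theta$ exceeds $1$: we have $M,N\ge 1$, $\theta\in(0,1)$, $C_P>0$ and $C_\psi\ge 0$, and the choice of $N$ in~\eqref{eq:Nchoice} together with the smallness of $\delta$ forces $N$ to be large, so $\log\bigl(8MNC_P(C_\psi+1)/\theta\bigr)$ is (comfortably) positive. This is what makes the upper bound on $\eta_k$ in~\eqref{eq:etaCriN} a genuine, finite positive constraint rather than a vacuous one.

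Given this, from $\eta_k\le C_K/\log\bigl(8MNC_P(C_\psi+1)/\theta\bigr)$ and the positivity just noted, I would invert to get
\[
  \frac{C_K}{\eta_k}\ \ge\ \log\!\left(\frac{8MNC_P(C_\psi+1)}{\theta}\right),
\]
and then, multiplying by $-1$ and applying the decreasing map $x\mapsto e^{-x}$ to both sides, conclude
\[
  \exp\!\left(-\frac{C_K}{\eta_k}\right)\ \le\ \exp\!\left(-\log\!\left(\frac{8MNC_P(C_\psi+1)}{\theta}\right)\right)\ =\ \frac{\theta}{8MNC_P(C_\psi+1)},
\]
which is exactly~\eqref{e:probsmall}.

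There is no real obstacle here: the only point requiring a word of care is the positivity of the outer logarithm discussed above, which is precisely why the statement carries the hypotheses that $N$ is chosen as in~\eqref{eq:Nchoice} and that $\delta$ is small. Once that is in place the estimate is immediate — so immediate that the paper chooses to omit it — and I would simply present the two displays above as the proof.
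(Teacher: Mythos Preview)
Your proposal is correct and matches the paper's approach: the paper explicitly omits the proof as a direct calculation, and what you have written is exactly that two-line rearrangement of the hypothesis into the conclusion.
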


Finally, we choose the proper running time depending on the sample size such that the second term in \eqref{e:probIteration} is small enough.
\begin{lemma}\label{lem:Tlocal}
    For
    \begin{equation}\label{eq:TCriProb}
         T\geq \max\Big\{\frac{2}{\Lambda}\Big( \frac{\hat{U}}{2\eta}+\log \paren[\Big]{\frac{8M}{\theta}}+\frac{1}{8}+\log(N)\Big), 
\frac{c}{d}\log (4d)\Big\},
    \end{equation}
    we have
    \begin{equation}\label{eq:TCriprob1}
       e^{-\Lambda T/2}\max_{x\in K}\sqrt{\frac{p_{\epsilon,T}(x,x)}{\pi(x)}-1}\leq \frac{\theta}{8MN}.
    \end{equation}
\end{lemma}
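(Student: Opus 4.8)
The plan is a short computation that substitutes the pointwise transition-density bound of Lemma~\ref{cor:poverpi} into the two hypotheses on $T$ in~\eqref{eq:TCriProb}. The target quantity is $e^{-\Lambda T/2}\max_{x\in K}\sqrt{p_{\epsilon,T}(x,x)/\pi_\epsilon(x)-1}$, and the idea is simply to bound the square-root factor from above and then check that the exponential prefactor $e^{-\Lambda T/2}$ is small enough to beat it down below $\theta/(8MN)$.

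First I would invoke the hypothesis $T\geq\frac{c}{d}\log(4d)$: by the elementary chain of inequalities already recorded in~\eqref{eq:ubPatTcri}, this gives $(1-e^{-cT/d})^{-d/2}\leq e^{1/4}$. Since in the low-temperature regime $\epsilon=\eta_k<\eta_1=1$, Lemma~\ref{cor:poverpi}, in the form~\eqref{eq:poverpiK}, then yields $\max_{x\in K}p_{\epsilon,T}(x,x)/\pi_\epsilon(x)\leq C_p e^{1/4}\exp(\hat U/\epsilon)$. Using $\hat U\geq 0$ together with $\epsilon\geq\eta$ (every level of interest has temperature at least $\eta=\eta_M$) gives $\exp(\hat U/\epsilon)\leq\exp(\hat U/\eta)$, and since $\sqrt{a-1}\leq\sqrt a$ for $a\geq 1$ I obtain
\[
\max_{x\in K}\sqrt{\frac{p_{\epsilon,T}(x,x)}{\pi_\epsilon(x)}-1}\;\leq\; C_p^{1/2}\,e^{1/8}\,\exp\!\Big(\frac{\hat U}{2\eta}\Big).
\]

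It then remains to observe that the first branch of~\eqref{eq:TCriProb}, namely $T\geq\frac{2}{\Lambda}\big(\frac{\hat U}{2\eta}+\log(8M/\theta)+\frac18+\log N\big)$, forces $e^{-\Lambda T/2}\leq\frac{\theta}{8MN}\,e^{-1/8}\exp(-\hat U/(2\eta))$; multiplying this by the previous display produces~\eqref{eq:TCriprob1}. There is no real obstacle here: the computation is routine, and the only points needing a moment's attention are the monotonicity $\hat U/\epsilon\leq\hat U/\eta$ (which relies on $\hat U\geq 0$ and on never exceeding temperature $\eta_1=1$) and keeping track of the constant $C_p$ from Lemma~\ref{cor:poverpi} when fixing the threshold for $T$ — the precise threshold should carry an extra $\log C_p^{1/2}$ term, exactly as in the analogous estimate~\eqref{eq:Tck}.
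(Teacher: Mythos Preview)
Your argument is correct and matches the paper's approach exactly: invoke the bound~\eqref{eq:ubPatTcri} from the second branch of~\eqref{eq:TCriProb}, feed it into~\eqref{eq:poverpiK} to control the square-root factor by $C_p^{1/2}e^{1/8}\exp(\hat U/(2\eta))$ (using $\epsilon\ge\eta$), and then kill this with $e^{-\Lambda T/2}$ via the first branch. Your observation that the threshold should in principle carry an additional $\log C_p^{1/2}$ is also apt; the paper's printed proof has the same loose end (and a couple of evident typos in the displayed inequalities), so you have in fact been slightly more careful than the original.
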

\begin{proof}[Proof of Lemma~\ref{lem:Tlocal}]
When~$T$ satisfies \eqref{eq:TCrip}, we have \eqref{eq:ubPatTcri}.
    Thus when~$T$ satisfies \eqref{eq:TCriProb}, we have
\begin{equation}
    e^{-\Lambda T/2}\leq \exp\paren[\Big]{\frac{\hat{U}}{2\eta_k}}  e^{-\frac{1}{8}}\frac{\theta}{8MN}
\end{equation}
which implies that
\begin{equation}
    e^{-\Lambda T/2}\max_{x\in K}\sqrt{\frac{p_{\epsilon,T}(x,x)}{\pi(x)}-1}\overset{\eqref{eq:poverpiK},\eqref{eq:ubPatTcri}}{\leq} \frac{\theta}{4MC_r^{\frac12}(C_{\psi}+1)}. \qedhere
\end{equation}
\end{proof}
Combining the above lemmas, we now state and prove the estimates for~$\theta_k$, completing the proof of Lemma~\ref{l:iteration}.

\begin{proof}[Proof of Lemma~\ref{l:iteration}] Fix~$\alpha,\theta,\delta>0$. First observe that according to Lemma \eqref{lem: iterative scheme between levels}, the event \eqref{eq:Xk+1inK} the recurrence relation \eqref{e:iteration} holds with probability~$\theta_k$. It remains to show that for the proper~$\eta_{\mathrm{cr}}$,~$N$ and~$T$, we have that \eqref{eq:thetakLB} and \eqref{e:CBeta} hold.
    \begin{equation}\label{eq:etacrchoice}
     \eta_{\mathrm{cr}}\defeq \min\Bigl\{ \frac{C_K}{\log\paren[\Big]{\frac{8MNC_P(C_{\psi}+1)}{\theta}}},  \frac{C_K}{\log\paren{{C}_{\gamma}M}}\Bigr\}.
\end{equation}
Here~$C_K$ is the constant defined in \eqref{eq:defCK} with the fixed~$\alpha>0$ and~$C_\gamma$ is the constant defined in \eqref{e:dEF} when~$\gamma\defeq C_K$.
Notice that if~$\eta_{\mathrm{cr}}$ are chosen according to~\eqref{eq:etacrchoice}, then we can find dimensional constants~$C_{\mathrm{tem}} = C_{\mathrm{tem}}(\alpha,U)>0$ so that this choice is consistent with the choice in~\eqref{e:CriEtaLT}.

Choose~$N$ as in \eqref{eq:Nchoice} and~$T$ as
\begin{align}\label{e:Tchoicelem}
     T&\geq \max\Big\{\frac{2}{\Lambda}\Big(\log\paren[\Big]{\frac{1}{\delta}}+ \frac{\hat{U}}{2\eta}+\log M+\frac{1}{8}+\log(4C_p^{\frac12}C^{\frac12}_{r}(C_{\psi}+1))\Big), \\
     &\qquad\qquad\frac{2}{\Lambda}\Big( \frac{\hat{U}}{2\eta}+\log \paren[\Big]{\frac{8M}{\theta}}+\frac{1}{8}+\log(N)\Big),\frac{c}{d}\log (4d)\Big\}
\end{align}
Notice that if~$T$ is chosen according to~\eqref{e:Tchoicelem}, then we can find constants~$C_{\alpha} = C_{\alpha}(\alpha,U)>0$ so that this choice is consistent with the choice in~\eqref{e:TNlow}.
 
When~$\eta_k\leq \eta_{\mathrm{cr}}$,~$k$ also larger than the~$k_0$ defined in \eqref{eq:defEtacr}. Thus Lemma~\ref{lem: betaj} gives that the first inequality in \eqref{e:CBeta} holds.

 On the other hand, when~$\eta_k\leq \eta_{\mathrm{cr}}$, we have that~$\eta_k$ satisfies \eqref{e:etaCriC}. Meanwhile, the choice of~$T$ satisfies \eqref{eq:Tck}. Thus with the choice \eqref{eq:achoose}, Lemma~\ref{lem: cj} gives that the second inequality in \eqref{e:CBeta} holds.

With the choice of~$a_1,a_2,a_3$ as in \eqref{eq:achoose} and~$N$ as in \eqref{eq:Nchoice}, Lemma~\ref{lem:Nchoice} gives that
\eqref{eq:probNa} holds. According to Lemma~\ref{lem:etaCriProb}, since~$\eta_k$ also satisfies \eqref{eq:etaCriN}, the inequality \eqref{e:probsmall} holds. Finally, the choice of~$T$ in \eqref{e:Tchoicelem} also satisfies \eqref{eq:TCriProb}, thus \eqref{eq:TCriprob1} holds as well. Plugging the above estimates into the definition of~$\theta_k$ yields 
  \begin{align}
      \theta_k&\overset{\mathclap{\eqref{eq:probNa}}}{\geq}1-N\paren[\bigg]{C_P(C_{\psi}+1)\exp\paren[\Big]{-\frac{C_{K}}{\eta_k}}-e^{-\Lambda T/2}\max_{x\in K}\sqrt{\frac{p_{k,T}(x,x)}{\pi_{k}(x)}-1}}-\frac{3\theta}{4}\\
      &\overset{\mathclap{\eqref{e:probsmall},\eqref{eq:TCriprob1}}}{\geq}
       \quad 1 -\frac{\theta}{4M}-\frac{3\theta}{4M}=1-\frac{\theta}{M}. \qedhere
  \end{align}
\end{proof}

\section{Error estimates in the high temperature regime (Lemma~\ref{lem:globalmixing} and~\ref{lem:baseCasePsi})} \label{sec:global}
In this section we prove Lemma~\ref{lem:globalmixing} and~\ref{lem:baseCasePsi}. The proof is inspired by \cite{Marion23SMC}, where they assume at the first level~$X_{1,T}^{i}$ are i.i.d exactly following~$\mu_1$. Here we briefly introduce the proof strategy. 
At each level~$k$, we use a maximal coupling approach to construct a set of random variables~$\bar{X}_{k,T}^{i}$ having marginal distribution \emph{exactly} the distribution~$\pi_k$.  
The construction of~$\bar{X}_{k,T}^{i}$ is given explicitly in \cite[Appendix]{Marion23SMC} and it is shown there that for every~$i=1,\dots, N$,
\begin{equation}\label{eq:propTV}
    P({X}_{k,T}^{i}\neq\bar{X}_{k,T}^{i})=\norm{p_{k,T}(X_{1,0}^{i},\cdot)-\pi_k(\cdot)}_{\mathrm{TV}}.
\end{equation}
Here we view~$X_{1,0}^{i}$ as a fixed point (starting point of the Langevin dynamics on level~$k$). 

Heuristically, if at level~$k$ the right hand side of \eqref{eq:propTV} is small, then with high probability, the particles can be coupled to a set of particles drawn exactly from the target distribution~$\pi_k$. An inductive step can then be established that these conditions hold as long as we have good local mixing. We will show that this can be done in the high temperature regime, i.e.~$k\leq k_{\mathrm{cr}}$, where the global mixing time is not too long. 

We now illustrate the above heuristic in more details and begin by introducing the notations, which is consistent with those in \cite{Marion23SMC}. 
For steps~$k=1,\dots, k_{\mathrm{cr}}$, define the events
\begin{align}
    \mathbf{A}_k&=\set{X_{k,T}^{i}=\bar{X}_{k,T}^{i},\forall i}.\\
    \mathbf{B}_k&=\set[\Big]{\abs[\Big]{\frac{1}{N}\sum_{i=1}^{N}\tilde{r}_k(X_{k,T}^{i})-\int \tilde{r_k}\pi_k\, d x}\leq \paren[\Big]{\int \tilde{r_k}\pi_k\, d x}/3}.\\
    \mathbf{C}_k&=\mathbf{A}_k\cap \mathbf{B}_k.
\end{align}
Each~$\mathbf{A}_k$ is the coupling event between the particles~$X_{k,T}^i$ and the particles~$\bar{X}_{k,T}^{i}$ drawn independently from~$\pi_k$. The event~$\mathbf{B}_k$ represents that the error of the empirical estimator at level~$k$ is within a reasonable interval. In \cite{Marion23SMC}, the following inductive relations are established. For completeness, we provide a short proof.
\begin{lemma}[Lemma 4.4 in \cite{Marion23SMC}]\label{lem:MMS44}
     Suppose~$\P(C_{k-1})\geq \frac{3}{2\omega}$ for some~$\omega\geq \frac{3}{2}$. For any measurable~$g$ such that~$|g|\leq G$. Fix~$\theta\in (0,1)$ and~$\theta'\in (0,1)$ and~$\mathfrak{e}>0$, then for any 
    \begin{equation}\label{e:NTgloballem}
        N\geq \frac{G^2}{2\mathfrak{e}^2}\log\paren[\Big]{\frac{2}{\theta'}},\quad \text{and}\quad T\geq\frac{1}{\lambda_{2,k}}\paren[\Big]{\log\paren[\Big]{\frac{2N}{\theta}}+\log(\omega-1)},
    \end{equation}
    we have
    \begin{equation}
        \P\paren[\Big]{\abs[\Big]{\frac{1}{N}\sum_{i=1}^{N}g(X_{k,T}^{i})-\int g\pi_{k}\, d x}<\mathfrak{e}}\geq (1-\theta)\cdot\P(C_{k-1})-\theta'.
    \end{equation}
\end{lemma}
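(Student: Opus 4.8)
The strategy is to condition on the event $\mathbf C_{k-1}$ and exploit the fact that, on this event, the particles $X_{k-1,T}^i$ at the start of level $k$ (after resampling they become $X_{k,0}^i$) are, up to the coupling and the controlled weight-fluctuation, close to i.i.d.\ draws from a mixture that is itself close to $\pi_{k-1}$, which in turn is close to $\pi_k$ in the high-temperature regime. More precisely, I would first observe that on $\mathbf C_{k-1} = \mathbf A_{k-1}\cap\mathbf B_{k-1}$ the resampling weights are comparable to their mean, so the resampled points $X_{k,0}^i$ have conditional law dominated (with a controlled Radon--Nikodym constant, of size $O(\omega)$) by the empirical measure of the $\bar X_{k-1,T}^i$'s, which are genuine i.i.d.\ $\pi_{k-1}$ samples. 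Then I would run the Langevin dynamics for time $T$ from each $X_{k,0}^i$ and use \eqref{eq:propTV} together with the spectral bound on $\lambda_{2,k}$: the TV distance $\norm{p_{k,T}(X_{k,0}^i,\cdot)-\pi_k}_{\mathrm{TV}}$ decays like $e^{-\lambda_{2,k}T}$ times a prefactor controlled by the chi-squared divergence of the starting distribution relative to $\pi_k$.

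\textbf{Key steps, in order.} (i) Decompose $\P(\text{bad empirical average at level }k)$ by conditioning on $\mathbf C_{k-1}$; the complement contributes at most $1-\P(\mathbf C_{k-1})$. (ii) On $\mathbf C_{k-1}$, bound the probability that the coupling $\mathbf A_k$ fails, i.e.\ that some $X_{k,T}^i\neq\bar X_{k,T}^i$: by a union bound over $i$ and \eqref{eq:propTV}, this is at most $N\max_i\norm{p_{k,T}(X_{k,0}^i,\cdot)-\pi_k}_{\mathrm{TV}}$; using the $L^2$/TV bound and the fact that, on $\mathbf B_{k-1}$, the starting points have density relative to $\pi_k$ bounded by a constant of order $\omega-1$ times the density of $\pi_{k-1}$ relative to $\pi_k$ (which is $O(1)$ since $\eta_{k-1},\eta_k$ are adjacent linearly-spaced inverse temperatures), one gets a bound of the form $N(\omega-1)e^{-\lambda_{2,k}T}$, which is $\le\theta$ precisely under the stated lower bound on $T$. (iii) On $\mathbf A_k$, the quantity $\frac1N\sum g(X_{k,T}^i)$ equals $\frac1N\sum g(\bar X_{k,T}^i)$, an average of i.i.d.\ $\pi_k$-samples of a function bounded by $G$; Hoeffding's inequality gives deviation $\ge\mathfrak e$ with probability at most $2e^{-2N\mathfrak e^2/G^2}\le\theta'$ under the stated lower bound on $N$. (iv) Combine: the target event holds with probability at least $(1-\theta)\P(\mathbf C_{k-1})-\theta'$, where the $(1-\theta)$ factor absorbs the coupling failure conditioned on $\mathbf C_{k-1}$ and the $-\theta'$ absorbs the Hoeffding deviation.

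\textbf{Main obstacle.} The delicate point is controlling the Radon--Nikodym constant between the conditional law of the resampled starting points $X_{k,0}^i$ and $\pi_k$, so that the chi-squared divergence entering the TV-mixing bound is genuinely $O(\omega)$ rather than something that blows up. This requires using $\mathbf B_{k-1}$ (the weights are within a factor near $1$ of their mean) together with a uniform bound on $\tilde r_{k-1}$ and on $r_{k-1}=\pi_k/\pi_{k-1}$ — the latter is exactly the content of $C_r<\infty$ from \eqref{eq:defCr}, and $C_r\to 1$ as the spacing $\nu\to0$. Getting the constants to line up so that the hypothesis $\P(\mathbf C_{k-1})\ge\frac{3}{2\omega}$ with $\omega\ge\frac32$ is the right normalization, and that the prefactor in the mixing bound is $(\omega-1)$ as claimed, is the book-keeping heart of the argument; everything else is a union bound plus Hoeffding. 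I would follow the construction in \cite[Appendix]{Marion23SMC} closely here, since \eqref{eq:propTV} is quoted from there, and simply verify that the constants propagate correctly in our notation.
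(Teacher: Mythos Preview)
Your proposal is a reasonable reconstruction of the argument behind \cite[Lemma~4.4]{Marion23SMC}, but it is far more than what the paper actually does. The paper's proof is essentially a one-line citation: it observes that the statement is verbatim Lemma~4.4 of \cite{Marion23SMC} except that the time condition there reads $T\geq\tau_k(\theta/N,\omega)$ for a certain warm-start mixing time $\tau_k$, and then substitutes the spectral bound $\tau_k(\iota,\omega)\leq\lambda_{2,k}^{-1}\bigl(\log(2/\iota)+\log(\omega-1)\bigr)$ from \cite[Section~2.2]{Marion23SMC} with $\iota=\theta/N$. No new argument is given.

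Your sketch, by contrast, reproves the underlying mechanism. The broad strokes (condition on $\mathbf C_{k-1}$, couple via \eqref{eq:propTV}, union-bound the coupling failure, Hoeffding on the coupled i.i.d.\ samples) are correct. One imprecision worth flagging: the parameter $\omega$ does \emph{not} enter as a pointwise density bound the way you describe in step~(ii). Rather, on $\mathbf B_{k-1}$ the normalised resampling weights satisfy $\frac{\tilde r_{k-1}(x^i)}{\sum_m\tilde r_{k-1}(x^m)}\leq\frac{3}{2N}r_{k-1}(x^i)$, and since the $x^i=\bar X_{k-1,T}^i$ are i.i.d.\ $\pi_{k-1}$, the \emph{unconditional} law of each resampled point restricted to $\mathbf C_{k-1}$ is bounded by $\frac32\pi_k$; dividing by $\P(\mathbf C_{k-1})\geq\frac{3}{2\omega}$ to pass to the conditional law yields exactly an $\omega$-warm start relative to $\pi_k$. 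That is the origin of the $\log(\omega-1)$ term in the mixing bound, not a direct Radon--Nikodym comparison of $\pi_{k-1}$ to $\pi_k$ (the constant $C_r$ plays no role here). Your ``main obstacle'' paragraph somewhat conflates these two mechanisms.

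In summary: the paper simply defers to \cite{Marion23SMC} and translates the time condition; your sketch essentially reproduces the cited argument, correctly in outline but with a slightly muddled account of how the hypothesis $\P(\mathbf C_{k-1})\geq\frac{3}{2\omega}$ becomes an $\omega$-warm start.
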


\begin{lemma}[Corollary 5.1 in \cite{Marion23SMC}] \label{lem:MMScor41}
    Suppose~$\P(\mathbf{C}_k)\geq \frac{3}{2\omega}$ for some~$\omega\geq \frac{3}{2}$. Fix~$\theta\in (0,1)$ and~$\theta'\in (0,1)$, then for any 
    \begin{equation}\label{eq:NTglobalcor}
        N\geq \frac{9C_r^2}{2}\log\paren[\Big]{\frac{2}{\theta'}},\quad \text{and}\quad T\geq\frac{1}{\lambda_{2,k}}\paren[\Big]{\log\paren[\Big]{\frac{2N}{\theta}}+\log(\omega-1)},
    \end{equation}
    we have
    \begin{equation}
        \P(\mathbf{C}_{k})\geq (1-\theta)\cdot\P(\mathbf{C}_{k-1})-\theta'.
    \end{equation}
\end{lemma}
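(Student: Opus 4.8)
The plan is to read off Lemma~\ref{lem:MMScor41} from Lemma~\ref{lem:MMS44} applied to the \emph{normalized} density ratio $g=r_k=\pi_{k+1}/\pi_k$. The first step is to put the event $\mathbf B_k$ in normalized form: since $\int\tilde r_k\pi_k\,dx=Z_{k+1}/Z_k>0$ and $\tilde r_k/(Z_{k+1}/Z_k)=r_k$, dividing the inequality defining $\mathbf B_k$ through by $\int\tilde r_k\pi_k\,dx$ and using $\int r_k\pi_k\,dx=1$ shows that $\mathbf B_k=\{\,|\tfrac1N\sum_{i=1}^N r_k(X_{k,T}^i)-1|\le\tfrac13\,\}$. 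Hence controlling $\mathbf B_k$ amounts to controlling the Monte Carlo error of the bounded test function $r_k$ at tolerance $\mathfrak e=\tfrac13$, exactly the type of statement Lemma~\ref{lem:MMS44} produces.

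Next I would apply Lemma~\ref{lem:MMS44} with $g=r_k$, $G=C_r$ (so $|g|\le G$ by~\eqref{eq:defCr}), $\mathfrak e=\tfrac13$, the same $\theta,\theta'$, and the same $\omega$; the hypothesis $\P(\mathbf C_{k-1})\ge\tfrac{3}{2\omega}$ that Lemma~\ref{lem:MMS44} needs is, up to what appears to be a typographical slip writing $\mathbf C_k$ for $\mathbf C_{k-1}$, precisely the standing hypothesis here. With these choices $\tfrac{G^2}{2\mathfrak e^2}\log(2/\theta')=\tfrac{9C_r^2}{2}\log(2/\theta')$, so the requirement on $N$ in Lemma~\ref{lem:MMS44} becomes exactly the one in~\eqref{eq:NTglobalcor}, while the requirement on $T$ is literally the same. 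Lemma~\ref{lem:MMS44} then yields $\P(\mathbf B_k)\ge(1-\theta)\,\P(\mathbf C_{k-1})-\theta'$.

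The one remaining point needing care is upgrading this to a bound on $\P(\mathbf C_k)=\P(\mathbf A_k\cap\mathbf B_k)$, since Lemma~\ref{lem:MMS44} as stated controls only the probability of $\{|\tfrac1N\sum g(X_{k,T}^i)-\int g\pi_k\,dx|<\mathfrak e\}$. For this I would revisit the proof of Lemma~\ref{lem:MMS44}: the lower bound there is obtained after intersecting with the coupling event $\mathbf A_k$, whose probability conditional on $\mathbf C_{k-1}$ is at least $1-\theta$ by the choice of $T$ (through~\eqref{eq:propTV} and the total-variation mixing estimate for $p_{k,T}$ underlying Lemma~\ref{lem:MMS44}), and then applying Hoeffding to the i.i.d.\ draws $\bar X_{k,T}^i\sim\pi_k$, which coincide with the $X_{k,T}^i$ on $\mathbf A_k$. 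Thus the event actually shown to have probability at least $(1-\theta)\P(\mathbf C_{k-1})-\theta'$ is $\mathbf A_k\cap\{|\tfrac1N\sum g(X_{k,T}^i)-\int g\pi_k\,dx|<\mathfrak e\}$; for $g=r_k$ and $\mathfrak e=\tfrac13$ this is exactly $\mathbf A_k\cap\mathbf B_k=\mathbf C_k$, which is the claim.

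I expect the main obstacle to be bookkeeping rather than anything analytic: keeping track that $\mathbf A_k$ is carried along inside the estimate of Lemma~\ref{lem:MMS44} (so one lands on $\mathbf C_k$ and not merely on $\mathbf B_k$), verifying that the constant $\tfrac13$ in the definition of $\mathbf B_k$ is exactly what collapses the sample size to $\tfrac{9C_r^2}{2}\log(2/\theta')$, and matching the $T$-threshold of Lemma~\ref{lem:MMS44} (written with $\lambda_{2,k}$ and $\omega-1$) with the one in~\eqref{eq:NTglobalcor}.
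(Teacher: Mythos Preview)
Your proposal is correct. The paper's own proof takes a shorter, purely bibliographic route: it observes that Lemmas~\ref{lem:MMS44} and~\ref{lem:MMScor41} are verbatim Lemma~4.4 and Corollary~5.1 of~\cite{Marion23SMC}, except that those results are stated with the mixing time $T\geq\tau_k(\theta/N,\omega)$ in place of~\eqref{eq:NTglobalcor}; it then quotes from~\cite{Marion23SMC} the spectral-gap bound $\tau_k(\iota,\omega)\leq\frac{1}{\lambda_{2,k}}(\log(2/\iota)+\log(\omega-1))$ and plugs in $\iota=\theta/N$. That is the entire argument.

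Your route instead unpacks the corollary from Lemma~\ref{lem:MMS44} by specializing $g=r_k$, $G=C_r$, $\mathfrak e=\tfrac13$, which indeed reproduces the $N$-threshold exactly and leaves the $T$-threshold unchanged. You also correctly flag that the hypothesis should read $\P(\mathbf C_{k-1})\geq\tfrac{3}{2\omega}$ (matching what Lemma~\ref{lem:MMS44} requires and what the conclusion uses) and that the upgrade from $\mathbf B_k$ to $\mathbf C_k=\mathbf A_k\cap\mathbf B_k$ requires going inside the proof of Lemma~\ref{lem:MMS44}, where the coupling event $\mathbf A_k$ is already intersected before Hoeffding is applied. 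This is the natural self-contained derivation and is presumably what~\cite{Marion23SMC} does to pass from Lemma~4.4 to Corollary~5.1; the paper here simply chose to outsource both steps.
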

\begin{proof}[Proof of Lemma~\ref{lem:MMS44} and~\ref{lem:MMScor41}]
   The statement in Lemma~\ref{lem:MMS44} and~\ref{lem:MMScor41} is identical to \cite[Lemma 4.4 and Corollary 5.1]{Marion23SMC} except in \eqref{e:NTgloballem} and \eqref{eq:NTglobalcor}, they use the expression
   \begin{equation}
       T\geq \tau_{k}\paren[\Big]{\frac{\theta}{N},\omega},
   \end{equation}
   where~$\tau_k$ is a weaker notion of the commonly-used mixing time, see \cite[Section 2.1]{Marion23SMC} for rigorous definition. 
  It is shown in \cite[Section 2.2]{Marion23SMC} that~$\tau_k$ satisfies the following bound 
   \begin{equation}
       \tau_{k}(\iota,\omega)\leq \frac{1}{\rho_k}\paren[\Big]{\log\paren[\Big]{\frac{2}{\iota}}+\log(\omega-1)}.
   \end{equation}
Here~$\rho_k$ denotes the spectral gap of the transition kernel of Langevin dynamics at level~$k$, which is~$\lambda_{2,k}$. Taking~$\iota=\theta/N$ finishes the proof.
\end{proof}

Lemma~\ref{lem:MMS44} and~\ref{lem:MMScor41} provide the estimates in the inductive steps, we still need the estimate at the first level. The following lemma shows that we can still satisfy the base case provided proper sample size and running time.
\begin{lemma}[Mixing at the first level] \label{lem:firstlevel}
    Fix~$\theta'\in (0,1)$ and let assumption~\ref{assum:startgood} hold. If
    \begin{align}\label{eq:NT1st}
        N&\geq 18C_r^2\log\paren[\Big]{\frac{4}{\theta'}}\\
         T&\geq \max\Big\{\frac{2}{\lambda_{2,1}}\Big( \frac{C_{\mathrm{ini}}}{2}
         +\frac{1}{8}+\frac12\log(C_p) + \log\paren[\Big]{\frac{2N}{\theta'}}\Big),
\frac{c}{d}\log (4d)\Big\}
    \end{align}
    we have
    \begin{equation}
        \P(\mathbf{C}_1)\geq 1-\theta'.
    \end{equation}
\end{lemma}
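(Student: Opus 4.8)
The plan is to establish the base case $\mathbf{C}_1 = \mathbf{A}_1 \cap \mathbf{B}_1$ holds with high probability by controlling each of the two events separately and combining via a union bound, then invoking Lemma~\ref{l:rebalancing}-type reasoning for $\mathbf{B}_1$.

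\textbf{Step 1: Controlling the coupling event $\mathbf{A}_1$.} By the maximal coupling construction, $\P(X_{1,T}^i \neq \bar X_{1,T}^i) = \norm{p_{1,T}(y_1^i, \cdot) - \pi_1(\cdot)}_{\mathrm{TV}}$. I would bound this total variation distance by the $L^2(\pi_1)$ distance via $\norm{\mu - \pi}_{\mathrm{TV}} \le \frac12\norm{\mu/\pi - 1}_{L^2(\pi)}$, and then use the spectral decomposition \eqref{eq:decomP} together with the identity \eqref{eq:psL2}, namely $\norm{p_{1,s}(y,\cdot)/\pi_1 - 1}_{L^2(\pi_1)}^2 = p_{1,2s}(y,y)/\pi_1(y) - 1$, to get exponential decay in $T$: roughly $\norm{p_{1,T}(y,\cdot) - \pi_1}_{\mathrm{TV}} \le e^{-\lambda_{2,1}T}\sqrt{p_{1,T}(y,y)/\pi_1(y) - 1} \cdot \tfrac12$ (or a variant thereof, possibly splitting the time and using the $\lambda_{2,1}$ gap for the leading term). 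Then apply Lemma~\ref{cor:poverpi} (specifically \eqref{eq:poverpi}) to bound $p_{1,T}(y_1^i,y_1^i)/\pi_1(y_1^i) \le C_p \exp(U(y_1^i)/\eta_1)(1 - e^{-cT/d})^{-d/2} \le C_p e^{C_{\mathrm{ini}}}e^{1/4}$, using Assumption~\ref{assum:startgood} and the time lower bound $T \ge \tfrac{c}{d}\log(4d)$ as in \eqref{eq:ubPatTcri}. The exponent-matching in the stated bound $T \ge \tfrac{2}{\lambda_{2,1}}(C_{\mathrm{ini}}/2 + 1/8 + \tfrac12\log C_p + \log(2N/\theta'))$ is exactly what forces each single-particle failure probability below $\theta'/(4N)$, so by a union bound $\P(\mathbf{A}_1^c) \le \theta'/4$.

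\textbf{Step 2: Controlling $\mathbf{B}_1$ given $\mathbf{A}_1$.} On the event $\mathbf{A}_1$, the particles $X_{1,T}^i$ coincide with the i.i.d.\ sample $\bar X_{1,T}^i \sim \pi_1$. Then $\tfrac1N\sum_i \tilde r_1(\bar X_{1,T}^i)$ is an i.i.d.\ average with mean $\int \tilde r_1 \pi_1\,dx$. Since $\tilde r_1 = \tilde\pi_2/\tilde\pi_1$ and $r_1 = \pi_2/\pi_1 = (Z_1/Z_2)\tilde r_1$ is bounded by $C_r$ (by \eqref{eq:defCr}), the function $\tilde r_1$ is bounded, and I would apply Hoeffding's inequality: the deviation $|\tfrac1N\sum_i\tilde r_1 - \int\tilde r_1\pi_1|$ exceeds $(\int\tilde r_1\pi_1)/3$ with probability at most $2\exp(-2N(\int\tilde r_1\pi_1)^2/(9\norm{\tilde r_1}_{\mathrm{osc}}^2))$. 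The ratio $\int\tilde r_1\pi_1/\norm{\tilde r_1}_\infty$ (or its oscillation analogue) is controlled by $1/C_r$-type quantities — this is where the constant $9C_r^2$ in \eqref{eq:NT1st} comes from — so the choice $N \ge 18 C_r^2 \log(4/\theta')$ makes this probability at most $\theta'/2$. Combining, $\P(\mathbf{C}_1) = \P(\mathbf{A}_1 \cap \mathbf{B}_1) \ge 1 - \P(\mathbf{A}_1^c) - \P(\mathbf{B}_1^c \cap \mathbf{A}_1) \ge 1 - \theta'/4 - \theta'/2 \ge 1 - \theta'$ (with a little room to spare for the $\theta'/4$ versus the stated constants).

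\textbf{Main obstacle.} The delicate point is Step 1: getting the \emph{correct} exponential rate in the total variation bound. A naive bound would use the generic spectral gap $\Lambda$ (the gap above the metastable eigenvalue), but that only gives mixing to the metastable state, not to $\pi_1$; the bound must genuinely use $\lambda_{2,1}$ as the rate, which is why $T$ is scaled by $2/\lambda_{2,1}$ rather than $2/\Lambda$. This requires being careful that at level $1$ (high temperature, $\eta_1 = 1$) the dynamics mixes \emph{globally} at rate $\lambda_{2,1}$ — here one should note that at $\epsilon = 1$ the spectral gap $\lambda_{2,1}$ is itself bounded below by a dimensional constant (it is not exponentially small), so $2/\lambda_{2,1}$ is $O(1)$ and the stated $T$ is reasonable. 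One must also verify the bound $\norm{\mu-\pi}_{\mathrm{TV}} \le \tfrac12 e^{-\lambda_{2,1}T}\sqrt{p_{1,T}(y,y)/\pi_1(y)-1}$ holds with the full rate rather than just $e^{-\lambda_{2,1}T/2}$ — this follows by writing $p_{1,T} = p_{1,T/2} \circ p_{1,T/2}$, projecting off the constant mode, and bounding the operator norm of the remainder by $e^{-\lambda_{2,1}T/2}$ on each half, which when combined with the $L^2$-to-$L^2$ Cauchy–Schwarz gives the full $e^{-\lambda_{2,1}T}$ after using \eqref{eq:psL2}. The remaining steps are routine Hoeffding and union-bound bookkeeping.
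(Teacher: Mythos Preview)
Your overall strategy is correct and closely tracks the paper's proof. Two comments, one a correction and one a comparison.

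\textbf{On your ``main obstacle.''} Your claim that one can (or needs to) obtain the full rate $e^{-\lambda_{2,1}T}$ in the bound
\[
\norm{p_{1,T}(y,\cdot)-\pi_1}_{\mathrm{TV}}\le \tfrac12 e^{-\lambda_{2,1}T}\sqrt{p_{1,T}(y,y)/\pi_1(y)-1}
\]
is wrong. The argument you sketch (two halves of the semigroup, each contributing $e^{-\lambda_{2,1}T/2}$) does not give an extra factor: one half is consumed in producing the finite $L^2$ quantity $\sqrt{p_{1,T}(y,y)/\pi_1(y)-1}$ via \eqref{eq:psL2}, and only the other half yields decay. In spectral terms, $\sum_{k\ge 2} e^{-2\lambda_k T}\psi_k(y)^2 \le e^{-\lambda_2 T}\sum_{k\ge 2} e^{-\lambda_k T}\psi_k(y)^2$ is sharp, so the correct bound is $e^{-\lambda_{2,1}T/2}\sqrt{p_{1,T}(y,y)/\pi_1(y)-1}$. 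This is exactly what the paper records as Lemma~\ref{lem:TDslow}. The good news is that the half rate is precisely what the hypothesis $T\ge \tfrac{2}{\lambda_{2,1}}(\cdots)$ is tuned for: plugging in gives $e^{-\lambda_{2,1}T/2}\le C_p^{-1/2}e^{-C_{\mathrm{ini}}/2}e^{-1/8}\,\theta'/(2N)$, which after multiplying by $C_p^{1/2}e^{C_{\mathrm{ini}}/2}e^{1/8}$ yields the per-particle bound $\theta'/(2N)$ (the paper's constant; your $\theta'/(4N)$ would also work). So the obstacle you identified is illusory --- you were worrying about a stronger bound than is needed or available.

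\textbf{On Step 2.} Your approach (condition on $\mathbf{A}_1$, then apply Hoeffding to the i.i.d.\ $\bar X_{1,T}^i\sim\pi_1$) is valid and arguably cleaner, since $\E r_1(\bar X_{1,T}^i)=1$ exactly. The paper takes a different route: it applies Hoeffding directly to the $X_{1,T}^i$, which are independent (driven by independent Brownian motions from deterministic initial points) but not identically distributed, to control $\bigl|\tfrac1N\sum r_1(X_{1,T}^i)-\E\tfrac1N\sum r_1(X_{1,T}^i)\bigr|\le 1/6$, and then separately bounds the bias $\bigl|\E\tfrac1N\sum r_1(X_{1,T}^i)-1\bigr|\le C_r\cdot\max_i\norm{p_{1,T}(y_1^i,\cdot)-\pi_1}_{\mathrm{TV}}\le 1/6$ using the TV estimate from Step~1. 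Either way one obtains $\P(\mathbf B_1)\ge 1-\theta'/2$ (unconditionally in the paper's version) and concludes $\P(\mathbf C_1)\ge 1-\theta'$ by a union bound.
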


Before proving Lemma~\ref{lem:firstlevel}, we state a fact about 
the convergence of~$p_{\epsilon,t}$ toward the global modes (the stationary distribution~$\pi_{\epsilon}$), as a complement to Lemma~\ref{eq:TDquick}.
\begin{lemma}\label{lem:TDslow}
    Let~$p_{\epsilon,t}(x,y)$ be the transition kernel defined as in \eqref{eq:defp}, then
    \begin{equation}\label{eq:TDslow}
    \norm[\Big]{\frac{p_{\epsilon,t}(x,\cdot)}{\pi_{\epsilon}(\cdot)}-1}_{L^2(\pi_{\epsilon})}\leq e^{-\lambda_{2,\epsilon} t/2}\sqrt{\frac{p_{\epsilon,t}(x,x)}{\pi_{\epsilon}(x)}-1}\,.
\end{equation}
\end{lemma}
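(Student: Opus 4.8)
\textbf{Proof proposal for Lemma~\ref{lem:TDslow}.}

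The plan is to mimic the proof of Lemma~\ref{lem:TDquick} almost verbatim, but \emph{without} the subtraction of the second-eigenfunction term: there the projection onto the eigenspace of $\lambda_{2,\epsilon}$ was split off to obtain the faster rate $e^{-\Lambda t/2}$, whereas here we want the statement purely in terms of $\lambda_{2,\epsilon}$, so we keep all the nonconstant modes together. As before, I would suppress the subscript $\epsilon$ for readability. Starting from the spectral decomposition
\[
  \frac{p_t(x,\cdot)}{\pi(\cdot)} = 1 + \sum_{k\geq 2} e^{-\lambda_k t}\psi_k(x)\psi_k(\cdot),
\]
which follows from \eqref{e:fOverPiNu}, the tail $\frac{p_t(x,\cdot)}{\pi(\cdot)}-1$ lies in the orthogonal complement of the constants in $L^2(\pi)$, on which $\lambda_k \geq \lambda_2$ for all $k\geq 2$.

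Next I would use the semigroup/Chapman--Kolmogorov structure exactly as in Lemma~\ref{lem:TDquick}: write $t = s + (t-s)$ with $s = t/2$, use \eqref{eq:trans1} together with \eqref{eq:decomP}, and note that convolving against $p_{t-s}(y,\cdot)$ contracts each mode $\psi_k$ by the factor $e^{-\lambda_k (t-s)} \leq e^{-\lambda_2(t-s)}$. This gives
\[
  \norm[\Big]{\frac{p_t(x,\cdot)}{\pi(\cdot)}-1}_{L^2(\pi)}
    \leq e^{-\lambda_2(t-s)}\norm[\Big]{\frac{p_s(x,\cdot)}{\pi(\cdot)}-1}_{L^2(\pi)}.
\]
Then I would reuse the computation \eqref{eq:psL2}, which identifies $\norm{p_s(x,\cdot)/\pi(\cdot)-1}_{L^2(\pi)}^2$ (note the square) with $p_{2s}(x,x)/\pi(x) - 1$ via detailed balance \eqref{eq:DB} and Chapman--Kolmogorov \eqref{eq:CK}. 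Choosing $s = t/2$ turns $2s$ into $t$ and $t-s$ into $t/2$, yielding precisely \eqref{eq:TDslow}. A small point of care: the left-hand side of \eqref{eq:TDslow} is written as an $L^2(\pi)$ norm while \eqref{eq:psL2} computes a squared norm, so I should be consistent with squares versus square roots — taking the square root at the end gives the $e^{-\lambda_{2,\epsilon}t/2}$ rate as stated.

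I do not expect any real obstacle here: this is a strictly easier variant of Lemma~\ref{lem:TDquick}, obtained by dropping a refinement rather than adding one. The only thing to double-check is that the contraction estimate is applied on the right subspace, i.e.\ that after subtracting the constant mode every remaining eigenvalue is $\geq \lambda_2$ (true, since $\lambda_2 \leq \lambda_3 \leq \cdots$), so that the single factor $e^{-\lambda_2(t-s)}$ controls the whole sum in $L^2(\pi)$. With that in place the proof is three lines citing \eqref{eq:trans1}, \eqref{eq:decomP}, and \eqref{eq:psL2}.
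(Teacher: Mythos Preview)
Your proposal is correct and follows essentially the same approach as the paper's own proof: both mimic the proof of Lemma~\ref{lem:TDquick} without separating out the $\psi_{2,\epsilon}$ term, use \eqref{eq:trans1} and \eqref{eq:decomP} to obtain the contraction $\norm{p_t(x,\cdot)/\pi(\cdot)-1}_{L^2(\pi)} \leq e^{-\lambda_2(t-s)}\norm{p_s(x,\cdot)/\pi(\cdot)-1}_{L^2(\pi)}$, and then set $s=t/2$ and invoke \eqref{eq:psL2}. Your remark about tracking squares versus square roots is the only bookkeeping point, and you handle it correctly.
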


\begin{proof}[Proof of Lemma~\ref{lem:TDslow}]
As in the proof of Lemma~\ref{lem:TDquick}, since~$\epsilon$ is fixed, for simplicity of presentation, we slightly abuse notation in this proof and omit the subscript~$\epsilon$. The proof follows similarly as that of Lemma~\ref{lem:TDquick}. Except for \eqref{eq:PoverPi}, we rewrite as
\begin{align}
    \frac{p_t(x,y)}{\pi(y)}-1 \,&\overset{\mathclap{\eqref{eq:trans1}}}{=}\int \paren[\Big]{\frac{p_s(x,z)}{\pi(z)}-1}\frac{p_{t-s}(y,z)}{\pi(z)}\pi(z)\, d z\\
    &\overset{\mathclap{\eqref{eq:decomP}}}{=}\int \paren[\Big]{\frac{p_s(x,z)}{\pi(z)}-1}\pi(z)\, d z\\
    &+\sum_{k\geq 2}e^{-\lambda_k (t-s)}\psi_{k}(y) \int\paren[\Big]{\frac{p_s(x,z)}{\pi(z)}-1}\psi_k(z)\pi(z)\, d z.  
\end{align}
It follows immediately that
\begin{equation}\label{eq:decom3}
    \norm[\Big]{\frac{p_t(x,\cdot)}{\pi(\cdot)}-1}_{L^2(\pi)}\leq e^{-\lambda_2 (t-s)}\norm[\Big]{\frac{p_s(x,\cdot)}{\pi(\cdot)}-1}_{L^2(\pi)}.
\end{equation}
Choosing~$s=t/2$ and plugging \eqref{eq:psL2} into \eqref{eq:decom3} finishes the proof.
\end{proof}

We now prove Lemma~\ref{lem:firstlevel}.

\begin{proof}[Proof of Lemma~\ref{lem:firstlevel}]

\restartsteps

\step[Bound~$\P(\mathbf{A}_1)$]
Following the explicit maximal coupling construction of~$\bar{X}^{i}_{1,T}$ given in \cite[Appendix]{Marion23SMC}, we obtain that for every~$i=1,\dots,N$,
\begin{equation}
    \P(X_{1,T}^{i}\neq \bar{X}^{i}_{1,T})=\norm{p_{1,T}(X_{1,0}^{i},\cdot)-\pi_1(\cdot)}_{\mathrm{TV}}.
\end{equation}
Observe that for any~$x\in\mathbb{R}^d$,
\begin{align}
  \norm{p_{1,T}(x,\cdot)-\pi_1(\cdot)}_{\mathrm{TV}}&\leq \norm[\Big]{\frac{p_{1,T}(x,\cdot)}{\pi_1(\cdot)}-1}_{L^2(\pi_1)}\\
  &\overset{\mathclap{\eqref{eq:TDslow}}}{\leq} \quad e^{-\lambda_{2,1} T/2}\sqrt{\frac{p_{1,t}(x,x)}{\pi(x)}-1}\\
  &\overset{\mathclap{\eqref{eq:poverpi}}}{\leq}\quad e^{-\lambda_{2,1} T/2}C^{\frac12}_{p}\exp\paren[\Big]{\frac{U(x)}{2}}\paren[\Big]{1-\exp(-\frac{c}{d}t)}^{-\frac{d}{4}}\\
  &\overset{\mathclap{\eqref{eq:TCrip},\eqref{eq:ubPatTcri}}}{\leq}\quad e^{-\lambda_{2,1} T/2}C^{\frac12}_{p}\exp\paren[\Big]{\frac{U(x)}{2}}e^{\frac{1}{8}}.
\end{align}
Therefore, according to Assumption~\ref{assum:startgood}, we have that
\begin{equation}\label{eq:tv1st}
    \norm{p_{1,T}(X_{1,0}^{i},\cdot)-\pi_1(\cdot)}_{\mathrm{TV}} \leq e^{-\lambda_{2,1} T/2}C^{\frac12}_{p}\exp\paren[\Big]{\frac{C_{\mathrm{ini}}}{2}}e^{\frac{1}{8}}
    \leq \frac{\theta'}{2N}.
\end{equation}
Then, by union bound,
\begin{equation}\label{eq:PA1}
    \P(\mathbf{A}_1)\geq 1-\sum_{i=1}^{N} \P(X_{1,T}^{i}\neq \bar{X}^{i}_{1,T}) \overset{\eqref{eq:tv1st}}{\geq} 1-\frac{\theta'}{2}.
\end{equation}
\step[Bound~$\P(\mathbf{B}_1)$] Observe that by scaling, the event~$\mathbf{B}_1$ is equivalent to 
\begin{align}
    \MoveEqLeft\set[\Big]{\abs[\Big]{\frac{1}{N}\sum_{i=1}^{N}r_1(X_{1,T}^{i})-\int r_1\pi_1\, d x}\leq \paren[\Big]{\int r_1\pi_1\, d x}/3}\\
    &=\set[\Big]{\abs[\Big]{\frac{1}{N}\sum_{i=1}^{N}r_1(X_{1,T}^{i})-1}\leq 1/3}. \label{eq:B1equiv}
\end{align}
Notice that
\begin{align}
    \abs[\Big]{\frac{1}{N}\sum_{i=1}^{N}r_1(X_{1,T}^{i})-1}&\leq  \abs[\Big]{\frac{1}{N}\sum_{i=1}^{N}r_1(X_{1,T}^{i})-\E\frac{1}{N}\sum_{i=1}^{N}r_1(X_{1,T}^{i})}\\
    &+\abs[\Big]{\E\frac{1}{N}\sum_{i=1}^{N}r_1(X_{1,T}^{i})-1}.
\end{align}
The first term,  by Hoeffding's equality, since~$0<r_1\leq C_r$, 
\begin{align}
    \P\paren[\Big]{\abs[\Big]{\frac{1}{N}\sum_{i=1}^{N}r_1(X_{1,T}^{i})-\E\frac{1}{N}\sum_{i=1}^{N}r_1(X_{1,T}^{i})}>\frac{1}{6}}&\leq 2\exp\paren[\Big]{-\frac{N}{18 C_r^2}}\overset{\mathclap{\eqref{eq:NT1st}}}{\leq}~\frac{\theta'}{2}. \label{eq:B1term1}
\end{align}
For the second term, we observe that
\begin{align}
   \abs[\Big]{\E\frac{1}{N}\sum_{i=1}^{N}r_1(X_{1,T}^{i})-1}&\leq \frac1N\sum_{i=1}^{N}\abs[\Big]{\E r_1(X_{1,T}^{i})-1}\\
    &\leq \frac1N\sum_{i=1}^{N}\norm{r_1}_{L^{\infty}}\norm{p_{1,T}(X_{1,0}^{i},\cdot)-\pi_1(\cdot)}_{\mathrm{TV}}\\
    &\overset{\mathclap{\eqref{eq:tv1st}}}{\leq} ~\frac{C_r \theta'}{2N}\leq \frac{C_r}{2N} \overset{\eqref{eq:NT1st}}{\leq}\frac{1}{6}. \label{eq:B1term2}
\end{align}

Therefore, 
\begin{equation}
    \P(\mathbf{B}_1)=\P\paren[\Big]{\abs[\Big]{\E\frac{1}{N}\sum_{i=1}^{N}r_1(X_{1,T}^{i})-1}<\frac{1}{3}}\overset{\eqref{eq:B1term1},\eqref{eq:B1term2}}{\geq} 1-\frac{\theta'}{2}. \label{eq:PB1}
\end{equation}
Combining the two steps above, we obtain that
\begin{equation}
    \P(\mathbf{C}_1)\overset{\eqref{eq:PA1},\eqref{eq:PB1}}{\geq} 1-\theta'.\qedhere
\end{equation}
\end{proof}

\subsection{Estimate for the global mixing (Lemma~\ref{lem:globalmixing})}
In this section we prove Lemma~\ref{lem:globalmixing} following the induction strategy of \cite[Theorem 3.1]{Marion23SMC}. The difference is that in \cite[Theorem 3.1]{Marion23SMC}, they assume~$X_{1,T}^i$ are i.i.d following~$\pi_1$ whereas here we use Lemma~\ref{lem:firstlevel} to guarantee the base case of the induction.

Before the proof of Lemma~\ref{lem:globalmixing}, we need to following lower bound of second eigenvalue, which can be obtained immediately by  Corollary 2.15 in \cite{MenzSchlichting14}. The proof is identical to \cite[Lemma 7.8]{han2025polynomialcomplexitysamplingmultimodal} and thus we omit it here.
\begin{lemma}[Lemma 7.8 in \cite{han2025polynomialcomplexitysamplingmultimodal}, Corollary 2.15 in \cite{MenzSchlichting14}]\label{l: lower_bound_next_eigenvalue}
  Suppose~$U$ satisfies assumptions~\ref{a:criticalpts} and~\ref{assumption: nondegeneracy}, and recall~$\hat U$ is the saddle height defined in~\eqref{e:UHatDef}.
    For every~$H>\hat{U}$, there exists~$A\defeq A(H,d,U)>0$ independent of~$\epsilon$ such that for every~$\epsilon<1$,
\begin{equation}\label{eq:lambdaLowBound}
  \lambda_{2,\epsilon}\geq A\exp\paren[\Big]{-\frac{H}{\epsilon}}.
\end{equation}
\end{lemma}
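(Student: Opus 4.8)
The plan is to obtain~\eqref{eq:lambdaLowBound} from the sharp low-temperature (Eyring--Kramers) asymptotics for the spectral gap of~$L_\epsilon$ due to Menz and Schlichting, and then to upgrade the resulting asymptotic-in-$\epsilon$ lower bound to one valid for all~$\epsilon<1$ by a soft positivity and continuity argument. The reason the statement is phrased with the saddle height~$\hat U$ rather than the sharper energy barrier~$\hat\gamma$ is precisely that a strict loss in the exponent, $H>\hat U\ge\hat\gamma$, absorbs every sub-exponential prefactor and every effect of the non-compactness of the domain.

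\textbf{Step 1 (Eyring--Kramers for small~$\epsilon$).} First I would check that Assumptions~\ref{a:criticalpts} and~\ref{assumption: nondegeneracy} furnish the hypotheses required in~\cite{MenzSchlichting14}: $U\in C^6(\R^d)$ has a nondegenerate Hessian at every critical point, has exactly two local minima~$x_{\min,1},x_{\min,2}$, and its unique communicating saddle~$s_{1,2}$ is of index one, while~$|\Delta U|\le c$ together with $\liminf_{|x|\to\infty}|\nabla U|\ge C_U>0$ supplies the confinement needed for~$\pi_\epsilon$ to satisfy a Poincar\'e inequality. Corollary~2.15 of~\cite{MenzSchlichting14} then yields constants $c_0=c_0(d,U)>0$, $\beta\in\mathbb{R}$ and $\epsilon_0=\epsilon_0(d,U)\in(0,1)$ with
\begin{equation}\label{e:EKlb}
  \lambda_{2,\epsilon}\ \ge\ c_0\,\epsilon^{\beta}\exp\!\paren[\Big]{-\frac{\hat\gamma}{\epsilon}}\qquad\text{for all }0<\epsilon\le\epsilon_0,
\end{equation}
where $\hat\gamma=U(s_{1,2})-U(x_{\min,2})$ is the energy barrier from~\eqref{e:gammaHatDef}; the precise value of~$\beta$ will play no role.

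\textbf{Step 2 (from~$\hat\gamma$ to~$H$, for all~$\epsilon<1$).} By the normalization~\eqref{e:Upositive} and the definitions~\eqref{e:UHatDef}, \eqref{e:gammaHatDef} we have $\hat\gamma=\hat U-U(x_{\min,2})\le\hat U$, so $H-\hat\gamma\ge H-\hat U>0$. Hence $\epsilon\mapsto\epsilon^{\beta}\exp((H-\hat\gamma)/\epsilon)$ tends to~$+\infty$ as~$\epsilon\to0^+$ and is therefore bounded below by a positive constant on~$(0,\epsilon_0]$; multiplying~\eqref{e:EKlb} by $\exp((\hat\gamma-H)/\epsilon)$ gives $A_1>0$ with $\lambda_{2,\epsilon}\ge A_1\exp(-H/\epsilon)$ on $(0,\epsilon_0]$. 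For the remaining range $\epsilon\in[\epsilon_0,1)$ I would use that $\epsilon\mapsto\lambda_{2,\epsilon}$ is strictly positive and continuous --- positivity and discreteness of the spectrum being already recorded around~\eqref{eq: egvalLEpsilon}, and continuity following from the min--max characterization since the Dirichlet form and the weight~$\pi_\epsilon$ depend continuously on~$\epsilon$ --- so that $\lambda_{2,\epsilon}\ge m$ for some $m>0$ on this compact interval; since $\exp(-H/\epsilon)\le1$ there, $\lambda_{2,\epsilon}\ge m\exp(-H/\epsilon)$. Taking $A\defeq\min\{A_1,m\}$ proves~\eqref{eq:lambdaLowBound}.

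The one genuinely delicate point is Step~1: verifying that the non-compact setting of~$\R^d$, together with the rather weak growth condition in Assumption~\ref{a:criticalpts} --- which only forces $|\nabla U|$ bounded away from~$0$ near infinity, not superlinear growth of~$U$ --- is covered by the framework of~\cite{MenzSchlichting14}. If a direct citation is not available, I would instead invoke only the ``easy half'' of the Eyring--Kramers analysis, namely a Muckenhoupt/Hardy-type upper bound on the Poincar\'e constant $1/\lambda_{2,\epsilon}$ of order $\exp(\hat\gamma/\epsilon+o(1/\epsilon))$, which is much more robust to the behaviour of~$U$ at infinity and is all that Step~2 consumes. This is precisely the argument of~\cite[Lemma 7.8]{han2025polynomialcomplexitysamplingmultimodal}, and I would transcribe it.
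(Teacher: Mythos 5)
Your proposal is correct and follows exactly the route the paper itself indicates: the paper derives the bound from Menz--Schlichting's Corollary~2.15 (Eyring--Kramers lower bound for the spectral gap, with exponent~$\hat\gamma$) and then defers the absorption-of-the-prefactor-plus-compactness argument to Lemma~7.8 of the companion work, which is your two-step structure. The only point you flag as delicate --- checking that Assumption~\ref{a:criticalpts} satisfies the Menz--Schlichting hypotheses on the unbounded domain --- is precisely what that citation is meant to discharge, and your fallback to the ``easy half'' (Poincar\'e-constant upper bound) is the content of the cited lemma.
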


Now we are well-equipped to prove Lemma~\ref{lem:globalmixing}.
\begin{proof}[Proof of Lemma~\ref{lem:globalmixing}]
Fix~$2\leq k\leq k_{\mathrm{cr}}$. Observe that when~$\eta_k\geq \eta_{\mathrm{cr}}$,
\begin{equation}
    \lambda_{2,k}\overset{\mathclap{\eqref{eq:lambdaLowBound}}}{\geq}A_{\alpha}\exp\paren[\Big]{-\frac{(1+\alpha)^{\frac12}\hat{U}}{\eta_{\mathrm{k}}}}\geq  A_{\alpha}\exp\paren[\Big]{-\frac{(1+\alpha)^{\frac12}\hat{U}}{\eta_{\mathrm{cr}}}}.
\end{equation}
where~$A_{\alpha}$ is the constant in \eqref{eq:lambdaLowBound} with~$H=(1+\alpha)^{\frac12}\hat{U}$.
Define
\begin{align}
   T&\geq \max\Bigl\{\frac{1}{A_{\alpha}}\exp\paren[\Big]{\frac{(1+\alpha)^{\frac12}\hat{U}}{\eta_{\mathrm{cr}}}}\paren[\Big]{\log(2N)+\log\paren[\Big]{\frac{2M}{p}} },\\
   &\qquad\frac{2}{\lambda_{2,1}}\Big( \frac{C_{\mathrm{ini}}}{2}
         +\frac{1}{8}+\frac12\log(C_p) + \paren[\Big]{\log(2N)+\log\paren[\Big]{\frac{4M}{p^2}}}\Big),
\frac{c}{d}\log (4d)\Bigl\}. 
    \end{align}
Then we can find constant~$\tilde{C}_{\alpha} = \tilde{C}_{\alpha}(\alpha, U, C_{\mathrm{ini}})>0$ so that this choice is consistent with the choice in~\eqref{eq:NTglobal}.

Notice that~$T$ satisfies \eqref{e:NTgloballem}, \eqref{eq:NTglobalcor} \eqref{eq:NT1st} with
\begin{equation}\label{eq:ppomegachoice}
    \theta=\frac{p}{2M},\quad \theta'=\theta^2=\frac{p^2}{4M},\quad\omega=2.
\end{equation}
Observe that~$N$ also satisfies \eqref{e:NTgloballem}, \eqref{eq:NTglobalcor} \eqref{eq:NT1st} with the same choice of~$\theta, \theta'$ and~$\omega$.

By Lemma~\ref{lem:MMS44}, 
for function~$f$ with~$\abs{f}\leq 1$, we have that since~$N$ satisfies \eqref{e:NTgloballem},
\begin{equation}
     \P\paren[\Big]{\abs[\Big]{\frac{1}{N}\sum_{i=1}^{N}f(X_{k,T}^{i})-\int f\pi_{k}\, d x}<\mathfrak{e}} \geq (1-\theta)\P(\mathbf{C}_{k-1})-\theta'.
\end{equation}
The probability~$\P(\mathbf{C}_{k-1})$ can be lower bounded by induction using Lemma~\ref{lem:MMScor41}. Notice that the choice of~$N,T$ in \eqref{eq:NTglobal} ensures that~$\P(\mathbf{C}_1)\geq 1-\theta'$. Then applying Lemma~\ref{lem:MMScor41} repeatedly gives that 
\begin{align}
    \P(C_k)&\geq (1-\theta)^{k-1}\cdot (1-\theta')-\theta'\sum_{s=0}^{k-2}(1-\theta)^{s}\\
    &=(1-\theta)^{k-1}-\theta'\sum_{s=0}^{k-1}(1-\theta)^{s}. \label{eq:PCk}
\end{align}
On one hand, this gives that
\begin{equation}
   \P(C_k) \geq (1-\theta)^{k-1}-\theta'\frac{1-(1-\theta)^{k}}{\theta} \geq (1-\theta)^{M}-\frac{\theta'}{\theta}\overset{\eqref{eq:ppomegachoice}}{\geq} 1-p\geq \frac{3}{4}
\end{equation}
showing that the conditions in Lemma~\ref{lem:MMS44} and~\ref{lem:MMScor41} satisfies with~$\omega=2$.
On the other hand, the inequality \eqref{eq:PCk} gives that for~$2\leq k\leq k_{\mathrm{cr}}$, ~$\abs{f}\leq 1$,
\begin{align}
    \P\paren[\Big]{\abs[\Big]{\frac{1}{N}\sum_{i=1}^{N}f(X_{k,T}^{i})-\int f\pi_{k}\, d x}<\mathfrak{e}} &\geq (1-\theta)\P(\mathbf{C}_{k-1})-\theta'\\
    &\geq (1-\theta)^{k}-\theta'\sum_{s=0}^{k}(1-\theta)^{s}\\
    &\geq (1-\theta)^{M}-\frac{\theta'}{\theta}\overset{\eqref{eq:ppomegachoice}}{\geq} 1-p. \qedhere
\end{align}
\end{proof}

\subsection{Estimates of the critical state (Lemma~\ref{lem:baseCasePsi})}
In this section we prove Lemma~\ref{lem:baseCasePsi}, which is straightforward using Lemma~\ref{l:rebalancing} and Lemma~\ref{lem:globalmixing}.
\begin{proof}[Proof of Lemma~\ref{lem:baseCasePsi}]
We apply Lemma~\ref{l:rebalancing} with 
\begin{equation}
    a=\frac{\delta}{8M},\quad h=\psi_{2,k_{\mathrm{cr}}}\one_{K}\quad r = r_{k_{\mathrm{cr}}-1}, \quad x^i=X^i_{k_{\mathrm{cr}}-1,T},\quad y^i=X^i_{k_{\mathrm{cr}},0}
\end{equation}
and obtain that with probability larger than or equal to
\begin{equation}\label{e:lem37prob1}
    1-2\exp\paren[\Big]{-\frac{2N a^2}{\norm{\psi_{2,k_{\mathrm{cr}}}\one_{K}}^2_{\osc}}}\geq  1-2\exp\paren[\Big]{-\frac{N a^2}{2C^2_{\psi}}}\overset{\eqref{eq:Nchoice}}{\geq} 1-\frac{\theta}{4M}.
\end{equation}
we have that
\begin{equation}\label{eq:psicr}
   \abs[\Big]{\frac{1}{N}\sum_{i=1}^{N}\psi_{2,k_{\mathrm{cr}}}\one_{K}(X^{i}_{k_{\mathrm{cr}},0})}\leq  (C_{\psi}+1)J_1 +J_2
     +\frac{\delta}{8M}+J_3. 
  \end{equation}
where
\begin{align}
    J_1&=\abs[\Big]{1-\frac{1}{N}\sum_{i=1}^N r_{k_{\mathrm{cr}}-1}(X_{k_{\mathrm{cr}}-1,T}^i)},\\
    J_2&=\abs[\Big]{\frac{1}{N}\sum_{i=1}^N r_{k_{\mathrm{cr}}-1}(X_{k_{\mathrm{cr}}-1,T}^i)\Big(\psi_{2,k_{\mathrm{cr}}}\one_{K}(X^{i}_{k_{\mathrm{cr}},0})  -\int_{\mathcal X} \psi_{2,k_{\mathrm{cr}}}\one_{K} \pi_{k_{\mathrm{cr}}} \, dx \Big)},\\
    J_3&=\abs[\Big]{\int_{\mathcal X} \psi_{2,k_{\mathrm{cr}}}\one_{K} \pi_{k_{\mathrm{cr}}} \, dx }.
\end{align}
Here we use the fact that \begin{equation}\label{eq:J1Linfity}
  \norm[\Big]{\psi_{2,k_{\mathrm{cr}}}\one_{K}-\int_{K}\psi_{2,k_{\mathrm{cr}}}\pi_{k_{\mathrm{cr}}}\, d x}_{L^\infty}
   \overset{\eqref{eq:ckT1}}{\leq} C_{\psi}+1. 
\end{equation}

Next, we bound~$J_1$,~$J_2$ and~$J_3$, respectively. Notice that our choice of~$N$ satisfies \eqref{eq:NTglobal} with
\begin{equation}\label{e:pfrake}
\mathfrak{e}=\frac{\delta}{2(C_{\psi}+1)}    \quad p=\theta.
\end{equation}
Indeed,
\begin{align*}
    N\geq 64C^2_r(C_{\psi}+1)^2 \frac{M^2}{\delta^2}\log\paren[\Big]{\frac{(8M)^2}{\theta^2}} \geq \frac{16 C_r^2}{\mathfrak{e}^2}\log\paren[\Big]{\frac{64M^2}{\theta^2}}.
\end{align*}
Similarly,~$T$ satisfies \eqref{eq:NTglobal} with the same choice \eqref{e:pfrake} and~$\eta_{\mathrm{cr}}$ as in \eqref{e:CriEtaLT}.
Therefore, according to Lemma~\ref{lem:globalmixing},
\begin{equation}
     \P(J_1<C_r\mathfrak{e})\geq 1-\theta,\quad
    \P(J_2< C_r(C_{\psi}+1)\mathfrak{e})\geq 1-\theta,
\end{equation}
where we use the fact that~$\norm{r_{k_{\mathrm{cr}}-1}}_{L^{\infty}}\leq C_r$ and \eqref{eq:J1Linfity}. That is,
\begin{equation}\label{e:J1J2prob}
     \P\paren[\Big]{J_1<\frac{C_r\delta}{2(C_{\psi}+1)}}\geq 1-\theta,\quad
    \P\paren[\Big]{J_2< \frac{C_r\delta}{2}}\geq 1-\theta.
\end{equation}

For term~$J_3$, we first notice that~$\eta_{\mathrm{cr}}$ is actually chosen according to \eqref{eq:etacrchoice}. This~$\eta_{\mathrm{cr}}$ also satisfies \eqref{eq:etaCriN}. We apply Lemma~\ref{lem:etaCriProb} and obtain that
\begin{equation}\label{e:J3prob}
    J_3\overset{\eqref{e:intKpsipi}}{\leq} C_P \exp\paren{-\frac{C_K}{\eta_{\mathrm{cr}}}}\overset{\eqref{e:probsmall}}{\leq}\frac{\theta}{8MN(C_{\psi}+1)}\overset{\eqref{eq:Nchoice}}{\leq} \frac{\delta}{8M(C_{\psi}+1)}.
\end{equation}
Combining the above discussion, we obtain from \eqref{e:lem37prob1} and  \eqref{e:J1J2prob} that with probability larger than or equal to 
\begin{equation}
    \paren[\Big]{1-\frac{\theta}{4M}}\cdot \paren[\Big]{1-2\theta}\geq 1-3\theta,
\end{equation}
we have
\begin{align}
    \abs[\Big]{\frac{1}{N}\sum_{i=1}^{N}\psi_{2,k_{\mathrm{cr}}}\one_{K}(X^{i}_{k_{\mathrm{cr}},0})}~&\overset{\mathclap{\eqref{eq:psicr}}}{\leq}  \quad (C_{\psi}+1)J_1 +J_2
     +\frac{\delta}{8M}+J_3\\
     &\overset{\mathclap{\eqref{e:pfrake},\eqref{e:J1J2prob},\eqref{e:J3prob}}}{\leq} \qquad C_r\delta + \frac{\delta}{4M}\leq (C_r+1)\delta. \qedhere
\end{align}
\end{proof}

\bibliographystyle{halpha-abbrv}
\bibliography{gautam-refs1,gautam-refs2,preprints,reference,Ruiyu-refs1}

\end{document}